\documentclass[onecolumn]{IEEEtran}

\usepackage{amsthm}
\usepackage{amssymb,amsmath,amsfonts,latexsym, enumerate,url, cases}
\usepackage{graphicx,color}
\usepackage{booktabs}
\usepackage{threeparttable}
\usepackage{array}
\usepackage{float}
\numberwithin{equation}{section}
\usepackage{hyperref}
\usepackage{newtxtext}
\usepackage[all]{xy}
\usepackage[
backend=biber,
style=ieee,
sorting=nyt,
defernumbers=false
]{biblatex}

\newtheorem{theorem}{Theorem}[section] %
\newtheorem{lemma}[theorem]{Lemma} %
\newtheorem{corollary}[theorem]{Corollary} %
\newtheorem{proposition}[theorem]{Proposition} %
\newtheorem{definition}[theorem]{Definition} %
\newtheorem{example}[theorem]{Example}
\usepackage{nicematrix}
\usepackage{longtable}
\begin{document}
	
	\title{Galois Hulls of Generalized Roth-Lempel Codes and Their Applications to EAQECCs}
	\author{ Xuefei Wu, 
		 Qi Liu, 
		 Yingchun Chen, 
		 Haiyan Zhou$^*$%
		 \thanks{Xuefei Wu, Qi Liu, 
		 	Yingchun Chen and 
		 	Haiyan Zhou are with the School of Mathematics, Nanjing Normal University, Nanjing, China (e-mail: 240901020@njnu.edu.com; QuLiu@aliyun.com;
		 	240901002@njnu.edu.com;
		 	zhouhy@njnu.edu.com).}
		 }
	\maketitle	
	\begin{abstract}
		The dimension of a Galois hull is an important parameter in the construction of entanglement-assisted quantum error-correcting codes. In this paper, we study generalized Roth-Lempel (GRL) codes with prescribed Galois hull dimensions over finite fields of arbitrary characteristic. We develop a common construction method based on normalized Lagrange coefficients and suitable choices of column multipliers, and obtain six explicit constructions from multiplicative cosets, trace fibers, additive-subspace cosets, and mixed additive--multiplicative fibers. For extension sizes $s=2$ and $s=3$, we construct MDS and AMDS GRL codes with Galois hulls of specific dimensions. The resulting codes provide EAQECCs with explicit dimension, entanglement consumption, and relatively large minimum distance. Taking zero-dimensional hulls also gives Hermitian LCD GRL codes and the corresponding maximally entangled EAQECCs.
	\end{abstract}
	
	\begin{IEEEkeywords}
		Generalized Roth-Lempel codes, Galois hulls, AMDS codes, Entanglement-assisted quantum error-correcting codes.
	\end{IEEEkeywords}
	
	\section{Introduction}\label{sec:introduction}
	
	Let $q=p^e$ and $0\le\ell\le e-1$. The $\ell$-Galois hull of a linear code $C$ over $\mathbb F_q$ is $\operatorname{Hull}_\ell(C)=C\cap C^{\perp_\ell}$, where $C^{\perp_\ell}$ is defined using the inner product $\langle\boldsymbol x,\boldsymbol y\rangle_\ell=\sum_i x_i y_i^{p^\ell}$. This includes the Euclidean and Hermitian hulls as special cases. A zero-dimensional hull gives a Galois linear complementary dual code, while other prescribed hull dimensions allow different choices of dimension and entanglement consumption in entanglement-assisted quantum error-correcting codes (EAQECCs). In particular, an $[n,k,d]_q$ linear code with an $h$-dimensional Galois hull yields an EAQECC encoding $k-h$ logical qudits and consuming $n-k-h$ maximally entangled pairs, with minimum distance at least $d$~\cite{9559992}.
	
	Quantum error-correcting codes (QECCs) play an important role in protecting quantum information against errors arising during storage and transmission. The development of quantum error correction began with Shor's nine-qubit code in 1995~\cite{Shor1995} and Steane's construction in 1996~\cite{Steane1996}. Subsequently, the stabilizer formalism and related algebraic constructions established a close connection between classical coding theory and quantum error correction. For more information, we refer the reader to \cite{Gottesman1996,KnillLaflamme1997,CRSS1998,AshikhminKnill2001}.
	
	In 1996, Calderbank and Shor~\cite{CalderbankShor1996} and Steane~\cite{Steane1996} proposed the Calderbank--Shor--Steane (CSS) construction, which produces quantum codes from classical linear codes satisfying suitable duality conditions. To remove the restriction of self-orthogonality or duality, Brun, Devetak, and Hsieh introduced entanglement-assisted quantum error-correcting codes (EAQECCs) in 2006~\cite{BrunDevetakHsieh2006}. Later developments related the required entanglement to ranks of matrices and, consequently, to the dimensions of classical hulls~\cite{WildeBrun2008,GuendaJitmanGulliver2018,GalindoHernandoMatsumotoRuano2019}. 
	
	Reed--Solomon codes~\cite{ReedSolomon1960} and generalized Reed--Solomon (GRS) codes are particularly suitable for quantum constructions because of their MDS property and flexible evaluation points and column multipliers. Early quantum MDS constructions from GRS codes and related self-orthogonal variants can be found in~\cite{GrasslBethRoetteler2004,LiXingWang2008,JinLingLuoXing2010,JinXing2014}. The study of prescribed hull dimensions further enlarged this framework. Luo, Cao, and Chen~\cite{LuoCaoChen2019} and Fang et al.~\cite{FangFuLiZhu2020} constructed MDS codes with flexible Euclidean or Hermitian hull dimensions. Fan and Zhang~\cite{FanZhang2017} introduced the Galois inner product, and Liu and Pan~\cite{LiuPan2020} further studied Galois hulls of linear codes. In 2021, Cao~\cite{9559992} systematically studied GRS codes with prescribed Galois hull dimensions and derived several families of MDS EAQECCs. Subsequent developments include constructions based on self-orthogonal GRS codes, Goppa representations, larger lengths, and intersections of GRS codes~\cite{FangJinLuoMa2022,WuLiYang2022,LiZhuLi2023,li2023several,Cao2023,LiZhu2024,WanZhu2025,LiuChen2025}. These works provide the main GRS background for constructing EAQECCs with prescribed Galois hull dimensions.
	
	Roth-Lempel (RL) codes, introduced by Roth and Lempel in 1989~\cite{roth1989construction}, provide another algebraic framework for constructing codes with strong distance properties. In recent years, RL codes have attracted renewed attention because of their MDS, AMDS, NMDS, LCD, and related duality properties~\cite{WuHyunLee2021,HanFan2023,WuHengLiDing2024}. In 2025, Li, Jiang, and Liu~\cite{li2025new} and Liang, Wan, and Liao~\cite{LiangWanLiao2026GRL} introduced generalized Roth-Lempel (GRL) codes by appending to a GRS generator matrix a short block determined by a nonsingular matrix $A_s$. Subsequent work investigated MDS, AMDS, NMDS, self-dual, and extended GRL codes~\cite{LiangLiao2026NMDS,LiangLiao2026Extended}. More recently, Liang et al.~\cite{Liang2026LCD} constructed Euclidean and Hermitian LCD GRL codes and derived EAQECCs. Liu et al.~\cite{liu2026generalized} further established NMDS characterizations for GRL codes with $s=2$ and $s=3$ and constructed Hermitian self-orthogonal GRL codes and associated quantum codes.
	
	These developments motivate the study of GRL codes whose Galois hull dimensions can be prescribed throughout explicit ranges while their classical minimum distances remain determined. The appended coordinates make this problem different from the corresponding GRS problem: a choice of multipliers that controls the evaluation part must also be compatible with the conditions imposed by $A_s$. Moreover, in boundary cases, the matrix relation used to obtain the desired hull dimension restricts the available extension matrices. The distance conditions and the hull conditions must therefore be considered together.

	In this paper, we address this problem over $\mathbb F_q$, where $q=p^e>4$, $p$ is a prime, $\ell>0$, and $2\ell\mid e$. Our main contributions are as follows.
	\begin{enumerate}
		\item We develop a common hull construction for GRL codes by normalizing the Lagrange coefficients and reducing the Galois-dual conditions to polynomial identities. In the ordinary ranges, the method works with an arbitrary nonsingular extension matrix and gives every hull dimension $0\le h\le k-s-1$ or $0\le h\le k-s$, according to the coefficient condition. We also treat additional boundary cases with suitable extension matrices.
		\item We apply the common method to six explicit constructions using multiplicative, additive, and mixed evaluation sets. For $s=2$ and $s=3$, we combine these constructions with subset-sum and determinant criteria for the MDS and AMDS properties. The full evaluation set gives $[q+2,k,q-k+2]_q$ AMDS GRL codes, including special choices with hull dimension $k-1$.
		\item We derive EAQECCs from the resulting classical codes and give explicit formulas for their encoded dimension and entanglement consumption. Associated with the AMDS conditions, we obtain several families of AMDS EAQECCs, see Table \ref{tab:EA-amds-examples}.
		\item By taking $h=0$, we obtain $\ell$-Galois LCD GRL codes and maximally entangled EAQECCs. In the Hermitian case $q=p^{2\ell}$, these constructions yield EAQECCs over $\mathbb F_{p^\ell}$; comparison with existing Hermitian LCD GRL constructions also gives families outside their cited parameter ranges, see Table \ref{tab:LCD-examples}.
	\end{enumerate}
	
	
	The rest of this paper is organized as follows. Section~\ref{sec:preliminaries} recalls the necessary definitions and the polynomial description of GRL duals. Section~\ref{sec:construction} presents the common construction and the six explicit evaluation-set constructions. Sections~\ref{sec:s2} and~\ref{sec:s3} study the MDS and AMDS properties together with prescribed Galois hulls for $s=2$ and $s=3$, respectively. Section~\ref{sec:eaqecc} gives the applications to EAQECCs. Section~\ref{sec:LCD} treats LCD GRL codes and the associated maximally entangled EAQECCs. Section~\ref{sec:conclusion} concludes the paper.
	
	\section{Preliminaries}\label{sec:preliminaries}
	
	Let $q=p^e$, where $p$ is a prime and $e$ is a positive integer, and
	let $\mathbb F_q$ denote the finite field with $q$ elements. For a
	positive integer $k$, put
	$$
	\mathbb F_q[x]_{<k}
	=\{f(x)\in\mathbb F_q[x]:\deg f(x)\le k-1\}.
	$$
	
	An $[n,k,d]_q$ linear code is a $k$-dimensional subspace of
	$\mathbb F_q^n$ with minimum Hamming distance $d$. The Singleton bound
	gives $d\le n-k+1$. A code is called maximum distance separable (MDS)
	if $d=n-k+1$, and almost MDS (AMDS) if $d=n-k$. It is called near-MDS
	(NMDS) if both $C$ and its Euclidean dual $C^\perp$ are AMDS.
	
	For a vector, matrix, or code $X$ over $\mathbb F_q$, the notation
	$X^{(p^r)}$ means that the map $x\mapsto x^{p^r}$ is applied
	entrywise. If $f(x)=\sum_j f_jx^j$, we write
	$$
	f^{p^r}(x)=(f(x))^{p^r}=\sum_j f_j^{p^r}x^{jp^r}.
	$$
	
	\begin{definition}\label{def:Galois}
		Let $0\le\ell\le e-1$. For
		$\boldsymbol x=(x_1,\ldots,x_n)$ and
		$\boldsymbol y=(y_1,\ldots,y_n)$ in $\mathbb F_q^n$, the
		$\ell$-Galois inner product is defined by
		$$
		\langle\boldsymbol x,\boldsymbol y\rangle_\ell
		=\sum_{i=1}^n x_i y_i^{p^\ell}.
		$$
		The $\ell$-Galois dual and the $\ell$-Galois hull of a linear code
		$C\subseteq\mathbb F_q^n$ are, respectively,
		$$
		C^{\perp_\ell}
		=\{\boldsymbol x\in\mathbb F_q^n:
		\langle\boldsymbol c,\boldsymbol x\rangle_\ell=0
		\text{ for all }\boldsymbol c\in C\}
		$$
		and
		$$
		\operatorname{Hull}_\ell(C)=C\cap C^{\perp_\ell}.
		$$
		When $\ell=0$, these notions reduce to the Euclidean inner product,
		dual, and hull. When $e$ is even and $\ell=e/2$, they reduce to the
		Hermitian inner product, dual, and hull.
	\end{definition}
	
	\begin{lemma}\label{lem:galois-euclidean-dual}\cite{li2023several}
		For every linear code $C$ over $\mathbb F_q$,
		$$
		C^{\perp_\ell}=(C^\perp)^{(p^{e-\ell})}.
		$$
	\end{lemma}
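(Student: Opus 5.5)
We prove the two inclusions $(C^\perp)^{(p^{e-\ell})}\subseteq C^{\perp_\ell}$ and then compare dimensions. The key fact is that $\sigma:x\mapsto x^{p^{e-\ell}}$ is a field automorphism of $\mathbb F_q$ (a power of Frobenius). Its inverse is $x\mapsto x^{p^\ell}$, since $x^{p^{e-\ell}p^\ell}=x^{p^e}=x$ for all $x\in\mathbb F_q$. Applied coordinatewise, $\sigma$ is a bijective, additive, semilinear map on $\mathbb F_q^n$. Hence $(C^\perp)^{(p^{e-\ell})}$ is again an $\mathbb F_q$-subspace: if $\boldsymbol y=\sigma(\boldsymbol z)$ and $a\in\mathbb F_q$, then $a\boldsymbol y=\sigma(\sigma^{-1}(a)\boldsymbol z)$. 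It also has the same dimension as $C^\perp$, namely $n-k$, because $\sigma$ maps a basis to a basis; linear independence is preserved by applying $\sigma^{-1}$ to any dependence relation.

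For the inclusion, take $\boldsymbol z\in C^\perp$ and set $\boldsymbol y=\boldsymbol z^{(p^{e-\ell})}$. For each $\boldsymbol c\in C$,
$$
\langle \boldsymbol c,\boldsymbol y\rangle_\ell=\sum_i c_i\bigl(z_i^{p^{e-\ell}}\bigr)^{p^\ell}=\sum_i c_i z_i^{p^e}=\sum_i c_i z_i=0.
$$
Therefore $\boldsymbol y\in C^{\perp_\ell}$.

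For the dimension count, we show $\dim C^{\perp_\ell}=n-k$. Note that $\boldsymbol x\in C^{\perp_\ell}$ if and only if $\sum_i c_i x_i^{p^\ell}=0$ for all $\boldsymbol c\in C$, that is, if and only if $\boldsymbol x^{(p^\ell)}\in C^\perp$. So $C^{\perp_\ell}=\sigma(C^\perp)$ set-theoretically, which in fact gives both inclusions at once and makes the dimension comparison unnecessary. Alternatively, $C^{\perp_\ell}$ is the kernel of the map $\boldsymbol x\mapsto(\langle \boldsymbol g_j,\boldsymbol x\rangle_\ell)_j$. Here the $\boldsymbol g_j$ are the rows of a generator matrix $G$, and this map is $\sigma^{-1}$-semilinear. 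Its kernel is $\sigma$ applied to the kernel of $\boldsymbol w\mapsto G\boldsymbol w^T$, which again gives the claim.

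No step is a real obstacle. The only point needing care is that $C^{\perp_\ell}$ is an $\mathbb F_q$-linear subspace even though the inner product is only semilinear in its second argument. This holds because $\langle\boldsymbol c,a\boldsymbol x\rangle_\ell=a^{p^\ell}\langle\boldsymbol c,\boldsymbol x\rangle_\ell$, so a zero inner product is preserved under scalar multiplication.
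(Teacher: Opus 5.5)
Your proof is correct. The paper states this lemma without proof, citing \cite{li2023several}. Your key equivalence $\boldsymbol x\in C^{\perp_\ell}\iff\boldsymbol x^{(p^\ell)}\in C^\perp$, combined with $x\mapsto x^{p^{e-\ell}}$ inverting $x\mapsto x^{p^\ell}$ on $\mathbb F_q$, gives the identity directly, and it is exactly the form in which the paper uses the lemma in Proposition~\ref{prop:II.2}. The inclusion-plus-dimension scaffolding is therefore unnecessary, and your opening sentence promises two inclusions but names only one.
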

	
	\begin{lemma}\label{lem:galois-hull-rank}\cite{9559992}
		Let $G$ be a generator matrix of an $[n,k]_q$ linear code $C$.
		Then
		$$
		\dim\bigl(\operatorname{Hull}_\ell(C)\bigr)
		=k-\operatorname{rank}\left(
		G\bigl(G^{(p^\ell)}\bigr)^{T}
		\right).
		$$
	\end{lemma}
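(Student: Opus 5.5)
The plan is to describe $\operatorname{Hull}_\ell(C)$ as the kernel of an explicit linear map and then apply rank--nullity. Every codeword of $C$ is $\boldsymbol c=\boldsymbol uG$ for a unique $\boldsymbol u\in\mathbb F_q^k$, since the rows of $G$ form a basis of $C$. The first step is to characterize when $\boldsymbol c$ lies in $C^{\perp_\ell}$. Because the $\ell$-Galois inner product is linear in its first argument, $\boldsymbol c\in C^{\perp_\ell}$ holds if and only if $\langle \boldsymbol g_i,\boldsymbol c\rangle_\ell=0$ for every row $\boldsymbol g_i$ of $G$. Writing this out gives
$$
\sum_j g_{ij}\,c_j^{p^\ell}=0 \quad (1\le i\le k),
\qquad\text{that is,}\qquad G\bigl(\boldsymbol c^{(p^\ell)}\bigr)^T=\boldsymbol 0 .
$$

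Next I use that Frobenius is additive and multiplicative. This gives $\boldsymbol c^{(p^\ell)}=\boldsymbol u^{(p^\ell)}G^{(p^\ell)}$, so the condition becomes
$$
G\bigl(G^{(p^\ell)}\bigr)^T\bigl(\boldsymbol u^{(p^\ell)}\bigr)^T=\boldsymbol 0 .
$$
Put $M=G\bigl(G^{(p^\ell)}\bigr)^T$, a $k\times k$ matrix. Then $\operatorname{Hull}_\ell(C)$ is in bijection with the set $\{\boldsymbol u:\ M(\boldsymbol u^{(p^\ell)})^T=\boldsymbol 0\}$.

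The only subtle point is that the map $\boldsymbol u\mapsto\boldsymbol u^{(p^\ell)}$ is not $\mathbb F_q$-linear, so I cannot immediately read off a dimension. I handle this in two parts. First, the hull $C\cap C^{\perp_\ell}$ is an $\mathbb F_q$-subspace, since $C^{\perp_\ell}$ is one by linearity of the inner product in its first slot. Second, Frobenius is a bijection of $\mathbb F_q$, and it is additive and semilinear. Hence it maps the solution set above onto the right kernel $K=\{\boldsymbol v:\ M\boldsymbol v^T=\boldsymbol 0\}$. Both sets are $\mathbb F_q$-subspaces, and the bijection sends $a\boldsymbol u\mapsto a^{p^\ell}\boldsymbol u^{(p^\ell)}$. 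A semilinear bijection carries a basis to a basis, so both sets have the same $\mathbb F_q$-dimension. Composing with the isomorphism $\boldsymbol u\mapsto\boldsymbol uG$ then gives $\dim\operatorname{Hull}_\ell(C)=\dim K$.

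Finally, rank--nullity gives $\dim K=k-\operatorname{rank}(M)$, which is the claimed formula. The only step needing care is the semilinearity argument; everything else is direct computation.
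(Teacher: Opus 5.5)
Your proof is correct and complete. The paper gives no proof of this lemma; it only cites \cite{9559992}. So the useful comparison is with the usual argument for this fact.

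That argument moves the Frobenius onto the fixed code instead of the unknown vector. Raising $\langle\boldsymbol y,\boldsymbol x\rangle_\ell$ to the power $p^{e-\ell}$ gives $C^{\perp_\ell}=\bigl(C^{(p^{e-\ell})}\bigr)^{\perp}$, which is equivalent to Lemma~\ref{lem:galois-euclidean-dual}. Hence $\boldsymbol uG\in C^{\perp_\ell}$ if and only if $\boldsymbol u\,G\bigl(G^{(p^{e-\ell})}\bigr)^T=\boldsymbol 0$. This condition is already $\mathbb F_q$-linear in $\boldsymbol u$, so rank--nullity gives $k-\operatorname{rank}\bigl(G(G^{(p^{e-\ell})})^T\bigr)$. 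One finishes by noting that entrywise $x\mapsto x^{p^\ell}$ preserves rank and sends this matrix to $G^{(p^\ell)}G^T=\bigl(G(G^{(p^\ell)})^T\bigr)^T$.

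Your route skips the detour through $p^{e-\ell}$ and produces the matrix $G(G^{(p^\ell)})^T$ of the statement directly. The price is the semilinear step. That step can be reduced to counting: your solution set and $K$ are $\mathbb F_q$-subspaces in bijection, so they have the same number of elements and hence the same dimension.

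One small correction: $C^{\perp_\ell}$ is a subspace not because of linearity in the first slot, since the unknown sits in the second slot. It is a subspace because $\langle\boldsymbol c,a\boldsymbol x+b\boldsymbol y\rangle_\ell=a^{p^\ell}\langle\boldsymbol c,\boldsymbol x\rangle_\ell+b^{p^\ell}\langle\boldsymbol c,\boldsymbol y\rangle_\ell$. This does not affect the rest of your argument.
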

	
	\begin{proposition}\label{prop:minimum-distance-rank}
		Let $G$ be a generator matrix of an $[n,k]_q$ linear code $C$.
		Then the minimum distance of $C$ is $d$ if and only if the following
		conditions hold:
		\begin{enumerate}
			\item[(1)] There exist $n-d$ columns of $G$ whose rank is less
			than $k$.
			\item[(2)] Every submatrix formed by $n-d+1$ columns of $G$ has
			rank $k$.
		\end{enumerate}
	\end{proposition}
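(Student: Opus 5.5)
The plan is to reduce both conditions to the standard correspondence between codewords of $C$ and linear combinations of the rows of $G$. For a subset $S\subseteq\{1,\ldots,n\}$ of column indices, write $G_S$ for the submatrix of $G$ formed by the columns indexed by $S$. Every codeword has the form $\boldsymbol c=\boldsymbol uG$ with $\boldsymbol u\in\mathbb F_q^k$, and $\boldsymbol c\ne\boldsymbol 0$ exactly when $\boldsymbol u\ne\boldsymbol 0$, because the rows of $G$ are linearly independent. The key observation is that $\boldsymbol c$ vanishes on all coordinates in $S$ if and only if $\boldsymbol uG_S=\boldsymbol 0$. Hence there is a nonzero codeword vanishing on $S$ if and only if the rows of $G_S$ are linearly dependent, that is, if and only if $\operatorname{rank}(G_S)<k$.

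Next I will convert this into a statement about weights. Let $t$ be an integer with $0\le t\le n$. I claim that $C$ contains a nonzero codeword of weight at most $n-t$ if and only if some set of $t$ columns of $G$ has rank less than $k$.

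For the forward direction, suppose $\boldsymbol c\neq\boldsymbol 0$ has weight $w\le n-t$. Then $\boldsymbol c$ has at least $t$ zero coordinates, so we may choose a set $S$ of exactly $t$ of them. By the observation above, $\operatorname{rank}(G_S)<k$.

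For the converse, suppose $|S|=t$ and $\operatorname{rank}(G_S)<k$. Then there is a nonzero $\boldsymbol u$ with $\boldsymbol uG_S=\boldsymbol 0$. The codeword $\boldsymbol c=\boldsymbol uG$ is nonzero and vanishes on $S$, so its weight is at most $n-t$.

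Finally I will apply this claim twice. Taking $t=n-d$, condition (1) holds if and only if there is a nonzero codeword of weight at most $d$, i.e., $d(C)\le d$. Taking $t=n-d+1$, condition (2) says that no set of $n-d+1$ columns has rank less than $k$. By the claim, this holds if and only if no nonzero codeword has weight at most $d-1$, i.e., $d(C)\ge d$. Combining the two equivalences gives $d(C)=d$ exactly when (1) and (2) both hold.

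The only point requiring care is the edge cases. If $d=n$, condition (1) involves zero columns, and we interpret the rank of an empty set of columns as $0<k$. If $n-d+1>n$, i.e. $d=0$, condition (2) is excluded since $k\ge1$. Both cases are consistent with the argument above, so I do not expect a real obstacle in this proof.
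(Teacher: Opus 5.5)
Your proof is correct. The paper states this proposition without proof and treats it as a standard fact, so there is no argument in the paper to compare with; yours is the standard one. It rests on the fact that a nonzero codeword vanishing on $S$ exists exactly when $\operatorname{rank}(G_S)<k$, applied with $t=n-d$ and with $t=n-d+1$.

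Your remark on $d=0$ is slightly off. Condition (2) is not ``excluded'' there; it is vacuously true, since there are no $n+1$ columns to choose. The equivalence still holds because condition (1) fails, as $\operatorname{rank}(G)=k$, which matches $d(C)\ge1$. Your claim with $t=n$ already covers this case, so nothing in the argument needs to change.
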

	
	Let $\boldsymbol a=(a_1,\ldots,a_n)$ be a vector of pairwise distinct
	elements of $\mathbb F_q$, and let
	$\boldsymbol v=(v_1,\ldots,v_n)\in(\mathbb F_q^*)^n$. The generalized
	Reed--Solomon code associated with $\boldsymbol a$ and $\boldsymbol v$
	is
	$$
	\operatorname{GRS}_k(\boldsymbol a,\boldsymbol v)
	=\left\{
	(v_1f(a_1),\ldots,v_nf(a_n)):
	f(x)\in\mathbb F_q[x]_{<k}
	\right\}.
	$$
	Its standard generator matrix is
	$$
	G_k(\boldsymbol a,\boldsymbol v)
	=\bigl(v_j a_j^{i-1}\bigr)_{
		1\le i\le k,\,1\le j\le n}.
	$$
	The extended generalized Reed--Solomon code is defined by
	$$
	\operatorname{GRS}_k(\boldsymbol a,\boldsymbol v,\infty)
	=\left\{
	(v_1f(a_1),\ldots,v_nf(a_n),f_{k-1}):
	f(x)=\sum_{j=0}^{k-1}f_jx^j\in\mathbb F_q[x]_{<k}
	\right\}.
	$$
	
	For $1\le i\le n$, put
	\begin{equation}\label{eq:ui}
		u_i=\prod_{\substack{1\le j\le n\\j\ne i}}
		(a_i-a_j)^{-1}.
	\end{equation}
	These numbers are called the \emph{Lagrange coefficients} associated with $\boldsymbol a$. Indeed, Lagrange interpolation gives
	$$
	f(x)=\sum_{i=1}^n f(a_i)u_i\prod_{j\ne i}(x-a_j),\qquad \deg f<n,
	$$
	so $u_i$ is the coefficient of $x^{n-1}$ in the $i$-th interpolation basis polynomial. Comparing leading coefficients gives $\sum_{i=1}^n u_i a_i^j=0$ for $0\le j\le n-2$ and $\sum_{i=1}^n u_i a_i^{n-1}=1$.
	
	\begin{lemma}\label{lem:grs-galois-dual}\cite{li2023several}
		Let $f(x)\in\mathbb F_q[x]_{<k}$, $C=\operatorname{GRS}_k(\boldsymbol a,\boldsymbol v)$ as above.
		\begin{enumerate}
			\item[(1)] The codeword
			$\boldsymbol c=(v_1f(a_1),\ldots,v_nf(a_n))$ of $C$ belongs to
			$\operatorname{GRS}_k(\boldsymbol a,\boldsymbol v)^{\perp_\ell}$
			if and only if there exists $g(x)\in\mathbb F_q[x]$ with
			$\deg g(x)\le n-k-1$ such that
			$$
			v_i^{p^\ell+1}f^{p^\ell}(a_i)=u_i g(a_i),
			\qquad 1\le i\le n.
			$$
			
			\item[(2)] The codeword
			$$
			\boldsymbol c=(v_1f(a_1),\ldots,v_nf(a_n),f_{k-1})
			$$
			belongs to
			$\operatorname{GRS}_k(\boldsymbol a,\boldsymbol v,\infty)^{\perp_\ell}$
			if and only if there exists $g(x)=\sum_jg_jx^j\in\mathbb F_q[x]$
			with $\deg g(x)\le n-k$ such that
			$$
			v_i^{p^\ell+1}f^{p^\ell}(a_i)=u_i g(a_i),
			\qquad 1\le i\le n,
			$$
			and
			$$
			f_{k-1}^{p^\ell}=-g_{n-k}.
			$$
		\end{enumerate}
	\end{lemma}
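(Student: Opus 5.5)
The plan is to reduce both parts to the classical description of the Euclidean dual of a (extended) GRS code, and then transport that description to the Galois dual through Lemma~\ref{lem:galois-euclidean-dual}. For part (1), recall that $\operatorname{GRS}_k(\boldsymbol a,\boldsymbol v)^\perp=\operatorname{GRS}_{n-k}(\boldsymbol a,\boldsymbol w)$ with $w_i=u_i/v_i$. This follows from the identity $\sum_i u_i a_i^j=0$ for $0\le j\le n-2$. Indeed, for $\deg f\le k-1$ and $\deg g\le n-k-1$ we have $\deg(fg)\le n-2$, so
$$\sum_i v_if(a_i)\,\tfrac{u_i}{v_i}g(a_i)=\sum_i u_i(fg)(a_i)=0.$$
The claim then follows by a dimension count.

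Rather than applying the Frobenius twist of Lemma~\ref{lem:galois-euclidean-dual}, I would work directly with the condition $\langle\boldsymbol x,\boldsymbol c\rangle_\ell=0$ for all $\boldsymbol x\in C$. Write $\boldsymbol c^{(p^\ell)}=(v_i^{p^\ell}f^{p^\ell}(a_i))_i$. Then $\boldsymbol c\in C^{\perp_\ell}$ if and only if $\boldsymbol c^{(p^\ell)}\in C^\perp$, since
$$\langle\boldsymbol x,\boldsymbol c\rangle_\ell=\sum_i x_i c_i^{p^\ell}$$
is exactly the Euclidean inner product of $\boldsymbol x$ with $\boldsymbol c^{(p^\ell)}$. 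By the description of $C^\perp$, this is equivalent to the existence of $g$ with $\deg g\le n-k-1$ such that
$$v_i^{p^\ell}f^{p^\ell}(a_i)=\tfrac{u_i}{v_i}\,g(a_i)\quad\text{for all } i.$$
Multiplying by $v_i$ gives the stated identity.

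For part (2), I would first establish the dual of the extended code. The claim is that $\operatorname{GRS}_k(\boldsymbol a,\boldsymbol v,\infty)^\perp$ consists of the vectors $(\tfrac{u_i}{v_i}g(a_i))_i$ together with the last coordinate $-g_{n-k}$, where $\deg g\le n-k$. To check orthogonality, note that $\deg(fg)\le n-1$. Using $\sum_iu_ia_i^j=0$ for $j\le n-2$ and $\sum_iu_ia_i^{n-1}=1$, we get
$$\sum_iu_i(fg)(a_i)=\text{coefficient of }x^{n-1}\text{ in }fg=f_{k-1}g_{n-k}.$$
Adding the product of the last coordinates, $f_{k-1}\cdot(-g_{n-k})$, gives zero. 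This set of vectors has dimension $n-k+1=(n+1)-k$, because the evaluation map on polynomials of degree at most $n-k<n$ is injective. Hence it equals the full dual. The same Frobenius argument as in part (1) then applies to $\boldsymbol c^{(p^\ell)}$, whose last coordinate is $f_{k-1}^{p^\ell}$. Matching coordinates yields $v_i^{p^\ell+1}f^{p^\ell}(a_i)=u_ig(a_i)$ and $f_{k-1}^{p^\ell}=-g_{n-k}$.

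The main obstacle is only bookkeeping. One must make sure the injectivity and dimension count are valid, which needs $k\le n$ in part (1) and $n-k<n$ in part (2). One must also confirm the sign and the coefficient-extraction step in the extended case. Everything else is direct.
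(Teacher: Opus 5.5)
Your proof is correct. The paper imports this lemma from \cite{li2023several} without proving it, but your argument matches the one the paper gives for its GRL analogue, Proposition~\ref{prop:II.2}: identify the Euclidean dual of the (extended) GRS code using $\sum_i u_ia_i^j=0$ for $j\le n-2$ and $\sum_i u_ia_i^{n-1}=1$, note that $\boldsymbol c\in C^{\perp_\ell}$ if and only if $\boldsymbol c^{(p^\ell)}\in C^\perp$, and compare coordinates (part (2) is the case $s=1$, $A_1=(1)$, $B_1=-1$ of that proposition).
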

	
	We next recall the generalized Roth-Lempel codes introduced in
	\cite{li2025new}. Let $1\le s\le k\le n$ and let
	$A_s\in\operatorname{GL}_s(\mathbb F_q)$. Define
	$$
	E_s=
	\begin{pmatrix}
		O_{(k-s)\times s}\\
		A_s
	\end{pmatrix}.
	$$
	
	\begin{definition}\label{def:GRL}
		The generalized Roth-Lempel code
		$\operatorname{GRL}_k(\boldsymbol a,\boldsymbol v,A_s)$ is the
		$[n+s,k]_q$ linear code generated by
		$$
		G_k(\boldsymbol a,\boldsymbol v,A_s)
		=\left(G_k(\boldsymbol a,\boldsymbol v)\mid E_s\right).
		$$
		Equivalently, if
		$f(x)=\sum_{j=0}^{k-1}f_jx^j\in\mathbb F_q[x]_{<k}$ and
		$$
		\boldsymbol s_k(f)=(f_{k-s},\ldots,f_{k-1}),
		$$
		then its corresponding codeword is
		$$
		\boldsymbol c_f
		=\bigl(v_1f(a_1),\ldots,v_nf(a_n),
		\boldsymbol s_k(f)A_s\bigr).
		$$
	\end{definition}
	
	When $s=1$ and $A_1=(1)$, Definition~\ref{def:GRL} gives the extended
	GRS code. 
	
	When $\boldsymbol v=(1,\cdots,1)$, $s=2$ and
	$$
	A_2=
	\begin{pmatrix}
		0&1\\
		1&\zeta
	\end{pmatrix},
	$$
	it gives the classical Roth-Lempel code
	\cite{roth1989construction}.
	
	For $j\ge0$, define
	\begin{equation}\label{eq:anj}
		a_{n-1}^{(j)}
		:=\sum_{i=1}^n u_i a_i^{n-1+j}.
	\end{equation}
	In particular, $a_{n-1}^{(0)}=1$. 
	
	Let
	\begin{equation}\label{eq:Ts}
		T_s=\bigl(t_{r,t}\bigr)_{1\le r,t\le s},
		\qquad
		t_{r,t}=
		\begin{cases}
			0, &r+t\le s,\\
			a_{n-1}^{(r+t-s-1)}, &r+t\ge s+1.
		\end{cases}
	\end{equation}
	For $s\ge3$, this matrix can be written as
	$$
	T_s=
	\begin{pmatrix}
		0&0&\cdots&0&1\\
		0&0&\cdots&1&a_{n-1}^{(1)}\\
		\vdots&\vdots&&\vdots&\vdots\\
		0&1&\cdots&a_{n-1}^{(s-3)}&a_{n-1}^{(s-2)}\\
		1&a_{n-1}^{(1)}&\cdots&a_{n-1}^{(s-2)}&a_{n-1}^{(s-1)}
	\end{pmatrix}.
	$$
	For the two smaller cases, $T_1=(1)$ and
	$$
	T_2=
	\begin{pmatrix}
		0&1\\
		1&a_{n-1}^{(1)}
	\end{pmatrix}.
	$$
	Since the anti-diagonal entries of $T_s$ are all equal to $1$,
	$T_s$ is nonsingular. Put
	\begin{equation}\label{eq:Bs}
		B_s=-T_s\bigl(A_s^{-1}\bigr)^{\mathsf T}.
	\end{equation}
	Obviously, $B_{s}$ is also nonsingular.
	
	\begin{proposition}\label{prop:grl-parity-check}\cite{li2025new}
		A parity-check matrix of
		$C=\operatorname{GRL}_k(\boldsymbol a,\boldsymbol v,A_s)$ is
		$$
		H=
		\left(
		G_{n+s-k}(\boldsymbol a,
		(u_1v_1^{-1},\ldots,u_nv_n^{-1}))
		\,\middle|\,
		\begin{pmatrix}
			O_{(n-k)\times s}\\
			B_s
		\end{pmatrix}
		\right).
		$$
		Consequently,
		$$
		C^\perp
		=\left\{
		\bigl(u_1v_1^{-1}g(a_1),\ldots,u_nv_n^{-1}g(a_n),
		\boldsymbol s_{n+s-k}(g)B_s\bigr):
		g(x)\in\mathbb F_q[x]_{<n+s-k}
		\right\},
		$$
		where
		$$
		\boldsymbol s_{n+s-k}(g)
		=(g_{n-k},\ldots,g_{n-k+s-1}).
		$$
	\end{proposition}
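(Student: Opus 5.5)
The statement to prove is Proposition~\ref{prop:grl-parity-check}. I would verify it directly. First, $H$ has the right shape and rank. Then every row of $H$ is Euclidean-orthogonal to every row of $G_k(\boldsymbol a,\boldsymbol v,A_s)$. For the rank count, $H$ is $(n+s-k)\times(n+s)$. Its first $n$ columns form a GRS generator matrix with $n+s-k$ rows. If $n+s-k\le n$, i.e.\ $s\le k$ (which holds), this block has full row rank $n+s-k$ because the Vandermonde rows are independent on $n$ distinct points with nonzero multipliers. Hence $\operatorname{rank}H=n+s-k=(n+s)-k$. So it suffices to show $G H^{\mathsf T}=O$, i.e.\ $\langle \boldsymbol c_f,\boldsymbol h_g\rangle=0$ for all $f\in\mathbb F_q[x]_{<k}$ and $g\in\mathbb F_q[x]_{<n+s-k}$. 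The displayed description of $C^\perp$ is then just the row space of $H$ written polynomially.

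The inner product of $\boldsymbol c_f$ with the row indexed by $g$ is
\[
\sum_{i=1}^n u_i f(a_i)g(a_i)+\boldsymbol s_k(f)A_s B_s^{\mathsf T}\boldsymbol s_{n+s-k}(g)^{\mathsf T}.
\]
By \eqref{eq:Bs}, $A_sB_s^{\mathsf T}=-A_sA_s^{-1}T_s^{\mathsf T}=-T_s$, using that $T_s$ is symmetric. So I must show
\[
\sum_i u_i (fg)(a_i)=\boldsymbol s_k(f)\,T_s\,\boldsymbol s_{n+s-k}(g)^{\mathsf T}.
\]

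To evaluate the left side, expand $fg=\sum_{j,m} f_j g_m x^{j+m}$ and use $\sum_i u_i a_i^{t}$. From the Lagrange identities stated before \eqref{eq:anj}, this sum is $0$ for $t\le n-2$ and equals $a_{n-1}^{(t-n+1)}$ for $t\ge n-1$. Since $\deg(fg)\le n+s-2$, the only contributing monomials have $j+m\ge n-1$. Such terms require $j\ge k-s$ and $m\ge n-k$, because $j\le k-1$ and $m\le n+s-k-1$ force both indices into the top windows. Write $j=k-s+r-1$ and $m=n-k+t-1$ with $1\le r,t\le s$. Then $j+m=n-1+(r+t-s-1)$, and the term contributes exactly when $r+t\ge s+1$, with coefficient $a_{n-1}^{(r+t-s-1)}=t_{r,t}$. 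This matches \eqref{eq:Ts} entrywise, giving the identity.

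The main obstacle is the index bookkeeping in this last step. I must confirm that the windows $\boldsymbol s_k(f)=(f_{k-s},\dots,f_{k-1})$ and $\boldsymbol s_{n+s-k}(g)=(g_{n-k},\dots,g_{n-k+s-1})$ are aligned with $T_s$ in the same orientation, not reversed. I also need the condition $s\le k$, and the symmetry of $T_s$, to justify $A_sB_s^{\mathsf T}=-T_s$.
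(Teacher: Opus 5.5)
Your proof is correct and complete. The paper does not prove this proposition; it imports it from \cite{li2025new}, so there is no internal argument to compare against, and your direct verification is a self-contained proof in its place.

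Both ingredients are right.
\begin{itemize}
\item \textbf{Dimension count.} The evaluation block $G_{n+s-k}(\boldsymbol a,(u_iv_i^{-1}))$ has full row rank $n+s-k$ because $s\le k$. Also $\dim C=k$ because $k\le n$ (you use this implicitly; it is worth saying). So the row space of $H$ has the same dimension as $C^\perp$, and orthogonality forces equality.
\item \textbf{Orthogonality.} You show $\sum_i u_i(fg)(a_i)=\boldsymbol s_k(f)\,T_s\,\boldsymbol s_{n+s-k}(g)^{\mathsf T}$ using the Lagrange identities and \eqref{eq:anj}. Combined with $A_sB_s^{\mathsf T}=-T_s^{\mathsf T}=-T_s$, this gives $GH^{\mathsf T}=O$.
\end{itemize}

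The concern you raise at the end is already settled by your own parametrization. Position $r$ of $\boldsymbol s_k(f)$ is $f_{k-s+r-1}$, and position $t$ of $\boldsymbol s_{n+s-k}(g)$ is $g_{n-k+t-1}$. So the exponent $n-1+(r+t-s-1)$ matches $t_{r,t}$ in \eqref{eq:Ts} with no reversal. Since $T_s$ is a Hankel matrix, and hence symmetric, the transpose causes no trouble either.

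What your argument adds over the bare citation is an explanation. It shows that $T_s$ is the Gram matrix of the top-coefficient windows under the GRS pairing $\sum_i u_i(fg)(a_i)$, i.e., exactly the defect the appended block must cancel. That is why $B_s=-T_s(A_s^{-1})^{\mathsf T}$ is defined as it is. The paper later relies on the same mechanism when it right-multiplies by $A_s^{\mathsf T}$ and reads off rows of $T_s$ in the boundary hull constructions.
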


	Notice that
	$$
	\det\bigl(A_s^{(p^\ell)}\bigr)
	=\bigl(\det A_s\bigr)^{p^\ell}.
	$$
	Hence $A_s^{(p^\ell)}$ is nonsingular whenever $A_s$ is
	nonsingular.

	The following characterization will be used throughout the subsequent
	constructions.
	
	\begin{proposition}\label{prop:II.2}
		Let
		$C=\operatorname{GRL}_k(\boldsymbol a,\boldsymbol v,A_s)$, and let
		$\boldsymbol c_f\in C$ be the codeword corresponding to
		$f(x)\in\mathbb F_q[x]_{<k}$. Then
		$\boldsymbol c_f\in C^{\perp_\ell}$ if and only if there exists
		$g(x)\in\mathbb F_q[x]$ with
		$\deg g(x)\le n-k+s-1$ such that
		\begin{equation}\label{eq:grl-galois-dual}
			\begin{cases}
				v_i^{p^\ell+1}f^{p^\ell}(a_i)=u_i g(a_i),
				&1\le i\le n,\\
				\boldsymbol s_k(f)^{(p^\ell)}A_s^{(p^\ell)}
				=\boldsymbol s_{n+s-k}(g)B_s.&
			\end{cases}
		\end{equation}
	\end{proposition}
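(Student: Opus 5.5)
The plan is to combine Lemma~\ref{lem:galois-euclidean-dual} with the explicit description of $C^\perp$ in Proposition~\ref{prop:grl-parity-check}. By Lemma~\ref{lem:galois-euclidean-dual}, $C^{\perp_\ell}=(C^\perp)^{(p^{e-\ell})}$. Hence a vector $\boldsymbol x$ lies in $C^{\perp_\ell}$ if and only if $\boldsymbol x^{(p^\ell)}\in C^\perp$, because the Frobenius power $x\mapsto x^{p^\ell}$ is a bijection on $\mathbb F_q$ that inverts $x\mapsto x^{p^{e-\ell}}$. This can also be seen directly: $\langle \boldsymbol c,\boldsymbol x\rangle_\ell$ is the Euclidean product of $\boldsymbol c$ with $\boldsymbol x^{(p^\ell)}$. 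So $\boldsymbol c_f\in C^{\perp_\ell}$ if and only if $\boldsymbol c_f^{(p^\ell)}\in C^\perp$.

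Next I compute $\boldsymbol c_f^{(p^\ell)}$ coordinatewise from Definition~\ref{def:GRL}. On the first $n$ coordinates it equals $v_i^{p^\ell}f^{p^\ell}(a_i)$, using the convention $f^{p^\ell}(x)=(f(x))^{p^\ell}$, so that $(f(a_i))^{p^\ell}=f^{p^\ell}(a_i)$. On the last $s$ coordinates it equals $(\boldsymbol s_k(f)A_s)^{(p^\ell)}=\boldsymbol s_k(f)^{(p^\ell)}A_s^{(p^\ell)}$, since Frobenius is a ring homomorphism and so commutes with matrix products entrywise.

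By Proposition~\ref{prop:grl-parity-check}, membership in $C^\perp$ means there is $g\in\mathbb F_q[x]_{<n+s-k}$, i.e.\ $\deg g\le n-k+s-1$, satisfying two conditions. The first is $v_i^{p^\ell}f^{p^\ell}(a_i)=u_iv_i^{-1}g(a_i)$ for $1\le i\le n$. The second is $\boldsymbol s_k(f)^{(p^\ell)}A_s^{(p^\ell)}=\boldsymbol s_{n+s-k}(g)B_s$. Multiplying the first identity by $v_i\neq0$ gives exactly $v_i^{p^\ell+1}f^{p^\ell}(a_i)=u_ig(a_i)$. This yields both lines of \eqref{eq:grl-galois-dual}.

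Every step is an equivalence, so the same argument read backwards gives the converse. The only point requiring care is the bookkeeping of Frobenius powers: we need $\boldsymbol x\in(C^\perp)^{(p^{e-\ell})}\iff \boldsymbol x^{(p^\ell)}\in C^\perp$, which uses $p^{e-\ell}p^\ell=q$ and $y^q=y$ on $\mathbb F_q$. I do not expect a genuine obstacle; the result is a direct transcription of the dual description.
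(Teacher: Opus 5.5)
Your proposal is correct and matches the paper's proof: Lemma~\ref{lem:galois-euclidean-dual} reduces the question to whether $\boldsymbol c_f^{(p^\ell)}\in C^\perp$. Comparing coordinates with Proposition~\ref{prop:grl-parity-check}, and multiplying the first $n$ identities by $v_i$, then gives both lines of \eqref{eq:grl-galois-dual}; your explicit checks of the Frobenius inversion and of $(\boldsymbol s_k(f)A_s)^{(p^\ell)}=\boldsymbol s_k(f)^{(p^\ell)}A_s^{(p^\ell)}$ only spell out what the paper leaves implicit.
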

	
	\begin{proof}
		By Lemma~\ref{lem:galois-euclidean-dual},
		$\boldsymbol c_f\in C^{\perp_\ell}$ if and only if
		$\boldsymbol c_f^{(p^\ell)}\in C^\perp$. Applying
		Proposition~\ref{prop:grl-parity-check} and comparing the first $n$
		coordinates gives
		$$
		v_i^{p^\ell}f^{p^\ell}(a_i)
		=u_iv_i^{-1}g(a_i),
		\qquad 1\le i\le n,
		$$
		which is equivalent to the first line of
		\eqref{eq:grl-galois-dual}. Comparing the last $s$ coordinates gives
		the second line. This proves the result.
	\end{proof}
	
	To balance the coefficient in \eqref{eq:grl-galois-dual}, we use the following extension of \cite[Lemma III.2]{9559992} to arbitrary characteristic.
	\begin{lemma}\label{lem:III.2}
		Let $q=p^e$, where $p$ is a prime, and let $\ell$ be a positive integer such that $\ell\mid e$. Then, for every $u\in\mathbb F_{p^\ell}^*$, there exists $v\in\mathbb F_q^*$ satisfying
		$v^{p^\ell+1}=u$ if and only if either $p=2$ or $2\ell\mid e$.
	\end{lemma}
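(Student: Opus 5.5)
The proof reduces the question to the image of the norm-type map $\phi:\mathbb F_q^*\to\mathbb F_q^*$, $v\mapsto v^{p^\ell+1}$. Since $\mathbb F_q^*$ is cyclic of order $q-1=p^e-1$, the image of $\phi$ is the unique subgroup of order $(p^e-1)/d$, where $d=\gcd(p^\ell+1,p^e-1)$. Likewise $\mathbb F_{p^\ell}^*$ is the unique subgroup of order $p^\ell-1$. Because both are subgroups of a cyclic group, the statement "every $u\in\mathbb F_{p^\ell}^*$ is a $(p^\ell+1)$-th power in $\mathbb F_q^*$" is equivalent to the divisibility $p^\ell-1\mid (p^e-1)/d$. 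So the plan is to compute $d$ and then test this divisibility.

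The first step is to compute $d$ with the standard identity $\gcd(p^a+1,p^b-1)$. Write $e=m\ell$. If $m$ is even, i.e.\ $2\ell\mid e$, then $p^{2\ell}-1\mid p^e-1$, so $p^\ell+1\mid p^e-1$ and $d=p^\ell+1$. If $m$ is odd, then $p^e\equiv (-1)^m=-1\pmod{p^\ell+1}$, so $p^e-1\equiv -2$. Hence $d=\gcd(p^\ell+1,2)$, which is $1$ when $p=2$ and $2$ when $p$ is odd.

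The second step checks the divisibility in each case.

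\emph{Case $2\ell\mid e$.} The condition is $(p^\ell-1)(p^\ell+1)=p^{2\ell}-1\mid p^e-1$, which holds since $2\ell\mid e$.

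\emph{Case $m$ odd, $p=2$.} Here $d=1$, so $\phi$ is a bijection. Then every element, in particular every $u\in\mathbb F_{p^\ell}^*$, is in the image.

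\emph{Case $m$ odd, $p$ odd.} Here $d=2$, so the image of $\phi$ is the set of nonzero squares. We need $p^\ell-1\mid (p^e-1)/2$. Equivalently, we must decide whether $(p^e-1)/(p^\ell-1)=1+p^\ell+\cdots+p^{(m-1)\ell}$ is even. This sum has $m$ odd terms with $m$ odd, so it is odd, and the divisibility fails. Concretely, a generator of $\mathbb F_{p^\ell}^*$ is then a nonsquare in $\mathbb F_q$ and is not attained, which gives the "only if" direction.

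The main obstacle is the gcd computation and handling the edge case $m=1$ (that is, $\ell=e$), where the field is just $\mathbb F_{p^\ell}$. The same argument covers it: $d=\gcd(p^\ell+1,p^\ell-1)=\gcd(2,p^\ell-1)$, and the parity analysis still applies. The rest is bookkeeping with the subgroup lattice of a cyclic group.
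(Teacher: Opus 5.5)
Your proof is correct, but it takes a different route from the paper's. The paper computes nothing for odd $p$. It simply invokes \cite[Lemma III.2]{9559992} there. For $p=2$ it gives a one-line explicit construction: take $v=u^{2^{\ell-1}}\in\mathbb F_{2^\ell}$, so that $v^{2^\ell+1}=v^{2^\ell}v=v^2=u$, which works whether or not $2\ell\mid e$.

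You instead give a uniform, self-contained argument in the cyclic group $\mathbb F_q^*$:
\begin{enumerate}
\item[(i)] The image of $v\mapsto v^{p^\ell+1}$ is the subgroup of order $(q-1)/d$, where $d=\gcd(p^\ell+1,q-1)$.
\item[(ii)] The congruence $p^e\equiv(-1)^{e/\ell}\pmod{p^\ell+1}$ gives $d=p^\ell+1$ when $e/\ell$ is even and $d=\gcd(p^\ell+1,2)$ when $e/\ell$ is odd.
\item[(iii)] Containment of $\mathbb F_{p^\ell}^*$ in the image reduces to a divisibility, which you settle by the parity of $1+p^\ell+\cdots+p^{(m-1)\ell}$.
\end{enumerate}

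What your route buys is that both directions are proved explicitly, including the \emph{only if} direction for odd $p$ that the paper outsources to the citation. It also identifies the image of the map exactly; for instance, the map is a bijection when $p=2$ and $e/\ell$ is odd. Your gcd computation is the same congruence the paper uses right after the lemma to show that $p^\ell+1\mid q-1$ if and only if $2\ell\mid e$.

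What the paper's route buys is brevity, and in characteristic $2$ an explicit formula for $v$ lying in the subfield $\mathbb F_{2^\ell}$.

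Your separate concern about $m=1$ is unnecessary. That case is already covered by your odd-$m$ analysis, with $S=1$.
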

	
	\begin{proof}
		If $p$ is odd, the result follows directly from \cite[Lemma III.2]{9559992}. Now assume that $p=2$. Since the Frobenius map $x\mapsto x^2$ is an automorphism of $\mathbb F_{2^\ell}$, for any $u\in\mathbb F_{2^\ell}^*$, there exists $v\in\mathbb F_{2^\ell}^*$ such that $v^2=u$; in fact, one may take $v=u^{2^{\ell-1}}$. Since $\ell\mid e$, we have $\mathbb F_{2^\ell}\subseteq\mathbb F_q$, and
		$v^{2^\ell+1}=v^{2^\ell}v=v^2=u$.
		Hence the required $v$ always exists when $p=2$. This completes the proof.
	\end{proof}
	Under $\ell\mid e$ with $\ell>0$, we also have $p^\ell+1\mid q-1$ if and only if $2\ell\mid e$, since $q-1\equiv(-1)^{e/\ell}-1\pmod{p^\ell+1}$ and $p^\ell+1>2$. Thus we retain $2\ell\mid e$ in the subsequent constructions in both odd and even characteristic.
	
	The following inclusion also holds in arbitrary characteristic.
	\begin{lemma}\label{prop:complementary-subfield}\cite{li2023several}
		Let $q=p^e$, where $p$ is a prime and $0<\ell<e$. Suppose that $2(e-\ell)\mid e$. Then
		$\mathbb F_{p^{e-\ell}}\subseteq\mathbb F_q$ and
		$$
		\mathbb F_{p^{e-\ell}}^{*}\subseteq
		\left(\mathbb F_q^{*}\right)^{p^\ell+1}.
		$$
		In particular, for every $c\in\mathbb F_{p^{e-\ell}}^{*}$, there exists
		$v\in\mathbb F_q^{*}$ such that $v^{p^\ell+1}=c$.
	\end{lemma}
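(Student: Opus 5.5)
The statement to prove is Lemma~\ref{prop:complementary-subfield}: if $2(e-\ell)\mid e$, then $\mathbb F_{p^{e-\ell}}\subseteq\mathbb F_q$ and every element of $\mathbb F_{p^{e-\ell}}^*$ is a $(p^\ell+1)$-th power in $\mathbb F_q^*$.

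The plan is a direct cyclic-group argument in $\mathbb F_q^*$. Put $m=e-\ell$.

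\textbf{Step 1: the subfield inclusion.} Since $2m\mid e$, in particular $m\mid e$, so $\mathbb F_{p^m}\subseteq\mathbb F_q$.

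\textbf{Step 2: reduce to a divisibility condition.} Write $e=2mt$ with $t\ge1$. Then $\ell=e-m=m(2t-1)$. Let $\gamma$ generate $\mathbb F_q^*$, which is cyclic of order $q-1$. The subgroup $(\mathbb F_q^*)^{p^\ell+1}$ is the unique subgroup of order $(q-1)/d$, where $d=\gcd(p^\ell+1,q-1)$. Also $\mathbb F_{p^m}^*$ is the unique subgroup of order $p^m-1$. In a cyclic group, subgroups are ordered by divisibility of their orders. So it suffices to show
$$
(p^m-1)\,d \mid q-1 .
$$
A simpler equivalent form is that every $c\in\mathbb F_{p^m}^*$ satisfies $c^{(q-1)/d}=1$.

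\textbf{Step 3: compute $d$.} Set $P=p^m$, so $q=P^{2t}$ and $p^\ell=P^{2t-1}$.
\begin{itemize}
\item $P^{2t-1}+1$ divides $P^{2(2t-1)}-1$.
\item $\gcd\bigl(P^{2(2t-1)}-1,\,P^{2t}-1\bigr)=P^{2\gcd(2t-1,t)}-1=P^2-1$.
\end{itemize}
Hence $d\mid P^2-1$. Also $P+1$ divides $P^{2t-1}+1$ because $2t-1$ is odd, and $P+1$ divides $P^{2t}-1$. Therefore $P+1\mid d$.

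It remains to pin down the part of $d$ coming from $P-1$. We have $P^{2t-1}+1\equiv 2\pmod{P-1}$, so $\gcd(P^{2t-1}+1,P-1)$ divides $2$. It follows that $d=P+1$ or $d=2(P+1)$, and the latter can only occur when $p$ is odd.

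\textbf{Step 4: check the divisibility.} We need
$$
(P-1)\,d \mid P^{2t}-1=(P^2-1)\bigl(1+P^2+\cdots+P^{2(t-1)}\bigr).
$$
\begin{itemize}
\item If $d=P+1$, this holds immediately.
\item If $d=2(P+1)$ (with $p$ odd), we need the cofactor $1+P^2+\cdots+P^{2(t-1)}$ to be even. Since $P$ is odd, this cofactor has the parity of $t$, so the check fails when $t$ is odd.
\end{itemize}

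This case is the main obstacle. Test it with $t=1$, where $e=2m$, $\ell=m$, and $d=\gcd(P+1,P^2-1)=P+1$. Then $d=2(P+1)$ never actually arises, because $2(P+1)$ would have to divide $P^2-1$. In general, a $2$ dividing $d$ beyond $P+1$ requires $2(P+1)\mid P^{2t}-1$, i.e.\ the cofactor above must be even. So whenever $d=2(P+1)$ we already have $t$ even, and the divisibility holds.

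\textbf{Conclusion.} With $d\mid P^2-1$ and $P+1\mid d$ established, $(P-1)d$ always divides $q-1$. Hence $\mathbb F_{p^m}^*\subseteq(\mathbb F_q^*)^{p^\ell+1}$. In particular, for every $c\in\mathbb F_{p^{e-\ell}}^*$ there exists $v\in\mathbb F_q^*$ with $v^{p^\ell+1}=c$.
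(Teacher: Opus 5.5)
\textbf{Gap: the case $d=2(P+1)$.} Your reduction to $(P-1)d\mid q-1$ with $d=\gcd(p^\ell+1,q-1)$ is sound. Steps 1--3 are also correct as far as they go: you show $P+1\mid d\mid P^2-1$, hence $d\in\{P+1,2(P+1)\}$. The gap is in how you dispose of $d=2(P+1)$. You claim that $2(P+1)\mid P^{2t}-1$ forces the cofactor $S=1+P^2+\cdots+P^{2(t-1)}$ to be even. But $P^{2t}-1=(P+1)(P-1)S$, and $P-1$ is already even for odd $P$. So $2(P+1)\mid P^{2t}-1$ holds for every $t$ and says nothing about $S$; for example, $P=3$, $t=1$ gives $8\mid 8$ while $S=1$. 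Your $t=1$ sanity check has the same flaw: $2(P+1)\mid P^2-1$ is true for odd $P$. The real reason $d=P+1$ there is that $d\mid p^\ell+1=P+1$. Consequently your closing sentence, that $d\mid P^2-1$ and $P+1\mid d$ make $(P-1)d\mid q-1$, is not justified. Those two facts alone still allow $d=2(P+1)$ with $t$ odd, which is exactly where your own computation shows the divisibility fails.

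\textbf{The missing idea.} You need a constraint coming from the first argument of the gcd, $P^{2t-1}+1$, not from $q-1$. Write $d=(P+1)d'$. Then $d'\mid P-1$, and also $d'\mid\Phi:=(P^{2t-1}+1)/(P+1)=\sum_{i=0}^{2t-2}(-P)^i$. Since this sum has $2t-1$ terms, an odd number, $\Phi\equiv 1\pmod{P-1}$, so $d'=1$. Hence $d=P+1$ always, and $(P-1)d=P^2-1\mid P^{2t}-1$ finishes the proof.

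\textbf{Comparison with the paper.} With this repair your argument is a correct, self-contained alternative, and it yields the sharper identity $\gcd(p^\ell+1,q-1)=p^{e-\ell}+1$. The paper's proof is shorter but leans on Lemma~\ref{lem:III.2}. It applies that lemma with $e-\ell$ in place of $\ell$ to get $v^{p^{e-\ell}+1}=c$. It then uses $v^{p^\ell+1}=(v^{p^{e-\ell}+1})^{p^\ell}$, which holds because $v^{p^e}=v$, together with $c^{p^\ell}=c$, which holds because $(e-\ell)\mid\ell$.
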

	\begin{proof}
		Since $2(e-\ell)\mid e$, we have $(e-\ell)\mid\ell$. For $c\in\mathbb F_{p^{e-\ell}}^*$, Lemma~\ref{lem:III.2} gives $v\in\mathbb F_q^*$ with $v^{p^{e-\ell}+1}=c$. Hence $v^{p^\ell+1}=(v^{p^{e-\ell}+1})^{p^\ell}=c^{p^\ell}=c$.
	\end{proof}
	The following observation shows that the $\ell$-Galois hull and the
	$(e-\ell)$-Galois hull always have the same dimension.
	\begin{proposition}\label{prop:complementary-hull}\cite{li2023several}
		Let $C$ be an $[n,k]_q$ linear code over $\mathbb F_q$, where $q=p^e$. For any $1\le \ell\le e-1$,
		\[
		\dim\bigl(\operatorname{Hull}_{\ell}(C)\bigr)
		=
		\dim\bigl(\operatorname{Hull}_{e-\ell}(C)\bigr).
		\]
	\end{proposition}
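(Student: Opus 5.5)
We will use the rank formula of Lemma~\ref{lem:galois-hull-rank}, which reduces the statement to comparing the ranks of the two matrices
$$
M_\ell=G\bigl(G^{(p^\ell)}\bigr)^{T}
\quad\text{and}\quad
M_{e-\ell}=G\bigl(G^{(p^{e-\ell})}\bigr)^{T}.
$$
The first step is to compute the transpose of $M_\ell$ entrywise. Writing $G=(g_{ij})$, the $(i,j)$ entry of $M_\ell$ is $\sum_{t} g_{it}\,g_{jt}^{p^\ell}$. Consequently, the $(j,i)$ entry of $M_\ell^{T}$ is $\sum_t g_{jt}\,g_{it}^{p^\ell}$.

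The second step is to apply the field automorphism $x\mapsto x^{p^{e-\ell}}$ entrywise to $M_\ell^{T}$. Since $x^{p^e}=x$ on $\mathbb F_q$, the $(j,i)$ entry becomes
$$
\sum_t g_{jt}^{p^{e-\ell}}\,g_{it}^{p^{e}}=\sum_t g_{it}\,g_{jt}^{p^{e-\ell}},
$$
which is exactly the $(i,j)$ entry of $M_{e-\ell}$. This gives the identity
$$
M_{e-\ell}=\bigl(M_\ell^{T}\bigr)^{(p^{e-\ell})}.
$$

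The final step is to observe that neither operation changes rank. Transposition preserves rank. Applying a field automorphism entrywise also preserves rank, because it maps every minor to the image of that minor, and the image is nonzero exactly when the minor is nonzero. Hence $\operatorname{rank}M_{e-\ell}=\operatorname{rank}M_\ell$. Lemma~\ref{lem:galois-hull-rank} then gives equal hull dimensions, $k-\operatorname{rank}M_\ell=k-\operatorname{rank}M_{e-\ell}$.

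The only point requiring care is the index bookkeeping in the second step, namely checking that the exponents combine to $p^e$, so that the Frobenius power acts as the identity and the roles of the indices $i$ and $j$ swap correctly. The other steps are routine.
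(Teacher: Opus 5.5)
Your proof is correct. The paper gives no proof of its own here and simply cites \cite{li2023several}. Your argument is the standard one and is complete: reduce via Lemma~\ref{lem:galois-hull-rank} to comparing $\operatorname{rank}\bigl(G(G^{(p^\ell)})^{T}\bigr)$ with $\operatorname{rank}\bigl(G(G^{(p^{e-\ell})})^{T}\bigr)$, then use that rank is unchanged by transposition and by an entrywise field automorphism.

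There is one slip, in exactly the bookkeeping you flagged. The sum $\sum_t g_{jt}g_{it}^{p^\ell}$ is the $(i,j)$ entry of your $M_\ell^{T}$ (equivalently, the $(j,i)$ entry of $M_\ell$), not its $(j,i)$ entry. Read literally, your sentence would give $M_{e-\ell}=M_\ell^{(p^{e-\ell})}$, which is false in general, since $M_\ell^{(p^{e-\ell})}=M_{e-\ell}^{T}$. Your displayed identity $M_{e-\ell}=(M_\ell^{T})^{(p^{e-\ell})}$ is nonetheless correct. Rank is also insensitive to the transpose, so the conclusion is unaffected.

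You can avoid indices entirely. The entrywise Frobenius respects products and transposes, so
$M_\ell^{(p^{e-\ell})}=G^{(p^{e-\ell})}\bigl(G^{(p^{e})}\bigr)^{T}=G^{(p^{e-\ell})}G^{T}=M_{e-\ell}^{T}$.
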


	\section{Main construction}\label{sec:construction}
	\renewcommand{\thesubsection}{\mbox{\thesection-\arabic{subsection}}}
	\renewcommand{\thesubsectiondis}{\arabic{subsection}.}
	In this section, we first present a general construction strategy that will be used throughout the subsequent explicit constructions. We then establish several theorems based on different coset constructions.
	
	\subsection{A common construction strategy}
	
	Let $q=p^e>4$, where $p$ is a prime, and assume that $\ell>0$ and $2\ell\mid e$.
	Then $(\mathbb F_q^*)^{p^\ell+1}$ has $(q-1)/(p^\ell+1)>1$ elements, which ensures the choices of $\beta$ and $\eta$ below. The condition $q>4$ excludes the degenerate case $q=4$, $\ell=1$, in which this group is $\{1\}$.
	Let $\boldsymbol a=(a_1,\ldots,a_n)$ be a vector of distinct elements
	of $\mathbb F_q$, and put
	$u_i=\prod_{1\le j\le n,\,j\ne i}(a_i-a_j)^{-1}$. The following
	results isolate the common arguments used in all the explicit
	constructions.
	
	\begin{proposition}\label{thm:common-nonzero}
		Assume that $a_i\ne0$ and
		$a_i^{-1}u_i\in\mathbb F_{p^\ell}^{*}$ for $1\le i\le n$.
		Let $A_s\in\operatorname{GL}_s(\mathbb F_q)$ be arbitrary. Then,
		for every $s+1\le k\le\left\lfloor\frac{n+p^\ell}{p^\ell+1}\right\rfloor$
		and every $0\le h\le k-s-1$, there exists an $[n+s,k]_q$
		generalized Roth-Lempel code $C$ with extension matrix $A_s$ such
		that $\dim(\operatorname{Hull}_{\ell}(C))=h$.
	\end{proposition}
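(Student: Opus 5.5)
The plan is to produce $C=\operatorname{GRL}_k(\boldsymbol a,\boldsymbol v,A_s)$ by choosing the column multipliers $\boldsymbol v$ so that the first line of \eqref{eq:grl-galois-dual} turns into a polynomial identity on most of the evaluation points. On the remaining $n-m$ points the identity is perturbed by a fixed constant $\beta\neq1$. Then I would read off $\operatorname{Hull}_\ell(C)$ directly from Proposition~\ref{prop:II.2}, with $m$ tuned to give the prescribed $h$.

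\textbf{Choice of multipliers.} Since $2\ell\mid e$ and $a_i^{-1}u_i\in\mathbb F_{p^\ell}^*$, Lemma~\ref{lem:III.2} gives $w_i\in\mathbb F_q^*$ with $w_i^{p^\ell+1}=a_i^{-1}u_i$. Fix $\beta=\gamma^{p^\ell+1}\in(\mathbb F_q^*)^{p^\ell+1}$ with $\beta\ne1$; such a $\beta$ exists because this group has more than one element when $q>4$. Put $m=n-k+s+1+h$; the hypothesis $0\le h\le k-s-1$ gives $n-k+s+1\le m\le n$. Define $v_i=w_i$ for $i\le m$ and $v_i=\gamma w_i$ for $i>m$. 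The first line of \eqref{eq:grl-galois-dual} then becomes
$$f^{p^\ell}(a_i)=a_ig(a_i)\ (i\le m),\qquad \beta f^{p^\ell}(a_i)=a_ig(a_i)\ (i>m).$$

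\textbf{Reduction to polynomials.} Let $f\in\mathbb F_q[x]_{<k}$ and let $g$ be as in Proposition~\ref{prop:II.2}, so $\deg g\le n-k+s-1$.
\begin{enumerate}
\item[(i)] The bound $k\le\lfloor (n+p^\ell)/(p^\ell+1)\rfloor$ gives $p^\ell(k-1)\le n-k$. Hence $f^{p^\ell}-xg$ has degree at most $n-k+s<m$.
\item[(ii)] This polynomial vanishes at $a_1,\dots,a_m$, so $f^{p^\ell}=xg$ identically. In particular $f_0=0$ and $g=f^{p^\ell}/x$.
\item[(iii)] Now $\deg g\le p^\ell(k-1)-1\le n-k-1$, so $\boldsymbol s_{n+s-k}(g)=0$. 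The second line of \eqref{eq:grl-galois-dual} then reads $\boldsymbol s_k(f)^{(p^\ell)}A_s^{(p^\ell)}=0$. Since $A_s^{(p^\ell)}$ is nonsingular, this forces $f_{k-s}=\dots=f_{k-1}=0$. This is the only place $A_s$ enters, which is why $A_s$ may be arbitrary.
\item[(iv)] For $i>m$, substituting $f^{p^\ell}=xg$ gives $(\beta-1)a_ig(a_i)=0$. Hence $f^{p^\ell}(a_i)=0$, so $f(a_i)=0$.
\end{enumerate}

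\textbf{Converse and dimension count.} Conversely, take any $f$ of the form
$$f=x\prod_{i>m}(x-a_i)\,r(x),\qquad \deg r\le k-s-2-(n-m)=h-1,$$
and put $g=f^{p^\ell}/x$. Then both lines of \eqref{eq:grl-galois-dual} hold, so $\boldsymbol c_f\in C^{\perp_\ell}$. The map $f\mapsto\boldsymbol c_f$ is injective, so $\dim\operatorname{Hull}_\ell(C)$ equals the dimension of this space of $f$, namely $h$. When $h=0$ the space is zero, and the hull is trivial.

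\textbf{Main obstacle.} I expect the delicate point to be the degree bookkeeping in steps (i)–(iii). The same bound $p^\ell(k-1)\le n-k$ must do two jobs: make $\deg(f^{p^\ell}-xg)<m$, and push $g$ below the extension coordinates so that the $A_s$-condition collapses to $\boldsymbol s_k(f)=0$. The admissible range of $m$ is exactly what produces $0\le h\le k-s-1$.
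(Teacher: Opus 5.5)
Your proposal is correct and follows essentially the same argument as the paper. Both use the normalization $v_i^{p^\ell+1}=a_i^{-1}u_i$, perturb $k-s-1-h$ coordinates by a nontrivial $(p^\ell+1)$-th power, use $p^\ell(k-1)\le n-k$ to force $f^{p^\ell}=xg$ and hence $\boldsymbol s_k(f)=0$, and arrive at $f=x\,r(x)\prod(x-a_i)$ with $\deg r\le h-1$. The differences are cosmetic: you perturb the last $n-m$ coordinates rather than the first $z$, and your $\beta,\gamma$ are the paper's $\gamma,\beta$.
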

	
	\begin{proof}
		By Lemma~\ref{lem:III.2}, choose $v_i\in\mathbb F_q^{*}$ such that
		$v_i^{p^\ell+1}=a_i^{-1}u_i$ for $1\le i\le n$. Set
		$z=k-s-h-1$ and choose $\beta\in\mathbb F_q^{*}$ such that
		$\gamma:=\beta^{p^\ell+1}\ne1$. Let $C$ be obtained from
		$\operatorname{GRL}_k(\boldsymbol a,\boldsymbol v,A_s)$ by replacing
		$v_1,\ldots,v_z$ with $\beta v_1,\ldots,\beta v_z$.
		
		Let a codeword represented by $f(x)$ with $\deg f(x)\le k-1$
		belong to $\operatorname{Hull}_{\ell}(C)$. By
		Proposition~\ref{prop:II.2}, there exists $g(x)\in\mathbb F_q[x]$
		with $\deg g(x)\le n-k+s-1$ such that
		\begin{equation}\label{eq:common-nonzero}
			\begin{cases}
				\gamma a_i^{-1}u_i f^{p^\ell}(a_i)=u_i g(a_i), &1\le i\le z,\\
				a_i^{-1}u_i f^{p^\ell}(a_i)=u_i g(a_i), &z+1\le i\le n,\\
				(f_{k-s}^{p^\ell},\ldots,f_{k-1}^{p^\ell})A_s^{(p^\ell)}
				=(g_{n-k},\ldots,g_{n-k+s-1})B_s.
			\end{cases}
		\end{equation}
		
		The assumption on $k$ gives $p^\ell(k-1)\le n-k$. For
		$i=z+1,\ldots,n$, we have $f^{p^\ell}(a_i)=a_i g(a_i)$.
		Thus $f^{p^\ell}(x)-xg(x)$ has at least
		$n-z=n-k+s+h+1\ge n-k+s+1$ distinct zeroes, while its degree is at
		most $n-k+s$. Therefore $f^{p^\ell}(x)=xg(x)$. In particular,
		$f_0=0$ and $\deg g(x)\le n-k-1$, so the last equation of
		\eqref{eq:common-nonzero} gives
		$f_{k-s}=\cdots=f_{k-1}=0$. For $1\le i\le z$, comparison with the
		first equation gives $f(a_i)=0$. Hence
		$f(x)=xc(x)\prod_{i=1}^{z}(x-a_i)$ with $\deg c(x)\le h-1$, and
		thus $\dim(\operatorname{Hull}_{\ell}(C))\le h$.
		
		Conversely, for every
		$f(x)=xc(x)\prod_{i=1}^{z}(x-a_i)$ with $\deg c(x)\le h-1$, take
		$g(x)=x^{-1}f^{p^\ell}(x)$. Then the equations in
		Proposition~\ref{prop:II.2} are satisfied. Thus
		$\dim(\operatorname{Hull}_{\ell}(C))\ge h$, and therefore
		$\dim(\operatorname{Hull}_{\ell}(C))=h$.
	\end{proof}
	
	\begin{proposition}\label{thm:common-direct}
		Let $\delta\in\mathbb Z_{>0}$ and $\lambda\in\mathbb F_q^{*}$. Assume that $\lambda a_i^{\delta-1}u_i\in\mathbb F_{p^\ell}^{*}$ for $1\le i\le n$. Let $A_s\in\operatorname{GL}_s(\mathbb F_q)$ be arbitrary. Then, for every $s<k\le\left\lfloor\frac{n+p^\ell-\delta}{p^\ell+1}\right\rfloor$ and every $0\le h\le k-s$, there exists an $[n+s,k]_q$ generalized Roth-Lempel code $C$ with extension matrix $A_s$ such that $\dim(\operatorname{Hull}_{\ell}(C))=h$.
	\end{proposition}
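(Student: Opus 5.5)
We follow the template of Proposition~\ref{thm:common-nonzero}, replacing the factor $x$ by $\lambda^{-1}x^{\delta}$. Since $\lambda a_i^{\delta-1}u_i\in\mathbb F_{p^\ell}^*$, Lemma~\ref{lem:III.2} lets us choose $v_i\in\mathbb F_q^*$ with $v_i^{p^\ell+1}=\lambda a_i^{\delta-1}u_i$ for $1\le i\le n$. We then put $z=k-s-h$, which lies in $[0,k-s]$, and fix $\beta$ with $\gamma=\beta^{p^\ell+1}\ne1$. Let $C$ be obtained from $\operatorname{GRL}_k(\boldsymbol a,\boldsymbol v,A_s)$ by replacing $v_1,\dots,v_z$ by $\beta v_1,\dots,\beta v_z$. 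We need $z\le n$, which holds since $k\le n$ by the bound on $k$.

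\emph{Upper bound.} Take $f$ in the hull. By Proposition~\ref{prop:II.2} there is $g$ with $\deg g\le n-k+s-1$ such that
\[
\lambda a_i^{\delta-1}f^{p^\ell}(a_i)=g(a_i)\quad\text{for } i>z,
\]
with an extra factor $\gamma$ on the left for $i\le z$. Since $a_i\neq 0$ here is not assumed, we multiply by $a_i$ instead of dividing, and consider $F(x)=\lambda x^{\delta}f^{p^\ell}(x)-xg(x)$. Its degree is at most $\max(\delta+p^\ell(k-1),\,n-k+s)$. The hypothesis $k\le\lfloor (n+p^\ell-\delta)/(p^\ell+1)\rfloor$ gives $\delta+p^\ell(k-1)\le n-k$. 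Hence $\deg F\le n-k+s$, while $F$ vanishes at the $n-z=n-k+s+h\ge n-k+s$ points $a_{z+1},\dots,a_n$.

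This count is short by one compared with the previous proposition, and this is the main obstacle. I would resolve it by working with $G(x)=\lambda x^{\delta-1}f^{p^\ell}(x)-g(x)$ directly. It has degree at most $\max(\delta-1+p^\ell(k-1),\,n-k+s-1)\le n-k+s-1$ and vanishes at the $n-k+s+h\ge n-k+s$ points $a_{z+1},\dots,a_n$. Therefore $G\equiv0$, i.e.\ $g=\lambda x^{\delta-1}f^{p^\ell}$. In particular $\deg g\le n-k-1$, so $\boldsymbol s_{n+s-k}(g)=0$. Since $A_s^{(p^\ell)}$ is nonsingular, the last equation of Proposition~\ref{prop:II.2} forces $f_{k-s}=\dots=f_{k-1}=0$.

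For $i\le z$, the relations give $(\gamma-1)\lambda a_i^{\delta-1}f^{p^\ell}(a_i)=0$. Because $a_i^{\delta-1}u_i\ne0$, we get $a_i\ne0$ when $\delta>1$, so $f(a_i)=0$. (When $\delta=1$ this holds directly.) Thus
\[
f=c(x)\prod_{i\le z}(x-a_i),\qquad \deg f\le k-s-1,
\]
so $\deg c\le h-1$ and the hull has dimension at most $h$.

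\emph{Lower bound.} Conversely, for any such $f$, set $g=\lambda x^{\delta-1}f^{p^\ell}$. Then $\deg g\le \delta-1+p^\ell(k-s-1)\le n-k-1$, so the tail equation holds with both sides zero. The evaluation equations hold at every $a_i$: for $i\le z$ because both sides vanish, and for $i>z$ by construction. Hence the hull has dimension exactly $h$.

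The delicate point is the degree bookkeeping: one must check that $\delta-1+p^\ell(k-1)\le n-k+s-1$ follows from the hypothesis, which it does with slack $s$. The bound $n-k-1$ is needed for $\deg g$ only in the converse, where $\deg f\le k-s-1$ gives $p^\ell(k-s-1)+\delta-1\le n-k-1$.
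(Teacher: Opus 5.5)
Your proof is correct and follows the paper's argument essentially step for step. The paper uses the same multipliers and $z=k-s-h$, obtains $g=\lambda x^{\delta-1}f^{p^\ell}$ by comparing at least $n-k+s$ roots against degree at most $n-k+s-1$, and then derives the vanishing of the tail coefficients and of $f(a_1),\dots,f(a_z)$ exactly as you do (the $F(x)$ detour is simply unnecessary). One small correction to your closing commentary: $\delta-1+p^\ell(k-1)\le n-k-1$ is also essential in the upper bound, since it is what forces $\boldsymbol s_{n+s-k}(g)=0$ (you do use it there), and this bound, not the one with slack $s$, is exactly where the hypothesis on $k$ is sharp.
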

	
	\begin{proof}
		By Lemma~\ref{lem:III.2}, choose $v_i\in\mathbb F_q^{*}$ such that $v_i^{p^\ell+1}=\lambda a_i^{\delta-1}u_i$ for $1\le i\le n$. Set $z=k-s-h$ and choose $\beta\in\mathbb F_q^{*}$ such that $\gamma:=\beta^{p^\ell+1}\ne1$. Let $C$ be obtained from $\operatorname{GRL}_k(\boldsymbol a,\boldsymbol v,A_s)$ by replacing $v_1,\ldots,v_z$ with $\beta v_1,\ldots,\beta v_z$.
		
		Let a codeword represented by $f(x)$ with $\deg f(x)\le k-1$ belong to $\operatorname{Hull}_{\ell}(C)$. By Proposition~\ref{prop:II.2}, there exists $g(x)\in\mathbb F_q[x]$ with $\deg g(x)\le n-k+s-1$ such that
		\begin{equation}\label{eq:common-direct}
			\begin{cases}
				\gamma\lambda a_i^{\delta-1}u_i f^{p^\ell}(a_i)=u_i g(a_i), &1\le i\le z,\\
				\lambda a_i^{\delta-1}u_i f^{p^\ell}(a_i)=u_i g(a_i), &z+1\le i\le n,\\
				(f_{k-s}^{p^\ell},\ldots,f_{k-1}^{p^\ell})A_s^{(p^\ell)}
				=(g_{n-k},\ldots,g_{n-k+s-1})B_s.
			\end{cases}
		\end{equation}
		
		The assumption on $k$ gives $\delta-1+p^\ell(k-1)<n-k$, and hence $\deg(\lambda x^{\delta-1}f^{p^\ell}(x))\le n-k-1$. For $i=z+1,\ldots,n$, we have $g(a_i)=\lambda a_i^{\delta-1}f^{p^\ell}(a_i)$. Thus $g(x)-\lambda x^{\delta-1}f^{p^\ell}(x)$ has at least $n-z=n-k+s+h\ge n-k+s$ distinct zeroes, while its degree is at most $n-k+s-1$. Therefore $g(x)=\lambda x^{\delta-1}f^{p^\ell}(x)$. In particular, $\deg g(x)\le n-k-1$, and hence $g_{n-k}=\cdots=g_{n-k+s-1}=0$. Since $A_s^{(p^\ell)}$ is nonsingular, the last equation of \eqref{eq:common-direct} gives $f_{k-s}=\cdots=f_{k-1}=0$. For $1\le i\le z$, comparison with the first equation of \eqref{eq:common-direct} gives $f(a_i)=0$. Hence $f(x)=c(x)\prod_{i=1}^{z}(x-a_i)$ with $\deg c(x)\le h-1$, and thus $\dim(\operatorname{Hull}_{\ell}(C))\le h$.
		
		Conversely, for every $f(x)=c(x)\prod_{i=1}^{z}(x-a_i)$ with $\deg c(x)\le h-1$, take $g(x)=\lambda x^{\delta-1}f^{p^\ell}(x)$. Then the equations in Proposition~\ref{prop:II.2} are satisfied. Thus $\dim(\operatorname{Hull}_{\ell}(C))\ge h$, and therefore $\dim(\operatorname{Hull}_{\ell}(C))=h$.
	\end{proof}
	
	Actually, the range of $k$ in Proposition~\ref{thm:common-direct} can be enlarged by one possible boundary value if $A_s$ is chosen appropriately. Indeed, the original range corresponds to $\delta-1+p^\ell(k-1)<n-k$, while the only additional value allowed by the enlarged range occurs when $\delta-1+p^\ell(k-1)=n-k$.

	\begin{corollary}\label{cor:common-direct-extended}
		Under the assumptions of Proposition~\ref{thm:common-direct}, for every $s<k\le\left\lfloor\frac{n+p^\ell-\delta+1}{p^\ell+1}\right\rfloor$ and every $0\le h\le k-s$, there exists an $[n+s,k]_q$ generalized Roth-Lempel code $C$ such that $\dim(\operatorname{Hull}_{\ell}(C))=h$. For the additional boundary case $\delta-1+p^\ell(k-1)=n-k$, one may choose $\eta\in(\mathbb F_q^{*})^{p^\ell+1}$ with $\eta\ne\lambda$ and take $A_s\in\operatorname{GL}_s(\mathbb F_q)$ satisfying $A_s^{(p^\ell)}A_s^T=-\eta I_s$.
	\end{corollary}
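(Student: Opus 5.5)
For values of $k$ with $\delta-1+p^\ell(k-1)<n-k$ the statement is exactly Proposition~\ref{thm:common-direct}, with $A_s$ arbitrary. So we only need the boundary case $\delta-1+p^\ell(k-1)=n-k$, which is the one extra value allowed by the floor $\lfloor (n+p^\ell-\delta+1)/(p^\ell+1)\rfloor$. In that case we keep the construction from the proof of Proposition~\ref{thm:common-direct}: $v_i^{p^\ell+1}=\lambda a_i^{\delta-1}u_i$, $z=k-s-h$, and the first $z$ multipliers scaled by $\beta$ with $\gamma=\beta^{p^\ell+1}\ne1$. The only change is the special extension matrix $A_s$.

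\textbf{Existence of $A_s$.} We first check that a suitable $\eta$ and $A_s$ exist. Since $(\mathbb F_q^*)^{p^\ell+1}$ has more than one element, we can pick $\eta$ in it with $\eta\ne\lambda$. We then take $A_s=\alpha I_s$ with $\alpha^{p^\ell+1}=-\eta$, which gives $A_s^{(p^\ell)}A_s^T=\alpha^{p^\ell+1}I_s=-\eta I_s$. This needs $-1\in(\mathbb F_q^*)^{p^\ell+1}$. For $p=2$ this is trivial. For odd $p$, write $e=2\ell m$. Then $(q-1)/(p^\ell+1)=(p^\ell-1)(1+p^{2\ell}+\cdots)$ is even, so $(-1)^{(q-1)/(p^\ell+1)}=1$ and $-1$ is a $(p^\ell+1)$-th power.

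\textbf{Reduction to a polynomial identity.} Let $f$ represent a hull codeword. Exactly as before, $g(x)-\lambda x^{\delta-1}f^{p^\ell}(x)$ has degree at most $n-k+s-1$ and at least $n-k+s$ zeroes, so $g=\lambda x^{\delta-1}f^{p^\ell}$. Now $\deg g\le n-k$, with $g_{n-k}=\lambda f_{k-1}^{p^\ell}$, so
$$
\boldsymbol s_{n+s-k}(g)=(\lambda f_{k-1}^{p^\ell},0,\ldots,0).
$$

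\textbf{Main step: the extension equation.} By \eqref{eq:Bs}, the last equation reads
$$
\boldsymbol s_k(f)^{(p^\ell)}A_s^{(p^\ell)}=-(\lambda f_{k-1}^{p^\ell},0,\ldots,0)\,T_s\,(A_s^{-1})^T.
$$
We multiply on the right by $A_s^T$ and use $A_s^{(p^\ell)}A_s^T=-\eta I_s$. The first row of $T_s$ is $(0,\ldots,0,1)$, so we obtain
$$
\eta\,\boldsymbol s_k(f)^{(p^\ell)}=\lambda f_{k-1}^{p^\ell}(0,\ldots,0,1).
$$
The last coordinate gives $(\eta-\lambda)f_{k-1}^{p^\ell}=0$, hence $f_{k-1}=0$ because $\eta\ne\lambda$. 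The other coordinates give $\eta f_{k-j}^{p^\ell}=0$, so $f_{k-s}=\cdots=f_{k-2}=0$ because $\eta\ne0$.

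\textbf{Conclusion.} The rest follows the proof of Proposition~\ref{thm:common-direct} verbatim. The scaled coordinates force $f(a_i)=0$ for $i\le z$, so $\dim\operatorname{Hull}_\ell(C)\le h$. Conversely, each $f=c(x)\prod_{i\le z}(x-a_i)$ with $\deg c\le h-1$ has $\deg f\le k-s-1$. Hence $\boldsymbol s_k(f)=0$ and $f_{k-1}=0$, so $\boldsymbol s_{n+s-k}(g)=0$ and all equations of Proposition~\ref{prop:II.2} hold. This gives $\dim\operatorname{Hull}_\ell(C)=h$.
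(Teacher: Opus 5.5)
Your proposal is correct and follows the paper's proof essentially step for step. You use the same multipliers $\boldsymbol v$, $z$, $\beta$; you take $A_s=\mu I_s$ with $\mu^{p^\ell+1}=-\eta$; and you multiply the last equation on the right by $A_s^T$ and use the first row $(0,\ldots,0,1)$ of $T_s$ to force $f_{k-s}=\cdots=f_{k-1}=0$. The only difference is cosmetic: you check $-1\in(\mathbb F_q^{*})^{p^\ell+1}$ directly from the parity of $(q-1)/(p^\ell+1)$, whereas the paper cites Lemma~\ref{lem:III.2}.
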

	
	\begin{proof}
		It remains only to consider the boundary case $\delta-1+p^\ell(k-1)=n-k$. Choose $\eta\in(\mathbb F_q^{*})^{p^\ell+1}$ with $\eta\ne\lambda$. Since $-1\in(\mathbb F_q^{*})^{p^\ell+1}$ by Lemma~\ref{lem:III.2}, one may take $A_s=\mu I_s$ with $\mu^{p^\ell+1}=-\eta$, so that $A_s^{(p^\ell)}A_s^T=-\eta I_s$.
		
		Use the same choice of $\boldsymbol v$, $z$ and $\beta$ as in the proof of Proposition~\ref{thm:common-direct}. The same argument gives $g(x)=\lambda x^{\delta-1}f^{p^\ell}(x)$. In the present case, $g_{n-k}=\lambda f_{k-1}^{p^\ell}$ and $g_{n-k+1}=\cdots=g_{n-k+s-1}=0$. Using $B_s=-T_s(A_s^{-1})^{T}$, the last equation of Proposition~\ref{prop:II.2}, after right multiplication by $A_s^T$, becomes
		\[
		\eta(f_{k-s}^{p^\ell},\ldots,f_{k-1}^{p^\ell})
		=(g_{n-k},0,\ldots,0)T_s.
		\]
		Since the first row of $T_s$ is $(0,\ldots,0,1)$, we obtain $f_{k-s}=\cdots=f_{k-2}=0$ and $(\eta-\lambda)f_{k-1}^{p^\ell}=0$. As $\eta\ne\lambda$, we have $f_{k-s}=\cdots=f_{k-1}=0$. The remainder of the proof is identical to that of Proposition~\ref{thm:common-direct}, and hence $\dim(\operatorname{Hull}_{\ell}(C))=h$.
	\end{proof}
	
	\begin{proposition}\label{thm:common-plus-one-nonzero}
		Assume that $a_j\ne0$ and $a_j^{-1}u_j\in\mathbb F_{p^\ell}^{*}$ for $1\le j\le n$. Suppose that $\frac{n}{p^\ell+1}\in\mathbb Z_{>0}$ and put $k=\xi+1+\frac{n}{p^\ell+1}$. If $0\le \xi\le\min\left\{p^\ell-1,\left\lfloor\frac{\frac{n}{p^\ell+1}-1}{p^\ell}\right\rfloor\right\}$ and $\xi+1\le s\le\min\{\xi+p^\ell,k-1\}$, then there exists an $[n+s,k]_q$ generalized Roth-Lempel code $C$ with a $(k-s)$-dimensional $\ell$-Galois hull.
	\end{proposition}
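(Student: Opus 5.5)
Throughout, put $m=\frac{n}{p^\ell+1}$, so that $k=\xi+1+m$ and $n=(p^\ell+1)m$. I will not use the $\beta$-twist of Proposition~\ref{thm:common-nonzero}. Instead I take $v_j^{p^\ell+1}=a_j^{-1}u_j$ for all $j$, which is possible by Lemma~\ref{lem:III.2}, and let a suitable choice of $A_s$ produce one extra free coefficient in the hull.

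\emph{Step 1: reduce to a polynomial identity.} Let $f$ represent a codeword of $\operatorname{Hull}_\ell(C)$. By Proposition~\ref{prop:II.2} there is a $g$ with $\deg g\le n-k+s-1$ and $f^{p^\ell}(a_j)=a_jg(a_j)$ for all $1\le j\le n$. The degrees satisfy
\[
\deg f^{p^\ell}\le p^\ell(k-1)=p^\ell m+p^\ell\xi\le n-1,
\]
because $\xi\le\lfloor (m-1)/p^\ell\rfloor$. They also satisfy $\deg xg\le n-k+s\le n-1$, because $s\le k-1$. So $f^{p^\ell}(x)-xg(x)$ has degree below $n$ and vanishes at $n$ points, which gives $f^{p^\ell}(x)=xg(x)$. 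Hence $f_0=0$, and $g_j=f_{(j+1)/p^\ell}^{p^\ell}$ when $p^\ell\mid j+1$, while $g_j=0$ otherwise.

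\emph{Step 2: the tail window.} We have $n-k=p^\ell m-\xi-1$, so the window $g_{n-k},\ldots,g_{n-k+s-1}$ corresponds to exponents $j+1$ running from $p^\ell m-\xi$ to $p^\ell m-\xi+s-1$. The hypotheses give $\xi\le p^\ell-1$ and $0\le s-1-\xi\le p^\ell-1$. Therefore this range contains exactly one multiple of $p^\ell$, namely $p^\ell m$. It follows that
\[
\boldsymbol s_{n+s-k}(g)=f_m^{p^\ell}\,\boldsymbol e_{\xi+1}.
\]
Moreover $k-s\le m\le k-1$, so $f_m$ is the $(s-\xi)$-th entry of $\boldsymbol s_k(f)$. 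Write $\boldsymbol w=\boldsymbol s_k(f)^{(p^\ell)}$. The last condition in Proposition~\ref{prop:II.2} then reads
\[
\boldsymbol w A_s^{(p^\ell)}=w_{s-\xi}\,\boldsymbol e_{\xi+1}B_s
=-w_{s-\xi}\,\boldsymbol e_{\xi+1}T_s(A_s^{-1})^T.
\]

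\emph{Step 3: choice of $A_s$.} This is the main point. I want $\boldsymbol e_{s-\xi}A_s^{(p^\ell)}A_s^T=-\boldsymbol e_{\xi+1}T_s=:\boldsymbol r$. If this holds, the condition becomes $(\boldsymbol w-w_{s-\xi}\boldsymbol e_{s-\xi})A_s^{(p^\ell)}=0$, which forces $\boldsymbol w=w_{s-\xi}\boldsymbol e_{s-\xi}$. So all tail coefficients vanish except $f_m$, which stays free.

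Note that $r_{s-\xi}=-t_{\xi+1,s-\xi}=-1$. By Lemma~\ref{lem:III.2}, pick $c$ with $c^{p^\ell+1}=-1$, and define $A_s$ as follows:
\begin{itemize}
\item its $(s-\xi)$-th row is $c\,\boldsymbol e_{s-\xi}$;
\item its $(s-\xi)$-th column is $c^{-p^\ell}\boldsymbol r^T$, which is consistent with the row since $c^{-p^\ell}r_{s-\xi}=-c^{-p^\ell}=c$;
\item every other entry agrees with $I_s$.
\end{itemize}
Expanding the determinant along row $s-\xi$ gives $\det A_s=c\neq0$. The required row of $A_s^{(p^\ell)}A_s^T$ equals $(c\boldsymbol e_{s-\xi})^{(p^\ell)}A_s^T=c^{p^\ell}(\text{column }s-\xi\text{ of }A_s)^T=\boldsymbol r$, as needed.

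\emph{Step 4: count the hull.} Steps 1–3 show that every hull element satisfies $f_0=0$ and $f_{k-s}=\cdots=f_{k-1}=0$ except possibly $f_m$. Since $\boldsymbol w\mapsto\boldsymbol w^{(p^{e-\ell})}$ is a semilinear bijection, these conditions cut out a subspace of dimension $k-1-(s-1)=k-s$.

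Conversely, take any such $f$ and put $g=x^{-1}f^{p^\ell}$. Then $\deg f\le m$, so $\deg g\le p^\ell m-1\le n-k+s-1$. With this $g$, both lines of \eqref{eq:grl-galois-dual} hold, by Steps 2 and 3. Hence $\dim\operatorname{Hull}_\ell(C)=k-s$.
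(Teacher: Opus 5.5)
Your proof is correct and is essentially the paper's argument: the same normalization $v_j^{p^\ell+1}=a_j^{-1}u_j$, the same identity $f^{p^\ell}(x)=xg(x)$, and the same observation that only $g_{n-k+\xi}=f_{k-\xi-1}^{p^\ell}$ survives in the window. Your $A_s$ (row $s-\xi$ equal to $c\,\boldsymbol e_{s-\xi}$, column $s-\xi$ equal to $c^{-p^\ell}\boldsymbol r^{T}$) is exactly the paper's block matrix with $\mu=c$; the only cosmetic difference is that you eliminate all other tail coefficients at once via the invertibility of $A_s^{(p^\ell)}$, whereas the paper first uses the degree bound on $g$ to get $f_{k-\xi}=\cdots=f_{k-1}=0$.
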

	
	\begin{proof}
		By Lemma~\ref{lem:III.2}, choose $v_j\in\mathbb F_q^{*}$ such that $v_j^{p^\ell+1}=a_j^{-1}u_j$ for $1\le j\le n$. Choose $\mu\in\mathbb F_q^{*}$ such that $\mu^{p^\ell+1}=-1$, and take
		$$
		A_s=\begin{pmatrix}
			I_{s-\xi-1}&0&O\\
			0&\mu&O\\
			O&-\mu^{-p^\ell}(a_{n-1}^{(1)},\ldots,a_{n-1}^{(\xi)})^T&I_{\xi}
		\end{pmatrix}.
		$$
		where $a_{n-1}^{(j)}$ is defined in \eqref{eq:anj}.
		Clearly, $\det A_s=\mu\ne0$. Moreover, we have
		$$
		A_s^{(p^\ell)}A_s^T=
		\begin{pmatrix}
			I_{s-\xi-1}&0&O\\
			0&-1&-(a_{n-1}^{(1)},\ldots,a_{n-1}^{(\xi)})\\
			O&-((a_{n-1}^{(1)})^{p^\ell},\ldots,(a_{n-1}^{(\xi)})^{p^\ell})^T&
			I_{\xi}-((a_{n-1}^{(1)})^{p^\ell},\ldots,(a_{n-1}^{(\xi)})^{p^\ell})^T
			(a_{n-1}^{(1)},\ldots,a_{n-1}^{(\xi)})
		\end{pmatrix}.
		$$
		
		Let $C=\operatorname{GRL}_k(\boldsymbol a,\boldsymbol v,A_s)$. Consider a codeword represented by $f(x)$ belonging to $\operatorname{Hull}_{\ell}(C)$. By Proposition~\ref{prop:II.2}, there exists $g(x)$ with $\deg g(x)\le n-k+s-1$ such that $f^{p^\ell}(a_j)=a_jg(a_j)$ for $1\le j\le n$. Since $p^\ell(k-1)\le n-1$ and $s\le k-1$, we have
		$
		f^{p^\ell}(x)=xg(x).
		$
		Hence $f_0=0$. Moreover, since $-1+p^\ell(k-\xi-1)=n-k+\xi$ and 
		$
		-1+p^\ell(k-\xi)>n-k+s-1\ge n-k+\xi\ge n-k>-1+p^\ell(k-\xi-2),
		$
		we have $f_{k-\xi}=\cdots=f_{k-1}=0$, and among $g_{n-k},\ldots,g_{n-k+s-1}$ only
		$
		g_{n-k+\xi}=f_{k-\xi-1}^{p^\ell}
		$
		can be nonzero.
		
		Using $B_s=-T_s(A_s^{-1})^T$, the last $s$ coordinates give
		$$
		(f_{k-s}^{p^\ell},\ldots,f_{k-1}^{p^\ell})A_s^{(p^\ell)}A_s^T=-(g_{n-k},\ldots,g_{n-k+s-1})T_s.
		$$
		Since the $(\xi+1)$-st row of $T_s$ is $(0,\ldots,0,1,a_{n-1}^{(1)},\ldots,a_{n-1}^{(\xi)})$, by the choice of $A_s$ we obtain
		$$
		f_{k-s}=\cdots=f_{k-\xi-2}=0,
		$$
		and there is no further restriction on $f_{k-\xi-1}$. Therefore $f_1,\ldots,f_{k-s-1}$ and $f_{k-\xi-1}$ can be chosen freely, and hence $\dim(\operatorname{Hull}_{\ell}(C))\le k-s$.
		
		Conversely, choose $f_1,\ldots,f_{k-s-1}$ and $f_{k-\xi-1}$ arbitrarily and let all the other coefficients be zero. Taking $g(x)=x^{-1}f^{p^\ell}(x)$, we have $\deg g(x)\le n-k+\xi\le n-k+s-1$, and the above equations are satisfied. Therefore $\dim(\operatorname{Hull}_{\ell}(C))=k-s$.
	\end{proof}

	\begin{proposition}\label{thm:common-plus-one-shifted}
		Let $\delta\in\mathbb Z_{>0}$ and $\lambda\in\mathbb F_{p^\ell}^{*}$. Suppose that $\lambda a_j^{\delta-1}u_j\in\mathbb F_{p^\ell}^{*}$ for $1\le j\le n$. Assume that $\frac{n-\delta}{p^\ell+1}\in\mathbb Z_{>0}$ and put $k=\xi+1+\frac{n-\delta}{p^\ell+1}$. If $0\le \xi\le\min\left\{p^\ell-1,\left\lfloor\frac{n-\delta}{p^\ell(p^\ell+1)}\right\rfloor\right\}$ and $\xi+1\le s\le\min\{\xi+p^\ell,k-1\}$, then there exists an $[n+s,k]_q$ generalized Roth-Lempel code $C$ with a $(k-s+1)$-dimensional $\ell$-Galois hull.
	\end{proposition}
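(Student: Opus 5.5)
The plan is to follow the proof of Proposition~\ref{thm:common-plus-one-nonzero}, replacing the relation $f^{p^\ell}(x)=xg(x)$ by $g(x)=\lambda x^{\delta-1}f^{p^\ell}(x)$, as in Proposition~\ref{thm:common-direct}. Write $m=\frac{n-\delta}{p^\ell+1}$, so that $k=\xi+1+m$.

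\textbf{Choice of code and degree bookkeeping.} By Lemma~\ref{lem:III.2}, choose $v_j$ with $v_j^{p^\ell+1}=\lambda a_j^{\delta-1}u_j$. For the key exponent identity, a direct computation gives $n-k=p^\ell m+\delta-\xi-1$, and hence
\[
\delta-1+p^\ell(k-\xi-1)=n-k+\xi .
\]
Let $\boldsymbol c_f\in\operatorname{Hull}_\ell(C)$. Proposition~\ref{prop:II.2} gives $g$ with $\deg g\le n-k+s-1$ and $g(a_j)=\lambda a_j^{\delta-1}f^{p^\ell}(a_j)$ for all $1\le j\le n$. The degree bound $\delta-1+p^\ell(k-1)\le n-1$ is equivalent to $p^\ell\xi\le m$, which is exactly the hypothesis $\xi\le\lfloor m/p^\ell\rfloor$. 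We also have $n-k+s-1\le n-1$. So $g-\lambda x^{\delta-1}f^{p^\ell}$ has degree less than $n$ and $n$ roots, and therefore $g(x)=\lambda x^{\delta-1}f^{p^\ell}(x)$.

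\textbf{Which coefficients survive.} From $s\le\xi+p^\ell$ we get
\[
\delta-1+p^\ell(k-\xi)=n-k+\xi+p^\ell>n-k+s-1,
\]
and this forces $f_{k-\xi}=\cdots=f_{k-1}=0$. From $\xi\le p^\ell-1$ we get $\delta-1+p^\ell(k-\xi-2)=n-k+\xi-p^\ell<n-k$. Together these show that among $g_{n-k},\ldots,g_{n-k+s-1}$ only $g_{n-k+\xi}=\lambda f_{k-\xi-1}^{p^\ell}$ can be nonzero.

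\textbf{The extension matrix (main obstacle).} The main step is to choose $A_s$ so that the tail equation imposes no condition on $f_{k-\xi-1}$ while killing $f_{k-s},\ldots,f_{k-\xi-2}$. I would take $\mu$ with $\mu^{p^\ell+1}=-\lambda$, which exists by Lemma~\ref{lem:III.2} since $-\lambda\in\mathbb F_{p^\ell}^*$. Then take $A_s$ of the same block shape as in Proposition~\ref{thm:common-plus-one-nonzero}, with the lower-left column changed to $-\lambda\mu^{-p^\ell}(a_{n-1}^{(1)},\ldots,a_{n-1}^{(\xi)})^T$. Multiplying the last equation of Proposition~\ref{prop:II.2} on the right by $A_s^T$ gives
\[
(f_{k-s}^{p^\ell},\ldots,f_{k-1}^{p^\ell})A_s^{(p^\ell)}A_s^T=-(g_{n-k},\ldots,g_{n-k+s-1})T_s .
\]
In $A_s^{(p^\ell)}A_s^T$, rows $1,\ldots,s-\xi-1$ are the unit vectors $e_r$. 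Row $s-\xi$ equals $(0,\ldots,0,-\lambda,-\lambda a_{n-1}^{(1)},\ldots,-\lambda a_{n-1}^{(\xi)})$. This is exactly $-\lambda$ times row $\xi+1$ of $T_s$, which is the row multiplied by $g_{n-k+\xi}$. So the $f_{k-\xi-1}$ contributions cancel identically. The coordinates $1,\ldots,s-\xi-1$ then give $f_{k-s}=\cdots=f_{k-\xi-2}=0$.

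\textbf{Dimension count.} The surviving free coefficients are $f_0,\ldots,f_{k-s-1}$ and $f_{k-\xi-1}$. Here $f_0$ is now free, unlike in Proposition~\ref{thm:common-plus-one-nonzero}, since there is no factor $x$. Since $k-\xi-1\ge k-s$, these are $k-s+1$ distinct coefficients, giving $\dim\operatorname{Hull}_\ell(C)\le k-s+1$. For the reverse inequality, take any $f$ supported on these coefficients and set $g=\lambda x^{\delta-1}f^{p^\ell}$. Then $\deg g\le n-k+\xi\le n-k+s-1$, and the same cancellation shows that all equations of Proposition~\ref{prop:II.2} hold. Hence the hull has dimension exactly $k-s+1$.
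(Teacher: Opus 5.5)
Your proposal is correct and follows essentially the same route as the paper: the same multipliers $v_j^{p^\ell+1}=\lambda a_j^{\delta-1}u_j$, the same $\mu$ with $\mu^{p^\ell+1}=-\lambda$ and the same block matrix $A_s$, the identity $g(x)=\lambda x^{\delta-1}f^{p^\ell}(x)$, and the cancellation of the $f_{k-\xi-1}$ term against $-\lambda$ times row $\xi+1$ of $T_s$. You also make explicit two points the paper leaves implicit: the degree bound $\delta-1+p^\ell(k-1)\le n-1$ is equivalent to $p^\ell\xi\le m$, and $\mu$ exists by Lemma~\ref{lem:III.2} because $-\lambda\in\mathbb F_{p^\ell}^{*}$.
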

	
	\begin{proof}
		By Lemma~\ref{lem:III.2}, choose $v_j\in\mathbb F_q^{*}$ such that $v_j^{p^\ell+1}=\lambda a_j^{\delta-1}u_j$ for $1\le j\le n$. Choose $\mu\in\mathbb F_q^{*}$ such that $\mu^{p^\ell+1}=-\lambda$, and take
		$$
		A_s=\begin{pmatrix}
			I_{s-\xi-1}&0&O\\
			0&\mu&O\\
			O&-\lambda\mu^{-p^\ell}(a_{n-1}^{(1)},\ldots,a_{n-1}^{(\xi)})^T&I_{\xi}
		\end{pmatrix}.
		$$
		Clearly, $\det A_s=\mu\ne0$. Moreover, the $(s-\xi)$-th row of $A_s^{(p^\ell)}A_s^T$ is
		$
		-\lambda(0,\ldots,0,1,a_{n-1}^{(1)},\ldots,a_{n-1}^{(\xi)}),
		$ while its first $s-\xi-1$ rows are the corresponding rows of $I_s$. Let $C=\operatorname{GRL}_k(\boldsymbol a,\boldsymbol v,A_s)$.
		
		Let a codeword represented by $f(x)$ belonging to $\operatorname{Hull}_{\ell}(C)$. By Proposition~\ref{prop:II.2}, there exists $g(x)$ with $\deg g(x)\le n-k+s-1$ such that $\lambda a_j^{\delta-1}f^{p^\ell}(a_j)=g(a_j)$ for $1\le j\le n$. Since $\delta-1+p^\ell(k-1)\le n-1$, we have $
		g(x)=\lambda x^{\delta-1}f^{p^\ell}(x).
		$
		Since $\delta-1+p^\ell(k-\xi-1)=n-k+\xi$ and
		$$
		\delta-1+p^\ell(k-\xi)>n-k+s-1\ge n-k+\xi\ge n-k>\delta-1+p^\ell(k-\xi-2),
		$$
		we have $f_{k-\xi}=\cdots=f_{k-1}=0$, and among $g_{n-k},\ldots,g_{n-k+s-1}$ only $
		g_{n-k+\xi}=\lambda f_{k-\xi-1}^{p^\ell}
		$ can be nonzero.
		
		Using $B_s=-T_s(A_s^{-1})^T$, the last $s$ coordinates give
		$$
		(f_{k-s}^{p^\ell},\ldots,f_{k-1}^{p^\ell})A_s^{(p^\ell)}A_s^T=-(g_{n-k},\ldots,g_{n-k+s-1})T_s.
		$$
		Since the $(\xi+1)$-st row of $T_s$ is $(0,\ldots,0,1,a_{n-1}^{(1)},\ldots,a_{n-1}^{(\xi)})$, by the choice of $A_s$ we obtain $f_{k-s}=\cdots=f_{k-\xi-2}=0$, and there is no further restriction on $f_{k-\xi-1}$. Hence $f_0,\ldots,f_{k-s-1}$ and $f_{k-\xi-1}$ can be chosen freely, and thus $\dim(\operatorname{Hull}_{\ell}(C))\le k-s+1$.
		
		Conversely, choose $f_0,\ldots,f_{k-s-1}$ and $f_{k-\xi-1}$ arbitrarily and let all the other coefficients be zero. Taking $g(x)=\lambda x^{\delta-1}f^{p^\ell}(x)$, we have $\deg g(x)\le n-k+\xi\le n-k+s-1$, and all the above equations are satisfied. Therefore $\dim(\operatorname{Hull}_{\ell}(C))=k-s+1$.
	\end{proof}
	
	\noindent\textbf{Remark}.
	If $a_{n-1}^{(1)}=\cdots=a_{n-1}^{(\xi)}=0$, one may also take $A_s=\mu I_s$ in Propositions~\ref{thm:common-plus-one-nonzero} and~\ref{thm:common-plus-one-shifted}, with $\mu^{p^\ell+1}=-1$ and $-\lambda$, respectively. The same coefficient comparison gives the stated hull dimensions. This condition is empty when $\xi=0$, and holds for the full evaluation set $\mathbb F_q$ in the ranges used below.

	\subsection{Multiplicative-type constructions}
	
	We first consider three families of nonzero evaluation sets. In each
	case, the associated coefficients satisfy
	$a_i^{-1}u_i\in\mathbb F_{p^\ell}^{*}$. Adjoining the evaluation point
	$0$ leads to a companion family for which the modified coefficients
	$w_i$ belong to $\mathbb F_{p^\ell}^{*}$.
	
	\subsubsection*{Construction A}\mbox{}\par
	\noindent Let $q=p^e$ be a prime power and assume that $\ell\mid e$.
	Consider the norm mapping
	\begin{equation}\label{eq:norm}
		\operatorname{Norm}:\mathbb F_q^{*}\longrightarrow
		\mathbb F_{p^\ell}^{*},\qquad
		x\longmapsto x^{\frac{q-1}{p^\ell-1}}.
	\end{equation}
	Label the elements of $\mathbb F_{p^\ell}^{*}$ as
	$b_1,b_2,\ldots,b_{p^\ell-1}$ and define $\mathbf{N}_j=\{x\in\mathbb F_q^{*}:\operatorname{Norm}(x)=b_j\}$, $1\le j\le p^\ell-1$.
	The norm mapping is surjective and
	$|\mathbf{N}_j|=(q-1)/(p^\ell-1)$. Let
	$1\le t\le p^\ell-1$ and set
	\begin{equation}\label{eq:Nset}
		\mathbf{N}=\bigcup_{j=1}^{t}\mathbf{N}_j=\{a_1,a_2,\ldots,a_n\},
		\qquad n=t\frac{q-1}{p^\ell-1}.
	\end{equation}
	
	\begin{lemma}\label{lem:3.4}\cite{9559992}
		Let $a_i$ and $u_i$ be defined by Equations~\eqref{eq:Nset}
		and~\eqref{eq:ui}, respectively. If $a_i\in N_j$, then
		\begin{equation}\label{eq:ui_norm}
			u_i=a_i^{1-\frac{q-1}{p^\ell-1}}
			\prod_{\substack{1\le j'\le t\\j'\ne j}}
			(b_j-b_{j'})^{-1}.
		\end{equation}
		Moreover, for every nonnegative integer $r$, $a_i^{r\frac{q-1}{p^\ell-1}-1}u_i
		\in\mathbb F_{p^\ell}^{*}$, $1\le i\le n$.
		In particular, $a_i^{-1}u_i\in\mathbb F_{p^\ell}^{*}$.
	\end{lemma}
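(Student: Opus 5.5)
Let $m=\frac{q-1}{p^\ell-1}$. The plan is to compute $u_i$ directly from the vanishing polynomial of $\mathbf N$, using the formal-derivative formula $u_i^{-1}=\prod_{j\ne i}(a_i-a_j)=P'(a_i)$, where $P(x)=\prod_{i=1}^n(x-a_i)$.

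First I will identify $P$. Since $\mathbf N_j$ is exactly the set of roots of $x^{m}-b_j$, and these roots are distinct because $m\mid q-1$, we have
\[
P(x)=\prod_{j'=1}^{t}\bigl(x^{m}-b_{j'}\bigr).
\]
Differentiating with the product rule at a point $a_i\in\mathbf N_j$, every term except the one where the factor $x^m-b_j$ is differentiated vanishes, because $a_i^m=b_j$. This gives
\[
P'(a_i)=m\,a_i^{m-1}\prod_{j'\ne j}\bigl(b_j-b_{j'}\bigr).
\]

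Next comes the one point that needs care: the constant $m$ as an element of $\mathbb F_p$. Because $q=p^{e}$ with $\ell\mid e$, we have $m=1+p^\ell+p^{2\ell}+\cdots+p^{e-\ell}\equiv1\pmod p$, so $m=1$ in $\mathbb F_q$. Inverting then yields
\[
u_i=a_i^{1-m}\prod_{j'\ne j}(b_j-b_{j'})^{-1},
\]
which is \eqref{eq:ui_norm}.

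For the membership claim, we compute
\[
a_i^{rm-1}u_i=a_i^{(r-1)m}\prod_{j'\ne j}(b_j-b_{j'})^{-1}=b_j^{\,r-1}\prod_{j'\ne j}(b_j-b_{j'})^{-1}.
\]
Since $b_j^{-1}\in\mathbb F_{p^\ell}^*$ when $r=0$, this expression is a nonzero element of $\mathbb F_{p^\ell}$ for every $r\ge0$. Taking $r=0$ gives $a_i^{-1}u_i\in\mathbb F_{p^\ell}^*$.

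I do not expect a real obstacle. The only subtle step is checking that $m\equiv1\pmod p$, which is what prevents the derivative from vanishing.
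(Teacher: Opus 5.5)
Your proof is correct and is the standard argument behind this lemma (the paper cites it from Cao and does not reprove it). You write the vanishing polynomial of $\mathbf N$ as $\prod_{j'}(x^m-b_{j'})$, differentiate at $a_i\in\mathbf N_j$, and use $m\equiv 1\pmod p$. The complete splitting of $x^m-b_j$ over $\mathbb F_q$ that you use implicitly follows from $|\mathbf N_j|=m$, i.e., from surjectivity of the norm.
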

	
	We first construct generalized Roth-Lempel codes with arbitrary
	$\ell$-Galois hull dimensions in this range.
	
	\begin{theorem}\label{thm:II.1}
		Let $q=p^e>4$, where $p$ is a prime, and assume that
		$2\ell\mid e$. Let
		$n=t\frac{q-1}{p^\ell-1}$, $1\le t\le p^\ell-1$.
		Then, for every
		$s+1\le k\le
		\left\lfloor
		\frac{p^\ell+n}{p^\ell+1}
		\right\rfloor$
		and every $0\le h\le k-s-1$, there exists an $[n+s,k]_q$
		generalized Roth-Lempel code $C$ such that
		\[
		\dim\bigl(\operatorname{Hull}_{\ell}(C)\bigr)=h.
		\]
	\end{theorem}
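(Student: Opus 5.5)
The plan is to obtain Theorem~\ref{thm:II.1} as a direct specialization of Proposition~\ref{thm:common-nonzero}, with the evaluation set $\mathbf{N}$ of Construction~A. We take $\boldsymbol a=(a_1,\ldots,a_n)$ to be an enumeration of $\mathbf N=\bigcup_{j=1}^t\mathbf N_j$. The entries are pairwise distinct and nonzero, since $\mathbf N\subseteq\mathbb F_q^*$ and the fibers $\mathbf N_j$ of the norm map are disjoint. Their number is $n=t\frac{q-1}{p^\ell-1}$, because each fiber has $(q-1)/(p^\ell-1)$ elements by surjectivity of $\operatorname{Norm}$.

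Next we verify the coefficient hypothesis of Proposition~\ref{thm:common-nonzero}, namely $a_i^{-1}u_i\in\mathbb F_{p^\ell}^*$ for all $i$. This is exactly the ``in particular'' part of Lemma~\ref{lem:3.4}, i.e.\ the case $r=1$. Explicitly, $a_i^{-1}u_i=\operatorname{Norm}(a_i)^{-1}\prod_{j'\ne j}(b_j-b_{j'})^{-1}=b_j^{-1}\prod_{j'\ne j}(b_j-b_{j'})^{-1}$, which lies in $\mathbb F_{p^\ell}^*$ since all $b$'s are distinct elements of $\mathbb F_{p^\ell}^*$.

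The standing hypotheses of Section~\ref{sec:construction} also hold: $q=p^e>4$, $2\ell\mid e$, and $\ell>0$. Here $\ell>0$ is implicit, because the norm map to $\mathbb F_{p^\ell}^*$ with $1\le t\le p^\ell-1$ forces $p^\ell\ge2$. These are precisely what Lemma~\ref{lem:III.2} needs to produce $v_i$ with $v_i^{p^\ell+1}=a_i^{-1}u_i$, and what is needed for the existence of $\beta$ with $\beta^{p^\ell+1}\ne1$.

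Proposition~\ref{thm:common-nonzero} then applies with an arbitrary $A_s\in\operatorname{GL}_s(\mathbb F_q)$. For every $s+1\le k\le\lfloor\frac{n+p^\ell}{p^\ell+1}\rfloor$ and every $0\le h\le k-s-1$, it yields an $[n+s,k]_q$ GRL code $C$ with $\dim\operatorname{Hull}_\ell(C)=h$, which is the claim.

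There is no real obstacle here. The only point needing care is confirming that the range of $k$ in the theorem matches the proposition verbatim: $\lfloor\frac{p^\ell+n}{p^\ell+1}\rfloor$ is the same bound. If the range is empty, the statement holds vacuously.
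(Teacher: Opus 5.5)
Your proposal is correct and matches the paper's proof: the paper also uses Lemma~\ref{lem:3.4} to get $a_i^{-1}u_i\in\mathbb F_{p^\ell}^{*}$ and then applies Proposition~\ref{thm:common-nonzero} directly. One small slip is that the ``in particular'' clause of Lemma~\ref{lem:3.4} is the case $r=0$, not $r=1$; this does not matter, since your explicit computation via $\operatorname{Norm}(a_i)=b_j$ verifies the condition independently.
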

	
	\begin{proof}
		By Lemma~\ref{lem:3.4},
		$a_i^{-1}u_i\in\mathbb F_{p^\ell}^{*}$ for $1\le i\le n$.
		Therefore, all the hypotheses of Proposition~\ref{thm:common-nonzero}
		are satisfied, and the result follows.
	\end{proof}
	
	We next adjoin the evaluation point $0$. Put $a_{n+1}=0$ and define
	$w_i=\prod_{\substack{1\le j\le n+1\\j\ne i}}
	(a_i-a_j)^{-1}$, $1\le i\le n+1$.
	
	\begin{lemma}\label{lem:3.6}\cite{9559992}
		For $1\le i\le n$, we have
		\[
		w_i
		=a_i^{-1}
		\prod_{\substack{1\le j\le n\\j\ne i}}
		(a_i-a_j)^{-1}
		=a_i^{-1}u_i
		\in\mathbb F_{p^\ell}^{*}.
		\]
		Moreover,
		\[
		w_{n+1}
		=\prod_{j=1}^{n}(a_{n+1}-a_j)^{-1}
		=(-1)^{n+t\frac{q-p^\ell}{p^\ell-1}}
		\prod_{j=1}^{t}b_j^{-1}
		\in\mathbb F_{p^\ell}^{*}.
		\]
	\end{lemma}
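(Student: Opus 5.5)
The plan is to compute the Lagrange coefficients of the enlarged set $\{a_1,\ldots,a_n,0\}$ directly from those of $\{a_1,\ldots,a_n\}$, and then evaluate the new product at $0$ by grouping the points according to their norm.

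\emph{First formula.} For $1\le i\le n$, removing $a_{n+1}=0$ from the product defining $w_i$ isolates one extra factor $(a_i-0)^{-1}=a_i^{-1}$. Hence
\[
w_i=a_i^{-1}\prod_{\substack{1\le j\le n\\ j\ne i}}(a_i-a_j)^{-1}=a_i^{-1}u_i .
\]
Lemma~\ref{lem:3.4} (the case $r=0$) then gives $a_i^{-1}u_i\in\mathbb F_{p^\ell}^{*}$. No further work is needed here.

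\emph{Second formula.} Write
\[
w_{n+1}=\prod_{j=1}^n(-a_j)^{-1}=(-1)^n\Bigl(\prod_{j=1}^n a_j\Bigr)^{-1},
\]
so it suffices to compute $\prod_{x\in\mathbf N}x=\prod_{j=1}^t\prod_{x\in\mathbf N_j}x$.

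\begin{itemize}
\item Each fiber $\mathbf N_j$ is the set of roots of $x^{m}-b_j$, where $m=\frac{q-1}{p^\ell-1}$. These roots are distinct and all lie in $\mathbb F_q^*$, since $|\mathbf N_j|=m$.
\item By Vieta, the product of the roots of $x^m-b_j$ is $(-1)^m\cdot(-b_j)=(-1)^{m+1}b_j$.
\item Therefore
\[
\prod_{j=1}^n a_j=(-1)^{t(m+1)}\prod_{j=1}^t b_j .
\]
\end{itemize}

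Substituting into the expression for $w_{n+1}$ gives
\[
w_{n+1}=(-1)^{n+t(m+1)}\prod_{j=1}^t b_j^{-1}.
\]
Since $m+1=\frac{q-1+p^\ell-1}{p^\ell-1}$, I will check the exponent against the stated $t\frac{q-p^\ell}{p^\ell-1}$ by verifying
\[
t(m+1)\equiv t\,\frac{q-p^\ell}{p^\ell-1}\pmod 2 .
\]
Their difference is $t\cdot\frac{2(p^\ell-1)}{p^\ell-1}=2t$, which is even, so the two exponents agree. In characteristic $2$ the sign is immaterial anyway.

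\emph{Membership in $\mathbb F_{p^\ell}^*$.} Each $b_j$ lies in $\mathbb F_{p^\ell}^*$, so $\prod_j b_j^{-1}\in\mathbb F_{p^\ell}^*$, and the sign $\pm1$ does not affect membership.

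The only delicate step is the parity bookkeeping in the sign, which I expect to be the main (minor) obstacle. Everything else is a direct computation from Lemma~\ref{lem:3.4} and Vieta's formula applied to the fibers $x^{m}=b_j$.
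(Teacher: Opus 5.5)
Your argument is correct and is the standard computation behind this lemma; the paper gives no proof of its own and only cites it. You split off the factor $(a_i-0)^{-1}$ to get $w_i=a_i^{-1}u_i$, and you evaluate $\prod_{x\in\mathbf N_j}x$ as the product of the roots of $x^{m}-b_j$, where $m=\frac{q-1}{p^\ell-1}$. One small simplification: write Vieta's sign as $(-1)^{m-1}$ rather than $(-1)^{m+1}$. Since $m-1=\frac{q-p^\ell}{p^\ell-1}$, this gives the stated exponent $n+t\frac{q-p^\ell}{p^\ell-1}$ directly, without the separate parity check.
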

	
	\begin{theorem}\label{thm:II.2}
		Let $q=p^e>4$, where $p$ is a prime, and assume that
		$2\ell\mid e$. Let $n=t\frac{q-1}{p^\ell-1}$, $1\le t\le p^\ell-1$.
		Then, for every
		$s<k\le
		\left\lfloor
		\frac{p^\ell+n+1}{p^\ell+1}
		\right\rfloor$ and every $0\le h\le k-s$, there exists an $[n+s+1,k]_q$
		generalized Roth-Lempel code $C$ such that
		\[
		\dim\bigl(\operatorname{Hull}_{\ell}(C)\bigr)=h.
		\]
	\end{theorem}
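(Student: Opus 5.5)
The plan is to apply Proposition~\ref{thm:common-direct} (with Corollary~\ref{cor:common-direct-extended} for the top value of $k$) to the enlarged evaluation vector $\boldsymbol a'=(a_1,\ldots,a_n,a_{n+1})$ with $a_{n+1}=0$. This vector has length $n'=n+1$, and its Lagrange coefficients are $w_1,\ldots,w_{n+1}$. We take $\delta=1$ and $\lambda=1$. The hypothesis to verify is then $\lambda a_i^{\delta-1}w_i=w_i\in\mathbb F_{p^\ell}^{*}$ for $1\le i\le n+1$. This is exactly Lemma~\ref{lem:3.6}: for $i\le n$ we have $w_i=a_i^{-1}u_i\in\mathbb F_{p^\ell}^*$ by Lemma~\ref{lem:3.4}, and $w_{n+1}$ is an explicit product of elements of $\mathbb F_{p^\ell}^*$ and a sign. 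The choice $\delta=1$ avoids any issue with $a_{n+1}=0$, since no power of $a_{n+1}$ appears.

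With these data, Proposition~\ref{thm:common-direct} gives, for every $s<k\le\lfloor (n'+p^\ell-1)/(p^\ell+1)\rfloor=\lfloor (n+p^\ell)/(p^\ell+1)\rfloor$ and every $0\le h\le k-s$, an $[n'+s,k]_q=[n+s+1,k]_q$ GRL code with $\ell$-Galois hull of dimension $h$. The extension matrix $A_s$ is arbitrary here. The standing assumptions are $q>4$ and $2\ell\mid e$, which guarantee the existence of $v_i$ with $v_i^{p^\ell+1}=w_i$ via Lemma~\ref{lem:III.2}, and of $\beta$ with $\beta^{p^\ell+1}\ne1$. Both hold by hypothesis.

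The stated range reaches $\lfloor (n+1+p^\ell)/(p^\ell+1)\rfloor$, which can exceed the previous bound by one. This extra value occurs exactly when $\delta-1+p^\ell(k-1)=n'-k$, i.e.\ $(p^\ell+1)k=n+1+p^\ell$. For it we invoke Corollary~\ref{cor:common-direct-extended}, whose bound is $\lfloor (n'+p^\ell-\delta+1)/(p^\ell+1)\rfloor=\lfloor (n+1+p^\ell)/(p^\ell+1)\rfloor$, precisely the claimed range. The corollary requires choosing $\eta\in(\mathbb F_q^*)^{p^\ell+1}$ with $\eta\ne\lambda=1$, which is possible because this group has more than one element when $q>4$. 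It also requires taking $A_s=\mu I_s$ with $\mu^{p^\ell+1}=-\eta$; such $\mu$ exists because $-1$ is a $(p^\ell+1)$-th power under $2\ell\mid e$, as noted in that proof.

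The only real point to check, and hence the main obstacle, is the bookkeeping: the length shift $n\mapsto n+1$ must turn the corollary's floor bound into exactly the stated one, and the membership $w_{n+1}\in\mathbb F_{p^\ell}^*$ must hold. The second is supplied by Lemma~\ref{lem:3.6}, so no new computation is needed, and the result follows.
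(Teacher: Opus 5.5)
Your proposal is correct and is essentially the paper's proof. The paper also sets $N=n+1$, uses Lemma~\ref{lem:3.6} to get $w_i\in\mathbb F_{p^\ell}^*$, and applies Corollary~\ref{cor:common-direct-extended} with $\delta=\lambda=1$; your split into Proposition~\ref{thm:common-direct} for the ordinary range plus $A_s=\mu I_s$ with $\mu^{p^\ell+1}=-\eta$, $\eta\ne1$, at the boundary $k=1+n/(p^\ell+1)$ is exactly what that corollary does internally.
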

	
	\begin{proof}
		Let $N=n+1$ be the total number of evaluation points. By
		Lemma~\ref{lem:3.6}, $w_i\in\mathbb F_{p^\ell}^{*}$ for
		$1\le i\le N$. The stated bound on $k$ is precisely
		$k\le\lfloor(p^\ell+N)/(p^\ell+1)\rfloor$. The result now follows
		from Corollary~\ref{cor:common-direct-extended} applied with total length $N$,
		$\delta=1$, and $\lambda=1$.
	\end{proof}
	
	Moreover, a $(k-s)$-dimensional hull can be obtained for a special
	choice of $k$ and $A_s$.
	
	\begin{theorem}\label{thm:II.444}
		Let $q=p^e>4$, where $p$ is a prime, and assume that
		$2\ell\mid e$. Let
		$
		n=t\frac{q-1}{p^\ell-1},~1\le t\le p^\ell-1.
		$
		If
		$k=\xi+1+\frac{n}{p^\ell+1}$, 
		where
		\[
		0\le \xi\le
		\min\left\{
		p^\ell-1,\,
		\left\lfloor
		\frac{\frac{n}{p^\ell+1}-1}{p^\ell}
		\right\rfloor
		\right\},
		\qquad
		\xi+1\le s\le\min\{\xi+p^\ell,k-1\},
		\]
		then there exists an $[n+s,k]_q$ generalized Roth-Lempel code
		$C$ with a $(k-s)$-dimensional $\ell$-Galois hull.
	\end{theorem}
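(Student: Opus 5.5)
This is a direct specialization of Proposition~\ref{thm:common-plus-one-nonzero}, so the plan is to verify its hypotheses for the evaluation set $\mathbf N$ of Construction A and then invoke it. Take $\boldsymbol a=(a_1,\ldots,a_n)$ to be the elements of $\mathbf N=\bigcup_{j=1}^t\mathbf N_j$ with $n=t\frac{q-1}{p^\ell-1}$, and let $u_i$ be the Lagrange coefficients \eqref{eq:ui}. Since $\mathbf N\subseteq\mathbb F_q^*$, every $a_i\ne0$. By Lemma~\ref{lem:3.4}, $a_i^{-1}u_i\in\mathbb F_{p^\ell}^*$ for $1\le i\le n$. These are exactly the coefficient hypotheses of Proposition~\ref{thm:common-plus-one-nonzero}.

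The remaining hypotheses are the numerical ones. The statement of Theorem~\ref{thm:II.444} implicitly assumes $\frac{n}{p^\ell+1}\in\mathbb Z_{>0}$, since $k=\xi+1+\frac{n}{p^\ell+1}$ must be an integer. The constraints on $\xi$ and $s$ are copied verbatim from the proposition. Thus the proposition applies directly and yields $C=\operatorname{GRL}_k(\boldsymbol a,\boldsymbol v,A_s)$ with a $(k-s)$-dimensional $\ell$-Galois hull. Here $v_i^{p^\ell+1}=a_i^{-1}u_i$, which is solvable by Lemma~\ref{lem:III.2} because $2\ell\mid e$, and $A_s$ is the block matrix built from $\mu$ with $\mu^{p^\ell+1}=-1$ and from $a_{n-1}^{(1)},\ldots,a_{n-1}^{(\xi)}$.

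I would also recheck the degree bookkeeping used inside the proposition: $p^\ell(k-1)\le n-1$ and $-1+p^\ell(k-\xi-1)=n-k+\xi$. The second is immediate from $(p^\ell+1)(k-\xi-1)=n$. The first follows from $\xi\le\lfloor(\frac{n}{p^\ell+1}-1)/p^\ell\rfloor$, since this gives $p^\ell\xi\le\frac{n}{p^\ell+1}-1$.

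There is no serious obstacle. The only points needing care are the divisibility $(p^\ell+1)\mid n$ and the observation that $q>4$ is not otherwise needed. Optionally, one could note via the Remark that $a_{n-1}^{(j)}$ might vanish for this set, which would allow $A_s=\mu I_s$, but this is not required.
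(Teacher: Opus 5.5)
Your proposal is correct and matches the paper's proof: Lemma~\ref{lem:3.4} gives $a_i^{-1}u_i\in\mathbb F_{p^\ell}^*$ on the nonzero set $\mathbf N$, and Proposition~\ref{thm:common-plus-one-nonzero} then yields the $(k-s)$-dimensional hull. One small refinement is that $\frac{n}{p^\ell+1}\in\mathbb Z_{>0}$ is not an implicit assumption. It follows from $2\ell\mid e$ via $\frac{q-1}{p^\ell-1}=(p^\ell+1)(1+p^{2\ell}+\cdots+p^{e-2\ell})$, and this is exactly how the paper obtains it.
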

	
	\begin{proof}
		By Lemma~\ref{lem:3.4},
		$a_i^{-1}u_i\in\mathbb F_{p^\ell}^{*}$. Since $2\ell\mid e$,
		$n/(p^\ell+1)$ is a positive integer. The stated bounds on $\xi$ and
		$s$ verify the remaining hypotheses of
		Proposition~\ref{thm:common-plus-one-nonzero}. Hence the desired code
		exists.
	\end{proof}
	
	The zero-extended construction also gives a $(k-s+1)$-dimensional
	$\ell$-Galois hull for a special value of $k$.
	
	\begin{theorem}\label{thm:II.4}
		Let $q=p^e>4$, where $p$ is a prime, and assume that
		$2\ell\mid e$. Let $n=t\frac{q-1}{p^\ell-1}$, $1\le t\le p^\ell-1$. If $k=\xi+1+\frac{n}{p^\ell+1}$, where $0\le \xi\le
		\min\left\{
		p^\ell-1,\,
		\left\lfloor
		\frac{n}{p^\ell(p^\ell+1)}
		\right\rfloor
		\right\}$, and
		$\xi+1\le s\le\min\{\xi+p^\ell,k-1\}$, then there exists an $[n+s+1,k]_q$ generalized Roth-Lempel code
		$C$ with a $(k-s+1)$-dimensional $\ell$-Galois hull.
	\end{theorem}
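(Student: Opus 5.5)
The plan is to derive this directly from Proposition~\ref{thm:common-plus-one-shifted}, in the same way that Theorem~\ref{thm:II.2} was derived from Corollary~\ref{cor:common-direct-extended}. We work with the zero-extended evaluation set $\{a_1,\ldots,a_n,a_{n+1}=0\}$, so the total number of evaluation points is $N=n+1$. The Lagrange coefficients are the $w_i$ of Lemma~\ref{lem:3.6}. We apply Proposition~\ref{thm:common-plus-one-shifted} with length $N$, $\delta=1$ and $\lambda=1\in\mathbb F_{p^\ell}^{*}$.

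\emph{Coefficient hypothesis.} With $\delta=1$ the required condition reads $\lambda a_j^{\delta-1}w_j=w_j\in\mathbb F_{p^\ell}^{*}$ for $1\le j\le N$. Here $a_j^{0}$ is read as the constant $1$, which is how $x^{\delta-1}$ enters the proof of Proposition~\ref{thm:common-plus-one-shifted}; this matters for the point $a_{N}=0$. The condition is exactly Lemma~\ref{lem:3.6}.

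\emph{Integrality.} Next I check that $(N-\delta)/(p^\ell+1)=n/(p^\ell+1)$ is a positive integer. Since $2\ell\mid e$, write
\[
\frac{q-1}{p^\ell-1}=(p^\ell+1)\cdot\frac{q-1}{p^{2\ell}-1}.
\]
The second factor is an integer because $p^{2\ell}-1\mid p^{e}-1$. Hence $n=t\frac{q-1}{p^\ell-1}$ is a positive multiple of $p^\ell+1$.

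\emph{Parameters.} The prescribed value $k=\xi+1+\frac{n}{p^\ell+1}$ equals $\xi+1+\frac{N-\delta}{p^\ell+1}$. The bound on $\xi$ becomes
\[
\xi\le\min\left\{p^\ell-1,\left\lfloor\frac{N-\delta}{p^\ell(p^\ell+1)}\right\rfloor\right\}
=\min\left\{p^\ell-1,\left\lfloor\frac{n}{p^\ell(p^\ell+1)}\right\rfloor\right\},
\]
which is the hypothesis in the statement. The range $\xi+1\le s\le\min\{\xi+p^\ell,k-1\}$ is copied verbatim. Proposition~\ref{thm:common-plus-one-shifted} therefore yields an $[N+s,k]_q=[n+s+1,k]_q$ generalized Roth-Lempel code with a $(k-s+1)$-dimensional $\ell$-Galois hull, as claimed.

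\emph{Main obstacle.} The only nonroutine point is the bookkeeping in the integrality step and in the $\xi$-bound after the shift $N=n+1$, $\delta=1$: one must confirm that $N-\delta$ is $n$ rather than $n+1$. If anything fails, it would be the convention $0^{0}=1$ at the evaluation point $0$. I expect this to be harmless, because in the proof of Proposition~\ref{thm:common-plus-one-shifted} the relation $g(x)=\lambda x^{\delta-1}f^{p^\ell}(x)$ with $\delta=1$ holds coefficientwise as $g(x)=f^{p^\ell}(x)$.
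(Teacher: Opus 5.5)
Your proposal is correct and is essentially the paper's own proof. Like the paper, you set $N=n+1$, use Lemma~\ref{lem:3.6} to get $w_i\in\mathbb F_{p^\ell}^{*}$, and apply Proposition~\ref{thm:common-plus-one-shifted} with $\delta=\lambda=1$, where $(N-\delta)/(p^\ell+1)=n/(p^\ell+1)$. Your integrality check via $\frac{q-1}{p^\ell-1}=(p^\ell+1)\frac{q-1}{p^{2\ell}-1}$ only spells out a hypothesis the paper leaves implicit, and the $0^0$ concern does not arise because $\delta=1$ removes the factor $a_j^{\delta-1}$ altogether.
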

	
	\begin{proof}
		Let $N=n+1$. Lemma~\ref{lem:3.6} gives
		$w_i\in\mathbb F_{p^\ell}^{*}$ for $1\le i\le N$. Taking
		$\delta=1$ and $\lambda=1$ in
		Proposition~\ref{thm:common-plus-one-shifted}, we have
		$(N-\delta)/(p^\ell+1)=n/(p^\ell+1)$, and the stated bounds on
		$\xi$ and $s$ verify the remaining hypotheses. Hence the result
		follows.
	\end{proof}
	\subsubsection*{Construction B}\mbox{}\par
	\noindent Let $q=p^e$ be a prime power and let $\omega$ be a primitive
	element of $\mathbb F_q$. Let $\xi_1=\omega^{x_1}$ and $\xi_2=\omega^{x_2}$,
	where $x_1,x_2$ are positive integers, and put
	$r_2=\operatorname{ord}(\xi_2)=\frac{q-1}{\gcd(x_2,q-1)}$.
	For $1\le j\le r_1$, define $R_j=\{\xi_1^j\xi_2^v:1\le v\le r_2\}$.
	Assume that $(q-1)\mid\operatorname{lcm}(x_1,x_2)$.
	Then the sets $R_j$ are pairwise disjoint whenever
	$1\le r_1\le\frac{q-1}{\gcd(x_1,q-1)}$. Put
	$R=\bigcup_{j=1}^{r_1}R_j=\{a_1,a_2,\ldots,a_n\}$, where $n=r_1r_2$.
	
	\begin{lemma}\label{lem:3.8}\cite{9559992}
		Let $u_i$ be defined by Equation~\eqref{eq:ui} and suppose that
		$a_i\in R_j$. Then
		\begin{equation}\label{eq:B-lagrange-auto}
			u_i=a_i\xi_1^{-jr_2}r_2^{-1}
			\prod_{\substack{1\le j'\le r_1\\j'\ne j}}
			\left(\xi_1^{jr_2}-\xi_1^{j'r_2}\right)^{-1}.
		\end{equation}
		If $\frac{q-1}{p^\ell-1}\mid x_1$, then, for every
		$\tau\in\mathbb Z_{\ge0}$, $a_i^{\tau r_2-1}u_i\in\mathbb F_{p^\ell}^{*}$,
		$1\le i\le n$.
		In particular, $a_i^{-1}u_i\in\mathbb F_{p^\ell}^{*}$.
	\end{lemma}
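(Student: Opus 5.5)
The plan is to identify the vanishing polynomial of $R$ and compute $u_i$ through its derivative. The starting point is that $\prod_{j\ne i}(a_i-a_j)=m'(a_i)$ whenever $m(x)=\prod_{j=1}^n(x-a_j)$.

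\textbf{Step 1: the vanishing polynomial.} Since $\xi_2$ has order $r_2$, the set $\{\xi_2^v:1\le v\le r_2\}$ is exactly the set of roots of $x^{r_2}-1$. Hence
$$
\prod_{a\in R_j}(x-a)=x^{r_2}-\xi_1^{jr_2}.
$$
The sets $R_j$ are pairwise disjoint by assumption, so
$$
m(x)=\prod_{j=1}^{r_1}\bigl(x^{r_2}-\xi_1^{jr_2}\bigr).
$$
Disjointness also forces the values $\xi_1^{jr_2}$, $1\le j\le r_1$, to be pairwise distinct. Indeed, if $\xi_1^{jr_2}=\xi_1^{j'r_2}$, then $R_j$ and $R_{j'}$ would both be the full root set of the same binomial and hence coincide.

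\textbf{Step 2: evaluating the derivative.} Let $a_i\in R_j$. By the product rule, every term of $m'(x)$ that keeps the factor $x^{r_2}-\xi_1^{jr_2}$ vanishes at $a_i$. What remains is
$$
m'(a_i)=r_2a_i^{r_2-1}\prod_{j'\ne j}\bigl(a_i^{r_2}-\xi_1^{j'r_2}\bigr).
$$
Substituting $a_i^{r_2}=\xi_1^{jr_2}$ gives $a_i^{r_2-1}=a_i^{-1}\xi_1^{jr_2}$. Moreover $r_2\mid q-1$, so $r_2$ is prime to $p$ and hence nonzero in $\mathbb F_q$. Inverting $m'(a_i)$ then yields formula \eqref{eq:B-lagrange-auto}.

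\textbf{Step 3: membership in $\mathbb F_{p^\ell}^*$.} Multiplying formula \eqref{eq:B-lagrange-auto} by $a_i^{\tau r_2-1}=\xi_1^{\tau jr_2}a_i^{-1}$ gives
$$
a_i^{\tau r_2-1}u_i=\xi_1^{(\tau-1)jr_2}\,r_2^{-1}\prod_{j'\ne j}\bigl(\xi_1^{jr_2}-\xi_1^{j'r_2}\bigr)^{-1}.
$$
Since $\frac{q-1}{p^\ell-1}\mid x_1$, we have $\xi_1=\omega^{x_1}\in\mathbb F_{p^\ell}^*$, the subgroup of order $p^\ell-1$. Therefore every power of $\xi_1$ and every difference of such powers lies in $\mathbb F_{p^\ell}$. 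The differences are nonzero by Step 1. The integer $r_2$ lies in the prime field and is nonzero. Hence the whole product lies in $\mathbb F_{p^\ell}^*$. Taking $\tau=0$ gives $a_i^{-1}u_i\in\mathbb F_{p^\ell}^*$.

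The only delicate point is the distinctness of the $\xi_1^{jr_2}$ in Step 1. It does not follow from the hypotheses on $x_1,x_2$ alone, but it follows directly from the disjointness of the $R_j$, which the construction assumes for the chosen range of $r_1$.
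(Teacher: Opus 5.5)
Your proof is correct and complete. The paper gives no proof of its own here, only a citation to Cao. Your computation $u_i^{-1}=m'(a_i)$ with $m(x)=\prod_{j=1}^{r_1}\bigl(x^{r_2}-\xi_1^{jr_2}\bigr)$ is the same vanishing-polynomial derivative argument the paper uses for its analogous Lemmas~\ref{lem:additive-subspace} and~\ref{lem:mixed-fibers}. You also handle correctly the two points that need care: the distinctness of the $\xi_1^{jr_2}$, which follows from the disjointness of the $R_j$, and $\xi_1\in\mathbb F_{p^\ell}^*$, which follows from $\frac{q-1}{p^\ell-1}\mid x_1$.
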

	
	\begin{theorem}\label{thm:II.14}
		Let $q=p^e>4$, where $p$ is a prime, and assume that
		$2\ell\mid e$. Let $x_1,x_2$ be positive integers satisfying
		$(q-1)\mid\operatorname{lcm}(x_1,x_2)$ and
		$\frac{q-1}{p^\ell-1}\mid x_1$. Let
		$n=r_1\frac{q-1}{\gcd(x_2,q-1)}$, where
		$1\le r_1\le\frac{q-1}{\gcd(x_1,q-1)}$.
		Then, for every $s+1\le k\le
		\left\lfloor\frac{p^\ell+n}{p^\ell+1}\right\rfloor$
		and every $0\le h\le k-s-1$, there exists an $[n+s,k]_q$
		generalized Roth-Lempel code $C$ such that
		\[
		\dim\bigl(\operatorname{Hull}_{\ell}(C)\bigr)=h.
		\]
	\end{theorem}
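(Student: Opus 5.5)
This is a direct application of Proposition~\ref{thm:common-nonzero}, exactly as in the proof of Theorem~\ref{thm:II.1}. Take the evaluation vector $\boldsymbol a=(a_1,\ldots,a_n)$ to be an enumeration of the set $R=\bigcup_{j=1}^{r_1}R_j$ from Construction~B.

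First, check that these points are admissible. Since $(q-1)\mid\operatorname{lcm}(x_1,x_2)$ and $1\le r_1\le\frac{q-1}{\gcd(x_1,q-1)}$, the sets $R_j$ are pairwise disjoint. Each $R_j$ consists of the $r_2$ distinct elements $\xi_1^j\xi_2^v$. Hence the $a_i$ are pairwise distinct and $n=r_1r_2=r_1\frac{q-1}{\gcd(x_2,q-1)}$, which is the length in the statement. All $a_i$ are powers of $\omega$, so $a_i\neq 0$.

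Second, verify the coefficient condition of Proposition~\ref{thm:common-nonzero}. Because $\frac{q-1}{p^\ell-1}\mid x_1$, Lemma~\ref{lem:3.8} with $\tau=0$ gives $a_i^{-1}u_i\in\mathbb F_{p^\ell}^{*}$ for $1\le i\le n$, where $u_i$ are the Lagrange coefficients~\eqref{eq:ui} of $\boldsymbol a$.

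Third, apply Proposition~\ref{thm:common-nonzero} with an arbitrary $A_s\in\operatorname{GL}_s(\mathbb F_q)$. The standing assumptions $q=p^e>4$, $\ell>0$ and $2\ell\mid e$ match those of that proposition: they guarantee the choice of $v_i$ via Lemma~\ref{lem:III.2} and of $\beta$ with $\beta^{p^\ell+1}\ne1$. For every $s+1\le k\le\lfloor (n+p^\ell)/(p^\ell+1)\rfloor$ and every $0\le h\le k-s-1$, it yields an $[n+s,k]_q$ GRL code with $\dim(\operatorname{Hull}_\ell(C))=h$, as claimed.

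No step here is a real obstacle: the computational content lives in Lemma~\ref{lem:3.8}, which is quoted from \cite{9559992}. The only thing to watch is that the disjointness hypothesis matches the lemma's setup, so that $n$ counts distinct points.
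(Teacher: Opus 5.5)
Your proposal is correct and matches the paper's proof. Lemma~\ref{lem:3.8} gives $a_i^{-1}u_i\in\mathbb F_{p^\ell}^{*}$ on the nonzero, pairwise distinct points of $R$, and Proposition~\ref{thm:common-nonzero} then gives every $0\le h\le k-s-1$. Your added check that the cosets $R_j$ are disjoint, so that $n=r_1r_2$ counts distinct evaluation points, is left implicit in the paper's setup of Construction~B.
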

	
	\begin{proof}
		Lemma~\ref{lem:3.8} gives
		$a_i^{-1}u_i\in\mathbb F_{p^\ell}^{*}$ for $1\le i\le n$.
		The result follows immediately from
		Proposition~\ref{thm:common-nonzero}.
	\end{proof}
	We next adjoin the evaluation point $0$. Put $a_{n+1}=0$ and define
	$w_i=\prod_{\substack{1\le j\le n+1\\j\ne i}}
	(a_i-a_j)^{-1}$, $1\le i\le n+1$.
	For $1\le i\le n$, we have $w_i=a_i^{-1}u_i\in\mathbb F_{p^\ell}^{*}$.
	Moreover,
	\begin{equation}\label{eq:B-zero-lagrange-auto}
		w_{n+1}
		=(-1)^n
		\xi_1^{-\frac{r_1(r_1+1)r_2}{2}}
		\xi_2^{-\frac{r_2(r_2+1)r_1}{2}}
		\in\mathbb F_{p^\ell}^{*}.
	\end{equation}
	
	\begin{theorem}\label{thm:II.15}
		Under the field and evaluation-set assumptions of Theorem~\ref{thm:II.14}, for every
		$s<k\le\left\lfloor\frac{p^\ell+n+1}{p^\ell+1}\right\rfloor$
		and every $0\le h\le k-s$, there exists an $[n+s+1,k]_q$
		generalized Roth-Lempel code $C$ such that
		\[
		\dim\bigl(\operatorname{Hull}_{\ell}(C)\bigr)=h.
		\]
	\end{theorem}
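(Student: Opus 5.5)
The plan mirrors the proof of Theorem~\ref{thm:II.2}. Set $N=n+1$, take the evaluation vector $(a_1,\ldots,a_n,a_{n+1})$ with $a_{n+1}=0$, and let $w_1,\ldots,w_N$ be its Lagrange coefficients. The goal is to check the hypotheses of Corollary~\ref{cor:common-direct-extended} with $\delta=1$ and $\lambda=1$. For these parameters the required condition is $\lambda a_i^{\delta-1}w_i=w_i\in\mathbb F_{p^\ell}^{*}$ for $1\le i\le N$.

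First I verify the coefficient condition. For $1\le i\le n$, factoring out the new point gives $w_i=a_i^{-1}u_i$, and Lemma~\ref{lem:3.8} gives $a_i^{-1}u_i\in\mathbb F_{p^\ell}^{*}$. For $i=n+1$, we have $w_{n+1}=\prod_{j}(-a_j)^{-1}=(-1)^n\bigl(\prod_j a_j\bigr)^{-1}$. Since $R$ is a disjoint union of the cosets $\xi_1^jR'$, where $R'=\langle\xi_2\rangle$, the product $\prod_j a_j$ splits as a power of $\xi_1$ times a power of $\xi_2$, which yields the closed form \eqref{eq:B-zero-lagrange-auto}.

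The step I expect to need care is showing $w_{n+1}\in\mathbb F_{p^\ell}^{*}$. I would argue as follows.
\begin{itemize}
\item Because $\frac{q-1}{p^\ell-1}\mid x_1$, the element $\xi_1$ has order dividing $p^\ell-1$, so every power of $\xi_1$ lies in $\mathbb F_{p^\ell}^{*}$. The sign $(-1)^n$ also lies in $\mathbb F_{p^\ell}^{*}$.
\item The product of all elements of the cyclic group $\langle\xi_2\rangle$ of order $r_2$ equals $(-1)^{r_2+1}$: it is $-1$ when $r_2$ is even and $1$ when $r_2$ is odd, and for $p=2$ it is $1$. Hence $\prod_{v}\xi_2^{v}=\pm1$.
\item Raising to the $r_1$-th power, the $\xi_2$-contribution is $\pm1\in\mathbb F_{p^\ell}^{*}$.
\end{itemize}
Combining these, $w_{n+1}\in\mathbb F_{p^\ell}^{*}$. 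If the exponent form in \eqref{eq:B-zero-lagrange-auto} is used instead, I would check $\xi_2^{r_2(r_2+1)/2}=\pm1$ directly in the same way.

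Finally I match the ranges. The bound $k\le\lfloor(p^\ell+n+1)/(p^\ell+1)\rfloor$ is exactly $k\le\lfloor(N+p^\ell-\delta+1)/(p^\ell+1)\rfloor$ with $\delta=1$. The conditions $s<k$ and $0\le h\le k-s$ coincide with those of Corollary~\ref{cor:common-direct-extended}, and the field hypotheses $q=p^e>4$, $2\ell\mid e$ are the same. The corollary therefore yields, for each such $k$ and $h$, an $[N+s,k]_q=[n+s+1,k]_q$ GRL code with $\ell$-Galois hull of dimension $h$. In the boundary case $p^\ell(k-1)=N-k$ one uses the extension matrix $A_s=\mu I_s$ supplied by the corollary.
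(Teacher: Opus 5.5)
Your proposal is correct and follows the paper's proof. You set $N=n+1$, check $w_i\in\mathbb F_{p^\ell}^{*}$ for all $1\le i\le N$, note that the bound on $k$ is exactly $\lfloor(N+p^\ell)/(p^\ell+1)\rfloor$, and apply Corollary~\ref{cor:common-direct-extended} with $\delta=\lambda=1$. Your explicit check that $w_{n+1}\in\mathbb F_{p^\ell}^{*}$ uses $\xi_1\in\mathbb F_{p^\ell}^{*}$ and that the product of the elements of $\langle\xi_2\rangle$ is $\pm1$; this fills in a step the paper only asserts in \eqref{eq:B-zero-lagrange-auto}.
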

	
	\begin{proof}
		Let $N=n+1$. The identities preceding the theorem show that
		$w_i\in\mathbb F_{p^\ell}^{*}$ for $1\le i\le N$. The stated
		bound on $k$ is precisely
		$k\le\lfloor(p^\ell+N)/(p^\ell+1)\rfloor$. Hence the result follows
		from Corollary~\ref{cor:common-direct-extended} with $\delta=1$ and
		$\lambda=1$.
	\end{proof}
	
	Finally, the same construction yields a $(k-s)$-dimensional hull
	for a special value of $k$.
	
	\begin{theorem}\label{thm:B-hull-plus-one}
		Under the field and evaluation-set assumptions of Theorem~\ref{thm:II.14}, assume that
		$\frac{n}{p^\ell+1}\in\mathbb Z_{>0}$ and put
		$k=\xi+1+\frac{n}{p^\ell+1}$. If
		\[
		0\le \xi\le
		\min\left\{
		p^\ell-1,\,
		\left\lfloor
		\frac{\frac{n}{p^\ell+1}-1}{p^\ell}
		\right\rfloor
		\right\},
		\qquad
		\xi+1\le s\le\min\{\xi+p^\ell,k-1\},
		\]
		then there exists an $[n+s,k]_q$ generalized Roth-Lempel code
		$C$ such that
		\[
		\dim\bigl(\operatorname{Hull}_{\ell}(C)\bigr)=k-s.
		\]
	\end{theorem}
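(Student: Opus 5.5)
This is a direct application of Proposition~\ref{thm:common-plus-one-nonzero}, in the same way as Theorem~\ref{thm:II.444}. The plan has three steps.

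First, I check the coefficient hypothesis. The evaluation set $R=\{a_1,\ldots,a_n\}$ consists of nonzero elements, since every $a_i$ is a power of the primitive element $\omega$. Because $\frac{q-1}{p^\ell-1}\mid x_1$, Lemma~\ref{lem:3.8} with $\tau=0$ gives
$$
a_i^{-1}u_i\in\mathbb F_{p^\ell}^{*},\qquad 1\le i\le n,
$$
where $u_i$ are the Lagrange coefficients of $R$. This is exactly the hypothesis on the evaluation points in Proposition~\ref{thm:common-plus-one-nonzero}. The standing assumptions $q=p^e>4$, $\ell>0$ and $2\ell\mid e$ are those under which that proposition was proved, and they supply the needed $v_i$ via Lemma~\ref{lem:III.2} and the needed $\mu$ with $\mu^{p^\ell+1}=-1$.

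Second, I match the parameters. Unlike Theorem~\ref{thm:II.444}, integrality of $n/(p^\ell+1)$ is not automatic here: $n=r_1r_2$ need not be divisible by $p^\ell+1$. That is why it appears as an explicit assumption in the statement. With $k=\xi+1+\frac{n}{p^\ell+1}$, the stated ranges
$$
0\le\xi\le\min\left\{p^\ell-1,\left\lfloor\frac{\frac{n}{p^\ell+1}-1}{p^\ell}\right\rfloor\right\},
\qquad
\xi+1\le s\le\min\{\xi+p^\ell,k-1\}
$$
coincide verbatim with those of Proposition~\ref{thm:common-plus-one-nonzero}. The extension matrix $A_s$ is the one built in that proposition from $\mu$ and the quantities $a_{n-1}^{(1)},\ldots,a_{n-1}^{(\xi)}$ computed for the set $R$. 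That construction needs only that $A_s$ be nonsingular, which holds because $\det A_s=\mu\ne0$.

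Third, applying the proposition yields $C=\operatorname{GRL}_k(\boldsymbol a,\boldsymbol v,A_s)$ of length $n+s$ with $\dim(\operatorname{Hull}_\ell(C))=k-s$.

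There is no genuine obstacle. The only point needing care is confirming that no property specific to Construction~A entered the proof of Proposition~\ref{thm:common-plus-one-nonzero}. Reading that proof, it uses only three things:
\begin{enumerate}
\item[(i)] the condition $a_i^{-1}u_i\in\mathbb F_{p^\ell}^*$;
\item[(ii)] the degree inequalities $p^\ell(k-1)\le n-1$ and $-1+p^\ell(k-\xi)>n-k+s-1$, which follow from the bounds on $\xi$ and $s$;
\item[(iii)] the structure of $T_s$, which holds for an arbitrary evaluation set.
\end{enumerate}
So the argument transfers without change.
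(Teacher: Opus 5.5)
Your proposal is correct and is the same argument as the paper's: the paper also invokes Lemma~\ref{lem:3.8} to get $a_j^{-1}u_j\in\mathbb F_{p^\ell}^{*}$, then applies Proposition~\ref{thm:common-plus-one-nonzero}, whose hypotheses coincide verbatim with those stated. Your list of what that proposition's proof uses omits one ingredient: the bound $\xi\le p^\ell-1$, which gives $n-k>-1+p^\ell(k-\xi-2)$ and so leaves only $g_{n-k+\xi}$ possibly nonzero. This does not affect your application of the proposition as a black box.
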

	
	\begin{proof}
		By Lemma~\ref{lem:3.8},
		$a_j^{-1}u_j\in\mathbb F_{p^\ell}^{*}$ for $1\le j\le n$.
		The result follows from
		Proposition~\ref{thm:common-plus-one-nonzero}.
	\end{proof}
	
	The zero-extended construction gives the same additional hull
	dimension under a slightly different bound on $i$.
	
	\begin{theorem}\label{thm:B-zero-hull-plus-one}
		Under the field and evaluation-set assumptions of Theorem~\ref{thm:II.14}, put
		$a_{n+1}=0$. Assume that
		$\frac{n}{p^\ell+1}\in\mathbb Z_{>0}$ and let
		$k=\xi+1+\frac{n}{p^\ell+1}$. If
		\[
		0\le \xi\le
		\min\left\{
		p^\ell-1,\,
		\left\lfloor
		\frac{n}{p^\ell(p^\ell+1)}
		\right\rfloor
		\right\},
		\qquad
		\xi+1\le s\le\min\{\xi+p^\ell,k-1\},
		\]
		then there exists an $[n+s+1,k]_q$ generalized Roth-Lempel
		code $C$ such that
		\[
		\dim\bigl(\operatorname{Hull}_{\ell}(C)\bigr)=k-s+1.
		\]
	\end{theorem}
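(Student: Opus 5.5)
The plan is to reduce Theorem~\ref{thm:B-zero-hull-plus-one} to Proposition~\ref{thm:common-plus-one-shifted}, applied to the zero-extended evaluation set. This is the same pattern as the proof of Theorem~\ref{thm:II.4} for Construction A.

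First, put $N=n+1$ and take as evaluation vector $(a_1,\ldots,a_n,a_{n+1})$ with $a_{n+1}=0$. The Lagrange coefficients of this set are the $w_i$ defined before Theorem~\ref{thm:II.15}. By Lemma~\ref{lem:3.8} and the identity $w_i=a_i^{-1}u_i$, we have $w_i\in\mathbb F_{p^\ell}^*$ for $1\le i\le n$. Equation~\eqref{eq:B-zero-lagrange-auto} gives $w_{n+1}\in\mathbb F_{p^\ell}^*$. Hence, with $\delta=1$ and $\lambda=1$, the coefficient hypothesis $\lambda a_j^{\delta-1}w_j\in\mathbb F_{p^\ell}^*$ of Proposition~\ref{thm:common-plus-one-shifted} holds for all $1\le j\le N$. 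This choice of $\delta$ matters because $a_{n+1}=0$, so any condition involving $a_j^{-1}$ would not make sense.

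Next, check the numerical hypotheses. With $\delta=1$ we have $(N-\delta)/(p^\ell+1)=n/(p^\ell+1)$, which is a positive integer by assumption. So $k=\xi+1+\frac{N-\delta}{p^\ell+1}$ is exactly the $k$ in the statement. The bound $\xi\le\min\{p^\ell-1,\lfloor (N-\delta)/(p^\ell(p^\ell+1))\rfloor\}$ becomes the stated bound $\xi\le\min\{p^\ell-1,\lfloor n/(p^\ell(p^\ell+1))\rfloor\}$. The condition $\xi+1\le s\le\min\{\xi+p^\ell,k-1\}$ is copied verbatim. Proposition~\ref{thm:common-plus-one-shifted} then produces an $[N+s,k]_q=[n+s+1,k]_q$ GRL code with a $(k-s+1)$-dimensional $\ell$-Galois hull.

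The only step needing care is confirming that $w_{n+1}$ really lies in $\mathbb F_{p^\ell}^*$, that is, that \eqref{eq:B-zero-lagrange-auto} holds under the standing assumption $\frac{q-1}{p^\ell-1}\mid x_1$. If one wanted to re-derive it rather than quote it, the argument would go as follows. Note $w_{n+1}=\prod_j(-a_j)^{-1}$. The product of $R_j$ equals $\xi_1^{jr_2}$ times the product of all $r_2$-th roots of unity in $\langle\xi_2\rangle$. Since $\xi_1^{r_2}$ has norm-type exponent divisible by $\frac{q-1}{p^\ell-1}$ (because $\frac{q-1}{p^\ell-1}\mid x_1$), each such factor lies in $\mathbb F_{p^\ell}^*$. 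The product of the roots of unity is $\pm1$. This yields membership in $\mathbb F_{p^\ell}^*$; everything else is bookkeeping.
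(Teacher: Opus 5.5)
Your proposal is correct and takes the same approach as the paper. You set $N=n+1$, observe that the zero-extended Lagrange coefficients $w_j$ all lie in $\mathbb F_{p^\ell}^*$, and apply Proposition~\ref{thm:common-plus-one-shifted} with $\delta=\lambda=1$, using $(N-\delta)/(p^\ell+1)=n/(p^\ell+1)$ to match the hypotheses on $k$, $\xi$ and $s$; your extra check of $w_{n+1}\in\mathbb F_{p^\ell}^*$ (via $\xi_1\in\mathbb F_{p^\ell}^*$, since $\frac{q-1}{p^\ell-1}\mid x_1$, and the product of $\langle\xi_2\rangle$ being $\pm1$) just re-derives \eqref{eq:B-zero-lagrange-auto}, which the paper quotes directly.
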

	
	\begin{proof}
		Let $N=n+1$. The zero-extended coefficients satisfy
		$w_j\in\mathbb F_{p^\ell}^{*}$ for $1\le j\le N$. Taking
		$\delta=1$ and $\lambda=1$ in
		Proposition~\ref{thm:common-plus-one-shifted} gives
		$(N-\delta)/(p^\ell+1)=n/(p^\ell+1)$, and the result follows.
	\end{proof}
	
	\subsubsection*{Construction C}\mbox{}\par
	\noindent Let $q=p^e$ be a prime power and assume that $\ell\mid e$. Put
	$y=\frac{q-1}{p^\ell-1}$. Let $m\mid(q-1)$ and write
	$m_2=\gcd(m,y)$ and $m_1=\frac{m}{m_2}$. Let $\omega$ be a primitive
	element of $\mathbb F_q$ and define
	$\vartheta_1=\omega^{\frac{q-1}{m}}$ and
	$\vartheta_2=\omega^{\frac{y}{m_2}}$. Set
	$H=\langle\vartheta_1\rangle$ and $G=\langle\vartheta_2\rangle$.
	Then $|H|=m$, $|G|=(p^\ell-1)m_2$, and $H$ is a subgroup of $G$.
	Hence $|G/H|=\frac{p^\ell-1}{m_1}$. Choose $r$ distinct cosets
	$\eta_1H,\ldots,\eta_rH$, where $1\le r\le\frac{p^\ell-1}{m_1}$,
	and put
	\begin{equation}\label{eq:C-evaluation-auto}
		\mathcal H=\bigcup_{j=1}^{r}\eta_jH
		=\{a_1,a_2,\ldots,a_n\},\qquad n=rm.
	\end{equation}

	\begin{lemma}\label{lem:3.12}\cite{9559992}
		Let $u_i$ be defined by Equation~\eqref{eq:ui} and suppose that
		$a_i\in\eta_jH$. Then
		\begin{equation}\label{eq:C-lagrange-auto}
			u_i=a_i\eta_j^{-m}m^{-1}
			\prod_{\substack{1\le j'\le r\\j'\ne j}}
			(\eta_j^m-\eta_{j'}^m)^{-1}.
		\end{equation}
		For every $\tau\in\mathbb Z_{\ge0}$,
		$a_i^{\tau m-1}u_i\in\mathbb F_{p^\ell}^{*}$.
		In particular, $a_i^{-1}u_i\in\mathbb F_{p^\ell}^{*}$.
	\end{lemma}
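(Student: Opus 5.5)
The plan is to compute the Lagrange coefficients directly from the factorization of $\prod_{b\in\mathcal H}(x-b)$. Since $H$ is the group of $m$-th roots of unity, each coset satisfies $\prod_{b\in\eta_jH}(x-b)=x^m-\eta_j^m$. Hence
\[
\prod_{b\in\mathcal H}(x-b)=\prod_{j'=1}^{r}(x^m-\eta_{j'}^m).
\]
The derivative of this product at $a_i$ equals $\prod_{j\ne i}(a_i-a_j)=u_i^{-1}$. For $a_i\in\eta_jH$, every factor with $j'\ne j$ evaluates to $a_i^m-\eta_{j'}^m=\eta_j^m-\eta_{j'}^m$, because $a_i^m=\eta_j^m$. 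The remaining factor $x^m-\eta_j^m$ has derivative $ma_i^{m-1}=m\eta_j^m a_i^{-1}$ at $a_i$. This gives
\[
u_i^{-1}=m\eta_j^m a_i^{-1}\prod_{j'\ne j}(\eta_j^m-\eta_{j'}^m),
\]
which inverts to \eqref{eq:C-lagrange-auto}. Here $m\ne0$ in $\mathbb F_q$ because $m\mid q-1$, and the cosets are distinct, so the $\eta_{j'}^m$ are distinct.

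Next I would show membership in $\mathbb F_{p^\ell}^*$. From the formula, $a_i^{\tau m-1}u_i=\eta_j^{\tau m}\cdot\eta_j^{-m}m^{-1}\prod_{j'\ne j}(\eta_j^m-\eta_{j'}^m)^{-1}$, since $a_i^{\tau m}=\eta_j^{\tau m}$. The integer $m^{-1}$ lies in the prime field. So it suffices to show that $\eta_j^m\in\mathbb F_{p^\ell}^*$ for every $j$, because then the whole expression is a product and quotient of nonzero elements of $\mathbb F_{p^\ell}$.

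The key step is this claim, and it rests on the choice of $G$. Each $\eta_j\in G=\langle\omega^{y/m_2}\rangle$, so $\eta_j^m$ is a power of $\omega^{ym/m_2}=\omega^{y m_1}$. Since $\omega^y$ generates $\mathbb F_{p^\ell}^*$ (it has order $p^\ell-1$), $\eta_j^m\in\mathbb F_{p^\ell}^*$. I also need that the cosets $\eta_jH$ lie in $G$, which is implicit in the construction. I would verify $H\le G$ by noting that $m\mid (p^\ell-1)m_2$: we have $m_1=m/\gcd(m,y)$, and $m\mid q-1=y(p^\ell-1)$ gives $m_1\mid p^\ell-1$.

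The main subtlety is bookkeeping rather than difficulty. The derivative identity $\prod_{j\ne i}(a_i-a_j)=P'(a_i)$ and the coset-product factorization are standard. Finally, taking $\tau=0$ gives $a_i^{-1}u_i\in\mathbb F_{p^\ell}^*$.
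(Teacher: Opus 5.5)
Your proof is correct: you factor the vanishing polynomial of $\mathcal H$ as $\prod_{j'}(x^m-\eta_{j'}^m)$, evaluate its derivative at $a_i$ to get $u_i^{-1}$, and use $\eta_j\in G$ to obtain $\eta_j^m\in\langle\omega^{y}\rangle=\mathbb F_{p^\ell}^*$. The paper gives no proof of its own here and only cites \cite{9559992}; your derivative-of-the-vanishing-polynomial computation is the same method the paper uses for the analogous Lemmas~\ref{lem:additive-subspace} and~\ref{lem:mixed-fibers}, so the routes coincide.
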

	
	\begin{theorem}\label{thm:II.18}
		Let $q=p^e>4$, where $p$ is a prime, and assume that
		$2\ell\mid e$. Keep the notation above and let $n=rm$, where
		$1\le r\le\frac{p^\ell-1}{m_1}$.
		Then, for every $s+1\le k\le
		\left\lfloor\frac{p^\ell+n}{p^\ell+1}\right\rfloor$
		and every $0\le h\le k-s-1$, there exists an $[n+s,k]_q$
		generalized Roth-Lempel code $C$ such that
		\[
		\dim\bigl(\operatorname{Hull}_{\ell}(C)\bigr)=h.
		\]
	\end{theorem}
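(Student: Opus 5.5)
The plan is to reduce Theorem~\ref{thm:II.18} directly to Proposition~\ref{thm:common-nonzero}, exactly as was done for Constructions A and B in Theorems~\ref{thm:II.1} and~\ref{thm:II.14}. The evaluation vector $\boldsymbol a=(a_1,\ldots,a_n)$ is taken to be the elements of $\mathcal H$ from \eqref{eq:C-evaluation-auto}, with $n=rm$. We then check the two hypotheses of Proposition~\ref{thm:common-nonzero}: the field conditions and the coefficient condition.

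\emph{Field conditions and nonzero points.} The field conditions $q=p^e>4$, $\ell>0$ and $2\ell\mid e$ are assumed in the statement. Every $a_i$ lies in a coset $\eta_jH\subseteq\mathbb F_q^{*}$, so $a_i\ne0$ for all $i$. Since the cosets $\eta_1H,\ldots,\eta_rH$ are distinct cosets of $H$, they are disjoint, so the $a_i$ are pairwise distinct and $n=rm$ is indeed the length of the evaluation part.

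\emph{Coefficient condition.} Lemma~\ref{lem:3.12} with $\tau=0$ gives $a_i^{-1}u_i\in\mathbb F_{p^\ell}^{*}$ for $1\le i\le n$. This is the only nontrivial input, and it is already supplied by the cited lemma. If one wanted to see it directly from \eqref{eq:C-lagrange-auto}, the point is that $\eta_j\in G$, and $G=\langle\vartheta_2\rangle$ has order $(p^\ell-1)m_2$. Hence $\eta_j^{m}$ has order dividing $(p^\ell-1)m_2/\gcd\bigl(m,(p^\ell-1)m_2\bigr)$, which divides $p^\ell-1$ because $m_2\mid m$. So each $\eta_j^m$ lies in $\mathbb F_{p^\ell}^{*}$. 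In addition, $m^{-1}\in\mathbb F_p\subseteq\mathbb F_{p^\ell}$, and the differences $\eta_j^m-\eta_{j'}^m$ lie in $\mathbb F_{p^\ell}$ and are nonzero because the cosets are distinct. These observations give $a_i^{-1}u_i\in\mathbb F_{p^\ell}^{*}$.

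\emph{Conclusion.} With all hypotheses verified, Proposition~\ref{thm:common-nonzero}, applied with an arbitrary $A_s\in\operatorname{GL}_s(\mathbb F_q)$, yields for each $s+1\le k\le\lfloor(n+p^\ell)/(p^\ell+1)\rfloor$ and each $0\le h\le k-s-1$ a code $\operatorname{GRL}_k(\boldsymbol a,\boldsymbol v,A_s)$ with modified multipliers and $\dim(\operatorname{Hull}_\ell(C))=h$. No real obstacle is expected; the only step requiring care is the membership $a_i^{-1}u_i\in\mathbb F_{p^\ell}^{*}$, which is handled by Lemma~\ref{lem:3.12}.
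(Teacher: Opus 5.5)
Your proposal is correct and matches the paper's proof: it takes $a_i^{-1}u_i\in\mathbb F_{p^\ell}^{*}$ from Lemma~\ref{lem:3.12} and applies Proposition~\ref{thm:common-nonzero} with an arbitrary $A_s$. Your direct check that $\eta_j^{m}\in\mathbb F_{p^\ell}^{*}$ (using $|G|=(p^\ell-1)m_2$ and $m_2\mid m$) and that the differences $\eta_j^m-\eta_{j'}^m$ are nonzero is also correct, though not needed given the lemma.
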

	
	\begin{proof}
		Lemma~\ref{lem:3.12} gives
		$a_i^{-1}u_i\in\mathbb F_{p^\ell}^{*}$ for $1\le i\le n$.
		The result follows immediately from
		Proposition~\ref{thm:common-nonzero}.
	\end{proof}
	
	We next adjoin the evaluation point $0$. Put $a_{n+1}=0$ and define
	$w_i=\prod_{\substack{1\le j\le n+1\\j\ne i}}
	(a_i-a_j)^{-1}$, $1\le i\le n+1$.
	For $1\le i\le n$, we have $w_i=a_i^{-1}u_i\in\mathbb F_{p^\ell}^{*}$.
	Moreover,
	\begin{equation}\label{eq:C-zero-lagrange-auto}
		w_{n+1}
		=(-1)^n
		\vartheta_1^{-\frac{rm(m+1)}{2}}
		\prod_{i=1}^{r}\eta_i^{-m}
		\in\mathbb F_{p^\ell}^{*}.
	\end{equation}
	
	\begin{theorem}\label{thm:II.19}
		Under the field and evaluation-set assumptions of Theorem~\ref{thm:II.18}, for every
		$s<k\le\left\lfloor\frac{p^\ell+n+1}{p^\ell+1}\right\rfloor$
		and every $0\le h\le k-s$, there exists an $[n+s+1,k]_q$
		generalized Roth-Lempel code $C$ such that
		\[
		\dim\bigl(\operatorname{Hull}_{\ell}(C)\bigr)=h.
		\]
	\end{theorem}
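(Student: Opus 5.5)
The plan is to follow the proofs of Theorems~\ref{thm:II.2} and~\ref{thm:II.15} exactly, reducing the statement to Corollary~\ref{cor:common-direct-extended} applied to the zero-extended evaluation set. Put $N=n+1$ and $a_{N}=0$, so that the evaluation vector is $(a_1,\ldots,a_n,0)$ with pairwise distinct entries, since $\mathcal H\subseteq\mathbb F_q^{*}$. The Lagrange coefficients attached to this vector are the $w_i$ defined before the theorem. The only hypothesis to verify is $\lambda a_i^{\delta-1}w_i\in\mathbb F_{p^\ell}^{*}$ for $1\le i\le N$. We take $\delta=1$ and $\lambda=1$, so this becomes $w_i\in\mathbb F_{p^\ell}^{*}$.

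We check this in two parts. For $1\le i\le n$ we have $w_i=a_i^{-1}\prod_{j\ne i}(a_i-a_j)^{-1}=a_i^{-1}u_i$, and Lemma~\ref{lem:3.12} gives $a_i^{-1}u_i\in\mathbb F_{p^\ell}^{*}$. For $i=N$ we have $w_{N}=\prod_{j=1}^n(-a_j)^{-1}$, which is \eqref{eq:C-zero-lagrange-auto}, and we must confirm that it lies in $\mathbb F_{p^\ell}^{*}$.

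The most efficient route for $w_N$ is to compute $\prod_{a\in\eta_jH}a=\eta_j^m\prod_{h\in H}h=-\eta_j^m$. This holds because the product of all elements of a cyclic group of order $m$ is $-1$: it equals $(-1)^{m+1}$ as the constant term of $x^m-1$ up to sign. Hence $w_N=(-1)^{n}\prod_{j=1}^{r}(-\eta_j^{m})^{-1}$ up to sign. It then suffices to show $\eta_j^m\in\mathbb F_{p^\ell}^{*}$. Since $\eta_j\in G$ and $|G|=(p^\ell-1)m_2$, we get $\eta_j^{m}=(\eta_j^{m_2})^{m_1}$, and $\eta_j^{m_2}$ lies in the unique subgroup of order $p^\ell-1$, namely $\mathbb F_{p^\ell}^{*}$. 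Signs lie in the prime field, so $w_N\in\mathbb F_{p^\ell}^{*}$. This is the only step that needs care, and it is where I would spend the verification effort. The same fact $\eta_j^m\in\mathbb F_{p^\ell}^{*}$ is what underlies Lemma~\ref{lem:3.12}, so it can be cited from there.

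Finally, the stated bound $k\le\lfloor(p^\ell+n+1)/(p^\ell+1)\rfloor$ equals $\lfloor(N+p^\ell-\delta+1)/(p^\ell+1)\rfloor$ with $\delta=1$. This is exactly the range in Corollary~\ref{cor:common-direct-extended}, with the boundary case handled there by choosing $A_s=\mu I_s$ with $\mu^{p^\ell+1}=-\eta$. The standing assumptions $q>4$ and $2\ell\mid e$ give the existence of $\beta$, $\eta$ and the $v_i$ via Lemma~\ref{lem:III.2}. Applying the corollary with length $N$ yields an $[n+s+1,k]_q$ GRL code with $\dim(\operatorname{Hull}_\ell(C))=h$ for every $0\le h\le k-s$.
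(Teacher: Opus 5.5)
Your proposal is correct and follows the paper's proof. You set $N=n+1$, check $w_i\in\mathbb F_{p^\ell}^{*}$ for $1\le i\le N$, observe that the bound on $k$ is $\lfloor (N+p^\ell)/(p^\ell+1)\rfloor$, and apply Corollary~\ref{cor:common-direct-extended} with $\delta=\lambda=1$; your explicit check that $\eta_j^m=(\eta_j^{m_2})^{m_1}\in\mathbb F_{p^\ell}^{*}$ fills in the step the paper covers by citing \eqref{eq:C-zero-lagrange-auto}. One harmless slip: the product of the elements of a cyclic group of order $m$ is $(-1)^{m+1}$, not always $-1$, but you only need membership in $\mathbb F_{p^\ell}^{*}$, so the sign does not matter.
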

	
	\begin{proof}
		Let $N=n+1$. The identities preceding the theorem show that
		$w_i\in\mathbb F_{p^\ell}^{*}$ for $1\le i\le N$. The stated
		bound on $k$ is precisely
		$k\le\lfloor(p^\ell+N)/(p^\ell+1)\rfloor$. Hence the result follows
		from Corollary~\ref{cor:common-direct-extended} with $\delta=1$ and
		$\lambda=1$.
	\end{proof}
	
	Finally, Construction C also yields a $(k-s)$-dimensional hull.
	
	\begin{theorem}\label{thm:C-hull-plus-one}
		Under the field and evaluation-set assumptions of Theorem~\ref{thm:II.18}, assume that $\frac{n}{p^\ell+1}\in\mathbb Z_{>0}$ and let
		$k=\xi+1+\frac{n}{p^\ell+1}$. If
		\[
		0\le \xi\le
		\min\left\{
		p^\ell-1,\,
		\left\lfloor
		\frac{\frac{n}{p^\ell+1}-1}{p^\ell}
		\right\rfloor
		\right\},
		\qquad
		\xi+1\le s\le\min\{\xi+p^\ell,k-1\},
		\]
		then there exists an $[n+s,k]_q$ generalized Roth-Lempel code
		$C$ such that
		\[
		\dim\bigl(\operatorname{Hull}_{\ell}(C)\bigr)=k-s.
		\]
	\end{theorem}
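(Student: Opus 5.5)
The plan is to reduce Theorem~\ref{thm:C-hull-plus-one} directly to Proposition~\ref{thm:common-plus-one-nonzero}, in the same way as Theorems~\ref{thm:II.444} and~\ref{thm:B-hull-plus-one}. First, I will verify the coefficient hypothesis. The evaluation set $\mathcal H=\bigcup_{j=1}^r\eta_jH$ consists of nonzero elements, since each coset $\eta_jH$ lies in $\mathbb F_q^*$. Lemma~\ref{lem:3.12} with $\tau=0$ then gives $a_i^{-1}u_i\in\mathbb F_{p^\ell}^{*}$ for every $1\le i\le n$. This is exactly the standing assumption of Proposition~\ref{thm:common-plus-one-nonzero}, with $a_j\ne 0$ and $a_j^{-1}u_j\in\mathbb F_{p^\ell}^*$.

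Next, I will match the numerical hypotheses. The theorem assumes outright that $n/(p^\ell+1)\in\mathbb Z_{>0}$. This is needed here, unlike in Theorem~\ref{thm:II.444}, because $n=rm$ need not be divisible by $p^\ell+1$. The value $k=\xi+1+\frac{n}{p^\ell+1}$ is the one required in the proposition. The bounds imposed on $\xi$ and on $s$ are verbatim the conditions $0\le\xi\le\min\{p^\ell-1,\lfloor(\frac{n}{p^\ell+1}-1)/p^\ell\rfloor\}$ and $\xi+1\le s\le\min\{\xi+p^\ell,k-1\}$ of Proposition~\ref{thm:common-plus-one-nonzero}. The field assumptions, $q=p^e>4$ and $2\ell\mid e$, are those of the common construction, so Lemma~\ref{lem:III.2} supplies both the column multipliers $v_j$ with $v_j^{p^\ell+1}=a_j^{-1}u_j$ and the scalar $\mu$ with $\mu^{p^\ell+1}=-1$.

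With these checks done, Proposition~\ref{thm:common-plus-one-nonzero} yields an $[n+s,k]_q$ code $C=\operatorname{GRL}_k(\boldsymbol a,\boldsymbol v,A_s)$ with $\dim(\operatorname{Hull}_\ell(C))=k-s$. Its extension matrix $A_s$ is the one built there from $\mu$ and $a_{n-1}^{(1)},\ldots,a_{n-1}^{(\xi)}$.

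There is essentially no obstacle. The only point needing care is confirming that the hypotheses of Proposition~\ref{thm:common-plus-one-nonzero} are stated purely in terms of the coefficient condition and the arithmetic of $n$, $k$, $\xi$, $s$, and do not depend on the specific evaluation set. That is the case, since the proposition's proof uses only the identity $f^{p^\ell}(a_j)=a_jg(a_j)$ and degree counting. The Remark's simplification $A_s=\mu I_s$ is not needed.
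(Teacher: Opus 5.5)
Your proposal is correct and matches the paper's proof: the paper also uses Lemma~\ref{lem:3.12} to get $a_j^{-1}u_j\in\mathbb F_{p^\ell}^{*}$ (all $a_j\ne0$, since the cosets lie in $\mathbb F_q^*$) and then applies Proposition~\ref{thm:common-plus-one-nonzero}. The remaining hypotheses of that proposition coincide verbatim with those of the theorem.
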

	
	\begin{proof}
		By Lemma~\ref{lem:3.12},
		$a_j^{-1}u_j\in\mathbb F_{p^\ell}^{*}$ for $1\le j\le n$.
		The result follows from
		Proposition~\ref{thm:common-plus-one-nonzero}.
	\end{proof}
	
	The zero-extended construction gives the same additional hull
	dimension under the corresponding shifted parameter bound.
	
	\begin{theorem}\label{thm:C-zero-hull-plus-one}
		Under the field and evaluation-set assumptions of Theorem~\ref{thm:II.18}, put
		$a_{n+1}=0$. Assume that $\frac{n}{p^\ell+1}\in\mathbb Z_{>0}$ and let
		$k=\xi+1+\frac{n}{p^\ell+1}$. If
		\[
		0\le \xi\le
		\min\left\{
		p^\ell-1,\,
		\left\lfloor
		\frac{n}{p^\ell(p^\ell+1)}
		\right\rfloor
		\right\},
		\qquad
		\xi+1\le s\le\min\{\xi+p^\ell,k-1\},
		\]
		then there exists an $[n+s+1,k]_q$ generalized Roth-Lempel
		code $C$ such that
		\[
		\dim\bigl(\operatorname{Hull}_{\ell}(C)\bigr)=k-s+1.
		\]
	\end{theorem}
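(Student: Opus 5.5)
This is a direct instance of Proposition~\ref{thm:common-plus-one-shifted}, in the same way that Theorems~\ref{thm:II.4} and~\ref{thm:B-zero-hull-plus-one} instantiate it for Constructions A and B. We work with the zero-extended evaluation set $\mathcal H\cup\{0\}$ of total length $N=n+1$. The Lagrange coefficients attached to $a_1,\ldots,a_{n+1}$ are the $w_i$, and we take $\delta=1$ and $\lambda=1$. The hypothesis of Proposition~\ref{thm:common-plus-one-shifted} then reads $w_i\in\mathbb F_{p^\ell}^{*}$ for $1\le i\le N$.

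The key steps, in order, are as follows.

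\emph{Coefficients for $1\le i\le n$.} Here $w_i=a_i^{-1}u_i$, since $a_i-a_{n+1}=a_i$. By Lemma~\ref{lem:3.12}, $w_i\in\mathbb F_{p^\ell}^{*}$.

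\emph{Coefficient for $i=n+1$.} We have $w_{n+1}=\prod_{j}(-a_j)^{-1}$, and the product of the elements of $\eta_jH$ is $\eta_j^m\prod_{h\in H}h$. Since $a^m$ is constant on each coset $\eta_jH$, and $\eta_j\in G$ gives $\eta_j^m\in\mathbb F_{p^\ell}^{*}$, each factor $\eta_j^{m}$ lies in $\mathbb F_{p^\ell}^{*}$. The product $\prod_{h\in H}h$ equals $\pm1$ (it is $\vartheta_1^{m(m+1)/2}$, with $\vartheta_1^m=1$). This gives \eqref{eq:C-zero-lagrange-auto}, so $w_{n+1}\in\mathbb F_{p^\ell}^{*}$.

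\emph{Verifying the numerical hypotheses.} We have $(N-\delta)/(p^\ell+1)=n/(p^\ell+1)\in\mathbb Z_{>0}$ by assumption, and therefore $k=\xi+1+(N-\delta)/(p^\ell+1)$. The bound on $\xi$ in the statement is exactly $\min\{p^\ell-1,\lfloor (N-\delta)/(p^\ell(p^\ell+1))\rfloor\}$. The constraint on $s$ is identical to the one in the proposition. We also need $q>4$ and $2\ell\mid e$, which are the standing assumptions of Theorem~\ref{thm:II.18}.

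With these checks, Proposition~\ref{thm:common-plus-one-shifted} yields an $[N+s,k]_q=[n+s+1,k]_q$ GRL code with a $(k-s+1)$-dimensional $\ell$-Galois hull.

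The only step needing care is confirming $w_{n+1}\in\mathbb F_{p^\ell}^{*}$, and specifically that $\eta_j^m\in\mathbb F_{p^\ell}$. To check this, write $\eta_j=\vartheta_2^{c}$, so that $\eta_j^m=\omega^{cym/m_2}=\omega^{cy m_1}$. Any power of $\omega^y$ lies in $\mathbb F_{p^\ell}^{*}$, so $\eta_j^m\in\mathbb F_{p^\ell}^{*}$. Everything else is bookkeeping.
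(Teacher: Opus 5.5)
Your proposal is correct and matches the paper's proof: you apply Proposition~\ref{thm:common-plus-one-shifted} to the zero-extended evaluation set of length $N=n+1$ with $\delta=\lambda=1$, using $w_i\in\mathbb F_{p^\ell}^{*}$ and $(N-\delta)/(p^\ell+1)=n/(p^\ell+1)$. Your explicit checks that $\eta_j^m=(\omega^{y})^{cm_1}\in\mathbb F_{p^\ell}^{*}$ and $\prod_{h\in H}h=\pm1$ also justify the formula \eqref{eq:C-zero-lagrange-auto} for $w_{n+1}$, which the paper states without proof.
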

	
	\begin{proof}
		Let $N=n+1$. The zero-extended coefficients satisfy
		$w_j\in\mathbb F_{p^\ell}^{*}$ for $1\le j\le N$. Taking
		$\delta=1$ and $\lambda=1$ in
		Proposition~\ref{thm:common-plus-one-shifted} gives
		$(N-\delta)/(p^\ell+1)=n/(p^\ell+1)$, and the result follows.
	\end{proof}
	
	\subsection{Additive-type constructions}
	
	We next turn to an additive evaluation set defined by trace fibers.
	Here the Lagrange coefficients already satisfy
	$u_i\in\mathbb F_{p^\ell}^{*}$, so the common argument applies without
	the factor $a_i^{-1}$ and yields the polynomial identity
	$f^{p^\ell}(x)=g(x)$ directly.
	
	\subsubsection*{Construction D}\mbox{}\par
	\noindent Let $q=p^e$ be a prime power and assume that $\ell\mid e$.
	Let $B\subseteq\mathbb F_{p^\ell}$ be an $\mathbb F_p$-linear
	subspace and choose distinct elements $b_1=0,b_2,\ldots,b_t$ of
	$B$, where $1\le t\le|B|$. Consider the trace mapping
	\begin{equation}\label{eq:trace}
		\operatorname{Tr}:\mathbb F_q\longrightarrow\mathbb F_{p^\ell},
		\qquad
		x\longmapsto x+x^{p^\ell}+\cdots+x^{p^{e-\ell}}.
	\end{equation}
	For $1\le j\le t$, define
	\[
	T_j=\{x\in\mathbb F_q:\operatorname{Tr}(x)=b_j\}.
	\]
	Then $|T_j|=p^{e-\ell}$ and the sets $T_j$ are pairwise disjoint.
	Put
	\begin{equation}\label{eq:Tset}
		T=\bigcup_{j=1}^{t}T_j=\{a_1,a_2,\ldots,a_n\},
		\qquad n=tp^{e-\ell},\qquad 1\le t\le|B|\le p^\ell.
	\end{equation}

	\begin{lemma}\label{lem:3.1}\cite{li2023several}
		Let $a_i$ and $u_i$ be defined by Equations~\eqref{eq:Tset}
		and~\eqref{eq:ui}, respectively. If $a_i\in T_{j_0}$, then
		\begin{equation}\label{eq:ui_trace}
			u_i=
			\begin{cases}
				1,&t=1,\\[2mm]
				\displaystyle
				\prod_{\substack{1\le j\le t\\j\ne j_0}}
				(b_{j_0}-b_j)^{-1},&2\le t\le p^\ell.
			\end{cases}
		\end{equation}
		In particular, $u_i\in\mathbb F_{p^\ell}^{*}$ for
		$1\le i\le n$.
	\end{lemma}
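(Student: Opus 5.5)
We will compute $u_i=\prod_{a\in T,\,a\ne a_i}(a_i-a)^{-1}$ directly from the product structure of $T$. The first observation is that $T_1=\ker(\operatorname{Tr})$ is an $\mathbb F_{p^\ell}$-linear subspace of $\mathbb F_q$ of size $p^{e-\ell}$. Each fiber is a coset, $T_j=c_j+T_1$, where $c_j$ is any element with $\operatorname{Tr}(c_j)=b_j$. Since $B$ is an $\mathbb F_p$-subspace and $\operatorname{Tr}$ is additive, we also get $T=c+\ker(\operatorname{Tr}\!-\text{restricted})$. We will avoid that route and instead use linearized polynomials. The polynomial $L(x)=\operatorname{Tr}(x)=x+x^{p^\ell}+\cdots+x^{p^{e-\ell}}$ has degree $p^{e-\ell}$ and leading coefficient $1$, and for each $j$ we have $\prod_{a\in T_j}(x-a)=L(x)-b_j$, because both sides are monic of the same degree with the same roots. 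Consequently
\[
\prod_{a\in T}(x-a)=\prod_{j=1}^{t}\bigl(L(x)-b_j\bigr)=:P(x).
\]

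Next, since the $a_i$ are distinct, $u_i=1/P'(a_i)$. We compute $L'(x)=1$, because every other term $x^{p^{r\ell}}$ with $r\ge1$ has derivative $0$ in characteristic $p$. By the product rule,
\[
P'(x)=L'(x)\sum_{j}\prod_{j'\ne j}\bigl(L(x)-b_{j'}\bigr)=\sum_{j}\prod_{j'\ne j}\bigl(L(x)-b_{j'}\bigr).
\]
Evaluating at $a_i\in T_{j_0}$ gives $L(a_i)=b_{j_0}$, so every summand with $j\ne j_0$ vanishes. What remains is $P'(a_i)=\prod_{j\ne j_0}(b_{j_0}-b_j)$, and when $t=1$ this is the empty product $1$. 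Inverting yields the formula in \eqref{eq:ui_trace}.

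Finally, membership in $\mathbb F_{p^\ell}^*$ is immediate: each $b_j$ lies in $B\subseteq\mathbb F_{p^\ell}$, and the $b_j$ are distinct, so every difference $b_{j_0}-b_j$ is a nonzero element of $\mathbb F_{p^\ell}$. We will also note the supporting facts used above. The map $\operatorname{Tr}$ is surjective onto $\mathbb F_{p^\ell}$, so each fiber has size $q/p^\ell=p^{e-\ell}$. The fibers are pairwise disjoint because the $b_j$ are distinct.

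The only step needing care is the identity $\prod_{a\in T_j}(x-a)=L(x)-b_j$. This requires that $L(x)-b_j$ have exactly $p^{e-\ell}$ distinct roots in $\mathbb F_q$. That follows from its degree combined with the fiber size, and separability gives an independent check, since its derivative is $1$.
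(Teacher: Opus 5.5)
Your proof is correct and is essentially the paper's own argument: the paper only cites \cite{li2023several} for this lemma, but it proves the generalization Lemma~\ref{lem:additive-subspace} in exactly your way (write the vanishing polynomial as $\prod_{j}(L_V(x)-b_j)$, then evaluate its derivative at $a_i$), and the Remark after that lemma recovers the present statement with $V=\ker(\operatorname{Tr})$, $L_V=\operatorname{Tr}$, $c_V=1$. One small caution: your aside that $T$ is a single coset is false for general $t$ ($T$ is a union of $t$ cosets of $\ker(\operatorname{Tr})$, and it is a subspace only when $\{b_1,\ldots,b_t\}$ is one), but you explicitly discard that route, so the proof is unaffected.
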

	
	\begin{theorem}\label{thm:II.t1}
		Let $q=p^e>4$, where $p$ is a prime, and assume that
		$2\ell\mid e$. Let
		$n=tp^{e-\ell}$, $1\le t\le|B|\le p^\ell$.
		Then, for every
		$
		s<k\le
		\left\lfloor\frac{p^\ell+n}{p^\ell+1}\right\rfloor
		$
		and every $0\le h\le k-s$, there exists an $[n+s,k]_q$
		generalized Roth-Lempel code $C$ such that
		\[
		\dim\bigl(\operatorname{Hull}_{\ell}(C)\bigr)=h.
		\]
	\end{theorem}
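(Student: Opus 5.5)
Theorem~\ref{thm:II.t1} follows from Corollary~\ref{cor:common-direct-extended} with $\delta=1$ and $\lambda=1$. The hypothesis of Proposition~\ref{thm:common-direct} then reads $\lambda a_i^{\delta-1}u_i=u_i\in\mathbb F_{p^\ell}^{*}$ for $1\le i\le n$, and this is exactly the content of Lemma~\ref{lem:3.1}. Since $q=p^e>4$, $\ell>0$ and $2\ell\mid e$, the standing assumptions of the common construction are in force. In particular, Lemma~\ref{lem:III.2} supplies $v_i$ with $v_i^{p^\ell+1}=u_i$, and the group $(\mathbb F_q^*)^{p^\ell+1}$ is nontrivial, so $\beta$ with $\beta^{p^\ell+1}\ne1$ and $\eta\ne1$ exist.

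Next I would check the range of $k$. With $\delta=1$, Corollary~\ref{cor:common-direct-extended} allows every $s<k\le\lfloor (n+p^\ell)/(p^\ell+1)\rfloor$, which is precisely the bound in the statement. Hence every $0\le h\le k-s$ is attainable. Two cases arise.
\begin{enumerate}
\item[(i)] If $p^\ell(k-1)<n-k$, Proposition~\ref{thm:common-direct} applies with an arbitrary $A_s$. The identity $g(x)=f^{p^\ell}(x)$ is forced by the $n-z\ge n-k+s$ interpolation conditions, so $\deg g\le n-k-1$. This kills the tail coefficients and leaves $f=c(x)\prod_{i\le z}(x-a_i)$ with $\deg c\le h-1$.
\item[(ii)] If $p^\ell(k-1)=n-k$, I take $A_s=\mu I_s$ with $\mu^{p^\ell+1}=-\eta$ and $\eta\ne1$, as in the proof of the corollary. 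The first row of $T_s$ then forces $(\eta-1)f_{k-1}^{p^\ell}=0$, hence $f_{k-1}=0$.
\end{enumerate}

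The only point needing care is that the polynomial argument uses $n-z$ \emph{distinct} evaluation points. This is automatic, since the trace fibers $T_j$ are pairwise disjoint and so the $a_i$ are distinct. The factor $a_i^{-1}$ from the multiplicative constructions is not needed here, so $0\in T_1$ (note $b_1=0$) causes no difficulty and nothing else has to be verified. I therefore expect the main obstacle to be merely confirming that the floor bound coincides with the corollary's bound at $\delta=1$, including the boundary value. I would state the proof as: by Lemma~\ref{lem:3.1}, $u_i\in\mathbb F_{p^\ell}^{*}$; apply Corollary~\ref{cor:common-direct-extended} with $\delta=1$, $\lambda=1$.
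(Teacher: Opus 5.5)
Your proposal is correct and is the same as the paper's proof: Lemma~\ref{lem:3.1} gives $u_i\in\mathbb F_{p^\ell}^{*}$, and Corollary~\ref{cor:common-direct-extended} with $\delta=\lambda=1$ then yields the result, since its range $\lfloor (n+p^\ell-\delta+1)/(p^\ell+1)\rfloor$ reduces to the stated bound. Your unpacking of the ordinary case and the boundary case $p^\ell(k-1)=n-k$ (where $\eta\ne1$ forces $f_{k-1}=0$) is accurate, but it only restates the corollary's internal argument.
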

	
	\begin{proof}
		By Lemma~\ref{lem:3.1}, $u_i\in\mathbb F_{p^\ell}^{*}$ for
		$1\le i\le n$. Therefore, the result follows directly from
		Corollary~\ref{cor:common-direct-extended} with $\delta=1$ and $\lambda=1$.
	\end{proof}
	
	Moreover, the same construction yields a $(k-s+1)$-dimensional hull
	for a special value of $k$.
	
	\begin{theorem}\label{thm:D-hull-plus-one}
		Under the field and evaluation-set assumptions of Theorem~\ref{thm:II.t1}, assume that
		$t=p^\ell$, and hence $B=\mathbb F_{p^\ell}$ and $n=q$.
		Let $k=\xi+1+\frac{n-1}{p^\ell+1}$, where $0\le \xi\le
		\min\left\{
		p^\ell-1,\,
		\left\lfloor\frac{n-1}{p^\ell(p^\ell+1)}\right\rfloor
		\right\}$, $\xi+1\le s\le\min\{\xi+p^\ell,k-1\}$. Then there exists an $[n+s,k]_q$ generalized Roth-Lempel code
		$C$ such that
		\[
		\dim\bigl(\operatorname{Hull}_{\ell}(C)\bigr)=k-s+1.
		\]
	\end{theorem}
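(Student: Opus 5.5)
The plan is to apply Proposition~\ref{thm:common-plus-one-shifted} with $\delta=1$ and $\lambda=1$ to the full evaluation set $T=\mathbb F_q$. With $t=p^\ell$ we have $B=\mathbb F_{p^\ell}$, $n=p^\ell\cdot p^{e-\ell}=q$, and the $a_i$ run over all of $\mathbb F_q$. In this case Lemma~\ref{lem:3.1} gives $u_i\in\mathbb F_{p^\ell}^{*}$ for every $i$. Directly, $u_i=\prod_{b\in\mathbb F_q,\,b\ne a_i}(a_i-b)^{-1}=\bigl(\prod_{c\in\mathbb F_q^*}c\bigr)^{-1}=-1$. Hence $\lambda a_i^{\delta-1}u_i=u_i\in\mathbb F_{p^\ell}^{*}$, which is the coefficient hypothesis of Proposition~\ref{thm:common-plus-one-shifted}.

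Next I will check the integrality hypothesis. With $\delta=1$ we need $\frac{n-\delta}{p^\ell+1}=\frac{q-1}{p^\ell+1}\in\mathbb Z_{>0}$. Since $2\ell\mid e$, we have $p^{2\ell}-1\mid q-1$, and therefore $p^\ell+1\mid q-1$. The quotient is positive because $q>p^\ell+1$, which holds since $e\ge2\ell$ and $q>4$. The definition $k=\xi+1+\frac{n-1}{p^\ell+1}$ then matches the $k$ of the proposition exactly. The bounds on $\xi$ and $s$ in the statement are literally those of Proposition~\ref{thm:common-plus-one-shifted} with $n-\delta=n-1$. The proposition therefore yields an $[n+s,k]_q=[q+s,k]_q$ GRL code with a $(k-s+1)$-dimensional $\ell$-Galois hull.

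The only step needing care is the implicit inequality $\delta-1+p^\ell(k-1)\le n-1$ used inside the proof of the proposition. I will verify it from the bound on $\xi$. Write $m=\frac{q-1}{p^\ell+1}$. Then
\[
p^\ell(k-1)=p^\ell(\xi+m)=n-k+\xi\le n-1
\]
holds provided $k\ge\xi+1$, which is true. I expect no real obstacle.

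Optionally, I would also note the Remark following Proposition~\ref{thm:common-plus-one-shifted}. For the full field, $a_{n-1}^{(j)}=\sum_{a\in\mathbb F_q}u_a a^{q-1+j}=-\sum_{a}a^{q-1+j}$. This vanishes for $1\le j\le\xi<q-1$, so one may even take $A_s=\mu I_s$ with $\mu^{p^\ell+1}=-1$. This is not needed for the existence claim.
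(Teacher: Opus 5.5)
Your proposal is correct and follows the paper's proof: Lemma~\ref{lem:3.1} (or your direct computation $u_i=-1$) gives the coefficient condition, $2\ell\mid e$ gives $(q-1)/(p^\ell+1)\in\mathbb Z_{>0}$, and Proposition~\ref{thm:common-plus-one-shifted} with $\delta=\lambda=1$ then yields the hull dimension $k-s+1$.

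Your extra check of $p^\ell(k-1)\le n-1$ contains a slip. It is internal to that proposition, so the argument does not need it. The identity $p^\ell(k-1)=n-k+\xi$ is false for $\xi\ge1$; the correct identity is $p^\ell(k-\xi-1)=p^\ell m=n-k+\xi$. Hence $p^\ell(k-1)=p^\ell\xi+(q-1)-m$. So the inequality holds precisely because $p^\ell\xi\le m$, i.e.\ because of the bound $\xi\le\lfloor (n-1)/(p^\ell(p^\ell+1))\rfloor$, which your derivation never used. For example, $q=16$, $p^\ell=2$, $\xi=1$, $k=7$ gives $p^\ell(k-1)=12$, not $n-k+\xi=10$.
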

	
	\begin{proof}
		Lemma~\ref{lem:3.1} gives $u_i\in\mathbb F_{p^\ell}^{*}$.
		Since $2\ell\mid e$, we have
		$(n-1)/(p^\ell+1)=(q-1)/(p^\ell+1)\in\mathbb Z_{>0}$.
		Taking $\delta=1$ and $\lambda=1$ in
		Proposition~\ref{thm:common-plus-one-shifted}, the stated bounds on
		$\xi$ and $s$ verify all its remaining hypotheses. Hence the result
		follows.
	\end{proof}
	
	\subsubsection*{Construction E: additive-subspace cosets}\mbox{}\par
	\noindent Construction D can be extended by replacing the kernel of the trace
	map with an arbitrary additive subspace. 
	\begin{definition}
		Let $V$ be an $r$-dimensional
		$\mathbb F_p$-subspace of $\mathbb F_q$, and define its subspace
		polynomial by
		\[
		L_V(x)=\prod_{v\in V}(x-v).
		\]
	\end{definition}
	By \cite{Lidl_Niederreiter_1996}, $L_V(x)$ is a $p$-linearized polynomial of degree $p^r$ in $\mathbb{F}_{q}[x]$ and can be written as
	$L_V(x)=\sum_{i=0}^{r}c_{i}x^{p^i}$. Its formal derivative is the nonzero constant
	$c_V:=c_{0}=L_V'(x)=\prod_{v\in V\setminus\{0\}}(-v)\in\mathbb F_q^{*}$.

	Choose distinct elements
	$b_1,\ldots,b_t\in\operatorname{Im}(L_V)\cap\mathbb F_{p^\ell}$,
	and choose $\beta_j\in\mathbb F_q$ such that
	$L_V(\beta_j)=b_j$, where
	$1\le t\le|\operatorname{Im}(L_V)\cap\mathbb F_{p^\ell}|$.
	For $1\le j\le t$, put
	\[
	\begin{aligned}
		E_j&=\beta_j+V=\{x\in\mathbb F_q:L_V(x)=b_j\},\\
		E&=\bigcup_{j=1}^{t}E_j=\{a_1,\ldots,a_n\},
		\qquad n=tp^r.
	\end{aligned}
	\]
	The sets $E_j$ are pairwise disjoint because the $b_j$ are distinct.
	
	\begin{lemma}\label{lem:additive-subspace}
		Let $V$ be an $r$-dimensional
		$\mathbb F_p$-subspace of $\mathbb F_q$ and let $u_i$ be defined by Equation~\eqref{eq:ui}. If
		$a_i\in E_{j_0}$ for some $j_{0}\in\{1,\cdots,t\}$, then
		\begin{equation}\label{eq:ui-additive-subspace}
			u_i=c_V^{-1}
			\prod_{\substack{1\le j\le t\\j\ne j_0}}
			(b_{j_0}-b_j)^{-1}.
		\end{equation}
		Here $c_V=L_V'(x)$ is the formal derivative of the subspace
		polynomial $L_V(x)$. Consequently, $c_Vu_i\in\mathbb F_{p^\ell}^{*}$ for
		$1\le i\le n$.
	\end{lemma}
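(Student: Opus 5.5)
The plan is to compute the product $\prod_{j\ne i}(a_i-a_j)$ directly by factoring the polynomial that vanishes on $E$. Put
\[
P(x)=\prod_{j=1}^{t}\bigl(L_V(x)-b_j\bigr).
\]
Since each fiber $E_j=\{x:L_V(x)=b_j\}$ has exactly $p^r$ elements, and $L_V(x)-b_j$ is monic of degree $p^r$ with $L_V(x)-b_j=\prod_{v\in V}(x-\beta_j-v)$, we get
\[
P(x)=\prod_{a\in E}(x-a).
\]
The fibers are disjoint because the $b_j$ are distinct, so $P$ is squarefree with root set $E$. Hence, for each $i$,
\[
u_i^{-1}=\prod_{j\ne i}(a_i-a_j)=P'(a_i).
\]

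The key step is differentiating $P$ by the product rule. This gives
\[
P'(x)=\sum_{j=1}^{t}L_V'(x)\prod_{j'\ne j}\bigl(L_V(x)-b_{j'}\bigr).
\]
Because $L_V$ is $p$-linearized, $L_V'(x)=c_0=c_V$ is constant. Evaluate at $a_i\in E_{j_0}$, where $L_V(a_i)=b_{j_0}$. Every term with $j\ne j_0$ contains the factor $L_V(a_i)-b_{j_0}=0$, so only the $j=j_0$ term survives:
\[
P'(a_i)=c_V\prod_{j\ne j_0}(b_{j_0}-b_j).
\]
Inverting gives \eqref{eq:ui-additive-subspace}. I should double-check that $c_V=\prod_{v\in V\setminus\{0\}}(-v)$ is indeed the coefficient of $x$ in $\prod_{v\in V}(x-v)$, though this is already recorded before the statement. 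For $t=1$ the product is empty and $u_i=c_V^{-1}$.

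For the consequence, $c_Vu_i=\prod_{j\ne j_0}(b_{j_0}-b_j)^{-1}$. The $b_j$ lie in $\mathbb F_{p^\ell}$ and are distinct, so each factor is a nonzero element of $\mathbb F_{p^\ell}$. Therefore $c_Vu_i\in\mathbb F_{p^\ell}^{*}$.

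The only delicate point is the factorization $L_V(x)-b_j=\prod_{a\in E_j}(x-a)$. This follows because $L_V$ is additive with kernel $V$: the set $\beta_j+V$ gives $p^r$ distinct roots of a monic polynomial of degree $p^r$. The rest is mechanical.
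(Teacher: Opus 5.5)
Your proposal is correct and follows essentially the same route as the paper's proof. You factor the vanishing polynomial of $E$ as $\prod_{j=1}^{t}(L_V(x)-b_j)$, use that $L_V'(x)=c_V$ is constant, and evaluate the derivative at $a_i\in E_{j_0}$, where only the $j_0$ term survives.
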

	
	\begin{proof}
		Since $L_V$ is $p$-linearized, we have
		$L_V(x+\beta_j)=L_V(x)+L_V(\beta_j)$. Hence, by definition $E_j$ is exactly the
		root set of $L_V(x)-b_j$, and the monic vanishing polynomial of
		$E=\bigcup_{j=1}^{t}E_j$ is
		\[
		P_E(x):=\prod_{a\in E}(x-a)=\prod_{j=1}^{t}\prod_{a\in E_{j}}(x-a)
		=\prod_{j=1}^{t}\bigl(L_V(x)-b_j\bigr).
		\]
		If $a_i\in E_{j_0}$, differentiation at $a_i$ gives
		\[
		u_i^{-1}=\prod_{1\le j\ne i\le n}(a_{i}-a_{j})=P_E'(a_i)
		=c_V\prod_{\substack{1\le j\le t\\j\ne j_0}}
		(b_{j_0}-b_j).
		\]
		Equation~\eqref{eq:ui-additive-subspace} follows. Since all the
		$b_j$ belong to $\mathbb F_{p^\ell}$ and are distinct, the final
		assertion is immediate.
	\end{proof}
	\noindent\textbf{Remark}.
	Taking $V=\ker(\operatorname{Tr})$ gives
	$L_V(x)=\operatorname{Tr}(x)$ and $c_V=1$, so Construction D is the
	corresponding special case of Construction E.
	
	\begin{theorem}\label{thm:additive-subspace-hull}
		Let $q=p^e>4$, where $p$ is a prime, and assume that
		$2\ell\mid e$. Keep the notation and assumptions of Construction E,
		so that $n=tp^r$. Then, for every
		$s<k\le\left\lfloor\frac{p^\ell+n}{p^\ell+1}\right\rfloor$ and
		every $0\le h\le k-s$, there exists an $[n+s,k]_q$ generalized
		Roth-Lempel code $C$ such that
		$\dim(\operatorname{Hull}_{\ell}(C))=h$.
	\end{theorem}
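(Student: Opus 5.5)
The plan is to reduce Theorem~\ref{thm:additive-subspace-hull} to Corollary~\ref{cor:common-direct-extended} with $\delta=1$ and a suitable $\lambda$, exactly as Theorem~\ref{thm:II.t1} was deduced for Construction D. The only new feature is the constant $c_V$: by Lemma~\ref{lem:additive-subspace} we only know $c_Vu_i\in\mathbb F_{p^\ell}^{*}$, not $u_i\in\mathbb F_{p^\ell}^{*}$.

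\textbf{Step 1: choose $\lambda$.} We take $\lambda=c_V\in\mathbb F_q^{*}$. With $\delta=1$ the hypothesis of Proposition~\ref{thm:common-direct} becomes $\lambda a_i^{0}u_i=c_Vu_i\in\mathbb F_{p^\ell}^{*}$ for $1\le i\le n$, which is precisely the conclusion of Lemma~\ref{lem:additive-subspace}. Proposition~\ref{thm:common-direct} allows an arbitrary $\lambda\in\mathbb F_q^{*}$, so nothing else is needed here. Lemma~\ref{lem:III.2} then supplies $v_i$ with $v_i^{p^\ell+1}=c_Vu_i$, since $2\ell\mid e$.

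\textbf{Step 2: match the range of $k$.} With $\delta=1$, the range in Corollary~\ref{cor:common-direct-extended} is
$$
s<k\le\left\lfloor\frac{n+p^\ell-\delta+1}{p^\ell+1}\right\rfloor=\left\lfloor\frac{n+p^\ell}{p^\ell+1}\right\rfloor,
$$
which is the stated range. Every $0\le h\le k-s$ is then allowed, and the resulting code has length $n+s$ with $n=tp^r$.

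\textbf{Step 3: the boundary case (main obstacle).} The only point needing care is the boundary value $p^\ell(k-1)=n-k$. There the corollary requires $\eta\in(\mathbb F_q^{*})^{p^\ell+1}$ with $\eta\ne\lambda$, and $A_s=\mu I_s$ with $\mu^{p^\ell+1}=-\eta$.
\begin{enumerate}
\item Such an $\eta$ exists: $(\mathbb F_q^{*})^{p^\ell+1}$ has more than one element because $q>4$, so an element different from $c_V$ can always be chosen.
\item Such a $\mu$ exists: $-\eta$ is again a $(p^\ell+1)$-th power, since $-1$ is one by Lemma~\ref{lem:III.2}.
\end{enumerate}
The argument of the corollary goes through unchanged with $\lambda=c_V$. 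The last coordinate equation reduces to $(\eta-c_V)f_{k-1}^{p^\ell}=0$, which forces $f_{k-1}=0$. The remaining equations force $f_{k-s}=\cdots=f_{k-2}=0$, because the first row of $T_s$ is $(0,\ldots,0,1)$. Both inclusions for $\operatorname{Hull}_\ell(C)$ then follow verbatim from the proof of Proposition~\ref{thm:common-direct}, giving $\dim(\operatorname{Hull}_\ell(C))=h$.
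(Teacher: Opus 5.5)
Your proposal is correct and matches the paper's proof: apply Corollary~\ref{cor:common-direct-extended} with $\delta=1$ and $\lambda=c_V$, using Lemma~\ref{lem:additive-subspace} for the coefficient condition $c_Vu_i\in\mathbb F_{p^\ell}^{*}$. Your Step~3 only re-derives the boundary case, which that corollary's proof already handles for arbitrary $\lambda\in\mathbb F_q^{*}$.
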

	
	\begin{proof}
		Lemma~\ref{lem:additive-subspace} gives
		$c_Vu_i\in\mathbb F_{p^\ell}^{*}$ for $1\le i\le n$.
		Taking $\delta=1$ and $\lambda=c_V$ in
		Corollary~\ref{cor:common-direct-extended} gives the result.
	\end{proof}
	
	Moreover, under an additional condition on $c_V$, Construction E
	also yields a $(k-s+1)$-dimensional hull.
	
	\begin{theorem}\label{thm:additive-subspace-plus-one}
		Under the field and evaluation-set assumptions of Theorem~\ref{thm:additive-subspace-hull},
		suppose that $c_V\in\mathbb F_{p^\ell}^{*}$ and
		$\frac{n-1}{p^\ell+1}\in\mathbb Z_{>0}$. Put
		$k=\xi+1+\frac{n-1}{p^\ell+1}$. If
		\[
		0\le \xi\le
		\min\left\{p^\ell-1,
		\left\lfloor\frac{n-1}{p^\ell(p^\ell+1)}\right\rfloor\right\},
		\qquad
		\xi+1\le s\le\min\{\xi+p^\ell,k-1\},
		\]
		then there exists an $[n+s,k]_q$ generalized Roth-Lempel code
		$C$ with a $(k-s+1)$-dimensional $\ell$-Galois hull.
	\end{theorem}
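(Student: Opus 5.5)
The plan is to apply Proposition~\ref{thm:common-plus-one-shifted} directly with $\delta=1$ and $\lambda=c_V$, in the same way that Theorem~\ref{thm:D-hull-plus-one} was derived from it for Construction D.

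First, Proposition~\ref{thm:common-plus-one-shifted} requires $\lambda\in\mathbb F_{p^\ell}^{*}$ itself, not merely $\lambda u_j\in\mathbb F_{p^\ell}^{*}$. This is exactly why the statement adds the hypothesis $c_V\in\mathbb F_{p^\ell}^{*}$. With $\lambda=c_V$ and $\delta=1$, the coefficient condition $\lambda a_j^{\delta-1}u_j=c_Vu_j\in\mathbb F_{p^\ell}^{*}$ for $1\le j\le n$ is precisely the conclusion of Lemma~\ref{lem:additive-subspace}. (Since $c_V\in\mathbb F_{p^\ell}^*$, in fact $u_j\in\mathbb F_{p^\ell}^*$ too, so $\lambda=1$ would also work; either choice is fine.)

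Next I will check the numerical hypotheses. With $\delta=1$, the integrality assumption $\frac{n-\delta}{p^\ell+1}=\frac{n-1}{p^\ell+1}\in\mathbb Z_{>0}$ is assumed in the statement. The dimension $k=\xi+1+\frac{n-\delta}{p^\ell+1}$ matches the given $k$. The bound $\xi\le\min\{p^\ell-1,\lfloor\frac{n-\delta}{p^\ell(p^\ell+1)}\rfloor\}$ and the range $\xi+1\le s\le\min\{\xi+p^\ell,k-1\}$ coincide verbatim with the stated ones. Proposition~\ref{thm:common-plus-one-shifted} then yields a GRL code $\operatorname{GRL}_k(\boldsymbol a,\boldsymbol v,A_s)$ of length $n+s$ with $v_j^{p^\ell+1}=c_Vu_j$. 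This choice of $\boldsymbol v$ is possible by Lemma~\ref{lem:III.2} since $2\ell\mid e$, and $\mu^{p^\ell+1}=-c_V$ is likewise solvable.

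The only point needing care is the distinctness and field-membership of the data. The evaluation points $a_j$ are distinct because the cosets $E_j$ are disjoint, and $\lambda=c_V$ lies in $\mathbb F_{p^\ell}^*$ by assumption. No real obstacle is expected, since the substantive coefficient-comparison work is already contained in the proof of Proposition~\ref{thm:common-plus-one-shifted}.
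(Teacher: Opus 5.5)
Your proposal is correct and matches the paper's proof: the paper also uses Lemma~\ref{lem:additive-subspace} to get $c_Vu_i\in\mathbb F_{p^\ell}^{*}$ and then applies Proposition~\ref{thm:common-plus-one-shifted} with $\delta=1$ and $\lambda=c_V$. Your observation that the extra hypothesis $c_V\in\mathbb F_{p^\ell}^{*}$ is exactly what makes $\lambda=c_V$ admissible is accurate, as is your remark that $\lambda=1$ would work equally well.
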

	
	\begin{proof}
		By Lemma~\ref{lem:additive-subspace},
		$c_Vu_i\in\mathbb F_{p^\ell}^{*}$ for $1\le i\le n$.
		Taking $\delta=1$ and $\lambda=c_V$ in
		Proposition~\ref{thm:common-plus-one-shifted} gives the result.
	\end{proof}
	
	\subsection{Mixed additive--multiplicative constructions}
	
	\subsubsection*{Construction F: mixed fibers of a subspace polynomial}\mbox{}\par
	\noindent Let $q=p^e$ be a prime power and assume that $\ell\mid e$. Let
	$V$ be an $r$-dimensional $\mathbb F_p$-subspace of
	$\mathbb F_{p^\ell}$, where $0\le r<\ell$, and put
	\[
	L_V(x)=\prod_{v\in V}(x-v).
	\]
	Then $L_V(x)$ is a $p$-linearized polynomial of degree $p^r$ with
	kernel $V$, and
	\[
	c_V:=L_V'(x)=\prod_{v\in V\setminus\{0\}}(-v)
	\in\mathbb F_{p^\ell}^{*}.
	\]
	In particular, $L_V$ induces an $\mathbb F_p$-linear map on
	$\mathbb F_{p^\ell}$ with
	$|L_V(\mathbb F_{p^\ell})|=p^{\ell-r}$.
	
	Let $m\mid(q-1)/(p^\ell-1)$ and put $\Phi(x)=L_V(x^m)$. Then
	$m\mid(q-1)$ and hence $p\nmid m$. If $\omega$ is a primitive
	element of $\mathbb F_q$, then
	\[
	\mathbb F_{p^\ell}^{*}
	=\left\langle\omega^{\frac{q-1}{p^\ell-1}}\right\rangle
	\subseteq\langle\omega^m\rangle
	=(\mathbb F_q^{*})^m.
	\]
	Thus, for every $c\in\mathbb F_{p^\ell}^{*}$, the polynomial
	$x^m-c$ splits completely over $\mathbb F_q$ with $m$ distinct
	roots.
	
	Choose distinct
	$b_1,\ldots,b_t\in L_V(\mathbb F_{p^\ell})\setminus\{0\}$, where
	$1\le t\le p^{\ell-r}-1$, and choose $\beta_j\in\mathbb F_{p^\ell}$
	such that $L_V(\beta_j)=b_j$. Since $L_V$ is $p$-linearized,
	\[
	\{y\in\mathbb F_{p^\ell}:L_V(y)=b_j\}=\beta_j+V.
	\]
	Moreover, $b_j\ne0$ implies
	$\beta_j+V\subseteq\mathbb F_{p^\ell}^{*}$. Therefore
	\[
	\Phi(x)-b_j
	=\prod_{v\in V}\bigl(x^m-(\beta_j+v)\bigr)
	\]
	The elements $\beta_j+v$, $v\in V$, are pairwise distinct, so
	the corresponding root sets are disjoint. Hence
	$\Phi(x)-b_j$ splits completely over $\mathbb F_q$ with $mp^r$
	distinct roots.
	
	For $1\le j\le t$, let
	$M_j=\{x\in\mathbb F_q:\Phi(x)=b_j\}$. Then the sets $M_j$ are
	pairwise disjoint and $|M_j|=mp^r$. Put
	\[
	M=\bigcup_{j=1}^tM_j=\{a_1,\ldots,a_n\},
	\qquad n=tmp^r.
	\]
	
	\begin{lemma}\label{lem:mixed-fibers}
		Let $V$ be an $r$-dimensional $\mathbb F_p$-subspace of
		$\mathbb F_{p^\ell}$ and assume that
		$m\mid(q-1)/(p^\ell-1)$. Let $u_i$ be defined by
		Equation~\eqref{eq:ui}. If $a_i\in M_{j_0}$ for some
		$j_0\in\{1,\ldots,t\}$, then
		\begin{equation}\label{eq:ui-mixed-fibers}
			u_i=(c_Vm)^{-1}a_i^{1-m}
			\prod_{\substack{1\le j\le t\\j\ne j_0}}
			(b_{j_0}-b_j)^{-1}.
		\end{equation}
		Here $c_V=L_V'(x)$ is the formal derivative of the subspace
		polynomial $L_V(x)$. Consequently,
		$c_Vma_i^{m-1}u_i\in\mathbb F_{p^\ell}^{*}$ for
		$1\le i\le n$. Since $a_i^m\in\mathbb F_{p^\ell}^{*}$, we also have
		$a_i^{-1}u_i\in\mathbb F_{p^\ell}^{*}$ for $1\le i\le n$.
	\end{lemma}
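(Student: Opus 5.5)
The plan is to mirror the proof of Lemma~\ref{lem:additive-subspace}: write down the monic vanishing polynomial of $M$, differentiate it, and evaluate at $a_i$. Since $u_i^{-1}=\prod_{j\ne i}(a_i-a_j)=P_M'(a_i)$ with $P_M(x)=\prod_{a\in M}(x-a)$, everything reduces to computing $P_M$ and its derivative.

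First I identify $P_M$. By the preceding discussion, each $\Phi(x)-b_j$ is monic of degree $mp^r$ (its leading term is $x^{mp^r}$, since $L_V$ is monic of degree $p^r$). It also splits with exactly $mp^r$ distinct roots, namely the set $M_j$. Because the $M_j$ are pairwise disjoint, we get
\[
P_M(x)=\prod_{j=1}^{t}\bigl(\Phi(x)-b_j\bigr).
\]

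Next I differentiate using the product rule. Evaluated at $a_i\in M_{j_0}$, every term except the one differentiating the $j_0$-th factor still contains $\Phi(a_i)-b_{j_0}=0$, so those terms vanish. This leaves
\[
P_M'(a_i)=\Phi'(a_i)\prod_{j\ne j_0}\bigl(\Phi(a_i)-b_j\bigr)=\Phi'(a_i)\prod_{j\ne j_0}(b_{j_0}-b_j).
\]

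The key computation is $\Phi'(x)$. By the chain rule, $\Phi'(x)=L_V'(x^m)\cdot mx^{m-1}$. Since $L_V$ is $p$-linearized, its derivative is the constant $c_V$, so $\Phi'(x)=c_Vmx^{m-1}$. Because $p\nmid m$ and $a_i\ne0$, this is nonzero at $a_i$. Inverting gives exactly \eqref{eq:ui-mixed-fibers}.

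For the consequences, first note that $c_Vma_i^{m-1}u_i=\prod_{j\ne j_0}(b_{j_0}-b_j)^{-1}$. This is a nonzero element of $\mathbb F_{p^\ell}$, because the $b_j$ are distinct elements of $\mathbb F_{p^\ell}$.

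The second claim is the only point needing care.
\begin{enumerate}
\item Since $a_i\in M_{j_0}$, we have $a_i^m\in\beta_{j_0}+V\subseteq\mathbb F_{p^\ell}^*$.
\item Also $c_V\in\mathbb F_{p^\ell}^*$, because $V\subseteq\mathbb F_{p^\ell}$, and $m\in\mathbb F_p^*$.
\item Therefore $a_i^{-1}u_i=a_i^{-m}\cdot a_i^{m-1}u_i=a_i^{-m}(c_Vm)^{-1}\cdot\bigl(c_Vma_i^{m-1}u_i\bigr)$ is a product of elements of $\mathbb F_{p^\ell}^*$, hence lies in $\mathbb F_{p^\ell}^*$.
\end{enumerate}

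I expect no real obstacle. The main thing to check is that $m$ is invertible in $\mathbb F_q$, which holds since $m\mid q-1$ forces $p\nmid m$.
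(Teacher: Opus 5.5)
Your proposal is correct and follows the paper's proof essentially step for step. Like the paper, you write $P_M(x)=\prod_{j=1}^{t}(\Phi(x)-b_j)$, use $\Phi'(x)=c_Vmx^{m-1}$ to evaluate $u_i^{-1}=P_M'(a_i)$, and deduce $a_i^{-1}u_i\in\mathbb F_{p^\ell}^{*}$ from $a_i^{-1}u_i=(c_Vma_i^{m})^{-1}(c_Vma_i^{m-1}u_i)$; your explicit checks that $c_V\in\mathbb F_{p^\ell}^{*}$ and $a_i^m\in\beta_{j_0}+V$ only spell out what the paper leaves implicit.
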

	
	\begin{proof}
		The monic vanishing polynomial of $M$ is
		\[
		P_M(x):=\prod_{a\in M}(x-a)
		=\prod_{j=1}^t\bigl(\Phi(x)-b_j\bigr).
		\]
		Since $L_V'(x)=c_V$, the chain rule gives
		$\Phi'(x)=c_Vmx^{m-1}$. If $a_i\in M_{j_0}$, differentiation
		at $a_i$ gives
		\[
		u_i^{-1}=P_M'(a_i)
		=c_Vma_i^{m-1}
		\prod_{\substack{1\le j\le t\\j\ne j_0}}
		(b_{j_0}-b_j).
		\]
		Equation~\eqref{eq:ui-mixed-fibers} follows. Since the $b_j$
		belong to $\mathbb F_{p^\ell}$ and are distinct, the final
		assertions follow from
		\[
		a_i^{-1}u_i
		=(c_Vma_i^m)^{-1}(c_Vma_i^{m-1}u_i)
		\in\mathbb F_{p^\ell}^{*}.
		\]
	\end{proof}
	
	\begin{theorem}\label{thm:mixed-fibers-hull}
		Let $q=p^e>4$, where $p$ is a prime, and assume that
		$2\ell\mid e$. Keep the notation and assumptions of Construction F.
		For every
		\[
		s+1\le k\le
		\left\lfloor\frac{n+p^\ell}{p^\ell+1}\right\rfloor
		\]
		and every $0\le h\le k-s-1$, there exists an $[n+s,k]_q$
		generalized Roth-Lempel code $C$ such that
		$\dim(\operatorname{Hull}_{\ell}(C))=h$.
	\end{theorem}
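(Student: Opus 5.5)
The plan is to reduce the statement directly to Proposition~\ref{thm:common-nonzero}, exactly as in Theorems~\ref{thm:II.1}, \ref{thm:II.14} and \ref{thm:II.18}. That proposition needs only three things: the evaluation points $a_1,\ldots,a_n$ are distinct, each $a_i$ is nonzero, and $a_i^{-1}u_i\in\mathbb F_{p^\ell}^{*}$ for $1\le i\le n$. Once these hold, it gives, for an arbitrary $A_s\in\operatorname{GL}_s(\mathbb F_q)$, every $s+1\le k\le\lfloor (n+p^\ell)/(p^\ell+1)\rfloor$ and every $0\le h\le k-s-1$, an $[n+s,k]_q$ generalized Roth-Lempel code with $\dim(\operatorname{Hull}_\ell(C))=h$.

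First I check distinctness and nonvanishing in Construction F. The sets $M_j$ are pairwise disjoint fibres of $\Phi$ over the distinct values $b_j$. Each $M_j$ has exactly $mp^r$ elements, by the factorization $\Phi(x)-b_j=\prod_{v\in V}(x^m-(\beta_j+v))$ with all $\beta_j+v\in\mathbb F_{p^\ell}^{*}$ distinct and $x^m-c$ separable. Hence $M$ consists of $n=tmp^r$ distinct elements. None is zero: $b_j\neq 0$ gives $\Phi(0)=L_V(0)=0\ne b_j$, equivalently $a_i^m\in\mathbb F_{p^\ell}^{*}$.

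Second, the coefficient condition is precisely the final assertion of Lemma~\ref{lem:mixed-fibers}, namely $a_i^{-1}u_i\in\mathbb F_{p^\ell}^{*}$. That assertion follows because $a_i^m$, $c_V$, $m$ (as a nonzero field element, since $p\nmid m$) and $\prod_{j\ne j_0}(b_{j_0}-b_j)$ all lie in $\mathbb F_{p^\ell}^{*}$.

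Third, the field hypotheses $q=p^e>4$, $\ell>0$, $2\ell\mid e$ are those under which Proposition~\ref{thm:common-nonzero} was proved. In particular Lemma~\ref{lem:III.2} supplies $v_i$ with $v_i^{p^\ell+1}=a_i^{-1}u_i$, and there is some $\beta$ with $\beta^{p^\ell+1}\ne1$. Invoking the proposition then yields the theorem. The range of $k$ is nonempty only when $s+1\le\lfloor (n+p^\ell)/(p^\ell+1)\rfloor$, and otherwise the statement is vacuous.

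There is no substantive obstacle. The only point needing care is that $m$ is invertible in $\mathbb F_q$ and lies in $\mathbb F_p^{*}\subseteq\mathbb F_{p^\ell}^{*}$; this is guaranteed by $m\mid q-1$.
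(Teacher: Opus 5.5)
Your proposal is correct and is the same argument as the paper's proof: Lemma~\ref{lem:mixed-fibers} gives $a_i^{-1}u_i\in\mathbb F_{p^\ell}^{*}$, and Proposition~\ref{thm:common-nonzero} then yields every hull dimension $0\le h\le k-s-1$ for an arbitrary $A_s$. Your extra checks (that the $a_i$ are distinct and nonzero, and that $m\in\mathbb F_p^{*}$) are already implicit in Construction F and do no harm.
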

	
	\begin{proof}
		By Lemma~\ref{lem:mixed-fibers},
		$a_i^{-1}u_i\in\mathbb F_{p^\ell}^{*}$.
		The result follows from Proposition~\ref{thm:common-nonzero}.
	\end{proof}
	
	\noindent\textbf{Remark}.
	The condition $m\mid(q-1)/(p^\ell-1)$ is a convenient uniform
	sufficient condition for complete splitting. More generally, it is
	enough to assume that $m\mid(q-1)$ and
	$\beta_j+V\subseteq(\mathbb F_q^{*})^m$ for $1\le j\le t$.

	We may also include the zero fiber. Let
	$b_1=0,b_2,\ldots,b_t$ be distinct elements of
	$L_V(\mathbb F_{p^\ell})$, where $1\le t\le p^{\ell-r}$, and put
	\[
	M_1=\{x\in\mathbb F_q:\Phi(x)=0\},\qquad
	M_j=\{x\in\mathbb F_q:\Phi(x)=b_j\},\quad 2\le j\le t.
	\]
	Since $L_V^{-1}(0)=V$, we have
	$M_1=\{x\in\mathbb F_q:x^m\in V\}$. The equation $x^m=0$ has
	one root, while $x^m=v$ has $m$ distinct roots for every
	$v\in V\setminus\{0\}$. Hence
	$|M_1|=1+m(p^r-1)$, while the preceding argument gives
	$|M_j|=mp^r$ for $2\le j\le t$. Therefore
	\[
	M=\bigcup_{j=1}^tM_j=\{a_1,\ldots,a_n\},
	\qquad n=1+m(tp^r-1).
	\]
	
	\begin{lemma}\label{lem:mixed-fibers-zero}
		Assume that $m\mid(q-1)/(p^\ell-1)$, and let $u_i$ be defined
		by Equation~\eqref{eq:ui}. For the above
		evaluation set,
		\begin{equation}\label{eq:ui-mixed-fibers-zero}
			u_i=
			\begin{cases}
				\displaystyle(c_Vmb_j)^{-1}
				\prod_{\substack{2\le j'\le t\\j'\ne j}}
				(b_j-b_{j'})^{-1},&a_i\in M_j,\ 2\le j\le t,\\[3mm]
				\displaystyle(c_Vm)^{-1}
				\prod_{j=2}^t(-b_j)^{-1},
				&a_i\in M_1\setminus\{0\},\\[3mm]
				\displaystyle c_V^{-1}
				\prod_{j=2}^t(-b_j)^{-1},&a_i=0.
			\end{cases}
		\end{equation}
		Consequently, $u_i\in\mathbb F_{p^\ell}^{*}$ for
		$1\le i\le n$, where empty products are understood as $1$.
	\end{lemma}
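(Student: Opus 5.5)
We follow the proof of Lemma~\ref{lem:mixed-fibers}: write down the vanishing polynomial of $M$, differentiate it, and use $u_i^{-1}=P_M'(a_i)$. Since $M_1=\Phi^{-1}(0)$ and $M_j=\Phi^{-1}(b_j)$ for $j\ge2$, and each fiber is the full root set of $\Phi(x)-b_j$ (root counts $1+m(p^r-1)$ and $mp^r$ as computed above), we have
\[
P_M(x)=\prod_{j=1}^t\bigl(\Phi(x)-b_j\bigr)=\Phi(x)\prod_{j=2}^t\bigl(\Phi(x)-b_j\bigr).
\]
One caveat: $\Phi(x)=L_V(x^m)$ has the factor $x^m$, so it is squarefree only when $m=1$. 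For $m>1$, $P_M$ as written has $0$ as a multiple root. The correct vanishing polynomial of $M_1$ is
\[
x\prod_{v\in V\setminus\{0\}}(x^m-v)=x\cdot\frac{L_V(x^m)}{x^m}=:x\,\Psi(x^m),
\]
where $\Psi(y)=L_V(y)/y=\prod_{v\ne0}(y-v)$. I will therefore take
\[
P_M(x)=x\,\Psi(x^m)\prod_{j=2}^t\bigl(L_V(x^m)-b_j\bigr),
\]
which is squarefree with root set exactly $M$.

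\emph{Case 1: $a_i\in M_j$ with $j\ge2$.} All factors except $L_V(x^m)-b_j$ are nonzero at $a_i$. The derivative of that factor at $a_i$ is $c_Vm a_i^{m-1}$. The remaining factors give $a_i\Psi(a_i^m)=L_V(a_i^m)/a_i^{m-1}=b_j a_i^{1-m}$ and $\prod_{j'\ne j,\,j'\ge2}(b_j-b_{j'})$. The powers of $a_i$ cancel, so
\[
u_i^{-1}=c_Vmb_j\prod_{j'\ne j}(b_j-b_{j'}).
\]

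\emph{Case 2: $a_i\in M_1\setminus\{0\}$, so $a_i^m=v\ne0$.} Differentiating the factor $x^m-v$ inside $\Psi$ gives $ma_i^{m-1}$. The other factors give $a_i\prod_{v'\ne0,v}(v-v')=a_i\Psi'(v)$ (using the product form of $\Psi$) and $\prod_{j\ge2}(-b_j)$, since $L_V(a_i^m)=0$. From $L_V(y)=y\Psi(y)$ we get $c_V=L_V'(v)=\Psi(v)+v\Psi'(v)=v\Psi'(v)$ because $\Psi(v)=0$. Hence
\[
u_i^{-1}=ma_i^{m}\Psi'(v)\prod_{j\ge2}(-b_j)=mc_V\prod_{j\ge2}(-b_j).
\]

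\emph{Case 3: $a_i=0$.} The derivative of $x$ is $1$. The remaining factors give $\Psi(0)\prod_{j\ge2}(-b_j)$, and $\Psi(0)=\prod_{v\ne0}(-v)=c_V$. Hence $u_i^{-1}=c_V\prod_{j\ge2}(-b_j)$.

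These three computations give \eqref{eq:ui-mixed-fibers-zero}. For membership in $\mathbb F_{p^\ell}^*$: $c_V\in\mathbb F_{p^\ell}^*$, all $b_j$ lie in $\mathbb F_{p^\ell}$ and are distinct and nonzero for $j\ge2$, and $m\in\mathbb F_p^*$ because $p\nmid m$. So every $u_i$ lies in $\mathbb F_{p^\ell}^*$.

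The main obstacle is Case 2. There the $a_i$-dependence must cancel through the identity $v\Psi'(v)=L_V'(v)=c_V$, and a careless use of $\Phi'$ instead would produce a spurious factor. The case $m=1$ is consistent with the above: then $M_1=V$ and the formulas agree.
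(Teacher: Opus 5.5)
Your proof is correct and takes essentially the same approach as the paper. The paper also replaces $\Phi(x)$ by the squarefree vanishing polynomial $R_0(x)=x\prod_{v\in V\setminus\{0\}}(x^m-v)=\Phi(x)/x^{m-1}$ of $M_1$ and evaluates $P_M'(a_i)$ in the same three cases; your identity $v\Psi'(v)=c_V$ does the job of the paper's $R_0'(a_i)=\Phi'(a_i)/a_i^{m-1}=c_Vm$.
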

	
	\begin{proof}
		The monic vanishing polynomial of $M_1$ is
		\[
		R_0(x)=x\prod_{v\in V\setminus\{0\}}(x^m-v)
		=\frac{\Phi(x)}{x^{m-1}},
		\]
		and hence
		\[
		P_M(x)=R_0(x)\prod_{j=2}^t(\Phi(x)-b_j).
		\]
		If $a_i\in M_j$ with $j\ge2$, then
		\[
		u_i^{-1}=P_M'(a_i)=c_Vm b_j
		\prod_{\substack{2\le j'\le t\\j'\ne j}}(b_j-b_{j'}).
		\]
		If $a_i\in M_1\setminus\{0\}$, then
		$R_0'(a_i)=c_Vm$, whereas $R_0'(0)=c_V$. Therefore
		\[
		u_i^{-1}=
		\begin{cases}
			\displaystyle c_Vm\prod_{j=2}^t(-b_j),
			&a_i\in M_1\setminus\{0\},\\[2mm]
			\displaystyle c_V\prod_{j=2}^t(-b_j),&a_i=0.
		\end{cases}
		\]
		Inverting these identities gives
		Equation~\eqref{eq:ui-mixed-fibers-zero}, and the final
		assertion follows.
	\end{proof}
	
	\begin{theorem}\label{thm:mixed-fibers-zero-hull}
		Let $q=p^e>4$, where $p$ is a prime, and assume that
		$2\ell\mid e$. Keep the notation above, so that
		$n=1+m(tp^r-1)$. Then, for every
		\[
		s<k\le
		\left\lfloor\frac{n+p^\ell}{p^\ell+1}\right\rfloor
		\]
		and every $0\le h\le k-s$, there exists an $[n+s,k]_q$
		generalized Roth-Lempel code $C$ such that
		$\dim(\operatorname{Hull}_{\ell}(C))=h$.
	\end{theorem}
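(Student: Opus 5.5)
This is a direct application of Corollary~\ref{cor:common-direct-extended} with $\delta=1$ and $\lambda=1$. By Lemma~\ref{lem:mixed-fibers-zero}, every Lagrange coefficient $u_i$ of the evaluation set $M$ already lies in $\mathbb F_{p^\ell}^{*}$. This holds in all three cases of \eqref{eq:ui-mixed-fibers-zero}, because $c_V$, $m$ and the $b_j$ all belong to $\mathbb F_{p^\ell}^{*}$:
\begin{itemize}
\item $c_V\in\mathbb F_{p^\ell}^{*}$ since $V\subseteq\mathbb F_{p^\ell}$;
\item $m$ is a nonzero element of the prime field since $p\nmid m$;
\item $b_2,\ldots,b_t$ are nonzero elements of $L_V(\mathbb F_{p^\ell})\subseteq\mathbb F_{p^\ell}$.
\end{itemize}
Hence the hypothesis $\lambda a_i^{\delta-1}u_i\in\mathbb F_{p^\ell}^{*}$ of Proposition~\ref{thm:common-direct} holds with $\delta=1$ and $\lambda=1$. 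Note that the point $0$ belongs to $M$, but this causes no trouble: with $\delta=1$ no factor $a_i^{-1}$ appears.

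Next I compare ranges. With $\delta=1$, Corollary~\ref{cor:common-direct-extended} covers every $s<k\le\lfloor (n+p^\ell)/(p^\ell+1)\rfloor$, which is exactly the range in the statement, and every $0\le h\le k-s$. It gives an $[n+s,k]_q$ GRL code with $\dim(\operatorname{Hull}_\ell(C))=h$.

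The column multipliers $v_i$ satisfy $v_i^{p^\ell+1}=u_i$. They exist by Lemma~\ref{lem:III.2}, since $2\ell\mid e$. A scaling factor $\beta$ with $\beta^{p^\ell+1}\ne1$ exists because $q>4$, and it is applied to the first $z=k-s-h$ multipliers.

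For $k$ strictly below the boundary, the extension matrix $A_s$ may be arbitrary. In the boundary case $p^\ell(k-1)=n-k$, I take $A_s=\mu I_s$ with $\mu^{p^\ell+1}=-\eta$, where $\eta\in(\mathbb F_q^*)^{p^\ell+1}$ and $\eta\ne1$. Such an $\eta$ exists because $(\mathbb F_q^*)^{p^\ell+1}$ has more than one element.

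The only real check is that the evaluation points are pairwise distinct and number exactly $n=1+m(tp^r-1)$. This was established before Lemma~\ref{lem:mixed-fibers-zero}, where the fibers $M_j$ were shown to be disjoint with the stated sizes.
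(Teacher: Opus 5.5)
Your proposal is correct and follows the paper's own argument: Lemma~\ref{lem:mixed-fibers-zero} gives $u_i\in\mathbb F_{p^\ell}^{*}$ for every evaluation point, including $a_i=0$. Corollary~\ref{cor:common-direct-extended} with $\delta=\lambda=1$ then yields every $0\le h\le k-s$ on the stated range, taking $A_s=\mu I_s$ with $\mu^{p^\ell+1}=-\eta$, $\eta\ne1$, at the boundary $p^\ell(k-1)=n-k$. Your extra checks (why $c_V$, $m$ and the $b_j$ lie in $\mathbb F_{p^\ell}^{*}$, and why $\beta$ and $\eta$ exist) are details the paper leaves implicit.
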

	
	\begin{proof}
		By Lemma~\ref{lem:mixed-fibers-zero},
		$u_i\in\mathbb F_{p^\ell}^{*}$ for $1\le i\le n$.
		Taking $\delta=1$ and $\lambda=1$ in
		Corollary~\ref{cor:common-direct-extended} gives the result.
	\end{proof}
	
	Moreover, the nonzero-fiber construction yields a $(k-s)$-dimensional
	hull for a special value of $k$.
	
	\begin{theorem}\label{thm:mixed-fibers-plus-one}
		Under the field and evaluation-set assumptions of Theorem~\ref{thm:mixed-fibers-hull},
		assume that $\frac{n}{p^\ell+1}\in\mathbb Z_{>0}$ and put
		$k=\xi+1+\frac{n}{p^\ell+1}$. If
		\[
		0\le \xi\le
		\min\left\{p^\ell-1,
		\left\lfloor
		\frac{\frac{n}{p^\ell+1}-1}{p^\ell}
		\right\rfloor\right\},
		\qquad
		\xi+1\le s\le\min\{\xi+p^\ell,k-1\},
		\]
		then there exists an $[n+s,k]_q$ generalized Roth-Lempel code
		$C$ with a $(k-s)$-dimensional $\ell$-Galois hull.
	\end{theorem}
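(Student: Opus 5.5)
This follows the pattern of Theorems~\ref{thm:II.444}, \ref{thm:B-hull-plus-one} and \ref{thm:C-hull-plus-one}: we reduce directly to Proposition~\ref{thm:common-plus-one-nonzero} applied to the nonzero-fiber evaluation set $M=\bigcup_{j=1}^t M_j$ of Construction F, with $n=tmp^r$.

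First we check the coefficient hypothesis. In Construction F every $b_j$ lies in $L_V(\mathbb F_{p^\ell})\setminus\{0\}$. Hence $\beta_j+V\subseteq\mathbb F_{p^\ell}^{*}$, so every $a_i\in M$ satisfies $a_i^m\in\mathbb F_{p^\ell}^*$, and in particular $a_i\ne0$. Lemma~\ref{lem:mixed-fibers} then gives $a_i^{-1}u_i\in\mathbb F_{p^\ell}^{*}$ for $1\le i\le n$. This is exactly the standing assumption ``$a_j\ne0$ and $a_j^{-1}u_j\in\mathbb F_{p^\ell}^*$'' of Proposition~\ref{thm:common-plus-one-nonzero}.

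Next we match the remaining hypotheses. The statement assumes $\frac{n}{p^\ell+1}\in\mathbb Z_{>0}$ and sets $k=\xi+1+\frac{n}{p^\ell+1}$. Its bounds $0\le\xi\le\min\{p^\ell-1,\lfloor(\frac{n}{p^\ell+1}-1)/p^\ell\rfloor\}$ and $\xi+1\le s\le\min\{\xi+p^\ell,k-1\}$ coincide verbatim with those of Proposition~\ref{thm:common-plus-one-nonzero}. The field assumptions $q=p^e>4$ and $2\ell\mid e$ are inherited from Theorem~\ref{thm:mixed-fibers-hull}, and these are what the proposition uses, through Lemma~\ref{lem:III.2}, to choose $v_j$ with $v_j^{p^\ell+1}=a_j^{-1}u_j$ and $\mu$ with $\mu^{p^\ell+1}=-1$. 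The proposition then provides the explicit extension matrix $A_s$ built from $a_{n-1}^{(1)},\ldots,a_{n-1}^{(\xi)}$, and with it a code $C=\operatorname{GRL}_k(\boldsymbol a,\boldsymbol v,A_s)$ of length $n+s$ whose $\ell$-Galois hull has dimension $k-s$.

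We expect no real obstacle. The only point needing care is confirming that the nonzero-fiber set avoids $0$, which is automatic because each $b_j\ne0$. Unlike Construction C, the divisibility $\frac{n}{p^\ell+1}\in\mathbb Z$ is not automatic here and must be kept as the stated hypothesis.
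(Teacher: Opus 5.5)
Your proposal is correct and matches the paper's proof, which likewise uses Lemma~\ref{lem:mixed-fibers} to get $a_i\ne0$ and $a_i^{-1}u_i\in\mathbb F_{p^\ell}^{*}$ and then applies Proposition~\ref{thm:common-plus-one-nonzero} directly. One small slip in your closing aside: the divisibility $n/(p^\ell+1)\in\mathbb Z$ is automatic in Construction A, not Construction C, whose corresponding theorem also assumes it; this does not affect the argument.
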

	
	\begin{proof}
		By Lemma~\ref{lem:mixed-fibers},
		$a_i^{-1}u_i\in\mathbb F_{p^\ell}^{*}$ for $1\le i\le n$.
		The result follows from
		Proposition~\ref{thm:common-plus-one-nonzero}.
	\end{proof}
	
	The zero-fiber construction also gives a $(k-s+1)$-dimensional hull
	under the corresponding shifted parameter bound.
	
	\begin{theorem}\label{thm:mixed-fibers-zero-plus-one}
		Under the field and evaluation-set assumptions of
		Theorem~\ref{thm:mixed-fibers-zero-hull}, assume that
		$\frac{n-1}{p^\ell+1}\in\mathbb Z_{>0}$ and put
		$k=\xi+1+\frac{n-1}{p^\ell+1}$. If
		\[
		0\le \xi\le
		\min\left\{p^\ell-1,
		\left\lfloor\frac{n-1}{p^\ell(p^\ell+1)}\right\rfloor\right\},
		\qquad
		\xi+1\le s\le\min\{\xi+p^\ell,k-1\},
		\]
		then there exists an $[n+s,k]_q$ generalized Roth-Lempel code
		$C$ with a $(k-s+1)$-dimensional $\ell$-Galois hull.
	\end{theorem}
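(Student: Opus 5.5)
The plan is to reduce this statement directly to Proposition~\ref{thm:common-plus-one-shifted} with $\delta=1$ and $\lambda=1$. That is the same route used for Theorems~\ref{thm:D-hull-plus-one} and~\ref{thm:mixed-fibers-zero-hull}, applied to the zero-fiber evaluation set $M=\bigcup_{j=1}^tM_j$ with $n=1+m(tp^r-1)$.

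First, Lemma~\ref{lem:mixed-fibers-zero} gives $u_i\in\mathbb F_{p^\ell}^{*}$ for $1\le i\le n$. Hence the coefficient hypothesis $\lambda a_i^{\delta-1}u_i=u_i\in\mathbb F_{p^\ell}^{*}$ of Proposition~\ref{thm:common-plus-one-shifted} holds with $\lambda=1\in\mathbb F_{p^\ell}^{*}$ and $\delta=1$. No factor $a_i^{-1}$ is needed here, which matters because $0\in M$. Lemma~\ref{lem:III.2} applies since $2\ell\mid e$, so the multipliers $v_i$ with $v_i^{p^\ell+1}=u_i$ exist.

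Second, the numerical hypotheses translate exactly. We have $(n-\delta)/(p^\ell+1)=(n-1)/(p^\ell+1)$, which is a positive integer by assumption. The value $k=\xi+1+(n-\delta)/(p^\ell+1)$ matches the statement. The bound $\xi\le\min\{p^\ell-1,\lfloor (n-\delta)/(p^\ell(p^\ell+1))\rfloor\}$ and the range $\xi+1\le s\le\min\{\xi+p^\ell,k-1\}$ are verbatim the hypotheses of the proposition. The proposition then yields $C=\operatorname{GRL}_k(\boldsymbol a,\boldsymbol v,A_s)$, with the explicit $A_s$ built from $\mu^{p^\ell+1}=-1$, whose $\ell$-Galois hull has dimension $k-s+1$.

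The only step needing care is the implicit requirement inside Proposition~\ref{thm:common-plus-one-shifted} that $\delta-1+p^\ell(k-1)\le n-1$ and that $k\le n$, so that the GRL code is well defined. I would check these from the bound on $\xi$. Writing $N'=(n-1)/(p^\ell+1)$, we get $p^\ell(k-1)=p^\ell\xi+p^\ell N'\le N'+p^\ell N'=n-1$, using $p^\ell\xi\le N'$. This is the main (though routine) obstacle. Once it is verified, the result follows.
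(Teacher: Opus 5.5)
Your proposal is correct and is exactly the paper's argument: Lemma~\ref{lem:mixed-fibers-zero} gives $u_i\in\mathbb F_{p^\ell}^{*}$, and Proposition~\ref{thm:common-plus-one-shifted} applied with $\delta=\lambda=1$ gives the result. Your explicit check that $p^\ell\xi\le\frac{n-1}{p^\ell+1}$ implies $p^\ell(k-1)\le n-1$ spells out an inequality that the paper uses without comment inside that proposition.
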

	
	\begin{proof}
		By Lemma~\ref{lem:mixed-fibers-zero},
		$u_i\in\mathbb F_{p^\ell}^{*}$ for $1\le i\le n$.
		Taking $\delta=1$ and $\lambda=1$ in
		Proposition~\ref{thm:common-plus-one-shifted} gives the result.
	\end{proof}
	
	\noindent\textbf{Remark}.
	If $t=p^{\ell-r}$, choose all the elements of
	$L_V(\mathbb F_{p^\ell})$, and let
	$m=(q-1)/(p^\ell-1)$. Then $x^m\in\mathbb F_{p^\ell}$ for every
	$x\in\mathbb F_q$, so
	$\Phi(x)=L_V(x^m)\in L_V(\mathbb F_{p^\ell})$. Hence
	$M=\mathbb F_q$, and indeed
	\[
	n=1+\frac{q-1}{p^\ell-1}(p^\ell-1)=q.
	\]
	Thus the zero-fiber construction can attain the full evaluation
	set, whereas the nonzero-fiber construction omits the zero fiber.
	Moreover, if $2\ell\mid e$, then
	\[
	\frac{q-1}{p^\ell-1}
	=(p^\ell+1)(1+p^{2\ell}+\cdots+p^{e-2\ell}),
	\]
	so $m\mid p^\ell+1$ is a simple sufficient condition for
	$m\mid(q-1)/(p^\ell-1)$.
	
	\subsection{A unified viewpoint and summary}
	
	The preceding constructions are governed by the normalization of the
	Lagrange coefficients. Constructions A, B, C, and the nonzero-fiber
	version of Construction F use
	$a_i^{-1}u_i\in\mathbb F_{p^\ell}^{*}$ on nonzero multiplicative
	evaluation sets and lead to $f^{p^\ell}(x)=xg(x)$. Their zero-extended
	versions, Construction D, and the zero-fiber version of Construction F
	use coefficients lying directly in
	$\mathbb F_{p^\ell}^{*}$ and lead to $f^{p^\ell}(x)=g(x)$.
	Construction E differs only by the global factor $c_V$. The principal
	arbitrary-hull results
	are summarized in Table~\ref{tab:current-constructions}.
	
	\begin{table}[htbp]
		\centering
		\caption{Ordinary hull ranges for $[n+s,k]_q$ GRL codes with $h$-dimensional $\ell$-Galois hulls.}
		\label{tab:current-constructions}
		\small
		\setlength{\tabcolsep}{7pt}
		\renewcommand{\arraystretch}{1.45}
		\begin{tabular*}{0.98\linewidth}{@{\extracolsep{\fill}}cccc@{}}
			\toprule
			\multicolumn{4}{c}{$q=p^e>4$, $p$ prime, $2\ell\mid e$} \\
			\midrule
			$n$ & $k$ & $h$ & Ref. \\
			\midrule
			$t\frac{q-1}{p^\ell-1}$ & $s+1\le k\le\left\lfloor\frac{n+p^\ell}{p^\ell+1}\right\rfloor$ & $0\le h\le k-s-1$ & Theorem~\ref{thm:II.1} \\
			\midrule
			$t\frac{q-1}{p^\ell-1}+1$ & $s+1\le k\le\left\lfloor\frac{n+p^\ell-1}{p^\ell+1}\right\rfloor$ & $0\le h\le k-s$ & Theorem~\ref{thm:II.2} \\
			\midrule
			$r_1r_2$ & $s+1\le k\le\left\lfloor\frac{n+p^\ell}{p^\ell+1}\right\rfloor$ & $0\le h\le k-s-1$ & Theorem~\ref{thm:II.14} \\
			\midrule
			$r_1r_2+1$ & $s+1\le k\le\left\lfloor\frac{n+p^\ell-1}{p^\ell+1}\right\rfloor$ & $0\le h\le k-s$ & Theorem~\ref{thm:II.15} \\
			\midrule
			$rm$ & $s+1\le k\le\left\lfloor\frac{n+p^\ell}{p^\ell+1}\right\rfloor$ & $0\le h\le k-s-1$ & Theorem~\ref{thm:II.18} \\
			\midrule
			$rm+1$ & $s+1\le k\le\left\lfloor\frac{n+p^\ell-1}{p^\ell+1}\right\rfloor$ & $0\le h\le k-s$ & Theorem~\ref{thm:II.19} \\
			\midrule
			$tp^{e-\ell}$ & $s+1\le k\le\left\lfloor\frac{n+p^\ell-1}{p^\ell+1}\right\rfloor$ & $0\le h\le k-s$ & Theorem~\ref{thm:II.t1} \\
			\midrule
			$tp^r$ & $s+1\le k\le\left\lfloor\frac{n+p^\ell-1}{p^\ell+1}\right\rfloor$ & $0\le h\le k-s$ & Theorem~\ref{thm:additive-subspace-hull} \\
			\midrule
			$tmp^r$ & $\begin{gathered}\textstyle s+1\le k\le\left\lfloor\frac{n+p^\ell}{p^\ell+1}\right\rfloor\\\textstyle s+1\le k\le\left\lfloor\frac{n+p^\ell-m}{p^\ell+1}\right\rfloor\end{gathered}$ & $\begin{gathered}\vphantom{\textstyle \left\lfloor\frac{n+p^\ell}{p^\ell+1}\right\rfloor}0\le h\le k-s-1\\\vphantom{\textstyle \left\lfloor\frac{n+p^\ell}{p^\ell+1}\right\rfloor}0\le h\le k-s\end{gathered}$ & $\begin{gathered}\vphantom{\textstyle \left\lfloor\frac{n+p^\ell}{p^\ell+1}\right\rfloor}\text{Theorem~\ref{thm:mixed-fibers-hull}}\\\vphantom{\textstyle \left\lfloor\frac{n+p^\ell}{p^\ell+1}\right\rfloor}\text{Proposition~\ref{thm:common-direct}}\end{gathered}$ \\
			\midrule
			$1+m(tp^r-1)$ & $s+1\le k\le\left\lfloor\frac{n+p^\ell-1}{p^\ell+1}\right\rfloor$ & $0\le h\le k-s$ & Theorem~\ref{thm:mixed-fibers-zero-hull} \\
			\bottomrule
		\end{tabular*}
		\par\smallskip
		
	\end{table}
	
	In addition to the ranges in Table~\ref{tab:current-constructions},
	Theorems~\ref{thm:II.444}, \ref{thm:B-hull-plus-one},
	\ref{thm:C-hull-plus-one}, and \ref{thm:mixed-fibers-plus-one} give
	$(k-s)$-dimensional hulls, while
	Theorems~\ref{thm:II.4}, \ref{thm:B-zero-hull-plus-one},
	\ref{thm:C-zero-hull-plus-one}, \ref{thm:D-hull-plus-one},
	\ref{thm:additive-subspace-plus-one}, and
	\ref{thm:mixed-fibers-zero-plus-one} give
	$(k-s+1)$-dimensional hulls under the corresponding special parameter
	conditions. This organization
	separates the common hull-control argument from the evaluation-set
	calculations, so further families can be incorporated by verifying the appropriate Lagrange-coefficient condition.
	
	\section{Galois Hulls of GRL Codes When $s=2$}\label{sec:s2}
	
	In this section, we consider the MDS and AMDS properties of generalized Roth-Lempel codes for $s=2$ and combine the resulting criteria with the preceding hull constructions. Let $\boldsymbol a=(a_1,\ldots,a_n)$ consist of distinct elements of $\mathbb F_q$, let $\boldsymbol v\in(\mathbb F_q^*)^n$, and take $A_2=\begin{pmatrix}a&b\\c&d\end{pmatrix}\in\operatorname{GL}_2(\mathbb F_q)$. A codeword of $C=\operatorname{GRL}_k(\boldsymbol a,\boldsymbol v,A_2)$ has the form
	$$
	\bigl(v_1f(a_1),\ldots,v_nf(a_n),af_{k-2}+cf_{k-1},bf_{k-2}+df_{k-1}\bigr),\qquad \deg f<k.
	$$
	For $1\le t\le n$, put $\Delta_t(\boldsymbol a)=\{\sum_{i\in I}a_i:I\subseteq\{1,\ldots,n\},\ |I|=t\}$. We can describe the distance of $\operatorname{GRL}_k(\boldsymbol a,\boldsymbol v,A_2)$ as follows:

	\begin{proposition}\label{prop:s2-distance}\cite{liu2026generalized}
		Let $2\le k\le n$. Then $C$ is either MDS or AMDS. More precisely,
		\begin{enumerate}
			\item[(1)] $C$ is an $[n+2,k,n-k+3]_q$ MDS code if and only if $c-a\sigma\ne0$ and $d-b\sigma\ne0$ for every $\sigma\in\Delta_{k-1}(\boldsymbol a)$;
			\item[(2)] $C$ is an $[n+2,k,n-k+2]_q$ AMDS code if and only if $c=a\sigma$ or $d=b\sigma$ for some $\sigma\in\Delta_{k-1}(\boldsymbol a)$. In this case, $C$ is also NMDS.
		\end{enumerate}
	\end{proposition}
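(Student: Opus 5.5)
The approach is Proposition~\ref{prop:minimum-distance-rank} applied to the generator matrix $G=G_k(\boldsymbol a,\boldsymbol v,A_2)$, combined with the Singleton bound. First I will show $d\ge n-k+2$. By Proposition~\ref{prop:minimum-distance-rank}(2), it suffices that every $n-k+2$ columns of $G$ have rank $k$. Equivalently, no nonzero $f$ with $\deg f<k$ yields a codeword of weight at most $n-k+1$. Such a codeword vanishes on at least $k+1$ of the $n+2$ coordinates, and at most two of these are appended coordinates, so $f$ has at least $k-1$ roots among the $a_i$. If $f$ had $k$ roots, then $f=0$. So $f$ has exactly $k-1$ roots $a_i$, $i\in I$, and both appended coordinates $af_{k-2}+cf_{k-1}$ and $bf_{k-2}+df_{k-1}$ must vanish. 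But these two equations form a nonsingular system in $(f_{k-2},f_{k-1})$ (the matrix $A_2$ is invertible), forcing $f_{k-1}=f_{k-2}=0$. Then $\deg f\le k-3$ while $f$ has $k-1$ roots, so $f=0$. Hence $d\ge n-k+2$, and $C$ is MDS or AMDS.

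For (1) and (2), I then decide exactly when a codeword of weight $n-k+2$ exists. Such a codeword vanishes at exactly $k$ coordinates. Either $f$ has $k-1$ roots $\{a_i\}_{i\in I}$ among the evaluation points together with one vanishing appended coordinate, or $f$ has $k-2$ roots together with both appended coordinates vanishing. The second case again gives $f_{k-1}=f_{k-2}=0$, so $f$ is a nonzero polynomial of degree at most $k-3$ with $k-2$ roots, which is impossible. In the first case, $f=\lambda\prod_{i\in I}(x-a_i)$ with $\lambda\ne0$, so $f_{k-1}=\lambda$ and $f_{k-2}=-\lambda\sigma$, where $\sigma=\sum_{i\in I}a_i\in\Delta_{k-1}(\boldsymbol a)$. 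The appended coordinates become $\lambda(c-a\sigma)$ and $\lambda(d-b\sigma)$. Thus a weight-$(n-k+2)$ codeword exists if and only if $c=a\sigma$ or $d=b\sigma$ for some $\sigma\in\Delta_{k-1}(\boldsymbol a)$. I must also confirm that the weight is exactly $n-k+2$ and not smaller, but this is already excluded by $d\ge n-k+2$. This gives the dichotomy in (1) and (2). The hypothesis $k\le n$ ensures $|I|=k-1\le n$, so the sets $I$ exist, and $k\ge2$ makes $f_{k-2}$ meaningful.

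The remaining part, and the main obstacle, is the NMDS claim in (2): the Euclidean dual must also be AMDS. By Proposition~\ref{prop:grl-parity-check}, $C^\perp$ is itself a GRL-type code $\operatorname{GRL}_{n+2-k}(\boldsymbol a,\boldsymbol u\boldsymbol v^{-1},B_2)$, in the sense that its last two coordinates are $(g_{n-k},g_{n-k+1})B_2$ with $B_2$ nonsingular. The same argument therefore gives $d(C^\perp)\ge k$, provided $n+2-k\ge2$ and $n+2-k\le n$, that is, $k\ge2$. The boundary case where the dual dimension equals $n+1$ or $n+2$ needs a separate check. Moreover, $C^\perp$ cannot be MDS, since the dual of an MDS code is MDS and $C$ is not. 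Hence $d(C^\perp)=k$, and $C^\perp$ is AMDS. I expect the care here to lie in checking the edge cases $k=n$, where the dual dimension is $2$, and in verifying that the weight argument applies verbatim to the dual's form with $B_2$ in place of $A_2$.
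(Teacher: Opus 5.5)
Your proof is correct and complete. The paper does not prove this proposition itself; it only cites \cite{liu2026generalized}. Your argument (counting evaluation zeros against appended zeros, then using the invertibility of $A_2$) is the same one the paper uses for the $s=3$ analogue in Lemma~\ref{thm:s3-AMDS} and for the bound $d(C)\ge n-k+2$ in Theorem~\ref{thm:EA-general}.

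Your NMDS step is also sound as written. You read $C^\perp$ off Proposition~\ref{prop:grl-parity-check} as $\operatorname{GRL}_{n+2-k}(\boldsymbol a,(u_iv_i^{-1})_i,B_2)$, apply the same lower bound, and exclude MDS by duality.

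The edge cases you leave open never occur. Under $2\le k\le n$, the dual dimension $k'=n+2-k$ satisfies $2\le k'\le n$; your phrase ``that is, $k\ge2$'' should read $2\le k\le n$. That range is all your first step needs, so dual dimensions $n+1$ and $n+2$ do not arise, and the case $k=n$ (that is, $k'=2$) is covered verbatim.
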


	For convenience, let $\zeta\in\mathbb F_q$ and take $A_\zeta=\begin{pmatrix}0&1\\1&\zeta\end{pmatrix}$. The preceding criterion immediately gives the following result.
	
	\begin{corollary}\label{cor:s2-Adelta}
		Let $2\le k\le n$ and $C=\operatorname{GRL}_k(\boldsymbol a,\boldsymbol v,A_\zeta)$. Then $C$ is an $[n+2,k,n-k+3]_q$ MDS code if $\zeta\notin\Delta_{k-1}(\boldsymbol a)$, and an $[n+2,k,n-k+2]_q$ AMDS code if $\zeta\in\Delta_{k-1}(\boldsymbol a)$. In the latter case, $C$ is also NMDS.
	\end{corollary}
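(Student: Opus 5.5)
This is a direct specialization of Proposition~\ref{prop:s2-distance} to the matrix $A_\zeta$. I would read off the entries of $A_\zeta$ in the notation $A_2=\begin{pmatrix}a&b\\c&d\end{pmatrix}$: here $a=0$, $b=1$, $c=1$, $d=\zeta$. First note that $\det A_\zeta=-1\ne0$, so $A_\zeta\in\operatorname{GL}_2(\mathbb F_q)$ and $C$ is a genuine GRL code to which Proposition~\ref{prop:s2-distance} applies for $2\le k\le n$.

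Next I would substitute these entries into the two conditions of the proposition. For any $\sigma\in\Delta_{k-1}(\boldsymbol a)$ we get $c-a\sigma=1-0\cdot\sigma=1\ne0$. So the first condition holds automatically. The second condition becomes $d-b\sigma=\zeta-\sigma$, which is nonzero for every $\sigma$ exactly when $\zeta\notin\Delta_{k-1}(\boldsymbol a)$.

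Part~(1) of Proposition~\ref{prop:s2-distance} then shows that $C$ is an $[n+2,k,n-k+3]_q$ MDS code whenever $\zeta\notin\Delta_{k-1}(\boldsymbol a)$. Conversely, if $\zeta\in\Delta_{k-1}(\boldsymbol a)$, take $\sigma=\zeta$. Then $d=b\sigma$, so part~(2) shows that $C$ is an $[n+2,k,n-k+2]_q$ AMDS code, and that it is also NMDS.

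There is no real obstacle here. The only point needing care is that $c-a\sigma=1$ can never vanish, so the MDS/AMDS dichotomy is controlled entirely by the second entry condition, $\zeta-\sigma$.
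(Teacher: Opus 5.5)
Your proposal is correct and is essentially the paper's argument, which obtains the corollary as an immediate specialization of Proposition~\ref{prop:s2-distance}. With $a=0$, $b=1$, $c=1$, $d=\zeta$, the factor $c-a\sigma=1$ never vanishes, so MDS versus AMDS (and hence NMDS) is decided entirely by whether $\zeta-\sigma=0$ for some $\sigma\in\Delta_{k-1}(\boldsymbol a)$.
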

	
	\subsection{Prescribed Galois hulls}
	
	Throughout the hull constructions below, let $q=p^e>4$, where $p$ is a prime, and assume that $2\ell\mid e$. As before, put $u_i=\prod_{j\ne i}(a_i-a_j)^{-1}$. The distance criteria above do not depend on the nonzero multipliers, so they can be combined directly with the ordinary ranges of the common constructions.
	
	\begin{proposition}\label{thm:s2-arbitrary-hull}
		Assume that one of the following conditions holds:
		\begin{enumerate}
			\item[(1)] $a_i\ne0$ and $a_i^{-1}u_i\in\mathbb F_{p^\ell}^*$ for $1\le i\le n$, with
			$$
			3\le k\le\left\lfloor\frac{n+p^\ell}{p^\ell+1}\right\rfloor,\qquad 0\le h\le k-3;
			$$
			\item[(2)] $\lambda a_i^{\delta-1}u_i\in\mathbb F_{p^\ell}^*$ for $1\le i\le n$, where $\delta\in\mathbb Z_{>0}$ and $\lambda\in\mathbb F_q^*$, with
			$$
			3\le k\le\left\lfloor\frac{n+p^\ell-\delta}{p^\ell+1}\right\rfloor,\qquad 0\le h\le k-2.
			$$
		\end{enumerate}
		Then there exists an $[n+2,k,n-k+2]_q$ AMDS code $C$ with $\dim(\operatorname{Hull}_\ell(C))=h$. If $\Delta_{k-1}(\boldsymbol a)\ne\mathbb F_q$, there also exists an $[n+2,k,n-k+3]_q$ MDS code with the same hull dimension.
	\end{proposition}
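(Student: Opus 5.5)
The plan is to combine the hull constructions of Section~\ref{sec:construction} with the distance criterion of Corollary~\ref{cor:s2-Adelta}, using the fact that those constructions allow an arbitrary nonsingular extension matrix. Concretely, I take $s=2$ and $A_2=A_\zeta=\begin{pmatrix}0&1\\1&\zeta\end{pmatrix}$. This matrix is nonsingular for every $\zeta\in\mathbb F_q$, since $\det A_\zeta=-1$.

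In case (1), the hypotheses are exactly those of Proposition~\ref{thm:common-nonzero} with $s=2$: the range $s+1\le k$ becomes $3\le k$, and the range $0\le h\le k-s-1$ becomes $0\le h\le k-3$. In case (2), I use Proposition~\ref{thm:common-direct} with $s=2$, where $s<k$ means $3\le k$ and $0\le h\le k-s$ means $0\le h\le k-2$. Either way, the proposition supplies a code $C$ obtained from $\operatorname{GRL}_k(\boldsymbol a,\boldsymbol v',A_\zeta)$ with $\dim(\operatorname{Hull}_\ell(C))=h$. Here $\boldsymbol v'$ is $\boldsymbol v$ with $v_1,\dots,v_z$ rescaled by $\beta$.

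The only thing to check is that $C$ is still a GRL code on the same evaluation vector $\boldsymbol a$ with extension matrix $A_\zeta$, and only the multiplier vector has changed. Since $k\le\lfloor (n+p^\ell)/(p^\ell+1)\rfloor\le n$, Corollary~\ref{cor:s2-Adelta} applies with $2\le k\le n$. Its criterion depends only on $\boldsymbol a$ and $\zeta$, not on the multipliers.

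For the AMDS code, I choose $\zeta\in\Delta_{k-1}(\boldsymbol a)$, for instance $\zeta=a_1+\cdots+a_{k-1}$. This set is nonempty because $k-1\le n$. Corollary~\ref{cor:s2-Adelta} then gives an $[n+2,k,n-k+2]_q$ AMDS code with hull dimension $h$.

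For the MDS code, when $\Delta_{k-1}(\boldsymbol a)\ne\mathbb F_q$, I choose $\zeta\in\mathbb F_q\setminus\Delta_{k-1}(\boldsymbol a)$. The same corollary then gives an $[n+2,k,n-k+3]_q$ MDS code, and its hull dimension is still $h$ because the hull propositions allow any $A_s$.

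I do not expect a real obstacle. The point needing the most care is confirming that the hull propositions impose no condition on $A_s$ in the ordinary ranges, which is true as stated. For this reason I would avoid invoking Corollary~\ref{cor:common-direct-extended}, which forces a specific $A_s$ at the boundary value of $k$. Its enlarged range is not claimed here anyway.
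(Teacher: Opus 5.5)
Your proposal is correct and follows the paper's proof. You apply Propositions~\ref{thm:common-nonzero} and~\ref{thm:common-direct} with $s=2$, which allow any nonsingular $A_2$; take $A_2=A_\zeta$ with $\zeta\in\Delta_{k-1}(\boldsymbol a)$ or $\zeta\notin\Delta_{k-1}(\boldsymbol a)$; and conclude with Corollary~\ref{cor:s2-Adelta}, whose criterion does not depend on the multipliers. Your extra checks ($\det A_\zeta=-1$, $k\le n$, and $\Delta_{k-1}(\boldsymbol a)\neq\emptyset$) are accurate details the paper leaves implicit.
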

	
	\begin{proof}
		Propositions~\ref{thm:common-nonzero} and~\ref{thm:common-direct}, respectively, provide the prescribed hull dimension for every nonsingular $A_2$. Choose $A_2=A_\zeta$ with $\zeta\in\Delta_{k-1}(\boldsymbol a)$ for AMDS codes, or $\zeta\notin\Delta_{k-1}(\boldsymbol a)$ for MDS codes. The conclusion follows from Corollary~\ref{cor:s2-Adelta}.
	\end{proof}
	
	For the exceptional ranges, we use the matrices in Propositions~\ref{thm:common-plus-one-nonzero} and~\ref{thm:common-plus-one-shifted}. Their distance criterion depends on the value of $\xi$.
	
	\begin{proposition}\label{thm:s2-hull-plus-one}
		Let $3\le k\le n$. The following statements hold.
		\begin{enumerate}
			\item[(1)] Assume that $a_i\ne0$, $a_i^{-1}u_i\in\mathbb F_{p^\ell}^*$ for $1\le i\le n$, and $n/(p^\ell+1)\in\mathbb Z_{>0}$. If
			$$
			k=\xi+1+\frac{n}{p^\ell+1},\qquad 0\le \xi\le\min\left\{1,\left\lfloor\frac{\frac{n}{p^\ell+1}-1}{p^\ell}\right\rfloor\right\},
			$$
			then there exists an $[n+2,k]_q$ GRL code $C$ with $\dim(\operatorname{Hull}_\ell(C))=k-2$.
			\item[(2)] Assume that $\lambda a_i^{\delta-1}u_i\in\mathbb F_{p^\ell}^*$ for $1\le i\le n$, where $\delta\in\mathbb Z_{>0}$ and $\lambda\in\mathbb F_{p^\ell}^*$, and that $(n-\delta)/(p^\ell+1)\in\mathbb Z_{>0}$. If
			$$
			k=\xi+1+\frac{n-\delta}{p^\ell+1},\qquad 0\le \xi\le\min\left\{1,\left\lfloor\frac{n-\delta}{p^\ell(p^\ell+1)}\right\rfloor\right\},
			$$
			then there exists an $[n+2,k]_q$ GRL code $C$ with $\dim(\operatorname{Hull}_\ell(C))=k-1$.
		\end{enumerate}
		In both cases, $C$ is an $[n+2,k,n-k+3]_q$ MDS code if $a_{n-1}^{(1)}\notin\Delta_{k-1}(\boldsymbol a)$ for $\xi=1$, or $0\notin\Delta_{k-1}(\boldsymbol a)$ for $\xi=0$; otherwise it is an $[n+2,k,n-k+2]_q$ NMDS code. In case (2) with $\xi=0$, the same distance can also be obtained with every hull dimension $0\le h\le k-2$.
	\end{proposition}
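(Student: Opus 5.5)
The plan is to specialize the two exceptional constructions, Propositions~\ref{thm:common-plus-one-nonzero} and~\ref{thm:common-plus-one-shifted}, to $s=2$. The hull statements then come for free, and the distance statements come from writing out the explicit $A_2$ used there and applying Proposition~\ref{prop:s2-distance}.

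First I check the hypotheses. Since $\xi\in\{0,1\}$, we have $\xi\le p^\ell-1$. The bound $\xi+1\le s=2\le\min\{\xi+p^\ell,k-1\}$ holds because $p^\ell\ge2$ and $k\ge3$. The remaining bounds on $\xi$ are exactly those assumed. Propositions~\ref{thm:common-plus-one-nonzero} and~\ref{thm:common-plus-one-shifted} therefore give GRL codes with hull dimensions $k-s=k-2$ and $k-s+1=k-1$, respectively.

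Next I write down $A_2$ in each case, writing it as $\begin{pmatrix}a&b\\c&d\end{pmatrix}$ as in Proposition~\ref{prop:s2-distance}.
\begin{itemize}
\item For $\xi=1$ the identity block $I_{s-\xi-1}$ is empty, so $A_2=\begin{pmatrix}\mu&0\\-\lambda\mu^{-p^\ell}a_{n-1}^{(1)}&1\end{pmatrix}$, with $\lambda=1$ in case (1). Here $\mu^{p^\ell+1}=-\lambda$. Thus $b=0$ and $d=1$, so $d-b\sigma=1\ne0$ always. The other condition reads $c-a\sigma=-\mu\bigl(\lambda\mu^{-p^\ell-1}a_{n-1}^{(1)}+\sigma\bigr)=-\mu\bigl(\sigma-a_{n-1}^{(1)}\bigr)$, using $\mu^{-p^\ell-1}=-\lambda^{-1}$. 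Hence MDS holds if and only if $a_{n-1}^{(1)}\notin\Delta_{k-1}(\boldsymbol a)$.
\item For $\xi=0$ the matrix is $A_2=\operatorname{diag}(1,\mu)$, so $c-a\sigma=-\sigma$ and $d-b\sigma=\mu\ne0$. Hence MDS holds if and only if $0\notin\Delta_{k-1}(\boldsymbol a)$.
\end{itemize}
In the complementary situations, Proposition~\ref{prop:s2-distance}(2) gives an AMDS code, which is also NMDS.

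Finally, for case (2) with $\xi=0$ we have $k=1+\frac{n-\delta}{p^\ell+1}$. This is equivalent to $\delta-1+p^\ell(k-1)=n-k$, which is exactly the boundary case of Corollary~\ref{cor:common-direct-extended}. That corollary gives every $0\le h\le k-2$ using $A_2=\mu I_2$ with $\mu^{p^\ell+1}=-\eta$. Here $a=d=\mu$ and $b=c=0$, so the criterion is again $0\notin\Delta_{k-1}(\boldsymbol a)$, the same distance as before.

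The main obstacle is the sign bookkeeping in the $\xi=1$ computation: the relation $\mu^{p^\ell+1}=-\lambda$ must make the offending value collapse exactly to $a_{n-1}^{(1)}$. I would double-check this against the formula for $A_s^{(p^\ell)}A_s^T$ in Proposition~\ref{thm:common-plus-one-nonzero}.
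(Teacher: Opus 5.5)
Your proposal is correct and follows essentially the same route as the paper. You specialize Propositions~\ref{thm:common-plus-one-nonzero} and~\ref{thm:common-plus-one-shifted} to $s=2$ and read off $A_2$; your sign computation agrees with the paper's simplification $-\lambda\mu^{-p^\ell}=\mu$, giving the vanishing factors $-\sigma$ for $\xi=0$ and $\mu(a_{n-1}^{(1)}-\sigma)$ for $\xi=1$ in Proposition~\ref{prop:s2-distance}. You then handle case (2) with $\xi=0$ through the boundary case of Corollary~\ref{cor:common-direct-extended} with $A_2=\mu I_2$, exactly as the paper does, and you additionally spell out the hypothesis check $\xi+1\le 2\le\min\{\xi+p^\ell,k-1\}$, which the paper leaves implicit.
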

	
	\begin{proof}
		Take $\mu^{p^\ell+1}=-1$ in (1) and $\mu^{p^\ell+1}=-\lambda$ in (2). The constructions give
		$$
		A_2=\begin{pmatrix}1&0\\0&\mu\end{pmatrix}\quad(\xi=0),\qquad
		A_2=\begin{pmatrix}\mu&0\\\mu a_{n-1}^{(1)}&1\end{pmatrix}\quad(\xi=1).
		$$
		Indeed, $-\lambda\mu^{-p^\ell}=\mu$ in (2), and the same identity with $\lambda=1$ applies in (1). Propositions~\ref{thm:common-plus-one-nonzero} and~\ref{thm:common-plus-one-shifted} give the stated hull dimensions. In Proposition~\ref{prop:s2-distance}, the only possible vanishing factor is $-\sigma$ for $\xi=0$ and $\mu(a_{n-1}^{(1)}-\sigma)$ for $\xi=1$. This proves the distance assertions. For the additional hull dimensions in case (2) with $\xi=0$, we have $\delta-1+p^\ell(k-1)=n-k$. Apply Corollary~\ref{cor:common-direct-extended} with $A_2=\mu I_2$ and $\mu^{p^\ell+1}=-\eta$, where $\eta\in(\mathbb F_q^*)^{p^\ell+1}$ and $\eta\ne\lambda$; Proposition~\ref{prop:s2-distance} gives the same distance criterion.
	\end{proof}
	
	\noindent\textbf{Remark}.
	At the additional boundary $\delta-1+p^\ell(k-1)=n-k$ in Corollary~\ref{cor:common-direct-extended}, take $A_2=\mu I_2$ with $\mu^{p^\ell+1}=-\eta$, where $\eta\in(\mathbb F_q^*)^{p^\ell+1}$ and $\eta\ne\lambda$. Then every $0\le h\le k-2$ is still obtained, and the MDS or AMDS property is determined by $0\notin\Delta_{k-1}(\boldsymbol a)$ or $0\in\Delta_{k-1}(\boldsymbol a)$, respectively. This boundary does not allow the unrestricted choice $A_2=A_\zeta$ used in Proposition~\ref{thm:s2-arbitrary-hull}.
	
	\subsection{Explicit families from the preceding constructions}\label{subsec:s2-explicit-MDS}
	
	We now list the evaluation sets and ordinary parameter ranges from Constructions A--F. In each case, $\boldsymbol a$ denotes the actual evaluation vector, including $0$ when it is adjoined; the subset sums are always taken with respect to this vector.
	
	\begin{theorem}\label{thm:s2-explicit-MDS}
		For each $n,k,h$ in Table~\ref{tab:s2-explicit-results}, choose the corresponding evaluation vector $\boldsymbol a$ under the stated field and evaluation-set assumptions. Then there exists an $[n+2,k,n-k+2]_q$ NMDS GRL code $C$ with $\dim(\operatorname{Hull}_\ell(C))=h$. If $\Delta_{k-1}(\boldsymbol a)\ne\mathbb F_q$, there also exists an $[n+2,k,n-k+3]_q$ MDS GRL code $C$ with the same hull dimension. These two choices are obtained by taking $A_2=A_\zeta$ with $\zeta\in\Delta_{k-1}(\boldsymbol a)$ and $\zeta\notin\Delta_{k-1}(\boldsymbol a)$, respectively.
	\end{theorem}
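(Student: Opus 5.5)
The theorem is a direct combination of the ordinary-range hull results of Section~\ref{sec:construction} with the distance criterion of Corollary~\ref{cor:s2-Adelta}. For each row of Table~\ref{tab:s2-explicit-results}, we will identify which normalization of the Lagrange coefficients the evaluation vector satisfies. Constructions A, B, C and the nonzero-fiber version of F give $a_i^{-1}u_i\in\mathbb F_{p^\ell}^*$ (Lemmas~\ref{lem:3.4}, \ref{lem:3.8}, \ref{lem:3.12}, \ref{lem:mixed-fibers}). The zero-extended versions of A, B, C give $w_i\in\mathbb F_{p^\ell}^*$ (Lemma~\ref{lem:3.6} and the displayed identities for $w_{n+1}$). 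Construction D gives $u_i\in\mathbb F_{p^\ell}^*$ (Lemma~\ref{lem:3.1}), and the zero-fiber version of F does the same (Lemma~\ref{lem:mixed-fibers-zero}). Construction E gives $c_Vu_i\in\mathbb F_{p^\ell}^*$ (Lemma~\ref{lem:additive-subspace}). Accordingly, each row falls under case (1) of Proposition~\ref{thm:s2-arbitrary-hull} (with $s=2$, $h\le k-3$) or under case (2) with $\delta=1$ and $\lambda\in\{1,c_V\}$ (with $h\le k-2$). For the zero-extended rows, the length $n$ in the table is the actual number of evaluation points $N$, so the bound on $k$ must be read with $N$.

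The main check is that the $k$-ranges listed in the table lie inside the ranges required by Proposition~\ref{thm:s2-arbitrary-hull}. These are $\lfloor (n+p^\ell)/(p^\ell+1)\rfloor$ in case (1) and $\lfloor (n+p^\ell-1)/(p^\ell+1)\rfloor$ in case (2) with $\delta=1$. This is the point to watch, because the theorems of Section~\ref{sec:construction} for the case (2)-type families used Corollary~\ref{cor:common-direct-extended}, which reaches one further boundary value of $k$. At that value the extension matrix is forced to be $\mu I_2$ rather than an arbitrary $A_\zeta$. If the table only records the ordinary ranges, as the heading ``ordinary parameter ranges'' indicates, then Propositions~\ref{thm:common-nonzero} and~\ref{thm:common-direct} apply with an \emph{arbitrary} nonsingular $A_2$. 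If instead a row includes the boundary value, we will handle it separately as in the Remark after Proposition~\ref{thm:s2-hull-plus-one}: take $A_2=\mu I_2$, so that MDS versus AMDS is governed by whether $0\in\Delta_{k-1}(\boldsymbol a)$.

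With the hull dimension secured for every nonsingular $A_2$, we take $A_2=A_\zeta$, which is nonsingular since $\det A_\zeta=-1$. Because the multipliers $\boldsymbol v$, including the rescaling by $\beta$ on $z$ coordinates, do not affect the subset-sum criterion, Corollary~\ref{cor:s2-Adelta} applies to the constructed code itself, with $k\ge3$ giving $2\le k\le n$. Choosing $\zeta\in\Delta_{k-1}(\boldsymbol a)$, which is nonempty since $k-1\le n$, yields an $[n+2,k,n-k+2]_q$ AMDS code that is also NMDS. When $\Delta_{k-1}(\boldsymbol a)\ne\mathbb F_q$, choosing $\zeta\notin\Delta_{k-1}(\boldsymbol a)$ yields an $[n+2,k,n-k+3]_q$ MDS code. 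In both cases the hull dimension is $h$, and the subset sums are computed over the full evaluation vector, including $0$ whenever it has been adjoined.
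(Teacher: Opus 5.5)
Your structure matches the paper's proof: get the hull dimension from Proposition~\ref{thm:common-nonzero} or~\ref{thm:common-direct} for an arbitrary nonsingular $A_2$, then take $A_2=A_\zeta$ and apply Corollary~\ref{cor:s2-Adelta}. However, the argument does not cover every entry of Table~\ref{tab:s2-explicit-results}.

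Row~9 has a second range. For the nonzero-fiber set of Construction F with $n=tmp^r$, the table claims $3\le k\le\lfloor (n+p^\ell-m)/(p^\ell+1)\rfloor$ and $0\le h\le k-2$. You assign this evaluation set only to case~(1), which stops at $h=k-3$. Your only case-(2) normalizations are $\delta=1$ with $\lambda\in\{1,c_V\}$, and $\delta=1$ is not available for this set in general. Lemma~\ref{lem:mixed-fibers} gives $u_i\in a_i\mathbb F_{p^\ell}^*$. The roots of the equations $x^m=\beta_j+v$ need not lie in a single coset of $\mathbb F_{p^\ell}^*$; already within one fiber they differ by $m$-th roots of unity. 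So no single $\lambda$ puts every $\lambda u_i$ in $\mathbb F_{p^\ell}^*$, and the entries with $h=k-2$ in that row are not proved by your argument.

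The missing ingredient is the other normalization in Lemma~\ref{lem:mixed-fibers}, namely $c_Vm\,a_i^{m-1}u_i\in\mathbb F_{p^\ell}^*$. With it, apply Proposition~\ref{thm:common-direct} (case~(2) of Proposition~\ref{thm:s2-arbitrary-hull}) with $\delta=m$ and $\lambda=mc_V$. The resulting ordinary range $k\le\lfloor(n+p^\ell-\delta)/(p^\ell+1)\rfloor$ is exactly the table's bound, and $A_2$ is arbitrary there, so $A_\zeta$ finishes the argument as before. This is what the paper does, noting that the global factor for Construction F is $mc_V$.

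Separately, your hedge about boundary values is unnecessary. Reading $n$ as the actual number of evaluation points, every bound in the table is an ordinary bound of Proposition~\ref{thm:common-nonzero} or~\ref{thm:common-direct}. So the $\mu I_2$ fallback never arises, and it would not realize the $A_\zeta$ choice asserted in the statement anyway. The remaining steps are fine: nonsingularity of $A_\zeta$, invariance of the subset-sum criterion under the multiplier rescaling, and the two choices of $\zeta$.
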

	
	\begin{table}[htbp]
		\centering
		\caption{Parameters of the $[n+2,k,d]_q$ GRL codes in Theorem~\ref{thm:s2-explicit-MDS}.}
		\label{tab:s2-explicit-results}
		\small
		\setlength{\tabcolsep}{7pt}
		\renewcommand{\arraystretch}{1.45}
		\begin{tabular*}{0.98\linewidth}{@{\extracolsep{\fill}}cccc@{}}
			\toprule
			\multicolumn{4}{c}{$q=p^e>4$, $p$ prime, $2\ell\mid e$} \\
			\midrule
			$n$ & $k$ & $h$ & Ref. \\
			\midrule
			$t\frac{q-1}{p^\ell-1}$ & $3\le k\le\left\lfloor\frac{n+p^\ell}{p^\ell+1}\right\rfloor$ & $0\le h\le k-3$ & Theorem~\ref{thm:II.1} \\
			\midrule
			$t\frac{q-1}{p^\ell-1}+1$ & $3\le k\le\left\lfloor\frac{n+p^\ell-1}{p^\ell+1}\right\rfloor$ & $0\le h\le k-2$ & Theorem~\ref{thm:II.2} \\
			\midrule
			$r_1r_2$ & $3\le k\le\left\lfloor\frac{n+p^\ell}{p^\ell+1}\right\rfloor$ & $0\le h\le k-3$ & Theorem~\ref{thm:II.14} \\
			\midrule
			$r_1r_2+1$ & $3\le k\le\left\lfloor\frac{n+p^\ell-1}{p^\ell+1}\right\rfloor$ & $0\le h\le k-2$ & Theorem~\ref{thm:II.15} \\
			\midrule
			$rm$ & $3\le k\le\left\lfloor\frac{n+p^\ell}{p^\ell+1}\right\rfloor$ & $0\le h\le k-3$ & Theorem~\ref{thm:II.18} \\
			\midrule
			$rm+1$ & $3\le k\le\left\lfloor\frac{n+p^\ell-1}{p^\ell+1}\right\rfloor$ & $0\le h\le k-2$ & Theorem~\ref{thm:II.19} \\
			\midrule
			$tp^{e-\ell}$ & $3\le k\le\left\lfloor\frac{n+p^\ell-1}{p^\ell+1}\right\rfloor$ & $0\le h\le k-2$ & Theorem~\ref{thm:II.t1} \\
			\midrule
			$tp^r$ & $3\le k\le\left\lfloor\frac{n+p^\ell-1}{p^\ell+1}\right\rfloor$ & $0\le h\le k-2$ & Theorem~\ref{thm:additive-subspace-hull} \\
			\midrule
			$tmp^r$ & $\begin{gathered}\textstyle 3\le k\le\left\lfloor\frac{n+p^\ell}{p^\ell+1}\right\rfloor\\\textstyle 3\le k\le\left\lfloor\frac{n+p^\ell-m}{p^\ell+1}\right\rfloor\end{gathered}$ & $\begin{gathered}\vphantom{\textstyle \left\lfloor\frac{n+p^\ell}{p^\ell+1}\right\rfloor}0\le h\le k-3\\\vphantom{\textstyle \left\lfloor\frac{n+p^\ell}{p^\ell+1}\right\rfloor}0\le h\le k-2\end{gathered}$ & $\begin{gathered}\vphantom{\textstyle \left\lfloor\frac{n+p^\ell}{p^\ell+1}\right\rfloor}\text{Theorem~\ref{thm:mixed-fibers-hull}}\\\vphantom{\textstyle \left\lfloor\frac{n+p^\ell}{p^\ell+1}\right\rfloor}\text{Proposition~\ref{thm:common-direct}}\end{gathered}$ \\
			\midrule
			$1+m(tp^r-1)$ & $3\le k\le\left\lfloor\frac{n+p^\ell-1}{p^\ell+1}\right\rfloor$ & $0\le h\le k-2$ & Theorem~\ref{thm:mixed-fibers-zero-hull} \\
			\bottomrule
		\end{tabular*}
		\par\smallskip
		
	\end{table}
	
	\begin{proof}
		Rows 1, 3, 5 and the first range in row~9 of Table~\ref{tab:s2-explicit-results} follow from Proposition~\ref{thm:common-nonzero}. The other rows follow from Proposition~\ref{thm:common-direct} with $\delta=1$. For the second range in row~9, Lemma~\ref{lem:mixed-fibers} and Proposition~\ref{thm:common-direct} apply with $\delta=m$. The global factors in Constructions E and F are $c_V$ and $mc_V$, respectively. Taking $s=2$ gives the stated hull ranges for every nonsingular $A_2$. Corollary~\ref{cor:s2-Adelta} then gives the distance and NMDS assertions.
	\end{proof}
	
	For the full evaluation set, the preceding constructions give AMDS codes of length $q+2$.
	
	\begin{theorem}\label{thm:s2-full-field-NMDS}
		Let $q=p^e>4$, where $p$ is a prime, and assume that $2\ell\mid e$. For every $3\le k\le1+(q-1)/(p^\ell+1)$ and $0\le h\le k-2$, there exists a $[q+2,k,q-k+2]_q$ AMDS code $C$ with $\dim(\operatorname{Hull}_\ell(C))=h$. These codes are also NMDS.
	\end{theorem}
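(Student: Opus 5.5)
Take the full evaluation set $\boldsymbol a$ consisting of all elements of $\mathbb F_q$, so $n=q$. This is Construction D with $t=p^\ell$ and $B=\mathbb F_{p^\ell}$, since the trace fibers $T_j$ then partition $\mathbb F_q$. By Lemma~\ref{lem:3.1}, $u_i\in\mathbb F_{p^\ell}^*$; indeed $u_i=\prod_{b\ne b_{j_0}}(b_{j_0}-b)^{-1}=-1$. Hence hypothesis (2) of Proposition~\ref{thm:s2-arbitrary-hull} holds with $\delta=1$, $\lambda=1$.

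That proposition applies for $3\le k\le\lfloor (q+p^\ell-1)/(p^\ell+1)\rfloor$. Since $2\ell\mid e$, $p^\ell+1$ divides $q-1$, and this bound equals $(q-1)/(p^\ell+1)$. It yields every $0\le h\le k-2$ with extension matrix $A_\zeta$, where $\zeta\in\Delta_{k-1}(\boldsymbol a)$. We check that $\Delta_{k-1}(\mathbb F_q)=\mathbb F_q$ for $2\le k-1\le q-2$, so any $\zeta$ works. Given $\zeta$, pick any $(k-2)$-subset $S$ with sum $\sigma$. The missing element $\zeta-\sigma$ must lie outside $S$. If it lies in $S$, swap one element of $S$ with an element outside it, which shifts $\sigma$ by a nonzero amount. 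Because $k-1\le q-2$ there is enough room to do this. Even without the surjectivity claim, choosing $\zeta$ equal to the sum of some fixed $(k-1)$-subset suffices. Corollary~\ref{cor:s2-Adelta} then gives a $[q+2,k,q-k+2]_q$ AMDS code that is also NMDS.

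The remaining case is the boundary value $k=1+(q-1)/(p^\ell+1)$, where $\delta-1+p^\ell(k-1)=n-k$. Here I use Corollary~\ref{cor:common-direct-extended}. Take $A_2=\mu I_2$ with $\mu^{p^\ell+1}=-\eta$, where $\eta\in(\mathbb F_q^*)^{p^\ell+1}$ and $\eta\ne\lambda=1$. This gives every $0\le h\le k-2$, as in the Remark after Proposition~\ref{thm:s2-hull-plus-one}.

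For the distance, Proposition~\ref{prop:s2-distance} with $a=d=\mu$, $b=c=0$ shows the code is AMDS if and only if $0\in\Delta_{k-1}(\mathbb F_q)$. This holds by the surjectivity argument above. Directly, one can take a $(k-1)$-subset summing to $0$, e.g. adjusted from a subgroup or fiber. The simplest route is again the swap argument: start from any $(k-1)$-subset and move its sum to $0$ by exchanging one element $x\in S$ for $y=x-\sigma\notin S$. Such an exchange exists unless $S-\sigma\subseteq S$ for every $x$, which fails because $|S|<q$ and $S$ is not a union of cosets of $\langle\sigma\rangle$ for a suitable initial choice. Therefore the code is a $[q+2,k,q-k+2]_q$ AMDS code, and NMDS by Proposition~\ref{prop:s2-distance}(2).

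The main obstacle is this boundary case, specifically verifying $0\in\Delta_{k-1}(\mathbb F_q)$ cleanly. I would handle it with the swap argument, restarting from a different initial subset if the first choice is stuck. In characteristic $p$ one may alternatively use unions of additive cosets of $\mathbb F_p$, each summing to $0$, plus small corrections.
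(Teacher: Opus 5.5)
Your overall route is the paper's. You realize $\mathbb F_q$ as an evaluation set with all $u_i=-1$; you use Construction D with $t=p^\ell$, the paper uses Construction A with $t=p^\ell-1$ and $0$ adjoined, which makes no difference. In the ordinary range you take $A_\zeta$ with $\zeta$ an actual $(k-1)$-subset sum. At $k=1+(q-1)/(p^\ell+1)$ you take $A_2=\mu I_2$ from Corollary~\ref{cor:common-direct-extended}, where AMDS is equivalent to $0\in\Delta_{k-1}(\mathbb F_q)$. The ordinary range is fine, since your fallback of choosing $\zeta$ as a realized subset sum is all that is needed.

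The gap is the step you flag yourself: you never actually prove $0\in\Delta_{k-1}(\mathbb F_q)$ at the boundary. Your blanket claim $\Delta_{k-1}(\mathbb F_q)=\mathbb F_q$ for $2\le k-1\le q-2$ is false in characteristic $2$. Two distinct elements never sum to $0$ there, so $\Delta_2(\mathbb F_q)=\mathbb F_q^*$, as the paper's Remark notes. Your first swap argument (``swap until the missing element falls outside $S$'') also has no termination argument. Your second swap argument correctly reduces the stuck case to $S-\sigma=S$, i.e.\ $S$ is a union of cosets of $\mathbb F_p\sigma$. But ``$|S|<q$'' and ``a suitable initial choice'' do not rule this out. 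As written, the argument uses nothing about the value $k-1=(q-1)/(p^\ell+1)$, so it would equally ``prove'' the false statement $0\in\Delta_2(\mathbb F_{2^e})$.

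The repair is short, and either of two routes works.
\begin{itemize}
\item Complete your swap argument by excluding the stuck case. For odd $p$, each coset $x+\mathbb F_p\sigma$ sums to $px+\sigma\,p(p-1)/2=0$, so a stuck $S$ would have sum $0\ne\sigma$. For $p=2$, a stuck $S$ is a union of pairs $\{x,x+\sigma\}$ and so has even size, whereas $(q-1)/(p^\ell+1)$ is odd when $q$ is even.
\item Use the paper's one-line argument. Since $k-1=(q-1)/(p^\ell+1)$ divides $q-1$, the $(k-1)$-th roots of unity form a $(k-1)$-subset of $\mathbb F_q^*$. They are the roots of $x^{k-1}-1$, so their sum is minus the coefficient of $x^{k-2}$, which is $0$ because $k-1\ge 2$.
\end{itemize}
With either fix, the rest of your proof goes through.
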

	
	\begin{proof}
		Take $t=p^\ell-1$ in row~2 of Table~\ref{tab:s2-explicit-results} and apply Theorem~\ref{thm:s2-explicit-MDS}; the evaluation set is $\mathbb F_q$. Choose $\zeta\in\Delta_{k-1}(\mathbb F_q)$, which is nonempty. Then $A_2=A_\zeta$ gives an AMDS, and hence NMDS, code by Corollary~\ref{cor:s2-Adelta}. The hull range follows from the same construction for $k\le(q-1)/(p^\ell+1)$. At $k=1+(q-1)/(p^\ell+1)$, a multiplicative subgroup of $\mathbb F_q^*$ of order $k-1>1$ has sum zero, so $0\in\Delta_{k-1}(\mathbb F_q)$. The boundary assertion in Proposition~\ref{thm:s2-hull-plus-one}(2) with $n=q$ and $\delta=\lambda=1$ gives the remaining case.
	\end{proof}
	
	\begin{corollary}\label{cor:s2-full-field-plus-one}
		Under the field and evaluation-set assumptions of Theorem~\ref{thm:s2-full-field-NMDS}, let
		$$
		k=\xi+1+\frac{q-1}{p^\ell+1},\qquad 0\le \xi\le\min\left\{1,\left\lfloor\frac{q-1}{p^\ell(p^\ell+1)}\right\rfloor\right\}.
		$$
		Then there exists a $[q+2,k,q-k+2]_q$ AMDS code with a $(k-1)$-dimensional $\ell$-Galois hull. This code is also NMDS.
	\end{corollary}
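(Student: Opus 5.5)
The plan is to apply Proposition~\ref{thm:s2-hull-plus-one}(2) to the full evaluation set $\boldsymbol a=\mathbb F_q$, and then show that the MDS alternative in its distance criterion can never occur, so the code is forced to be AMDS (hence NMDS).

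\textbf{Setup.} Realize $\mathbb F_q$ as in Theorem~\ref{thm:s2-full-field-NMDS}: take the zero-extended Construction A with $t=p^\ell-1$. Equivalently, use Construction D with $t=p^\ell$, which gives $n=q$. By Lemma~\ref{lem:3.6} (or Lemma~\ref{lem:3.1}), all Lagrange coefficients lie in $\mathbb F_{p^\ell}^*$, so we take $\delta=\lambda=1$. Since $2\ell\mid e$, the quotient $(n-\delta)/(p^\ell+1)=(q-1)/(p^\ell+1)$ is a positive integer.

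\textbf{Hull dimension.} With $k$ and $\xi$ as in the statement, Proposition~\ref{thm:s2-hull-plus-one}(2) gives a $[q+2,k]_q$ GRL code whose $\ell$-Galois hull has dimension $k-1$.

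\textbf{Distance.} By the same proposition, the code is AMDS/NMDS exactly when the relevant value lies in $\Delta_{k-1}(\mathbb F_q)$. That value is $0$ if $\xi=0$, and $a_{n-1}^{(1)}$ if $\xi=1$.

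First compute $a_{n-1}^{(1)}$ for $\boldsymbol a=\mathbb F_q$. The vanishing polynomial of $\mathbb F_q$ is $x^q-x$, so $u_i=(qa_i^{q-1}-1)^{-1}=-1$ for every $i$. Hence
\[
a_{n-1}^{(1)}=-\sum_{a\in\mathbb F_q}a^{q}=-\sum_{a\in\mathbb F_q}a=0,
\]
using $q>2$. (The paper's Remark also asserts that $a_{n-1}^{(j)}=0$ in these ranges.) So in both cases $\xi=0$ and $\xi=1$, it suffices to show $0\in\Delta_{k-1}(\mathbb F_q)$.

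\textbf{The main check: $0\in\Delta_{k-1}(\mathbb F_q)$.} Here $k-1=\xi+(q-1)/(p^\ell+1)$. Write $m=(q-1)/(p^\ell+1)$; since $q>4$, we have $m>1$. The subgroup $\mu_m\subseteq\mathbb F_q^*$ of order $m$ consists of the roots of $x^m-1$, so its elements sum to $0$.

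For $\xi=0$, take $I=\mu_m$, which has $m=k-1$ elements. This is the same argument the paper uses in Theorem~\ref{thm:s2-full-field-NMDS}.

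For $\xi=1$, take $I=\mu_m\cup\{0\}$. Since $0\notin\mu_m$, this set has exactly $m+1=k-1$ distinct elements, and its sum is still $0$.

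In either case $0\in\Delta_{k-1}(\mathbb F_q)$. Proposition~\ref{thm:s2-hull-plus-one} then yields that the code is a $[q+2,k,q-k+2]_q$ AMDS code, and it is NMDS by Proposition~\ref{prop:s2-distance}(2).

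The main obstacle is only bookkeeping: confirming that $a_{n-1}^{(1)}=0$ for the full field, and that adjoining $0$ keeps the subset size exactly $k-1$ with distinct elements.
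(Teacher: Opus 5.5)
Your proposal is correct and matches the paper's proof: apply Proposition~\ref{thm:s2-hull-plus-one}(2) with $n=q$ and $\delta=\lambda=1$, use $u_i=-1$ and $a_{q-1}^{(1)}=0$, and get $0\in\Delta_{k-1}(\mathbb F_q)$ from the zero-sum multiplicative subgroup of order $(q-1)/(p^\ell+1)>1$, adjoining $0$ when $\xi=1$. You also verify explicitly that $a_{q-1}^{(1)}=0$ and that $(q-1)/(p^\ell+1)>1$, which the paper only asserts.
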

	
	\begin{proof}
		For the evaluation set $\mathbb F_q$, we have $u_i=-1\in\mathbb F_{p^\ell}^*$ and $a_{q-1}^{(1)}=0$. A multiplicative subgroup of $\mathbb F_q^*$ of order $(q-1)/(p^\ell+1)>1$ has sum zero; adjoining $0$ gives a zero-sum subset with one more element. Thus $0\in\Delta_{k-1}(\mathbb F_q)$ for both permitted values of $\xi$. Apply Proposition~\ref{thm:s2-hull-plus-one}(2) with $n=q$ and $\delta=\lambda=1$.
	\end{proof}
	
	\noindent\textbf{Remark}.
	The same full evaluation set is obtained from Construction D with $t=p^\ell$, or from row~10 of Table~\ref{tab:s2-explicit-results} with $t=p^{\ell-r}$ and $m=(q-1)/(p^\ell-1)$. The AMDS conclusions above remain valid for these realizations. In characteristic $2$, the subset-sum set need not be all of $\mathbb F_q$; for example, $\Delta_2(\mathbb F_q)=\mathbb F_q^*$.
	
	\begin{example}
		We can construct some MDS GRL codes with certain Galois hull as Table \ref{tab:s2-mds-examples}.
		The examples are obtained from the evaluation
		sets in Constructions A--F. 
		\begin{table}[htbp]
			\centering
			\caption{Some MDS GRL codes with $s=2$ and prescribed $\ell$-Galois hulls.}
			\label{tab:s2-mds-examples}
			\small
			\setlength{\tabcolsep}{6pt}
			\renewcommand{\arraystretch}{1.25}
			\begin{tabular}{c c c c c c}
				\hline
				$q$ & $p^\ell$ & $n$ & $[n+2,k,d]_q$ & $h$ & Ref. \\
				\hline
				$169$ & $13$ & $43$ & $[45,3,43]_{169}$ & $0\le h\le1$ & Thm.~\ref{thm:s2-explicit-MDS} \\
				$81$ & $3$ & $11$ & $[13,3,11]_{81}$ & $0\le h\le1$ & Thm.~\ref{thm:s2-explicit-MDS} \\
				$64$ & $2$ & $9$ & $[11,5,7]_{64}$ & $h=3$ & Prop.~\ref{thm:s2-hull-plus-one} \\
				$256$ & $2$ & $16$ & $[18,5,14]_{256}$ & $0\le h\le3$ & Thm.~\ref{thm:s2-explicit-MDS} \\
				$64$ & $2$ & $32$ & $[34,11,24]_{64}$ & $0\le h\le9$ & Thm.~\ref{thm:s2-explicit-MDS} \\
				$81$ & $3$ & $27$ & $[29,7,23]_{81}$ & $0\le h\le5$ & Thm.~\ref{thm:s2-explicit-MDS} \\
				$256$ & $2$ & $128$ & $[130,43,88]_{256}$ & $0\le h\le41$ & Thm.~\ref{thm:s2-explicit-MDS} \\
				$256$ & $4$ & $64$ & $[66,13,54]_{256}$ & $0\le h\le11$ & Thm.~\ref{thm:s2-explicit-MDS} \\
				$625$ & $5$ & $125$ & $[127,21,107]_{625}$ & $0\le h\le19$ & Thm.~\ref{thm:s2-explicit-MDS} \\
				$1024$ & $2$ & $512$ & $[514,171,344]_{1024}$ & $0\le h\le169$ & Thm.~\ref{thm:s2-explicit-MDS} \\
				$2401$ & $7$ & $343$ & $[345,43,303]_{2401}$ & $0\le h\le41$ & Thm.~\ref{thm:s2-explicit-MDS} \\
				$4096$ & $2$ & $2048$ & $[2050,683,1368]_{4096}$ & $0\le h\le681$ & Thm.~\ref{thm:s2-explicit-MDS} \\
				$4096$ & $8$ & $512$ & $[514,57,458]_{4096}$ & $0\le h\le55$ & Thm.~\ref{thm:s2-explicit-MDS} \\
				$6561$ & $9$ & $729$ & $[731,73,659]_{6561}$ & $0\le h\le71$ & Thm.~\ref{thm:s2-explicit-MDS} \\
				$625$ & $5$ & $25$ & $[27,4,24]_{625}$ & $0\le h\le2$ & Thm.~\ref{thm:s2-explicit-MDS} \\
				$6561$ & $9$ & $60$ & $[62,6,57]_{6561}$ & $0\le h\le3$ & Thm.~\ref{thm:s2-explicit-MDS} \\
				\hline
			\end{tabular}
		\end{table}
	\end{example}
	\begin{example}
		We can also construct many AMDS GRL codes with certain Galois hull , see Table~\ref{tab:s2-amds-good-examples}.
		The extension matrix is chosen so that the resulting code is
		$[n+2,k,n-k+2]_q$ AMDS.
		{\small
			\setlength{\tabcolsep}{5pt}
			\renewcommand{\arraystretch}{1.18}
			\begin{longtable}{c c c c c c}
				\caption{Some AMDS GRL codes with $s=2$ and prescribed $\ell$-Galois hulls.}
				\label{tab:s2-amds-good-examples}\\
				\toprule
				$q$ & $p^\ell$ & $n$ & $[n+2,k,d]_q$ & $h$ & Ref. \\
				\midrule
				\endfirsthead
				
				\multicolumn{6}{c}{\tablename\ \thetable\ -- continued}\\
				\toprule
				$q$ & $p^\ell$ & $n$ & $[n+2,k,d]_q$ & $h$ & Ref. \\
				\midrule
				\endhead
				
				\midrule
				\multicolumn{6}{r}{Continued on next page}\\
				\endfoot
				
				\bottomrule
				\endlastfoot
				$64$ & $2$ & $32$ & $[34,11,23]_{64}$ & $0\le h\le 9$ & Thm.~\ref{thm:s2-explicit-MDS} \\
				$64$ & $2$ & $64$ & $[66,22,44]_{64}$ & $0\le h\le 21$ & Thm.~\ref{thm:s2-full-field-NMDS} \& Cor.~\ref{cor:s2-full-field-plus-one} \\
				$81$ & $3$ & $54$ & $[56,14,42]_{81}$ & $0\le h\le 12$ & Thm.~\ref{thm:s2-explicit-MDS} \\
				$81$ & $3$ & $81$ & $[83,21,62]_{81}$ & $0\le h\le 20$ & Thm.~\ref{thm:s2-full-field-NMDS} \& Cor.~\ref{cor:s2-full-field-plus-one} \\
				$256$ & $2$ & $16$ & $[18,6,12]_{256}$ & $0\le h\le 5$ & Prop.~\ref{thm:s2-hull-plus-one} \\
				$256$ & $2$ & $128$ & $[130,43,87]_{256}$ & $0\le h\le 41$ & Thm.~\ref{thm:s2-explicit-MDS} \\
				$256$ & $2$ & $256$ & $[258,86,172]_{256}$ & $0\le h\le 85$ & Thm.~\ref{thm:s2-full-field-NMDS} \& Cor.~\ref{cor:s2-full-field-plus-one} \\
				$256$ & $4$ & $192$ & $[194,39,155]_{256}$ & $0\le h\le 37$ & Thm.~\ref{thm:s2-explicit-MDS} \\
				$256$ & $4$ & $256$ & $[258,52,206]_{256}$ & $0\le h\le 51$ & Thm.~\ref{thm:s2-full-field-NMDS} \& Cor.~\ref{cor:s2-full-field-plus-one} \\
				$256$ & $16$ & $256$ & $[258,16,242]_{256}$ & $0\le h\le 15$ & Thm.~\ref{thm:s2-full-field-NMDS} \& Cor.~\ref{cor:s2-full-field-plus-one} \\
				$625$ & $5$ & $157$ & $[159,27,132]_{625}$ & $0\le h\le 26$ & Prop.~\ref{thm:s2-hull-plus-one} \\
				$625$ & $5$ & $469$ & $[471,79,392]_{625}$ & $0\le h\le 78$ & Prop.~\ref{thm:s2-hull-plus-one} \\
				$625$ & $5$ & $500$ & $[502,84,418]_{625}$ & $0\le h\le 82$ & Thm.~\ref{thm:s2-explicit-MDS} \\
				$625$ & $5$ & $625$ & $[627,105,522]_{625}$ & $0\le h\le 104$ & Thm.~\ref{thm:s2-full-field-NMDS} \& Cor.~\ref{cor:s2-full-field-plus-one} \\
				$625$ & $25$ & $625$ & $[627,25,602]_{625}$ & $0\le h\le 24$ & Thm.~\ref{thm:s2-full-field-NMDS} \& Cor.~\ref{cor:s2-full-field-plus-one} \\
				$729$ & $3$ & $486$ & $[488,122,366]_{729}$ & $0\le h\le 120$ & Thm.~\ref{thm:s2-explicit-MDS} \\
				$729$ & $3$ & $729$ & $[731,183,548]_{729}$ & $0\le h\le 182$ & Thm.~\ref{thm:s2-full-field-NMDS} \& Cor.~\ref{cor:s2-full-field-plus-one} \\
				$1024$ & $2$ & $512$ & $[514,171,343]_{1024}$ & $0\le h\le 169$ & Thm.~\ref{thm:s2-explicit-MDS} \\
				$1024$ & $2$ & $1024$ & $[1026,342,684]_{1024}$ & $0\le h\le 341$ & Thm.~\ref{thm:s2-full-field-NMDS} \& Cor.~\ref{cor:s2-full-field-plus-one} \\
				$1024$ & $32$ & $1024$ & $[1026,32,994]_{1024}$ & $0\le h\le 31$ & Thm.~\ref{thm:s2-full-field-NMDS} \& Cor.~\ref{cor:s2-full-field-plus-one} \\
				$2401$ & $7$ & $1601$ & $[1603,201,1402]_{2401}$ & $0\le h\le 200$ & Prop.~\ref{thm:s2-hull-plus-one} \\
				$2401$ & $7$ & $2058$ & $[2060,258,1802]_{2401}$ & $0\le h\le 256$ & Thm.~\ref{thm:s2-explicit-MDS} \\
				$2401$ & $7$ & $2401$ & $[2403,301,2102]_{2401}$ & $0\le h\le 300$ & Thm.~\ref{thm:s2-full-field-NMDS} \& Cor.~\ref{cor:s2-full-field-plus-one} \\
				$2401$ & $49$ & $2401$ & $[2403,49,2354]_{2401}$ & $0\le h\le 48$ & Thm.~\ref{thm:s2-full-field-NMDS} \& Cor.~\ref{cor:s2-full-field-plus-one} \\
				$4096$ & $2$ & $2048$ & $[2050,683,1367]_{4096}$ & $0\le h\le 681$ & Thm.~\ref{thm:s2-explicit-MDS} \\
				$4096$ & $2$ & $4096$ & $[4098,1366,2732]_{4096}$ & $0\le h\le 1365$ & Thm.~\ref{thm:s2-full-field-NMDS} \& Cor.~\ref{cor:s2-full-field-plus-one} \\
				$4096$ & $4$ & $1366$ & $[1368,274,1094]_{4096}$ & $0\le h\le 273$ & Prop.~\ref{thm:s2-hull-plus-one} \\
				$4096$ & $4$ & $3072$ & $[3074,615,2459]_{4096}$ & $0\le h\le 613$ & Thm.~\ref{thm:s2-explicit-MDS} \\
				$4096$ & $4$ & $4096$ & $[4098,820,3278]_{4096}$ & $0\le h\le 819$ & Thm.~\ref{thm:s2-full-field-NMDS} \& Cor.~\ref{cor:s2-full-field-plus-one} \\
				$4096$ & $8$ & $2926$ & $[2928,325,2603]_{4096}$ & $0\le h\le 323$ & Thm.~\ref{thm:s2-explicit-MDS} \\
				$4096$ & $8$ & $3584$ & $[3586,399,3187]_{4096}$ & $0\le h\le 397$ & Thm.~\ref{thm:s2-explicit-MDS} \\
				$4096$ & $8$ & $4096$ & $[4098,456,3642]_{4096}$ & $0\le h\le 455$ & Thm.~\ref{thm:s2-full-field-NMDS} \& Cor.~\ref{cor:s2-full-field-plus-one} \\
				$4096$ & $64$ & $4096$ & $[4098,64,4034]_{4096}$ & $0\le h\le 63$ & Thm.~\ref{thm:s2-full-field-NMDS} \& Cor.~\ref{cor:s2-full-field-plus-one} \\
				$6561$ & $3$ & $4374$ & $[4376,1094,3282]_{6561}$ & $0\le h\le 1092$ & Thm.~\ref{thm:s2-explicit-MDS} \\
				$6561$ & $3$ & $6561$ & $[6563,1641,4922]_{6561}$ & $0\le h\le 1640$ & Thm.~\ref{thm:s2-full-field-NMDS} \& Cor.~\ref{cor:s2-full-field-plus-one} \\
				$6561$ & $9$ & $821$ & $[823,83,740]_{6561}$ & $0\le h\le 82$ & Prop.~\ref{thm:s2-hull-plus-one} \\
				$6561$ & $9$ & $4101$ & $[4103,411,3692]_{6561}$ & $0\le h\le 410$ & Prop.~\ref{thm:s2-hull-plus-one} \\
				$6561$ & $9$ & $5832$ & $[5834,584,5250]_{6561}$ & $0\le h\le 582$ & Thm.~\ref{thm:s2-explicit-MDS} \\
				$6561$ & $9$ & $6561$ & $[6563,657,5906]_{6561}$ & $0\le h\le 656$ & Thm.~\ref{thm:s2-full-field-NMDS} \& Cor.~\ref{cor:s2-full-field-plus-one} \\
				$6561$ & $81$ & $6561$ & $[6563,81,6482]_{6561}$ & $0\le h\le 80$ & Thm.~\ref{thm:s2-full-field-NMDS} \& Cor.~\ref{cor:s2-full-field-plus-one} \\
				$15625$ & $5$ & $11719$ & $[11721,1954,9767]_{15625}$ & $0\le h\le 1953$ & Prop.~\ref{thm:s2-hull-plus-one} \\
				$15625$ & $5$ & $12500$ & $[12502,2084,10418]_{15625}$ & $0\le h\le 2082$ & Thm.~\ref{thm:s2-explicit-MDS} \\
				$15625$ & $5$ & $15625$ & $[15627,2605,13022]_{15625}$ & $0\le h\le 2604$ & Thm.~\ref{thm:s2-full-field-NMDS} \& Cor.~\ref{cor:s2-full-field-plus-one} \\
				$15625$ & $125$ & $15625$ & $[15627,125,15502]_{15625}$ & $0\le h\le 124$ & Thm.~\ref{thm:s2-full-field-NMDS} \& Cor.~\ref{cor:s2-full-field-plus-one} \\
			\end{longtable}
		}

	\end{example}

	\newpage
	\section{Galois Hulls of GRL Codes When $s=3$}\label{sec:s3}
	
	We now consider $C=\operatorname{GRL}_k(\boldsymbol a,\boldsymbol v,A_3)$, where $4\le k\le n$ and
	$$
	A_3=\begin{pmatrix}a_{11}&a_{12}&a_{13}\\a_{21}&a_{22}&a_{23}\\a_{31}&a_{32}&a_{33}\end{pmatrix}
	=(\boldsymbol a_1,\boldsymbol a_2,\boldsymbol a_3)\in\operatorname{GL}_3(\mathbb F_q),\qquad
	\boldsymbol a_j=(a_{1j},a_{2j},a_{3j})^T.
	$$
	The last three coordinates of the codeword associated with $f$ are $(f_{k-3},f_{k-2},f_{k-1})A_3$. For $S\subseteq\{1,\ldots,n\}$, put $e_1(S)=\sum_{i\in S}a_i$ and $e_2(S)=\sum_{i,j\in S,\,i<j}a_ia_j$. Throughout this section, $I,J\subseteq\{1,\ldots,n\}$ satisfy $|I|=k-2$ and $|J|=k-1$, respectively. Define
	$$
	\boldsymbol\omega_I=(1,e_1(I),e_1(I)^2-e_2(I))^T,\qquad
	\boldsymbol\gamma_J=(e_2(J),-e_1(J),1)^T.
	$$
	
	We can describe the distance of $\operatorname{GRL}_k(\boldsymbol a,\boldsymbol v,A_3)$ as follows:
	
	\begin{lemma}\label{thm:s3-MDS}\cite{LiangWanLiao2026GRL}
		The code $C$ is an $[n+3,k,n-k+4]_q$ MDS code if and only if
		\begin{enumerate}
			\item[(1)] $\det(\boldsymbol a_r,\boldsymbol a_t,\boldsymbol\omega_I)\ne0$ for every $I$ and every $1\le r<t\le3$;
			\item[(2)] $\boldsymbol a_r^T\boldsymbol\gamma_J\ne0$ for every $J$ and every $1\le r\le3$.
		\end{enumerate}
	\end{lemma}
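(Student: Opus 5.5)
We use Proposition~\ref{prop:minimum-distance-rank} with $d=n-k+4$ for the $[n+3,k]$ code $C$. Condition (1) of that proposition is automatic: any $k-1$ columns have rank less than $k$. So $C$ is MDS if and only if every $k$ columns of $G_k(\boldsymbol a,\boldsymbol v,A_3)$ are linearly independent. The multipliers $v_i$ only scale columns, so we may take $\boldsymbol v=(1,\ldots,1)$. A choice of $k$ columns consists of $k-j$ evaluation columns indexed by a set $S$ with $|S|=k-j$, together with $j$ of the three extension columns, where $0\le j\le3$. The case $j=0$ is a Vandermonde minor and is always nonsingular.

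For $j\ge1$ we argue dually through polynomials. The chosen columns are dependent if and only if some nonzero $f$ with $\deg f\le k-1$ vanishes at all $a_i$, $i\in S$, and satisfies $(f_{k-3},f_{k-2},f_{k-1})\boldsymbol a_r=0$ for each chosen $r$. Such an $f$ is divisible by $P_S(x)=\prod_{i\in S}(x-a_i)$, so write $f=P_Sc$ with $\deg c\le j-1$.

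\textbf{Case $j=1$.} Here $|S|=k-1=|J|$ and $c$ is a constant. The top coefficients of $P_J$ are $(f_{k-3},f_{k-2},f_{k-1})=(e_2(J),-e_1(J),1)=\boldsymbol\gamma_J^T$, by Vieta's formulas. A dependence therefore occurs exactly when $\boldsymbol a_r^T\boldsymbol\gamma_J=0$, which gives condition (2).

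\textbf{Case $j=2$.} Here $|S|=k-2=|I|$ and $c=\alpha x+\beta$. The coefficient vector $(f_{k-3},f_{k-2},f_{k-1})$ ranges over the span of the top coefficients of $xP_I$ and of $P_I$. These are $(e_2(I),-e_1(I),1)$ and $(-e_1(I),1,0)$, a two-dimensional space $W_I$. A dependence means some nonzero $\boldsymbol w\in W_I$ is orthogonal to both $\boldsymbol a_r$ and $\boldsymbol a_t$. Since the $\boldsymbol a$'s are independent, $\{\boldsymbol a_r,\boldsymbol a_t\}^\perp$ is a line. The condition is therefore that this line meets $W_I$ nontrivially, i.e.\ $\langle\boldsymbol a_r,\boldsymbol a_t\rangle^\perp\subseteq W_I$, equivalently $W_I^\perp\subseteq\langle\boldsymbol a_r,\boldsymbol a_t\rangle$. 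The key computation is that $W_I^\perp$ is spanned by $\boldsymbol\omega_I$. This holds because $(1,e_1,e_1^2-e_2)$ is orthogonal to $(-e_1,1,0)$ and to $(e_2,-e_1,1)$. Hence a dependence occurs if and only if $\det(\boldsymbol a_r,\boldsymbol a_t,\boldsymbol\omega_I)=0$, which gives condition (1).

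\textbf{Case $j=3$.} Here $f$ has all three top coefficients orthogonal to the whole basis of $\mathbb F_q^3$, so $f_{k-3}=f_{k-2}=f_{k-1}=0$. Then $\deg f\le k-4<|S|=k-3$ while $f$ vanishes on $S$, which forces $f=0$. So no condition arises.

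Combining the four cases gives the equivalence. The main obstacle is Case $j=2$, namely identifying the normal vector $\boldsymbol\omega_I$ of $W_I$ and translating "a line meets a plane" into a determinant. The remaining cases are routine. One should also check that when $k-2>n$ or $k-1>n$ the relevant cases are vacuous, but $k\le n$ keeps all the index sets meaningful.
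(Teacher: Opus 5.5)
Your proof is correct and complete. The paper does not prove this lemma itself; it imports it from \cite{LiangWanLiao2026GRL}. The only related argument in the paper is the proof of Lemma~\ref{thm:s3-AMDS}. That proof uses exactly your $j=1$ reasoning: a codeword with $k-1$ evaluation zeros indexed by $J$ is a multiple of $\prod_{j\in J}(x-a_j)$, so its appended part is a multiple of $\boldsymbol\gamma_J^TA_3$. For everything else it defers to the present lemma. Your proof carries the same polynomial viewpoint over to the case of two appended columns, which is where the real content lies.

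The natural alternative is to compute each $k\times k$ minor directly. Expand by Laplace along the two appended columns. The complementary generalized Vandermonde minors on the rows that remain are $V_I$, $V_I\,e_1(I)$ and $V_I\,(e_1(I)^2-e_2(I))$, where $V_I$ is the Vandermonde determinant on $I$. These recombine into $\pm V_I\det(\boldsymbol a_r,\boldsymbol a_t,\boldsymbol\omega_I)$. That route gives an explicit formula for the minor, but it needs the Schur-type identities for Vandermonde determinants with a missing row. Yours avoids those identities and explains why $\boldsymbol\omega_I$ appears: it is the normal vector of the plane $W_I$ of possible top-coefficient vectors.

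The orthogonal-complement step is valid over $\mathbb F_q$ even though isotropic vectors exist. The standard bilinear form on $\mathbb F_q^3$ is nondegenerate, so $\dim U^\perp=3-\dim U$ and $U^{\perp\perp}=U$ still hold. One point you use only implicitly is worth stating: because $\dim W_I=2$, the map $(\alpha,\beta)\mapsto\boldsymbol w$ is injective. That is what makes a nonzero $f=P_I(\alpha x+\beta)$ correspond to a nonzero $\boldsymbol w$, and conversely.
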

	
	\begin{lemma}\label{thm:s3-AMDS}
		
		The code $C$ is an $[n+3,k,n-k+3]_q$ AMDS code if and only if both of the following conditions hold:
		\begin{enumerate}
			\item[(1)] $\det(\boldsymbol a_r,\boldsymbol a_t,\boldsymbol\omega_I)=0$ for some $I$ and $1\le r<t\le3$, or $\boldsymbol a_r^T\boldsymbol\gamma_J=0$ for some $J$ and $1\le r\le3$;
			\item[(2)] for every $J$, at most one of $\boldsymbol a_1^T\boldsymbol\gamma_J$, $\boldsymbol a_2^T\boldsymbol\gamma_J$ and $\boldsymbol a_3^T\boldsymbol\gamma_J$ is zero.
		\end{enumerate}
	\end{lemma}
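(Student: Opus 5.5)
The plan is to apply Proposition~\ref{prop:minimum-distance-rank} to the generator matrix $G_k(\boldsymbol a,\boldsymbol v,A_3)$, which has length $N=n+3$. With $d=n-k+3$ we get $N-d=k$, so $C$ is AMDS if and only if two things hold: some $k$ columns have rank $<k$, and every $k+1$ columns have rank $k$. The Singleton bound gives $d\le n-k+4$, so the first requirement says exactly that $C$ is not MDS. By Lemma~\ref{thm:s3-MDS}, this is the negation of its conditions (1)--(2), which is condition (1) of the present statement. The real work is therefore to show that ``every $k+1$ columns have rank $k$'' is equivalent to condition (2).

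For this I would use the left-kernel description. A set $S$ of columns has rank $<k$ iff there is a nonzero $f\in\mathbb F_q[x]_{<k}$ whose codeword $\boldsymbol c_f$ vanishes on $S$. The multipliers $v_i\ne0$ are irrelevant, so vanishing on evaluation column $i$ means $f(a_i)=0$, and vanishing on extension column $r$ means $\boldsymbol s_k(f)\boldsymbol a_r=0$. I would split the $(k+1)$-subsets according to the number $m\in\{0,1,2,3\}$ of extension columns they contain.
\begin{itemize}
\item If $m\le1$, then $S$ contains at least $k$ evaluation columns, and these have full rank $k$ (Vandermonde).
\item If $m=3$, then $f$ vanishes on $k-2$ points $I$ and $\boldsymbol s_k(f)A_3=0$. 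Since $A_3$ is invertible, $f_{k-3}=f_{k-2}=f_{k-1}=0$. Then $\deg f\le k-4$ while $f$ has $k-2$ roots, so $f=0$. Hence these subsets never cause rank deficiency.
\end{itemize}

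The decisive case is $m=2$, with extension columns $r<t$ and a $(k-1)$-subset $J$ of evaluation columns. Here $f=(\alpha x+\beta)\prod_{j\in J}(x-a_j)$. Expanding the top three coefficients gives
$$
\boldsymbol s_k(f)^T=\alpha\,\boldsymbol\gamma_J+\beta\,\boldsymbol\delta_J,\qquad \boldsymbol\delta_J=(-e_1(J),1,0)^T .
$$
Rank deficiency therefore means that the $2\times2$ system
$$
\alpha\,\boldsymbol a_u^T\boldsymbol\gamma_J+\beta\,\boldsymbol a_u^T\boldsymbol\delta_J=0,\qquad u\in\{r,t\},
$$
has a nontrivial solution $(\alpha,\beta)$. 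Equivalently, the plane $\langle\boldsymbol\gamma_J,\boldsymbol\delta_J\rangle$ meets the line $\langle\boldsymbol a_r,\boldsymbol a_t\rangle^\perp$. A direct computation gives $\boldsymbol\gamma_J\times\boldsymbol\delta_J=-(1,e_1(J),e_1(J)^2-e_2(J))^T$, which is an $\boldsymbol\omega$-type vector. So the condition becomes $\det(\boldsymbol a_r,\boldsymbol a_t,\boldsymbol\omega_J)=0$ with $|J|=k-1$. If $\boldsymbol a_r^T\boldsymbol\gamma_J=\boldsymbol a_t^T\boldsymbol\gamma_J=0$, then $(\alpha,\beta)=(1,0)$ is a solution, so two vanishing inner products certainly destroy the $(k+1)$-column property. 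This gives one direction of the equivalence with condition (2).

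The main obstacle is the converse. I must show that a nontrivial solution with $\beta\ne0$ either forces two of the $\boldsymbol a_u^T\boldsymbol\gamma_J$ to vanish, or can be transported to another $(k-1)$-subset $J'$ with $\beta=0$. The transport would replace the linear factor $\alpha x+\beta$ by a root $a_i$ when $-\beta/\alpha\in\{a_1,\ldots,a_n\}$, i.e.\ exchange a point of $J$. The case where the root lies outside the evaluation set (or $\alpha=0$, giving $\boldsymbol s_k(f)=\boldsymbol\delta_J$) has to be handled separately. If this reduction cannot be completed in general, I would state the $m=2$ criterion in the determinant form above. I would then check that condition (2), combined with condition (1) as used in the paper, is precisely what this determinant form amounts to.
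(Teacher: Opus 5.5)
Your reduction is sound up to the decisive case. Via Proposition~\ref{prop:minimum-distance-rank}, condition (1) is exactly the failure of the MDS property, and your treatment of $(k+1)$-subsets with $m\le1$ or $m=3$ extension columns is correct.

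\textbf{The gap is in the case $m=2$.} There $S$ contains only $k-1$ evaluation columns. A nonzero $f\in\mathbb F_q[x]_{<k}$ vanishing at the points indexed by $J$ is divisible by $\prod_{j\in J}(x-a_j)$, which already has degree $k-1$. Hence $f=\alpha\prod_{j\in J}(x-a_j)$ with $\alpha\in\mathbb F_q^*$ a constant. Your family $(\alpha x+\beta)\prod_{j\in J}(x-a_j)$ has degree $k$ when $\alpha\ne0$, so it does not consist of codeword polynomials.

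Your formula $\boldsymbol s_k(f)^T=\alpha\boldsymbol\gamma_J+\beta\boldsymbol\delta_J$, with $\boldsymbol\delta_J=(-e_1(J),1,0)^T$, is in fact the correct one for $f=(\alpha x+\beta)\prod_{i\in I}(x-a_i)$ with $|I|=k-2$. That is the case of $k$-column subsets containing two extension columns. This is why your cross product reproduces the vector $\boldsymbol\omega_I$ of Lemma~\ref{thm:s3-MDS}(1), which governs the MDS property, not the $(k+1)$-column property.

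Consequently, the converse you flag as the main obstacle is not merely unproved; it is false in your model. The vector $\boldsymbol\omega_J$ is orthogonal to both $\boldsymbol\gamma_J$ and $\boldsymbol\delta_J$. So if $\boldsymbol a_r$ is proportional to $\boldsymbol\omega_J$, your $2\times2$ system is singular even when $\boldsymbol a_t^T\boldsymbol\gamma_J\ne0$. In that situation the corresponding $k+1$ columns of the actual generator matrix have full rank. No root-transport argument can repair this, and the fallback of ``checking'' the determinant form against (1) and (2) is not a proof.

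\textbf{The fix.} With the correct parametrization the obstacle disappears: $\boldsymbol s_k(f)^T=\alpha\boldsymbol\gamma_J$ with $\alpha\ne0$. So $S$ is rank-deficient if and only if $\boldsymbol a_r^T\boldsymbol\gamma_J=\boldsymbol a_t^T\boldsymbol\gamma_J=0$. Hence every $k+1$ columns have rank $k$ if and only if condition (2) holds, and together with your first step this proves the lemma. Your direction ``two vanishing inner products give rank deficiency'' also becomes correct once the witness is taken to be $\prod_{j\in J}(x-a_j)$ itself, rather than the degree-$k$ polynomial that $(\alpha,\beta)=(1,0)$ corresponds to in your parametrization.

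The corrected argument is the column-rank dual of the paper's proof. The paper bounds the weight of $\boldsymbol c_f$ according to whether $f$ has at most $k-3$, exactly $k-2$, or exactly $k-1$ evaluation zeros. It uses the same key fact: $k-1$ zeros force $f$ to be a scalar multiple of $\prod_{j\in J}(x-a_j)$.
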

	
	\begin{proof}
		This lemma follows from Lemma~\ref{thm:s3-MDS} and \cite{WuHengLiDing2024}. We give a short proof.
		
		By Lemma~\ref{thm:s3-MDS}, condition (1) is equivalent to $C$ not being MDS, and hence $d(C)\le n-k+3$. For any nonzero $f(x)\in\mathbb F_q[x]_{<k}$, if $f$ has at most $k-3$ evaluation zeros, its codeword has weight at least $n-k+3$. If $f$ has exactly $k-2$ evaluation zeros, then $\deg f\ge k-2$, so its last three coefficients are not all zero. Since $A_3$ is nonsingular, at least one appended coordinate is nonzero, and the weight is again at least $n-k+3$. If $f$ has $k-1$ evaluation zeros indexed by $J$, then $f(x)$ is a nonzero scalar multiple of $\prod_{j\in J}(x-a_j)$. Its appended coordinates are therefore a nonzero scalar multiple of $\boldsymbol\gamma_J^TA_3$. Thus its weight is at least $n-k+3$ if and only if condition (2) holds.
		
		Consequently, condition (2) is equivalent to $d(C)\ge n-k+3$. Combining this with condition (1) gives $d(C)=n-k+3$, as required.
	\end{proof}
	
	\subsection{Prescribed Galois hulls}
	
	Let $q=p^e>4$, where $p$ is a prime, and assume that $2\ell\mid e$. We first consider the ordinary ranges, in which $A_3$ is arbitrary.
	
	\begin{corollary}\label{cor:s3-arbitrary-hull}
		Assume that one of the following conditions holds:
		\begin{enumerate}
			\item[(1)] $a_i\ne0$ and $a_i^{-1}u_i\in\mathbb F_{p^\ell}^*$ for $1\le i\le n$, with
			$$
			4\le k\le\left\lfloor\frac{n+p^\ell}{p^\ell+1}\right\rfloor,\qquad 0\le h\le k-4;
			$$
			\item[(2)] $\lambda a_i^{\delta-1}u_i\in\mathbb F_{p^\ell}^*$ for $1\le i\le n$, where $\delta\in\mathbb Z_{>0}$ and $\lambda\in\mathbb F_q^*$, with
			$$
			4\le k\le\left\lfloor\frac{n+p^\ell-\delta}{p^\ell+1}\right\rfloor,\qquad 0\le h\le k-3.
			$$
		\end{enumerate}
		If $A_3$ satisfies Lemma~\ref{thm:s3-MDS}, there exists an $[n+3,k,n-k+4]_q$ MDS GRL code $C$ with $\dim(\operatorname{Hull}_\ell(C))=h$. If $A_3$ satisfies Lemma~\ref{thm:s3-AMDS}, there exists an $[n+3,k,n-k+3]_q$ AMDS GRL code with the same hull dimension.
	\end{corollary}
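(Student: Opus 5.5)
The plan is to apply the common hull constructions with $s=3$, and then read off the minimum distance from the criteria for $s=3$. The key observation is that Propositions~\ref{thm:common-nonzero} and~\ref{thm:common-direct} allow an \emph{arbitrary} nonsingular extension matrix $A_s$. So the hull control and the distance control can be decoupled. The distance criteria in Lemmas~\ref{thm:s3-MDS} and~\ref{thm:s3-AMDS} involve only $\boldsymbol a$ and $A_3$, through $\boldsymbol\omega_I$ and $\boldsymbol\gamma_J$. They do not involve the column multipliers $\boldsymbol v$, which is exactly where the hull construction does its work.

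First, in case (1), I will specialize Proposition~\ref{thm:common-nonzero} to $s=3$. Its range $s+1\le k\le\lfloor (n+p^\ell)/(p^\ell+1)\rfloor$, $0\le h\le k-s-1$ becomes $4\le k\le\lfloor(n+p^\ell)/(p^\ell+1)\rfloor$, $0\le h\le k-4$, as stated. Likewise, in case (2), Proposition~\ref{thm:common-direct} with $s=3$ gives $3<k\le\lfloor(n+p^\ell-\delta)/(p^\ell+1)\rfloor$ and $0\le h\le k-3$. In both cases I fix the given $A_3$ satisfying Lemma~\ref{thm:s3-MDS} (respectively Lemma~\ref{thm:s3-AMDS}).

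The proposition then produces a code $C$ with $\dim\operatorname{Hull}_\ell(C)=h$. This code has the form $\operatorname{GRL}_k(\boldsymbol a,\boldsymbol v',A_3)$, where $\boldsymbol v'$ is obtained from the chosen $\boldsymbol v$ by rescaling the first $z$ entries by $\beta$. In particular it is still a GRL code with the same evaluation vector $\boldsymbol a$, some vector of nonzero multipliers, and the same $A_3$.

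Second, I will check that multiplier independence actually holds, since this is the only step that needs a justification rather than a citation. Scaling column $i$ of the generator matrix by a nonzero scalar preserves the ranks of every set of columns. By Proposition~\ref{prop:minimum-distance-rank}, the minimum distance is therefore unchanged. Consequently the MDS/AMDS characterizations hold for every $\boldsymbol v\in(\mathbb F_q^*)^n$, matching how they are stated (and the short proof of Lemma~\ref{thm:s3-AMDS} visibly never uses $\boldsymbol v$). Hence $C$ is $[n+3,k,n-k+4]_q$ MDS or $[n+3,k,n-k+3]_q$ AMDS, according to which lemma $A_3$ satisfies.

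I do not expect a real obstacle here; the argument is a direct combination. The one point needing care is the range bookkeeping: the lemmas assume $4\le k\le n$, while the upper bounds on $k$ here are well below $n$. A minor caveat is that the statement is conditional on the existence of a suitable $A_3$, so no existence of such matrices needs to be proven at this stage.
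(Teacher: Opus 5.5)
Your proposal is correct and follows essentially the same route as the paper: specialize Propositions~\ref{thm:common-nonzero} and~\ref{thm:common-direct} to $s=3$ with the given $A_3$, then note that the multiplier changes only rescale the evaluation columns by nonzero scalars, so every column-rank condition behind Lemmas~\ref{thm:s3-MDS} and~\ref{thm:s3-AMDS} is preserved. Your appeal to Proposition~\ref{prop:minimum-distance-rank} and your check that $k\le n$ simply make explicit what the paper's two-line proof leaves implicit.
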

	
	\begin{proof}
		Apply Proposition~\ref{thm:common-nonzero} or~\ref{thm:common-direct} with $s=3$. The change of nonzero multipliers used to prescribe $h$ scales the evaluation columns and therefore preserves all the rank conditions in Lemmas~\ref{thm:s3-MDS} and~\ref{thm:s3-AMDS}.
	\end{proof}
	
	For a nonsingular diagonal matrix $A_3$, the distance criteria reduce to the following conditions on $e_1$ and $e_2$:
	\begin{equation}\label{eq:s3-diagonal-MDS}
		\begin{array}{ll}
			e_1(I)\ne0,\quad e_1(I)^2-e_2(I)\ne0,&\text{for every }I,\\
			e_1(J)\ne0,\quad e_2(J)\ne0,&\text{for every }J.
		\end{array}
	\end{equation}
	Namely, $C$ is MDS if and only if all the conditions in \eqref{eq:s3-diagonal-MDS} hold. It is AMDS if and only if at least one of them fails and $(e_1(J),e_2(J))\ne(0,0)$ for every $J$.

	For the exceptional ranges, the matrices in the common constructions can be written as
	\begin{equation}\label{eq:s3-exceptional-matrices}
		A_3=\begin{cases}
			\operatorname{diag}(1,1,\mu),&\xi=0,\\[2pt]
			\begin{pmatrix}1&0&0\\0&\mu&0\\0&\mu a_{n-1}^{(1)}&1\end{pmatrix},&\xi=1,\\[6pt]
			\begin{pmatrix}\mu&0&0\\\mu a_{n-1}^{(1)}&1&0\\\mu a_{n-1}^{(2)}&0&1\end{pmatrix},&\xi=2.
		\end{cases}
	\end{equation}
	
	Their distance conditions are summarized in Table~\ref{tab:s3-exceptional-distance}. 
	
	\begin{table}[htbp]
		\centering
		\caption{Distance criteria for the matrices in \eqref{eq:s3-exceptional-matrices}, with $|I|=k-2$ and $|J|=k-1$.}
		\label{tab:s3-exceptional-distance}
		\small
		\setlength{\tabcolsep}{6pt}
		\renewcommand{\arraystretch}{1.5}
		\begin{tabular}{@{}c c c@{}}
			\toprule
			$\xi$ & MDS conditions & Additional AMDS condition \\
			\midrule
			$0$ &
			$\begin{gathered}
				e_1(I)\ne0,\quad e_1(I)^2-e_2(I)\ne0,\\
				e_1(J)\ne0,\quad e_2(J)\ne0
			\end{gathered}$ &
			$(e_1(J),e_2(J))\ne(0,0)$ \\
			\midrule
			$1$ &
			$\begin{gathered}
				e_1(I)\ne0,\quad e_1(I)^2-e_2(I)-a_{n-1}^{(1)}e_1(I)\ne0,\\
				e_1(J)\ne a_{n-1}^{(1)},\quad e_2(J)\ne0
			\end{gathered}$ &
			$(e_1(J)-a_{n-1}^{(1)},e_2(J))\ne(0,0)$ \\
			\midrule
			$2$ &
			$\begin{gathered}
				e_1(I)\ne a_{n-1}^{(1)},\quad e_1(I)^2-e_2(I)\ne a_{n-1}^{(2)},\\
				e_1(J)\ne0,\quad e_2(J)-a_{n-1}^{(1)}e_1(J)+a_{n-1}^{(2)}\ne0
			\end{gathered}$ &
			$(e_1(J),e_2(J)+a_{n-1}^{(2)})\ne(0,0)$ \\
			\bottomrule
		\end{tabular}
	\end{table}
	
	\begin{corollary}\label{cor:s3-hull-plus-one}
		Let $4\le k\le n$. The following statements hold.
		\begin{enumerate}
			\item[(1)] Assume that $a_i\ne0$, $a_i^{-1}u_i\in\mathbb F_{p^\ell}^*$ for $1\le i\le n$, and $n/(p^\ell+1)\in\mathbb Z_{>0}$. If
			$$
			k=\xi+1+\frac{n}{p^\ell+1},\qquad \max\{0,3-p^\ell\}\le \xi\le\min\left\{2,p^\ell-1,\left\lfloor\frac{\frac{n}{p^\ell+1}-1}{p^\ell}\right\rfloor\right\},
			$$
			then choosing $A_3$ as in \eqref{eq:s3-exceptional-matrices} with $\mu^{p^\ell+1}=-1$ gives an $[n+3,k]_q$ GRL code $C$ with $\dim(\operatorname{Hull}_\ell(C))=k-3$.
			\item[(2)] Assume that $\lambda a_i^{\delta-1}u_i\in\mathbb F_{p^\ell}^*$ for $1\le i\le n$, where $\delta\in\mathbb Z_{>0}$ and $\lambda\in\mathbb F_{p^\ell}^*$, and that $(n-\delta)/(p^\ell+1)\in\mathbb Z_{>0}$. If
			$$
			k=\xi+1+\frac{n-\delta}{p^\ell+1},\qquad \max\{0,3-p^\ell\}\le \xi\le\min\left\{2,p^\ell-1,\left\lfloor\frac{n-\delta}{p^\ell(p^\ell+1)}\right\rfloor\right\},
			$$
			then choosing $A_3$ as in \eqref{eq:s3-exceptional-matrices} with $\mu^{p^\ell+1}=-\lambda$ gives an $[n+3,k]_q$ GRL code $C$ with $\dim(\operatorname{Hull}_\ell(C))=k-2$.
		\end{enumerate}
		In both cases, the MDS and AMDS properties are determined by the corresponding row of Table~\ref{tab:s3-exceptional-distance}.
	\end{corollary}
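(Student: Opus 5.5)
The plan is to specialize Propositions~\ref{thm:common-plus-one-nonzero} and~\ref{thm:common-plus-one-shifted} to $s=3$, check that the matrices they produce are exactly those in \eqref{eq:s3-exceptional-matrices}, and then read off the distance criteria from Lemmas~\ref{thm:s3-MDS} and~\ref{thm:s3-AMDS}.

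\textbf{Hypotheses.} For $s=3$ the propositions require $\xi+1\le 3\le\min\{\xi+p^\ell,k-1\}$. This gives $\xi\le2$, $\xi\ge 3-p^\ell$ and $k\ge4$. Together with the assumed bounds $\xi\le p^\ell-1$ and $\xi\le\lfloor(\frac{n}{p^\ell+1}-1)/p^\ell\rfloor$ (respectively $\lfloor\frac{n-\delta}{p^\ell(p^\ell+1)}\rfloor$), these are precisely the stated ranges. Hence (1) yields hull dimension $k-s=k-3$ and (2) yields $k-s+1=k-2$.

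\textbf{Identifying the matrices.} I then write out the matrix $A_s$ of those propositions with $s=3$.
\begin{itemize}
\item For $\xi=0$ the blocks are $I_2$ and $\mu$, giving $\operatorname{diag}(1,1,\mu)$.
\item For $\xi=1$ the blocks are $I_1$, then $\mu$, then the entry $-\lambda\mu^{-p^\ell}a_{n-1}^{(1)}$ below it together with $I_1$.
\item For $\xi=2$ there is no leading identity block; the first column is $(\mu,-\lambda\mu^{-p^\ell}a_{n-1}^{(1)},-\lambda\mu^{-p^\ell}a_{n-1}^{(2)})^T$, followed by $I_2$.
\end{itemize}
Since $\mu^{p^\ell+1}=-\lambda$ (with $\lambda=1$ in case (1)), we have $-\lambda\mu^{-p^\ell}=\mu$. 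This matches \eqref{eq:s3-exceptional-matrices} exactly, as already observed for $s=2$ in the proof of Proposition~\ref{thm:s2-hull-plus-one}. The multipliers $v_i$ are fixed by the propositions, and the distance criteria in Lemmas~\ref{thm:s3-MDS} and~\ref{thm:s3-AMDS} do not depend on the $v_i$ anyway.

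\textbf{Distance criteria.} For each $\xi$ I substitute the columns $\boldsymbol a_1,\boldsymbol a_2,\boldsymbol a_3$ into Lemma~\ref{thm:s3-MDS}. Throughout I discard the nonzero factor $\mu$, and for $\xi=1$ I use the column operation $\boldsymbol a_2-\mu a_{n-1}^{(1)}\boldsymbol a_3$ inside the determinants. Writing $\boldsymbol\omega_I=(1,\epsilon_1,\epsilon_1^2-\epsilon_2)^T$ with $\epsilon_j=e_j(I)$:
\begin{itemize}
\item For $\xi=1$, the three $2\times2$ determinants are $\det(\boldsymbol a_1,\boldsymbol a_2,\boldsymbol\omega_I)$, $\det(\boldsymbol a_1,\boldsymbol a_3,\boldsymbol\omega_I)$ and $\det(\boldsymbol a_2,\boldsymbol a_3,\boldsymbol\omega_I)$. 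Up to sign and the factor $\mu$, these give the $I$-conditions $\epsilon_1^2-\epsilon_2-a_{n-1}^{(1)}\epsilon_1\neq0$, $\epsilon_1\ne0$, and a constant $\ne0$.
\item The inner products $\boldsymbol a_r^T\boldsymbol\gamma_J$ give $e_2(J)$, $\mu(-e_1(J)+a_{n-1}^{(1)})$ and $1$.
\item The cases $\xi=0$ and $\xi=2$ are computed in the same way. For $\xi=2$, $\boldsymbol a_1^T\boldsymbol\gamma_J=\mu(e_2(J)-a_{n-1}^{(1)}e_1(J)+a_{n-1}^{(2)})$, and the $I$-determinants give $\epsilon_1-a_{n-1}^{(1)}$ and $\epsilon_1^2-\epsilon_2-a_{n-1}^{(2)}$.
\end{itemize}

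\textbf{AMDS condition.} Condition (2) of Lemma~\ref{thm:s3-AMDS} says that at most one of the three $\boldsymbol a_r^T\boldsymbol\gamma_J$ vanishes for each $J$. In every case one of these three quantities is the constant $1$, or else the two non-constant ones are the listed pair. So the condition reduces to requiring that the two $J$-dependent quantities do not vanish simultaneously, which is the last column of Table~\ref{tab:s3-exceptional-distance}.

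\textbf{Main obstacle.} The hard step is the $\xi=2$ entry: the table's pair $(e_1(J),e_2(J)+a_{n-1}^{(2)})$ must match $(e_1(J),\,e_2(J)-a_{n-1}^{(1)}e_1(J)+a_{n-1}^{(2)})$. These agree because when $e_1(J)=0$ the second components coincide. Beyond that, the work is careful sign bookkeeping in the $3\times3$ determinants to confirm each table entry.
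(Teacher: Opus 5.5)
Your proposal is correct and follows the paper's own argument. You specialize Propositions~\ref{thm:common-plus-one-nonzero} and~\ref{thm:common-plus-one-shifted} to $s=3$, which is where $\max\{0,3-p^\ell\}\le\xi\le2$ and $k\ge4$ come from, identify the matrices via $-\lambda\mu^{-p^\ell}=\mu$, and substitute the columns into Lemmas~\ref{thm:s3-MDS} and~\ref{thm:s3-AMDS}. Your entries for all three rows agree with Table~\ref{tab:s3-exceptional-distance}, including the reduction of $(e_1(J),e_2(J)-a_{n-1}^{(1)}e_1(J)+a_{n-1}^{(2)})$ to $(e_1(J),e_2(J)+a_{n-1}^{(2)})$.

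One small caution concerns the column operation $\boldsymbol a_2-\mu a_{n-1}^{(1)}\boldsymbol a_3$, which is legitimate only inside $\det(\boldsymbol a_2,\boldsymbol a_3,\boldsymbol\omega_I)$. Applied in $\det(\boldsymbol a_1,\boldsymbol a_2,\boldsymbol\omega_I)$ it would drop the term $-a_{n-1}^{(1)}e_1(I)$. There you should instead subtract $a_{n-1}^{(1)}$ times the second row from the third. The expression you actually report for that determinant is nonetheless the correct one.
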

	
	\begin{proof}
		The hull dimensions follow from Propositions~\ref{thm:common-plus-one-nonzero} and~\ref{thm:common-plus-one-shifted} with $s=3$. The bounds on $\xi$ retain both $\xi\le p^\ell-1$ and $3\le \xi+p^\ell$; in particular, only $\xi=1$ is permitted when $p^\ell=2$. Substituting \eqref{eq:s3-exceptional-matrices} into Lemmas~\ref{thm:s3-MDS} and~\ref{thm:s3-AMDS} gives Table~\ref{tab:s3-exceptional-distance}. For example, when $\xi=1$, the three determinants are nonzero scalar multiples of $e_1(I)^2-e_2(I)-a_{n-1}^{(1)}e_1(I)$, $-e_1(I)$, and $1$, and the last-coordinate expressions are $e_2(J)$, $\mu(a_{n-1}^{(1)}-e_1(J))$, and $1$. The other two rows follow in the same way.
	\end{proof}
	
	\noindent\textbf{Remark}.
	At the boundary $\delta-1+p^\ell(k-1)=n-k$ of Corollary~\ref{cor:common-direct-extended}, take $A_3=\mu I_3$ with $\mu^{p^\ell+1}=-\eta$, where $\eta\in(\mathbb F_q^*)^{p^\ell+1}$ and $\eta\ne\lambda$. Then every $0\le h\le k-3$ is obtained, and the same diagonal distance criteria apply.
	
	\subsection{Explicit families from the preceding constructions}
	
	We now apply the preceding MDS and AMDS criteria to Constructions A--F. The following theorem gives the complete conclusions for each evaluation set.
	
	\begin{theorem}\label{thm:s3-explicit-families}
		For each $n,k,h$ in Table~\ref{tab:s3-explicit-results}, choose the corresponding evaluation vector $\boldsymbol a$ under the stated field and evaluation-set assumptions, and let $A_3\in\operatorname{GL}_3(\mathbb F_q)$. If $A_3$ satisfies the conditions of Lemma~\ref{thm:s3-MDS}, there exists an $[n+3,k,n-k+4]_q$ MDS GRL code $C$ with $\dim(\operatorname{Hull}_\ell(C))=h$. If $A_3$ satisfies the conditions of Lemma~\ref{thm:s3-AMDS}, there exists an $[n+3,k,n-k+3]_q$ AMDS GRL code $C$ with the same hull dimension. All distance conditions are taken with respect to $\boldsymbol a$.
	\end{theorem}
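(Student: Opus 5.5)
The table for this theorem (Table~\ref{tab:s3-explicit-results}) is not reproduced above. I assume it mirrors Table~\ref{tab:s2-explicit-results} with $s=3$: the lower bound on $k$ becomes $4$, and the hull ranges become $0\le h\le k-4$ or $0\le h\le k-3$. Under that assumption, the plan is to reduce each row to Corollary~\ref{cor:s3-arbitrary-hull}, exactly as Theorem~\ref{thm:s2-explicit-MDS} was reduced to Proposition~\ref{thm:s2-arbitrary-hull}.

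\textbf{Step 1: sort the rows by which coefficient condition the evaluation set satisfies.}

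The $s=3$ rows come from Theorems~\ref{thm:II.1}, \ref{thm:II.14}, \ref{thm:II.18} and the first range of Theorem~\ref{thm:mixed-fibers-hull}. For these, Lemmas~\ref{lem:3.4}, \ref{lem:3.8}, \ref{lem:3.12} and~\ref{lem:mixed-fibers} give $a_i\ne0$ and $a_i^{-1}u_i\in\mathbb F_{p^\ell}^*$. These rows therefore fall under case~(1) of Corollary~\ref{cor:s3-arbitrary-hull}.

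The zero-extended rows of Constructions A--C are handled with $N=n+1$ points. Their coefficients $w_i$ lie in $\mathbb F_{p^\ell}^*$ by Lemma~\ref{lem:3.6} and the identities \eqref{eq:B-zero-lagrange-auto} and \eqref{eq:C-zero-lagrange-auto}. Construction D uses Lemma~\ref{lem:3.1}, and the zero-fiber version of Construction F uses Lemma~\ref{lem:mixed-fibers-zero}. All of these fall under case~(2) with $\delta=1$ and $\lambda=1$.

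Construction E also falls under case~(2), with $\delta=1$ and $\lambda=c_V$, by Lemma~\ref{lem:additive-subspace}.

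The second range of the $tmp^r$ row uses case~(2) with $\delta=m$ and $\lambda=c_Vm$. This is justified by the relation $c_Vma_i^{m-1}u_i\in\mathbb F_{p^\ell}^*$ from Lemma~\ref{lem:mixed-fibers}.

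\textbf{Step 2: check that the table's bounds are exactly the hypotheses of the corollary.}

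For each row I verify that the table's $k$-bound equals the bound in Corollary~\ref{cor:s3-arbitrary-hull}. Recall $N$ is the number of evaluation points, so for zero-extended rows the table's $\left\lfloor\frac{n+p^\ell-1}{p^\ell+1}\right\rfloor$, with $n$ the length of the evaluation vector, equals $\left\lfloor\frac{N+p^\ell-\delta}{p^\ell+1}\right\rfloor$ with $\delta=1$. I also verify that the $h$-range is $k-4$ in case~(1) and $k-3$ in case~(2). These are the ranges that Propositions~\ref{thm:common-nonzero} and~\ref{thm:common-direct} give for $s=3$.

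\textbf{Step 3: transfer the distance from any admissible $A_3$ to the modified code.}

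Fix any $A_3$ satisfying Lemma~\ref{thm:s3-MDS} or Lemma~\ref{thm:s3-AMDS} with respect to $\boldsymbol a$. The hull constructions allow an arbitrary nonsingular extension matrix, and they change only the column multipliers: the choice of $\boldsymbol v$ and the scaling by $\beta$. The conditions in both lemmas involve only $e_1$, $e_2$ of subsets of $\boldsymbol a$ and the columns of $A_3$, never $\boldsymbol v$. Hence the resulting code with prescribed $h$ has the same minimum distance, MDS or AMDS.

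\textbf{Main obstacle.}

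The only delicate point is Step~3: confirming that Lemmas~\ref{thm:s3-MDS} and~\ref{thm:s3-AMDS} hold for arbitrary nonzero multipliers. I would justify this by observing that scaling the evaluation columns by nonzero constants preserves the rank of every set of columns. By Proposition~\ref{prop:minimum-distance-rank}, the minimum distance is therefore unchanged, so no new computation is needed.
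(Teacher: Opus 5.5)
Your proposal is correct and follows the paper's proof. The paper also splits the rows between Proposition~\ref{thm:common-nonzero} and Proposition~\ref{thm:common-direct}, using $\delta=1$, $\lambda=c_V$ for Construction E, and $\delta=m$, $\lambda=mc_V$ for the second $tmp^r$ range. It then observes that changing the nonzero multipliers only scales evaluation columns, so the criteria of Lemmas~\ref{thm:s3-MDS} and~\ref{thm:s3-AMDS} still apply; your reconstruction of Table~\ref{tab:s3-explicit-results}, including the exclusion of the boundary $k$-value, matches the paper exactly.
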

	
	\begin{table}[htbp]
		\centering
		\caption{Parameters of the $[n+3,k,d]_q$ GRL codes in Theorem~\ref{thm:s3-explicit-families}.}
		\label{tab:s3-explicit-results}
		\small
		\setlength{\tabcolsep}{7pt}
		\renewcommand{\arraystretch}{1.45}
		\begin{tabular*}{0.98\linewidth}{@{\extracolsep{\fill}}cccc@{}}
			\toprule
			\multicolumn{4}{c}{$q=p^e>4$, $p$ prime, $2\ell\mid e$} \\
			\midrule
			$n$ & $k$ & $h$ & Ref. \\
			\midrule
			$t\frac{q-1}{p^\ell-1}$ & $4\le k\le\left\lfloor\frac{n+p^\ell}{p^\ell+1}\right\rfloor$ & $0\le h\le k-4$ & Theorem~\ref{thm:II.1} \\
			\midrule
			$t\frac{q-1}{p^\ell-1}+1$ & $4\le k\le\left\lfloor\frac{n+p^\ell-1}{p^\ell+1}\right\rfloor$ & $0\le h\le k-3$ & Theorem~\ref{thm:II.2} \\
			\midrule
			$r_1r_2$ & $4\le k\le\left\lfloor\frac{n+p^\ell}{p^\ell+1}\right\rfloor$ & $0\le h\le k-4$ & Theorem~\ref{thm:II.14} \\
			\midrule
			$r_1r_2+1$ & $4\le k\le\left\lfloor\frac{n+p^\ell-1}{p^\ell+1}\right\rfloor$ & $0\le h\le k-3$ & Theorem~\ref{thm:II.15} \\
			\midrule
			$rm$ & $4\le k\le\left\lfloor\frac{n+p^\ell}{p^\ell+1}\right\rfloor$ & $0\le h\le k-4$ & Theorem~\ref{thm:II.18} \\
			\midrule
			$rm+1$ & $4\le k\le\left\lfloor\frac{n+p^\ell-1}{p^\ell+1}\right\rfloor$ & $0\le h\le k-3$ & Theorem~\ref{thm:II.19} \\
			\midrule
			$tp^{e-\ell}$ & $4\le k\le\left\lfloor\frac{n+p^\ell-1}{p^\ell+1}\right\rfloor$ & $0\le h\le k-3$ & Theorem~\ref{thm:II.t1} \\
			\midrule
			$tp^r$ & $4\le k\le\left\lfloor\frac{n+p^\ell-1}{p^\ell+1}\right\rfloor$ & $0\le h\le k-3$ & Theorem~\ref{thm:additive-subspace-hull} \\
			\midrule
			$tmp^r$ & $\begin{gathered}\textstyle 4\le k\le\left\lfloor\frac{n+p^\ell}{p^\ell+1}\right\rfloor\\\textstyle 4\le k\le\left\lfloor\frac{n+p^\ell-m}{p^\ell+1}\right\rfloor\end{gathered}$ & $\begin{gathered}\vphantom{\textstyle \left\lfloor\frac{n+p^\ell}{p^\ell+1}\right\rfloor}0\le h\le k-4\\\vphantom{\textstyle \left\lfloor\frac{n+p^\ell}{p^\ell+1}\right\rfloor}0\le h\le k-3\end{gathered}$ & $\begin{gathered}\vphantom{\textstyle \left\lfloor\frac{n+p^\ell}{p^\ell+1}\right\rfloor}\text{Theorem~\ref{thm:mixed-fibers-hull}}\\\vphantom{\textstyle \left\lfloor\frac{n+p^\ell}{p^\ell+1}\right\rfloor}\text{Proposition~\ref{thm:common-direct}}\end{gathered}$ \\
			\midrule
			$1+m(tp^r-1)$ & $4\le k\le\left\lfloor\frac{n+p^\ell-1}{p^\ell+1}\right\rfloor$ & $0\le h\le k-3$ & Theorem~\ref{thm:mixed-fibers-zero-hull} \\
			\bottomrule
		\end{tabular*}
		\par\smallskip
		
	\end{table}
	
	\begin{proof}
		Apply the same normalization as in Theorem~\ref{thm:s2-explicit-MDS}, now with $s=3$. The hull dimensions follow from Propositions~\ref{thm:common-nonzero} and~\ref{thm:common-direct}. Changing the nonzero multipliers scales only the evaluation columns, so the distance conditions in Lemmas~\ref{thm:s3-MDS} and~\ref{thm:s3-AMDS} are preserved.
	\end{proof}
	
	\noindent\textbf{Remark}.
	The exceptional constructions in A--F are covered by Corollary~\ref{cor:s3-hull-plus-one} after imposing their stated divisibility and coefficient conditions. Nonzero evaluation sets satisfying Proposition~\ref{thm:common-plus-one-nonzero} yield hull dimension $k-3$, whereas the shifted constructions satisfying Proposition~\ref{thm:common-plus-one-shifted} yield hull dimension $k-2$. In each case, the MDS and AMDS properties are determined by Table~\ref{tab:s3-exceptional-distance}. When the relevant $a_{n-1}^{(j)}$ vanish, these conditions reduce to the diagonal criteria.
	
	\begin{example}
		According to the above discussion, we can construct some MDS GRL codes with certain $\ell$-Galois hull as Table~\ref{tab:s3-mds-verified-examples}.
		\begin{table}[htbp]
			\centering
			\caption{Some MDS GRL codes with $s=3$ and prescribed $\ell$-Galois hulls.}
			\label{tab:s3-mds-verified-examples}
			\small\setlength{\tabcolsep}{5pt}\renewcommand{\arraystretch}{1.2}
			\begin{tabular}{@{}cccccc@{}}
				\toprule
				$q$ & $p^\ell$ & $n$ & $[n+3,k,d]_q$ & $h$ & Ref. \\
				\midrule
				$6561$ & $3$ & $27$ & $[30,5,26]_{6561}$ & $0\le h\le 2$ & Thm.~\ref{thm:s3-explicit-families} \\
				$6561$ & $3$ & $27$ & $[30,6,25]_{6561}$ & $0\le h\le 3$ & Thm.~\ref{thm:s3-explicit-families} \\
				$65536$ & $2$ & $32$ & $[35,9,27]_{65536}$ & $0\le h\le 6$ & Thm.~\ref{thm:s3-explicit-families} \\
				$65536$ & $2$ & $32$ & $[35,10,26]_{65536}$ & $0\le h\le 7$ & Thm.~\ref{thm:s3-explicit-families} \\
				$65536$ & $2$ & $32$ & $[35,11,25]_{65536}$ & $0\le h\le 8$ & Thm.~\ref{thm:s3-explicit-families} \\
				$117649$ & $7$ & $49$ & $[52,6,47]_{117649}$ & $0\le h\le 3$ & Thm.~\ref{thm:s3-explicit-families} \\
				$117649$ & $7$ & $49$ & $[52,7,46]_{117649}$ & $0\le h\le 4$ & Thm.~\ref{thm:s3-explicit-families} \\
				\bottomrule
			\end{tabular}
		\end{table}

	\end{example}
	\begin{example}
		
		According the above discussion, we can construct some AMDS GRL codes with certain $\ell$-Galois hull as Table\ref{tab:s3-amds-selected-examples}.
		{\small
			\setlength{\tabcolsep}{4pt}
			\renewcommand{\arraystretch}{1.12}
			\begin{longtable}{@{}cccccc@{}}
				\caption{Selected AMDS GRL codes with $s=3$ and prescribed $\ell$-Galois hulls.}
				\label{tab:s3-amds-selected-examples}\\
				\toprule
				$q$ & $p^\ell$ & $n$ & $[n+3,k,d]_q$ & $h$ & Ref. \\
				\midrule
				\endfirsthead
				\multicolumn{6}{c}{\tablename~\thetable\ (continued)}\\
				\toprule
				$q$ & $p^\ell$ & $n$ & $[n+3,k,d]_q$ & $h$ & Ref. \\
				\midrule
				\endhead
				\midrule
				\multicolumn{6}{r}{Continued on next page}\\
				\endfoot
				\bottomrule
				\endlastfoot
				$64$ & $2$ & $32$ & $[35,11,24]_{64}$ & $0\le h\le 8$ & Thm.~\ref{thm:s3-explicit-families} \\
				$64$ & $8$ & $32$ & $[35,4,31]_{64}$ & $0\le h\le 1$ & Thm.~\ref{thm:s3-explicit-families} \\
				$64$ & $2$ & $22$ & $[25,4,21]_{64}$ & $0\le h\le 1$ & Cor.~\ref{cor:s3-arbitrary-hull} \\
				$64$ & $2$ & $21$ & $[24,5,19]_{64}$ & $0\le h\le 1$ & Cor.~\ref{cor:s3-arbitrary-hull} \\
				$81$ & $3$ & $81$ & $[84,4,80]_{81}$ & $0\le h\le 1$ & Thm.~\ref{thm:s3-explicit-families} \\
				$81$ & $9$ & $81$ & $[84,4,80]_{81}$ & $0\le h\le 1$ & Thm.~\ref{thm:s3-explicit-families} \\
				$81$ & $9$ & $61$ & $[64,4,60]_{81}$ & $0\le h\le 1$ & Cor.~\ref{cor:s3-arbitrary-hull} \\
				$81$ & $3$ & $54$ & $[57,4,53]_{81}$ & $0\le h\le 1$ & Thm.~\ref{thm:s3-explicit-families} \\
				$81$ & $9$ & $51$ & $[54,4,50]_{81}$ & $0\le h\le 1$ & Thm.~\ref{thm:s3-explicit-families} \\
				$81$ & $3$ & $27$ & $[30,7,23]_{81}$ & $0\le h\le 4$ & Thm.~\ref{thm:s3-explicit-families} \\
				$81$ & $3$ & $21$ & $[24,5,19]_{81}$ & $0\le h\le 2$ & Cor.~\ref{cor:s3-arbitrary-hull} \\
				$256$ & $4$ & $171$ & $[174,4,170]_{256}$ & $0\le h\le 1$ & Thm.~\ref{thm:s3-explicit-families} \\
				$256$ & $16$ & $171$ & $[174,4,170]_{256}$ & $0\le h\le 1$ & Cor.~\ref{cor:s3-arbitrary-hull} \\
				$256$ & $2$ & $128$ & $[131,42,89]_{256}$ & $0\le h\le 39$ & Thm.~\ref{thm:s3-explicit-families} \\
				$256$ & $4$ & $128$ & $[131,5,126]_{256}$ & $0\le h\le 2$ & Thm.~\ref{thm:s3-explicit-families} \\
				$256$ & $16$ & $128$ & $[131,5,126]_{256}$ & $0\le h\le 2$ & Thm.~\ref{thm:s3-explicit-families} \\
				$256$ & $2$ & $86$ & $[89,5,84]_{256}$ & $0\le h\le 2$ & Cor.~\ref{cor:s3-arbitrary-hull} \\
				$256$ & $4$ & $64$ & $[67,12,55]_{256}$ & $0\le h\le 9$ & Thm.~\ref{thm:s3-explicit-families} \\
				$625$ & $5$ & $125$ & $[128,20,108]_{625}$ & $0\le h\le 17$ & Thm.~\ref{thm:s3-explicit-families} \\
				$625$ & $25$ & $125$ & $[128,4,124]_{625}$ & $0\le h\le 1$ & Thm.~\ref{thm:s3-explicit-families} \\
				$729$ & $3$ & $243$ & $[246,60,186]_{729}$ & $0\le h\le 57$ & Thm.~\ref{thm:s3-explicit-families} \\
				$729$ & $27$ & $243$ & $[246,4,242]_{729}$ & $0\le h\le 1$ & Thm.~\ref{thm:s3-explicit-families} \\
				$2401$ & $7$ & $343$ & $[346,42,304]_{2401}$ & $0\le h\le 39$ & Thm.~\ref{thm:s3-explicit-families} \\
			\end{longtable}
		}
	\end{example}

	\section{Applications to EAQECCs}\label{sec:eaqecc}

	In this section, we apply the preceding Galois hull constructions to provide entanglement-assisted quantum error-correcting codes (EAQECCs). A $q$-ary EAQECC with parameters $[[n,K,d_Q;c]]_q$ encodes $K$ logical qudits into $n$ physical qudits with the assistance of $c$ maximally entangled pairs. If $2d_Q\le n+2$, the entanglement-assisted Singleton bound gives $n+c-K\ge2(d_Q-1)$. In this range, equality gives an MDS EAQECC, while $n+c-K=2d_Q$ gives an AMDS EAQECC. For $2d_Q>n+2$, the same inequality is not a general bound on all EAQECCs.
	
	The following result is the tool we use to convert classical codes into EAQECCs; it also determines their minimum distance.

	\begin{lemma}\cite{GalindoHernandoMatsumotoRuano2019,WildeBrun2008}\label{lem}
		Let $C_1$ and $C_2$ be two $q$-ary linear codes with parameters $[n,k_1,d_1]_q$ and $[n,k_2,d_2]_q$, respectively. Then there exists an EAQECC with parameters
		$$[[n,k_1+k_2-n+c,\min\{d_1,d_2\};c]]_q,$$
		where $c=\operatorname{rank}(H_1H_2^T)$ and $H_i$ is a parity-check matrix of $C_i$, $i=1,2$.
	\end{lemma}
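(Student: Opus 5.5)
The plan is to prove this in the stabilizer (symplectic) formalism of Brun--Devetak--Hsieh, following the CSS pattern. We work in $\mathbb F_q^{2n}$ with the trace-symplectic form
$$\langle(\boldsymbol a\mid\boldsymbol b),(\boldsymbol a'\mid\boldsymbol b')\rangle_s=\boldsymbol a\cdot\boldsymbol b'-\boldsymbol a'\cdot\boldsymbol b.$$
Let $r_1=n-k_1$ and $r_2=n-k_2$ be the numbers of rows of $H_1$ and $H_2$. We take $S$ to be the $\mathbb F_q$-span of the vectors $(\boldsymbol h\mid\boldsymbol 0)$, for $\boldsymbol h$ a row of $H_1$, and $(\boldsymbol 0\mid\boldsymbol h')$, for $\boldsymbol h'$ a row of $H_2$. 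This space has dimension $r=r_1+r_2=2n-k_1-k_2$.

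\textbf{Step 1 (symplectic rank).} The Gram matrix of these generators under $\langle\cdot,\cdot\rangle_s$ is
$$\begin{pmatrix}O&H_1H_2^T\\-H_2H_1^T&O\end{pmatrix}.$$
Its rank is $2\operatorname{rank}(H_1H_2^T)=2c$.

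\textbf{Step 2 (symplectic Gram--Schmidt).} We change basis of $S$ into an isotropic part of dimension $r-2c$ together with $c$ hyperbolic pairs $(\boldsymbol x_j,\boldsymbol z_j)$ satisfying $\langle\boldsymbol x_j,\boldsymbol z_j\rangle_s=1$, with all other pairings zero. This is the standard normal form of an alternating form.

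\textbf{Step 3 (entanglement extension).} We append $c$ coordinates on Bob's side. Each $\boldsymbol x_j$ is extended by $X$ on the $j$-th ebit and each $\boldsymbol z_j$ by $Z$ on the $j$-th ebit, which makes every pair commute. The isotropic generators are extended trivially. This yields an isotropic subspace of $\mathbb F_q^{2(n+c)}$ of dimension $r$. It is therefore an abelian stabilizer on $n+c$ qudits with independent generators, and it encodes
$$K=n+c-r=k_1+k_2-n+c$$
qudits while consuming $c$ maximally entangled pairs.

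\textbf{Step 4 (distance).} Errors act only on Alice's $n$ qudits. An error $(\boldsymbol a\mid\boldsymbol b)\in\mathbb F_q^{2n}$ is undetectable exactly when it lies in the symplectic complement of $S$ (as seen on Alice's part) but is not a harmless stabilizer element. Commuting with every $(\boldsymbol 0\mid\boldsymbol h')$ means $\boldsymbol a\cdot\boldsymbol h'=0$ for all rows of $H_2$, that is, $\boldsymbol a\in C_2$. Likewise, commuting with every $(\boldsymbol h\mid\boldsymbol 0)$ means $\boldsymbol b\in C_1$.

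A nontrivial error has $(\boldsymbol a,\boldsymbol b)\ne(\boldsymbol 0,\boldsymbol 0)$. Its symplectic weight $|\{i:(a_i,b_i)\ne(0,0)\}|$ is at least $\max\{\operatorname{wt}(\boldsymbol a),\operatorname{wt}(\boldsymbol b)\}$. Whichever of $\boldsymbol a,\boldsymbol b$ is nonzero is a nonzero codeword of $C_2$ or $C_1$, so the weight is at least $\min\{d_1,d_2\}$. We read $d_Q=\min\{d_1,d_2\}$ as this (impure) lower bound, consistent with how the paper uses the lemma: as ``minimum distance at least $d$''.

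\textbf{Main obstacle.} The delicate point is Step 2. We need the normal form over $\mathbb F_q$, including characteristic $2$ where alternating means $\langle\boldsymbol v,\boldsymbol v\rangle_s=0$. We also need the extension in Step 3 to produce genuinely independent commuting generators on $n+c$ qudits. I would handle this by proving the normal form by induction on $\dim S$, using only that the form is alternating. Independence of the extended generators then follows because their restriction to Alice's side is already independent.
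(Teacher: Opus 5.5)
Your argument is correct in substance and follows the standard route behind the cited results. The paper gives no proof of this lemma; it only refers to Wilde--Brun and Galindo et al. Their argument is exactly yours:
\begin{itemize}
\item the CSS-type stabilizer $S$;
\item the Gram matrix of rank $2c$;
\item the symplectic normal form;
\item the ebit extension;
\item the distance bound via $S^{\perp_s}=\{(\boldsymbol a\mid\boldsymbol b):\boldsymbol a\in C_2,\ \boldsymbol b\in C_1\}$.
\end{itemize}
Your reading of $\min\{d_1,d_2\}$ as a lower bound on the quantum distance is also the right one.

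There is one slip in Step 3. With your form, $\tilde{\boldsymbol x}_j$ carries $\boldsymbol e_j$ in the $X$-block of the $c$ new coordinates, and $\tilde{\boldsymbol z}_j$ carries $\boldsymbol e_j$ in the $Z$-block. This gives
\[
\langle\tilde{\boldsymbol x}_j,\tilde{\boldsymbol z}_j\rangle_s=\langle\boldsymbol x_j,\boldsymbol z_j\rangle_s+\boldsymbol e_j\cdot\boldsymbol e_j=1+1=2,
\]
which is nonzero whenever $q$ is odd. So the extended pair does not commute as claimed. The fix is to append $-\boldsymbol e_j$ (i.e. $Z^{-1}$) to $\boldsymbol z_j$, or equivalently to normalize the hyperbolic pairs so that $\langle\boldsymbol x_j,\boldsymbol z_j\rangle_s=-1$. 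Then the extended span is isotropic and the rest of your argument goes through unchanged. The delicate point is therefore this sign bookkeeping, not the normal form, which holds for any alternating form in every characteristic.

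You should also state one further point explicitly. For $q=p^e$, commutation of $q$-ary Pauli operators is governed by $\operatorname{tr}_{\mathbb F_q/\mathbb F_p}$ of your form, not by the $\mathbb F_q$-valued form itself. Since $S$ and its extension are $\mathbb F_q$-linear, trace-isotropy coincides with $\mathbb F_q$-isotropy, and the trace-symplectic complement coincides with $S^{\perp_s}$. This is what justifies working with the $\mathbb F_q$-bilinear form in Steps 3 and 4.
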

	
	Combining Lemma~\ref{lem} with Lemma~\ref{lem:galois-hull-rank}, we obtain the following proposition.
	
	\begin{proposition}\label{prop:EA-hull}\cite[Proposition~IV.1]{9559992}
		Let $q=p^e$, let $0\le\ell\le e-1$, and let $C$ be an $[n,k,d]_q$ linear code with $\dim(\operatorname{Hull}_\ell(C))=h$. Then there exists an $[[n,k-h,d;n-k-h]]_q$ EAQECC.
	\end{proposition}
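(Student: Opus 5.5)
The plan is to apply Lemma~\ref{lem} with the specific choice $C_1=C$ and $C_2=C^{(p^{e-\ell})}$, the conjugated code. This is the natural choice because its Euclidean dual is tied to the $\ell$-Galois dual of $C$. Since the coordinatewise Frobenius map $x\mapsto x^{p^{e-\ell}}$ is a field automorphism applied entrywise, it preserves Hamming weights and dimension. Hence $C_2$ is again an $[n,k,d]_q$ code, so $k_1=k_2=k$ and $\min\{d_1,d_2\}=d$.

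Next we compute the entanglement $c=\operatorname{rank}(H_1H_2^T)$. Take a parity-check matrix $H$ of $C$, i.e.\ a generator matrix of $C^\perp$. Then $H^{(p^{e-\ell})}$ generates $(C^\perp)^{(p^{e-\ell})}=(C^{(p^{e-\ell})})^\perp$, so it is a parity-check matrix of $C_2$. Thus
\[
c=\operatorname{rank}\bigl(H(H^{(p^{e-\ell})})^T\bigr).
\]
Now $H$ generates the $[n,n-k]$ code $C^\perp$. We apply Lemma~\ref{lem:galois-hull-rank} to $C^\perp$ with Galois index $e-\ell$ (for $\ell=0$ use index $0$, the Euclidean case). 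This gives $c=(n-k)-\dim\operatorname{Hull}_{e-\ell}(C^\perp)$.

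The main step, and the expected obstacle, is to relate $\dim\operatorname{Hull}_{e-\ell}(C^\perp)$ to $h$. By Lemma~\ref{lem:galois-euclidean-dual}, $C^{\perp_{e-\ell}}=(C^\perp)^{(p^{\ell})}$, and one checks that $(C^{\perp_\ell})^{\perp_{e-\ell}}=C$. From this, $\operatorname{Hull}_{e-\ell}(C^\perp)$ should be a Frobenius image of $\operatorname{Hull}_{\ell}(C)$ (or of $\operatorname{Hull}_{e-\ell}(C)$), so it has the same dimension.

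A cleaner route avoids these identities. Since $\operatorname{rank}(M)=\operatorname{rank}(M^{(p^r)})$ and $\operatorname{rank}(M^T)=\operatorname{rank}(M)$, we have
\[
\operatorname{rank}\bigl(H(H^{(p^{e-\ell})})^T\bigr)=\operatorname{rank}\bigl(H^{(p^\ell)}H^T\bigr)=\operatorname{rank}\bigl(H(H^{(p^\ell)})^T\bigr).
\]
Hence $c=n-k-\dim\operatorname{Hull}_\ell(C^\perp)$. We then use the standard fact $\dim\operatorname{Hull}_\ell(C^\perp)=\dim\operatorname{Hull}_\ell(C)$ to conclude $c=n-k-h$. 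This fact follows from $\operatorname{Hull}_\ell(C^\perp)=C^\perp\cap(C^{\perp_\ell})^{\perp}$-type identities together with Proposition~\ref{prop:complementary-hull}, or more directly from a rank computation. For the direct version, write $G$ for a generator matrix and use that $H(H^{(p^\ell)})^T$ and $G(G^{(p^\ell)})^T$ have the same corank via $\dim(C\cap C^{\perp_\ell})=\dim(C^\perp\cap (C^{\perp})^{\perp_\ell})$, i.e.\ $\dim(U\cap W)$ with $W=C^{\perp_\ell}$ and the relation $(U\cap W)^\perp=U^\perp+W^\perp$.

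Finally, substituting into Lemma~\ref{lem} gives logical dimension $k+k-n+(n-k-h)=k-h$, distance $d$, and $c=n-k-h$ pairs. This is the $[[n,k-h,d;n-k-h]]_q$ EAQECC claimed.
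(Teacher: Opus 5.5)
Your proposal is correct and follows the route the paper indicates when it says it is ``combining'' Lemma~\ref{lem} with Lemma~\ref{lem:galois-hull-rank} (citing Cao): take $C_1=C$ and $C_2=C^{(p^{e-\ell})}$, and compute the entanglement as $\operatorname{rank}\bigl(H(H^{(p^\ell)})^T\bigr)$. The only step the paper leaves implicit, $\dim\operatorname{Hull}_\ell(C^\perp)=h$, is settled by your identity $\operatorname{Hull}_\ell(C^\perp)=C^\perp\cap(C^{\perp_\ell})^\perp=(C+C^{\perp_\ell})^\perp$, whose dimension is $n-\bigl(k+(n-k)-h\bigr)=h$.
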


	We first record the consequences valid for general $s$. They provide distance lower bounds without imposing an additional MDS or AMDS assumption.
	
	\begin{theorem}\label{thm:EA-general}\setlength{\emergencystretch}{2em}
		Let $q=p^e>4$, where $p$ is a prime, and assume that $2\ell\mid e$. Let $\boldsymbol a=(a_1,\ldots,a_n)$ consist of distinct elements of $\mathbb F_q$, put $u_j=\prod_{v\ne j}(a_j-a_v)^{-1}$, and let $2\le s<k\le n$. The following statements hold.
		\begin{enumerate}
			\item[(1)] Assume that $a_j\ne0$ and $a_j^{-1}u_j\in\mathbb F_{p^\ell}^*$ for $1\le j\le n$. For every $s<k\le\left\lfloor\frac{n+p^\ell}{p^\ell+1}\right\rfloor$ and every $0\le h\le k-s-1$, there exists an \mbox{$[[n+s,k-h,d;n+s-k-h]]_q$} EAQECC with $d\ge n-k+2$.
			
			\item[(2)] Let $\delta\in\mathbb Z_{>0}$ and $\lambda\in\mathbb F_q^*$, and assume that $\lambda a_j^{\delta-1}u_j\in\mathbb F_{p^\ell}^*$ for $1\le j\le n$. For every $s<k\le\left\lfloor\frac{n+p^\ell-\delta+1}{p^\ell+1}\right\rfloor$ and every $0\le h\le k-s$, there exists an \mbox{$[[n+s,k-h,d;n+s-k-h]]_q$} EAQECC with $d\ge n-k+2$.
			
			\item[(3)] Assume that $a_j\ne0$ and $a_j^{-1}u_j\in\mathbb F_{p^\ell}^*$ for $1\le j\le n$, and that $n/(p^\ell+1)\in\mathbb Z_{>0}$. If $k=\xi+1+n/(p^\ell+1)$, $0\le \xi\le\min\{p^\ell-1,\lfloor(n/(p^\ell+1)-1)/p^\ell\rfloor\}$ and $\xi+1\le s\le\min\{\xi+p^\ell,k-1\}$, there exists an \mbox{$[[n+s,s,d;n-2k+2s]]_q$} EAQECC with $d\ge n-k+2$.
			
			\item[(4)] Let $\delta\in\mathbb Z_{>0}$ and $\lambda\in\mathbb F_{p^\ell}^*$, and assume that $\lambda a_j^{\delta-1}u_j\in\mathbb F_{p^\ell}^*$ for $1\le j\le n$. Suppose that $(n-\delta)/(p^\ell+1)\in\mathbb Z_{>0}$. If $k=\xi+1+(n-\delta)/(p^\ell+1)$, $0\le \xi\le\min\{p^\ell-1,\lfloor(n-\delta)/(p^\ell(p^\ell+1))\rfloor\}$ and $\xi+1\le s\le\min\{\xi+p^\ell,k-1\}$, there exists an \mbox{$[[n+s,s-1,d;n-2k+2s-1]]_q$} EAQECC with $d\ge n-k+2$.
		\end{enumerate}
	\end{theorem}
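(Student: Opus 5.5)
Each part combines one of the hull constructions of Section~\ref{sec:construction} with Proposition~\ref{prop:EA-hull}. A distance lower bound that holds for every GRL code then completes the argument.

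\textbf{Distance bound.} Let $C=\operatorname{GRL}_k(\boldsymbol a,\boldsymbol v,A_s)$ and let $f\ne0$ with $\deg f\le k-1$. If $f$ has at most $k-2$ zeros among $a_1,\ldots,a_n$, then the first $n$ coordinates of $\boldsymbol c_f$ already give weight at least $n-k+2$. If $f$ has exactly $k-1$ zeros, then $f$ is a nonzero multiple of $\prod_{j\in J}(x-a_j)$ with $|J|=k-1$, so $f_{k-1}\ne0$. Since $A_s$ is nonsingular, $\boldsymbol s_k(f)A_s\ne0$, so some appended coordinate is nonzero. The weight is then at least $(n-k+1)+1=n-k+2$. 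Thus $d(C)\ge n-k+2$ for every nonsingular $A_s$ and every choice of multipliers. The bound is therefore insensitive to the particular $A_s$ and $\boldsymbol v$ that the hull constructions impose.

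\textbf{Hull dimensions.}
\begin{itemize}
\item Part (1): Proposition~\ref{thm:common-nonzero} gives an $[n+s,k]_q$ GRL code with $\dim\operatorname{Hull}_\ell(C)=h$ for each $0\le h\le k-s-1$.
\item Part (2): Proposition~\ref{thm:common-direct} covers $k\le\lfloor(n+p^\ell-\delta)/(p^\ell+1)\rfloor$. Corollary~\ref{cor:common-direct-extended} covers the enlarged range $k\le\lfloor(n+p^\ell-\delta+1)/(p^\ell+1)\rfloor$, using the boundary choice $A_s=\mu I_s$.
\item Part (3): Proposition~\ref{thm:common-plus-one-nonzero} gives hull dimension $h=k-s$.
\item Part (4): Proposition~\ref{thm:common-plus-one-shifted} gives $h=k-s+1$.
\end{itemize}
In parts (3) and (4) the hypotheses on $\xi$ and $s$ are copied verbatim from those propositions, so no further checking is needed.

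\textbf{Quantum parameters.} Proposition~\ref{prop:EA-hull} applied to an $[n+s,k,d]_q$ code with $h$-dimensional $\ell$-Galois hull gives an $[[n+s,k-h,d;n+s-k-h]]_q$ EAQECC. This settles parts (1) and (2) directly. For part (3), substituting $h=k-s$ gives $K=s$ and $c=n+s-k-(k-s)=n-2k+2s$. For part (4), substituting $h=k-s+1$ gives $K=s-1$ and $c=n-2k+2s-1$.

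The only point needing care is ensuring that the distance argument does not depend on the special forms of $A_s$. It only uses nonsingularity, which holds in all cases: $\det A_s=\mu\ne0$ in parts (3) and (4), and $A_s=\mu I_s$ at the boundary of part (2). I expect no real obstacle beyond checking this and the arithmetic for $K$ and $c$.
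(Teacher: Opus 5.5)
Your proposal is correct and follows essentially the same route as the paper. You establish the bound $d(C)\ge n-k+2$ using only the nonsingularity of $A_s$ and nonzero multipliers, take the hull dimensions from Propositions~\ref{thm:common-nonzero}, \ref{thm:common-plus-one-nonzero}, \ref{thm:common-plus-one-shifted} and Corollary~\ref{cor:common-direct-extended}, and then apply Proposition~\ref{prop:EA-hull}. The one cosmetic difference is the distance bound: you split by the number of evaluation zeros (at most $k-2$ versus exactly $k-1$), whereas the paper splits by whether $\deg f\le k-s-1$ or $\deg f\ge k-s$; both arguments are equally valid.
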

	
	\begin{proof}
		For any $C=\operatorname{GRL}_k(\boldsymbol a,\boldsymbol v,A_s)$ with $A_s$ nonsingular, a nonzero polynomial $f$ of degree at most $k-s-1$ gives a codeword of weight at least $n-k+s+1$. If $\deg f\ge k-s$, its highest $s$ coefficients are not all zero, so at least one appended coordinate is nonzero, while the evaluation part has weight at least $n-k+1$. Hence $d(C)\ge n-k+2$ in both cases.
		
		The required hull dimensions follow, respectively, from Proposition~\ref{thm:common-nonzero}, Corollary~\ref{cor:common-direct-extended}, and Propositions~\ref{thm:common-plus-one-nonzero} and~\ref{thm:common-plus-one-shifted}. Apply Proposition~\ref{prop:EA-hull}; in (3) and (4), substitute $h=k-s$ and $h=k-s+1$, respectively.
	\end{proof}
	
	\noindent\textbf{Remark}.
	In Theorem~\ref{thm:EA-general}(2), the additional boundary $\delta-1+p^\ell(k-1)=n-k$ requires a suitable extension matrix as in Corollary~\ref{cor:common-direct-extended}. The ordinary families below use the smaller range in Proposition~\ref{thm:common-direct}, where the extension matrix is arbitrary. The boundary and the larger hull dimensions are then treated separately with their corresponding distance criteria.
	
	\subsection{EAQECCs from MDS and AMDS GRL codes for $s=2$}
	
	For $s=2$, choosing $A_2=A_\zeta$ with $\zeta\in\Delta_{k-1}(\boldsymbol a)$ always gives an AMDS code. If $\Delta_{k-1}(\boldsymbol a)\ne\mathbb F_q$, choosing $\zeta$ outside this set gives an MDS code. We apply these two choices to each family in Theorem~\ref{thm:s2-explicit-MDS}.
	
	\begin{theorem}\label{thm:EA-s2-explicit}
		For each $n,k,h$ in Table~\ref{tab:s2-explicit-results}, choose the corresponding evaluation vector $\boldsymbol a$ under the stated field and evaluation-set assumptions. Then there exists an $[[n+2,k-h,n-k+2;n+2-k-h]]_q$ AMDS EAQECC. If $\Delta_{k-1}(\boldsymbol a)\ne\mathbb F_q$, there also exists an $[[n+2,k-h,n-k+3;n+2-k-h]]_q$ MDS EAQECC.
		
		Moreover, let $\delta\in\mathbb Z_{>0}$ and $\lambda\in\mathbb F_q^*$, and assume that $\lambda a_j^{\delta-1}u_j\in\mathbb F_{p^\ell}^*$ for $1\le j\le n$. Suppose that $\delta-1+p^\ell(k-1)=n-k$. Then, for every $0\le h\le k-2$, if $0\notin\Delta_{k-1}(\boldsymbol a)$, there exists an $[[n+2,k-h,n-k+3;n+2-k-h]]_q$ MDS EAQECC; if $0\in\Delta_{k-1}(\boldsymbol a)$, there exists an $[[n+2,k-h,n-k+2;n+2-k-h]]_q$ AMDS EAQECC.
	\end{theorem}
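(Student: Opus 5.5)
The plan is to combine the classical results of Section~\ref{sec:s2} with Proposition~\ref{prop:EA-hull}. For the first part, fix $n,k,h$ from a row of Table~\ref{tab:s2-explicit-results}. Theorem~\ref{thm:s2-explicit-MDS} gives two codes, depending on the choice of $\zeta$ in $A_2=A_\zeta$. With $\zeta\in\Delta_{k-1}(\boldsymbol a)$ we get an $[n+2,k,n-k+2]_q$ AMDS GRL code $C$ with $\dim(\operatorname{Hull}_\ell(C))=h$. When $\Delta_{k-1}(\boldsymbol a)\ne\mathbb F_q$, taking $\zeta\notin\Delta_{k-1}(\boldsymbol a)$ instead gives an $[n+2,k,n-k+3]_q$ MDS code with the same hull dimension. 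Applying Proposition~\ref{prop:EA-hull} with code length $n+2$ produces $[[n+2,k-h,d;n+2-k-h]]_q$ EAQECCs, where $d=n-k+2$ or $d=n-k+3$ respectively.

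Next I will justify the labels MDS and AMDS in the quantum sense. Put $N=n+2$, $K=k-h$ and $c=n+2-k-h$, so that $N+c-K=2(n+2-k)$. If $d_Q=n-k+3$, then $2(d_Q-1)=2(n-k+2)$, so equality holds in the entanglement-assisted Singleton bound. If $d_Q=n-k+2$, then $N+c-K=2d_Q$, which is the AMDS condition. I also need to check the regime $2d_Q\le N+2$. This reduces to $2(n-k+3)\le n+4$, that is, $n\le 2k-2$. That fails in general here, since $k\le (n+p^\ell)/(p^\ell+1)$ makes $k$ small. This is the delicate point. I would follow the convention fixed at the start of Section~\ref{sec:eaqecc}: a code is called MDS or AMDS when the displayed equality holds, and the bound itself is only cited in the admissible range. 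The argument then rests only on the arithmetic identity above and on the exact minimum distance of the classical code. That distance transfers exactly, because Lemma~\ref{lem} gives distance $\min\{d_1,d_2\}=d(C)$, where $C_1=C$ and $C_2$ is the Frobenius-twisted copy of $C$, which has the same distance.

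For the boundary part, assume $\lambda a_j^{\delta-1}u_j\in\mathbb F_{p^\ell}^*$ and $\delta-1+p^\ell(k-1)=n-k$. By the Remark following Proposition~\ref{thm:s2-hull-plus-one}, built on Corollary~\ref{cor:common-direct-extended}, we choose $\eta\in(\mathbb F_q^*)^{p^\ell+1}$ with $\eta\ne\lambda$, and take $A_2=\mu I_2$ with $\mu^{p^\ell+1}=-\eta$. This gives a GRL code with hull dimension $h$ for every $0\le h\le k-2$. The distance is read off from Proposition~\ref{prop:s2-distance} with $a=d=\mu$ and $b=c=0$. The conditions $c-a\sigma\ne0$ and $d-b\sigma\ne0$ reduce to $\mu\sigma\ne0$, that is, to $0\notin\Delta_{k-1}(\boldsymbol a)$. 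So the code is MDS when $0\notin\Delta_{k-1}(\boldsymbol a)$ and AMDS when $0\in\Delta_{k-1}(\boldsymbol a)$. Applying Proposition~\ref{prop:EA-hull} again gives the stated EAQECCs.

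The main obstacle is not the construction but making sure the hull-dimension ranges in Table~\ref{tab:s2-explicit-results} are really available for the particular $A_\zeta$ chosen. This holds because Propositions~\ref{thm:common-nonzero} and~\ref{thm:common-direct} allow an arbitrary nonsingular $A_s$. Changing multipliers by $\beta$ only rescales evaluation columns, which affects neither $\Delta_{k-1}(\boldsymbol a)$ nor the distance criteria. I would state this independence explicitly.
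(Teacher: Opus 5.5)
Your proposal is correct and follows essentially the same route as the paper: apply Proposition~\ref{prop:EA-hull} to the two choices of $A_\zeta$ from Theorem~\ref{thm:s2-explicit-MDS}. At the boundary, take $A_2=\mu I_2$ with $\mu^{p^\ell+1}=-\eta$ and $\eta\ne\lambda$, using Corollary~\ref{cor:common-direct-extended} for the hull dimension and Proposition~\ref{prop:s2-distance} for the distance. Your added check that $N+c-K=2(n+2-k)$ matches the MDS and AMDS equalities, together with your caveat about the regime $2d_Q\le N+2$, is a harmless supplement that the paper leaves implicit.
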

	
	\begin{proof}
		For the parameter ranges in Table~\ref{tab:s2-explicit-results}, apply Proposition~\ref{prop:EA-hull} to the two choices in Theorem~\ref{thm:s2-explicit-MDS}. For the additional boundary, choose $\eta\in(\mathbb F_q^*)^{p^\ell+1}$ with $\eta\ne\lambda$ and take $A_2=\mu I_2$ with $\mu^{p^\ell+1}=-\eta$. Corollary~\ref{cor:common-direct-extended} gives every $0\le h\le k-2$, while Proposition~\ref{prop:s2-distance} gives the MDS and AMDS cases according as $0\notin\Delta_{k-1}(\boldsymbol a)$ or $0\in\Delta_{k-1}(\boldsymbol a)$. The result follows from Proposition~\ref{prop:EA-hull}.
	\end{proof}

	The next theorem gives the parameters for the two larger hull dimensions. Here $n$ is the number of entries of the actual evaluation vector, including $0$ whenever it is present.

\begin{theorem}\label{thm:EA-s2-exceptional}\setlength{\emergencystretch}{2em}
	Let $q=p^e>4$, where $p$ is a prime, and assume that $2\ell\mid e$. For the evaluation sets in Table~\ref{tab:s2-explicit-results}, choose the corresponding evaluation vector $\boldsymbol a$ under the stated field and evaluation-set assumptions. The following statements hold.
	\begin{enumerate}
		\item[(1)] For the nonzero evaluation sets in rows 1, 3, 5 and the first range of row 9, suppose that $n/(p^\ell+1)\in\mathbb Z_{>0}$ and let $k=\xi+1+n/(p^\ell+1)$, where $0\le\xi\le\min\{1,\lfloor(n/(p^\ell+1)-1)/p^\ell\rfloor\}$ in Table~\ref{tab:s2-explicit-results}. If the MDS condition in Proposition~\ref{thm:s2-hull-plus-one} holds for the given $\xi$, there exists an $[[n+2,2,n-k+3;n-2k+4]]_q$ MDS EAQECC. Otherwise, there exists an $[[n+2,2,n-k+2;n-2k+4]]_q$ AMDS EAQECC.
		
		\item[(2)] For the evaluation sets in rows 2, 4, 6, 7, 8 and 10 in Table~\ref{tab:s2-explicit-results}, suppose that the corresponding exceptional hull construction applies and $(n-1)/(p^\ell+1)\in\mathbb Z_{>0}$. Let $k=\xi+1+(n-1)/(p^\ell+1)$, where $0\le\xi\le\min\{1,\lfloor(n-1)/(p^\ell(p^\ell+1))\rfloor\}$. If the MDS condition in Proposition~\ref{thm:s2-hull-plus-one} holds for the given $\xi$, there exists an $[[n+2,1,n-k+3;n-2k+3]]_q$ MDS EAQECC. Otherwise, there exists an $[[n+2,1,n-k+2;n-2k+3]]_q$ AMDS EAQECC.
	\end{enumerate}
\end{theorem}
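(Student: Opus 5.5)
The plan is to reduce both parts to Proposition~\ref{thm:s2-hull-plus-one}, which supplies the classical codes, and then to Proposition~\ref{prop:EA-hull}, which turns them into EAQECCs. Throughout we take $s=2$.

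\textbf{Part (1).} The evaluation sets in rows 1, 3, 5 and the first range of row 9 come from Lemmas~\ref{lem:3.4}, \ref{lem:3.8}, \ref{lem:3.12} and \ref{lem:mixed-fibers}. These lemmas give $a_i\ne0$ and $a_i^{-1}u_i\in\mathbb F_{p^\ell}^*$.

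\begin{enumerate}
\item The hypothesis $n/(p^\ell+1)\in\mathbb Z_{>0}$, together with the stated range of $\xi$, is exactly the hypothesis of Proposition~\ref{thm:s2-hull-plus-one}(1). We need $3\le k\le n$. Here $k\ge 2$ automatically, and $k\ge3$ holds once $n/(p^\ell+1)\ge2$ or $\xi=1$. I would note this as an implicit assumption, or observe that the remaining case reduces to the smallest admissible $n$.
\item Proposition~\ref{thm:s2-hull-plus-one}(1) then gives an $[n+2,k]_q$ GRL code with $h=k-2$. It is MDS with $d=n-k+3$ if $a_{n-1}^{(1)}\notin\Delta_{k-1}(\boldsymbol a)$ (for $\xi=1$) or $0\notin\Delta_{k-1}(\boldsymbol a)$ (for $\xi=0$). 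Otherwise it is AMDS with $d=n-k+2$.
\item Applying Proposition~\ref{prop:EA-hull} with length $N=n+2$ gives $K=k-h=2$ and $c=N-k-h=n+2-k-(k-2)=n-2k+4$. The minimum distance is $n-k+3$ or $n-k+2$ according to the two cases.
\end{enumerate}

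\textbf{Part (2).} For rows 2, 4, 6, 7, 8 and 10, the relevant lemmas give the normalization with $\delta=1$:
\begin{itemize}
\item Lemma~\ref{lem:3.6} and the identities \eqref{eq:B-zero-lagrange-auto}, \eqref{eq:C-zero-lagrange-auto} give $w_i\in\mathbb F_{p^\ell}^*$.
\item Lemma~\ref{lem:3.1} gives $u_i\in\mathbb F_{p^\ell}^*$.
\item Lemma~\ref{lem:additive-subspace} gives $c_Vu_i\in\mathbb F_{p^\ell}^*$.
\item Lemma~\ref{lem:mixed-fibers-zero} gives $u_i\in\mathbb F_{p^\ell}^*$.
\end{itemize}
For row 8, Proposition~\ref{thm:common-plus-one-shifted} requires $\lambda=c_V\in\mathbb F_{p^\ell}^*$. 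This is the content of the phrase ``the corresponding exceptional hull construction applies'' (compare Theorem~\ref{thm:additive-subspace-plus-one}); in the other rows $\lambda=1$.

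With $(n-1)/(p^\ell+1)\in\mathbb Z_{>0}$ and the stated range of $\xi$, Proposition~\ref{thm:s2-hull-plus-one}(2) gives hull dimension $h=k-1$, with the same MDS/AMDS dichotomy. Proposition~\ref{prop:EA-hull} then gives $K=k-(k-1)=1$ and $c=n+2-k-(k-1)=n-2k+3$, with $d=n-k+3$ or $d=n-k+2$.

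\textbf{Expected obstacle.} The main difficulty is bookkeeping rather than a real argument. We must confirm that each listed row meets the hypotheses of Proposition~\ref{thm:s2-hull-plus-one}. This includes the requirement $\lambda\in\mathbb F_{p^\ell}^*$, which is stronger than $\lambda\in\mathbb F_q^*$ in the ordinary range. It also includes the requirement $3\le k\le n$, where $k\le n$ follows from $k\le 2+n/(p^\ell+1)$ and $n\ge p^\ell+1$. Once these are checked, the distance claims are exactly those of Proposition~\ref{thm:s2-hull-plus-one}.
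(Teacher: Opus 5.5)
Your proposal is correct and is exactly the paper's argument: apply Proposition~\ref{thm:s2-hull-plus-one} to obtain hull dimension $k-2$ (resp.\ $k-1$) together with the stated MDS/AMDS dichotomy, then use Proposition~\ref{prop:EA-hull} to get $K=2$, $c=n-2k+4$ (resp.\ $K=1$, $c=n-2k+3$). Your extra bookkeeping is accurate and is left implicit in the paper: $k\ge3$ is needed, inherited from $s\le k-1$, and row~8 needs $\lambda=c_V\in\mathbb F_{p^\ell}^*$ (the paper covers this with ``the corresponding exceptional hull construction applies'').
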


	\begin{proof}
		In (1) and (2), use the matrices and distance criteria in Proposition~\ref{thm:s2-hull-plus-one}, giving hull dimensions $k-2$ and $k-1$, respectively. Applying Proposition~\ref{prop:EA-hull} gives the stated parameters.
	\end{proof}
	
	The full evaluation set gives the following AMDS EAQECCs of length $q+2$.
	
	\begin{corollary}\label{thm:EA-full-field}
		Let $q=p^e>4$, where $p$ is a prime, and assume that $2\ell\mid e$. The following statements hold.
		\begin{enumerate}
			\item[(1)] For every $3\le k\le(q-1)/(p^\ell+1)$ and every $0\le h\le k-2$, there exists a $[[q+2,k-h,q-k+2;q+2-k-h]]_q$ AMDS EAQECC.
			
			\item[(2)] Let $k=1+(q-1)/(p^\ell+1)$. For every $0\le h\le k-2$, there exists a $[[q+2,k-h,q-k+2;q+2-k-h]]_q$ AMDS EAQECC.
			
			\item[(3)] Let $k=1+(q-1)/(p^\ell+1)$. Then there exists a $[[q+2,1,q-k+2;q+3-2k]]_q$ AMDS EAQECC.
			
			\item[(4)] If $(q-1)/(p^\ell+1)\ge p^\ell$ and $k=2+(q-1)/(p^\ell+1)$, then there exists a $[[q+2,1,q-k+2;q+3-2k]]_q$ AMDS EAQECC.
		\end{enumerate}
	\end{corollary}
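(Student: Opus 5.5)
The plan is to derive all four items from the full-field classical results for $s=2$ and then pass to quantum codes with Proposition~\ref{prop:EA-hull}. Throughout, the evaluation vector $\boldsymbol a$ runs over all of $\mathbb F_q$, so $n=q$. This is realized by Construction A with $t=p^\ell-1$ and $0$ adjoined, which is row~2 of Table~\ref{tab:s2-explicit-results}. There $u_i=-1\in\mathbb F_{p^\ell}^*$ and $a_{q-1}^{(1)}=0$.

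For (1), take $3\le k\le (q-1)/(p^\ell+1)$. This range lies inside the ordinary range of row~2: with $n=q$ it reads $k\le\lfloor (q+p^\ell-1)/(p^\ell+1)\rfloor$, and Proposition~\ref{thm:common-direct} with $\delta=\lambda=1$ needs $p^\ell(k-1)<q-k$, which holds strictly for these $k$. Choose $\zeta\in\Delta_{k-1}(\mathbb F_q)$, which is nonempty, and set $A_2=A_\zeta$. Theorem~\ref{thm:EA-s2-explicit} then gives a $[[q+2,k-h,q-k+2;q+2-k-h]]_q$ EAQECC for every $0\le h\le k-2$. Equivalently, apply Proposition~\ref{prop:EA-hull} to the $[q+2,k,q-k+2]_q$ code of Theorem~\ref{thm:s2-full-field-NMDS}.

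For (2), take $k=1+(q-1)/(p^\ell+1)$. Then $p^\ell(k-1)=q-k$, which is exactly the boundary case of Corollary~\ref{cor:common-direct-extended}. We use the second half of Theorem~\ref{thm:EA-s2-explicit} with $A_2=\mu I_2$, so we must verify $0\in\Delta_{k-1}(\mathbb F_q)$. The subgroup of $\mathbb F_q^*$ of order $k-1=(q-1)/(p^\ell+1)>1$ has element sum zero, so it is a zero-sum set of size $k-1$, as already observed in the proof of Theorem~\ref{thm:s2-full-field-NMDS}. This gives AMDS codes for every $0\le h\le k-2$.

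For (3) and (4), apply Corollary~\ref{cor:s2-full-field-plus-one} with $\xi=0$ and $\xi=1$ respectively. Each yields a $[q+2,k,q-k+2]_q$ AMDS code with hull dimension $k-1$. Proposition~\ref{prop:EA-hull} with $h=k-1$ then gives $K=1$ and $c=q+2-k-(k-1)=q+3-2k$. For $\xi=1$, the corollary requires $\lfloor (q-1)/(p^\ell(p^\ell+1))\rfloor\ge1$, which is exactly the stated hypothesis $(q-1)/(p^\ell+1)\ge p^\ell$. That case also needs $0\in\Delta_{k-1}(\mathbb F_q)$ with $k-1=1+(q-1)/(p^\ell+1)$; adjoining $0$ to the zero-sum subgroup gives such a set. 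Since $a_{q-1}^{(1)}=0$, the $\xi=1$ MDS criterion in Proposition~\ref{thm:s2-hull-plus-one} fails, so the code is AMDS, not MDS.

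The main obstacle is quantum-side bookkeeping: confirming that each resulting EAQECC is AMDS in the sense of the entanglement-assisted Singleton discussion, i.e.\ $n+c-K=2d_Q$. Checking, $(q+2)+(q+2-k-h)-(k-h)=2(q-k+2)$, and $(q+2)+(q+3-2k)-1=2(q-k+2)$, so both families satisfy it. The regime condition $2d_Q\le n+2$ should also be checked; otherwise "AMDS" can only be meant via the equality $n+c-K=2d_Q$ as in the paper's convention.
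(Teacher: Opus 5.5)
Your proposal is correct and follows the paper's proof essentially step for step. Part (1) comes from Theorem~\ref{thm:s2-full-field-NMDS} via Proposition~\ref{prop:EA-hull}. Part (2) is the boundary case of Theorem~\ref{thm:EA-s2-explicit}, with $A_2=\mu I_2$ and $0\in\Delta_{k-1}(\mathbb F_q)$. Parts (3)--(4) come from Proposition~\ref{thm:s2-hull-plus-one}(2) with $n=q$, $\delta=\lambda=1$ and $\xi=0,1$. The paper reaches (3)--(4) through Theorem~\ref{thm:EA-s2-exceptional}(2), while you go through Corollary~\ref{cor:s2-full-field-plus-one}; these are the same argument.

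On your closing caveat, the check is easy to finish. Since $2k<q$ for every $k$ considered here, the length-$(q+2)$ EAQECCs have $2d_Q=2(q-k+2)>(q+2)+2$. So the regime condition actually fails, and ``AMDS'' can only mean the equality $n+c-K=2d_Q$ that you verified. The paper's own proof leaves this point implicit.
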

	
	\begin{proof}
		For (1), apply Proposition~\ref{prop:EA-hull} to Theorem~\ref{thm:s2-full-field-NMDS}. For (2), all Lagrange coefficients equal $-1$ and $0\in\Delta_{k-1}(\mathbb F_q)$. Since $p^\ell(k-1)=q-k$, the boundary case in Theorem~\ref{thm:EA-s2-explicit} applies with $n=q$ and $\delta=\lambda=1$. For (3) and (4), apply Theorem~\ref{thm:EA-s2-exceptional}(2) with $n=q$, $\delta=\lambda=1$ and $\xi=0,1$, respectively. This proves the four conclusions.
	\end{proof}

	\subsection{EAQECCs from MDS and AMDS GRL codes for $s=3$}
	
	For $s=3$, the minimum distance is determined by the conditions in Lemmas~\ref{thm:s3-MDS} and~\ref{thm:s3-AMDS}. Since these conditions are unchanged by nonzero scaling of the evaluation columns, they can be combined with all the ordinary hull ranges in Theorem~\ref{thm:s3-explicit-families}.
	
	\begin{theorem}\label{thm:EA-s3-explicit}
		For each $n,k,h$ in Table~\ref{tab:s3-explicit-results}, choose the corresponding evaluation vector $\boldsymbol a$ under the stated field and evaluation-set assumptions. If $A_3$ satisfies the conditions of Lemma~\ref{thm:s3-MDS}, there exists an $[[n+3,k-h,n-k+4;n+3-k-h]]_q$ MDS EAQECC. If $A_3$ satisfies the conditions of Lemma~\ref{thm:s3-AMDS}, there exists an $[[n+3,k-h,n-k+3;n+3-k-h]]_q$ AMDS EAQECC.
		
		Moreover, let $\delta\in\mathbb Z_{>0}$ and $\lambda\in\mathbb F_q^*$, and assume that $\lambda a_j^{\delta-1}u_j\in\mathbb F_{p^\ell}^*$ for $1\le j\le n$. Suppose that $\delta-1+p^\ell(k-1)=n-k$. Then, for every $0\le h\le k-3$, if all the conditions in \eqref{eq:s3-diagonal-MDS} hold, there exists an $[[n+3,k-h,n-k+4;n+3-k-h]]_q$ MDS EAQECC; if at least one of those conditions fails and $(e_1(J),e_2(J))\ne(0,0)$ for every $J$, there exists an $[[n+3,k-h,n-k+3;n+3-k-h]]_q$ AMDS EAQECC.
	\end{theorem}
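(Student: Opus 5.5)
The plan is to mirror the proof of Theorem~\ref{thm:EA-s2-explicit}, now with $s=3$. The statement has two parts. The first covers the ordinary ranges of Table~\ref{tab:s3-explicit-results}. The second covers the extra boundary $\delta-1+p^\ell(k-1)=n-k$. In both parts the final step is Proposition~\ref{prop:EA-hull}. It converts an $[n+3,k,d]_q$ GRL code with an $h$-dimensional $\ell$-Galois hull into an $[[n+3,k-h,d;n+3-k-h]]_q$ EAQECC. So the only task is to produce classical codes with the right distance and hull dimension.

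\emph{Ordinary ranges.} For each row of Table~\ref{tab:s3-explicit-results}, Theorem~\ref{thm:s3-explicit-families} gives the classical codes. If $A_3$ satisfies Lemma~\ref{thm:s3-MDS}, it yields an $[n+3,k,n-k+4]_q$ MDS GRL code with $\dim(\operatorname{Hull}_\ell(C))=h$. If $A_3$ satisfies Lemma~\ref{thm:s3-AMDS}, it yields an $[n+3,k,n-k+3]_q$ AMDS code with the same $h$. The underlying reason is that Propositions~\ref{thm:common-nonzero} and~\ref{thm:common-direct} work for an arbitrary nonsingular $A_3$. Their rescaling of multipliers only scales evaluation columns, so the rank conditions in the two lemmas are unaffected. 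Applying Proposition~\ref{prop:EA-hull} then gives the stated MDS and AMDS EAQECC parameters directly.

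\emph{Boundary case.} Fix $\eta\in(\mathbb F_q^*)^{p^\ell+1}$ with $\eta\ne\lambda$; this is possible because the group has more than one element since $q>4$. Take $A_3=\mu I_3$ with $\mu^{p^\ell+1}=-\eta$; such $\mu$ exists because $-1\in(\mathbb F_q^*)^{p^\ell+1}$ by Lemma~\ref{lem:III.2}. Corollary~\ref{cor:common-direct-extended} with $s=3$ then gives a GRL code with $\dim(\operatorname{Hull}_\ell(C))=h$ for every $0\le h\le k-3$.

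It remains to determine the distance, and this is the only step needing any care. Since $A_3$ is a nonzero scalar matrix, each column $\boldsymbol a_r$ is $\mu$ times a standard basis vector. The determinants $\det(\boldsymbol a_r,\boldsymbol a_t,\boldsymbol\omega_I)$ are therefore, up to nonzero factors, the components $1$, $e_1(I)$ and $e_1(I)^2-e_2(I)$ of $\boldsymbol\omega_I$. The products $\boldsymbol a_r^T\boldsymbol\gamma_J$ are likewise, up to nonzero factors, $e_2(J)$, $-e_1(J)$ and $1$. Substituting these into Lemma~\ref{thm:s3-MDS} gives exactly the conditions \eqref{eq:s3-diagonal-MDS}.

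Lemma~\ref{thm:s3-AMDS} reduces in the same way. Condition (1) says some entry of \eqref{eq:s3-diagonal-MDS} fails. Condition (2) says at most one of $e_2(J)$, $e_1(J)$, $1$ vanishes for each $J$, which is $(e_1(J),e_2(J))\ne(0,0)$. Hence the code is MDS or AMDS exactly under the stated conditions, and Proposition~\ref{prop:EA-hull} finishes the argument.
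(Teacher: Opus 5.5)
Your proposal is correct and follows the paper's proof. The ordinary ranges come from applying Proposition~\ref{prop:EA-hull} to Theorem~\ref{thm:s3-explicit-families}, and the boundary case uses Corollary~\ref{cor:common-direct-extended} with $A_3=\mu I_3$, $\mu^{p^\ell+1}=-\eta$, $\eta\ne\lambda$, together with the diagonal criteria \eqref{eq:s3-diagonal-MDS}; your column-by-column reduction of Lemmas~\ref{thm:s3-MDS} and~\ref{thm:s3-AMDS} for the scalar matrix re-derives the criterion the paper states just after Corollary~\ref{cor:s3-arbitrary-hull}.
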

	
	\begin{proof}
		For the parameter ranges in Table~\ref{tab:s3-explicit-results}, apply Proposition~\ref{prop:EA-hull} to Theorem~\ref{thm:s3-explicit-families}. For the additional boundary, choose $\eta\in(\mathbb F_q^*)^{p^\ell+1}$ with $\eta\ne\lambda$ and take $A_3=\mu I_3$ with $\mu^{p^\ell+1}=-\eta$. Corollary~\ref{cor:common-direct-extended} gives every $0\le h\le k-3$, while \eqref{eq:s3-diagonal-MDS} and the corresponding AMDS criterion determine the distance. The result follows from Proposition~\ref{prop:EA-hull}.
	\end{proof}
	
	For the two larger hull dimensions, we use the matrices in \eqref{eq:s3-exceptional-matrices}. The notation $e_1(I)$, $e_2(I)$, $e_1(J)$ and $e_2(J)$, with $|I|=k-2$ and $|J|=k-1$, is the same as in Section~\ref{sec:s3}.

	\begin{theorem}\label{thm:EA-s3-exceptional}\setlength{\emergencystretch}{2em}
		Let $q=p^e>4$, where $p$ is a prime, and assume that $2\ell\mid e$. For the evaluation sets in Table~\ref{tab:s3-explicit-results}, choose the corresponding evaluation vector $\boldsymbol a$ under the stated field and evaluation-set assumptions. All distance conditions below are taken with respect to this evaluation vector. The following statements hold.
		\begin{enumerate}
			\item[(1)] For the nonzero evaluation sets in rows 1, 3, 5 and the first range of row 9 in Table~\ref{tab:s3-explicit-results}, suppose that $n/(p^\ell+1)\in\mathbb Z_{>0}$ and let $k=\xi+1+n/(p^\ell+1)$, where $\max\{0,3-p^\ell\}\le\xi\le\min\{2,p^\ell-1,\lfloor(n/(p^\ell+1)-1)/p^\ell\rfloor\}$. If the MDS conditions in row $\xi$ of Table~\ref{tab:s3-exceptional-distance} hold, there exists an $[[n+3,3,n-k+4;n-2k+6]]_q$ MDS EAQECC. If instead the AMDS criterion in that row holds, there exists an $[[n+3,3,n-k+3;n-2k+6]]_q$ AMDS EAQECC.
			
			\item[(2)] For the evaluation sets in rows 2, 4, 6, 7, 8 and 10 in Table~\ref{tab:s3-explicit-results}, suppose that the corresponding exceptional hull construction applies and $(n-1)/(p^\ell+1)\in\mathbb Z_{>0}$. Let $k=\xi+1+(n-1)/(p^\ell+1)$, where $\max\{0,3-p^\ell\}\le\xi\le\min\{2,p^\ell-1,\lfloor(n-1)/(p^\ell(p^\ell+1))\rfloor\}$. If the MDS conditions in row $\xi$ of Table~\ref{tab:s3-exceptional-distance} hold, there exists an $[[n+3,2,n-k+4;n-2k+5]]_q$ MDS EAQECC. If instead the AMDS criterion in that row holds, there exists an $[[n+3,2,n-k+3;n-2k+5]]_q$ AMDS EAQECC.
		\end{enumerate}
	\end{theorem}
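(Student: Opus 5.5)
The plan mirrors the proof of Theorem~\ref{thm:EA-s2-exceptional}, with $s=3$ in place of $s=2$. The argument has three steps: produce the classical code and its hull dimension, read off its minimum distance, and convert it into an EAQECC.

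\textbf{Step 1: the classical code and its hull.}
\begin{itemize}
\item For part (1), the nonzero evaluation sets in rows 1, 3, 5 and the first range of row 9 satisfy $a_j\ne0$ and $a_j^{-1}u_j\in\mathbb F_{p^\ell}^*$. This holds by Lemmas~\ref{lem:3.4}, \ref{lem:3.8}, \ref{lem:3.12} and~\ref{lem:mixed-fibers}.
\item Corollary~\ref{cor:s3-hull-plus-one}(1) then applies with the stated $\xi$. We take $A_3$ from \eqref{eq:s3-exceptional-matrices} with $\mu^{p^\ell+1}=-1$.
\item This gives an $[n+3,k]_q$ GRL code $C$ with $\dim(\operatorname{Hull}_\ell(C))=k-3$.
\item For part (2), the rows containing a zero or additive evaluation point give coefficients $\lambda u_j\in\mathbb F_{p^\ell}^*$. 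Here $\lambda=1$, except $\lambda=c_V$ for row 8 under the hypothesis $c_V\in\mathbb F_{p^\ell}^*$ of Theorem~\ref{thm:additive-subspace-plus-one}. The phrase ``the corresponding exceptional hull construction applies'' is read as covering exactly these hypotheses.
\item With $\delta=1$, Corollary~\ref{cor:s3-hull-plus-one}(2) applies, since $(n-\delta)/(p^\ell+1)=(n-1)/(p^\ell+1)$ and the bounds on $\xi$ agree.
\item We take $\mu^{p^\ell+1}=-\lambda$, which gives hull dimension $k-2$.
\end{itemize}

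\textbf{Step 2: the minimum distance.} The last sentence of Corollary~\ref{cor:s3-hull-plus-one} says that MDS or AMDS is decided by row $\xi$ of Table~\ref{tab:s3-exceptional-distance}.
\begin{itemize}
\item If the MDS conditions hold, $d=n-k+4$.
\item If the AMDS criterion holds, $d=n-k+3$. This criterion means that some MDS condition fails while the pair condition in the last column holds for every $J$.
\item This correspondence rests on Lemmas~\ref{thm:s3-MDS} and~\ref{thm:s3-AMDS}, with $A_3$ substituted.
\item The multipliers only scale evaluation columns, so they do not affect these criteria.
\end{itemize}

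\textbf{Step 3: the EAQECC parameters.} Proposition~\ref{prop:EA-hull} turns an $[N,k,d]_q$ code with hull dimension $h$ into an $[[N,k-h,d;N-k-h]]_q$ EAQECC. We use $N=n+3$.
\begin{itemize}
\item In (1), $h=k-3$, so the code encodes $k-h=3$ qudits and consumes $n+3-k-(k-3)=n-2k+6$ entangled pairs.
\item In (2), $h=k-2$, so the code encodes $2$ qudits and consumes $n+3-k-(k-2)=n-2k+5$ entangled pairs.
\end{itemize}
These match the stated parameters $[[n+3,3,d;n-2k+6]]_q$ and $[[n+3,2,d;n-2k+5]]_q$.

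\textbf{Main obstacle.} The only real check is the hypothesis bookkeeping in part (2): each listed row must meet the requirement $\lambda\in\mathbb F_{p^\ell}^*$ of Proposition~\ref{thm:common-plus-one-shifted}. For rows 2, 4, 6, 7 and 10, $\lambda=1$ by Lemmas~\ref{lem:3.6}, \ref{lem:3.1}, \ref{lem:mixed-fibers-zero} and the $w_{n+1}$ identities. For row 8 it is exactly the added condition on $c_V$. I would state this explicitly rather than re-derive the hull computation.
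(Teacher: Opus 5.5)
Your proposal is correct and takes the paper's route. As you do, the paper invokes Corollary~\ref{cor:s3-hull-plus-one} with the matrices in \eqref{eq:s3-exceptional-matrices} to get hull dimensions $k-3$ and $k-2$ and the distance from the matching row of Table~\ref{tab:s3-exceptional-distance}, then applies Proposition~\ref{prop:EA-hull}; your extra bookkeeping ($\delta=1$, and $\lambda=c_V\in\mathbb F_{p^\ell}^*$ for row~8) is consistent, though neither you nor the paper records that the corollary also needs $k\ge4$ (so that $s=3\le k-1$), which you should add when checking its hypotheses.
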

	
	\begin{proof}
		In (1) and (2), use Corollary~\ref{cor:s3-hull-plus-one} with the matrices in \eqref{eq:s3-exceptional-matrices} and the corresponding criteria in Table~\ref{tab:s3-exceptional-distance}. Their hull dimensions are $k-3$ and $k-2$, respectively. Applying Proposition~\ref{prop:EA-hull} gives the stated parameters.
	\end{proof}


\subsection{An example from AMDS GRL codes}\label{subsec:EA-examples}

	\begin{example}\label{ex:EA-amds-unified}
	Let $C$ be an $[n,k,n-k]_q$ AMDS GRL code with an $h$-dimensional
	$\ell$-Galois hull. For the families below, the quantum distance equals
	the classical distance. By Proposition~\ref{prop:EA-hull}, this gives an
	EAQECC with parameters $[[n,k-h,n-k;n-k-h]]_q$.
	Associated with this result, the AMDS codes in
	Tables~\ref{tab:s2-amds-good-examples} and~\ref{tab:s3-amds-selected-examples},
	the constructions in Corollary~\ref{thm:EA-full-field},
	and some additional examples from Construction~B are collected in Table~\ref{tab:EA-amds-examples}.
	
	{\footnotesize
		\setlength{\tabcolsep}{3pt}
		\renewcommand{\arraystretch}{1.1}
		\begin{longtable}{@{}ccccc@{}}
			\caption{EAQECCs obtained from AMDS GRL codes.}\label{tab:EA-amds-examples}\\
			\toprule
			$q$ & $p^\ell$ & Classical AMDS code & $h$ & EAQECC parameters \\
			\midrule
			\endfirsthead
			\multicolumn{5}{c}{\tablename~\thetable\ (continued)}\\
			\toprule
			$q$ & $p^\ell$ & Classical AMDS code & $h$ & EAQECC parameters \\
			\midrule
			\endhead
			\midrule
			\multicolumn{5}{r}{Continued on next page}\\
			\endfoot
			\bottomrule
			\endlastfoot
			\multicolumn{5}{@{}l}{\emph{Case $s=2$}}\\
			\midrule
			
			$64$ & $2$ & $[34,11,23]_{64}$ & $0\le h\le 9$ & $[[34,11-h,23;23-h]]_{64}$ \\
			$64$ & $2$ & $[66,22,44]_{64}$ & $0\le h\le 21$ & $[[66,22-h,44;44-h]]_{64}$ \\
			$81$ & $3$ & $[56,14,42]_{81}$ & $0\le h\le 12$ & $[[56,14-h,42;42-h]]_{81}$ \\
			$81$ & $3$ & $[83,21,62]_{81}$ & $0\le h\le 20$ & $[[83,21-h,62;62-h]]_{81}$ \\
			$256$ & $2$ & $[18,6,12]_{256}$ & $0\le h\le 5$ & $[[18,6-h,12;12-h]]_{256}$ \\
			$256$ & $2$ & $[130,43,87]_{256}$ & $0\le h\le 41$ & $[[130,43-h,87;87-h]]_{256}$ \\
			$256$ & $2$ & $[258,86,172]_{256}$ & $0\le h\le 85$ & $[[258,86-h,172;172-h]]_{256}$ \\
			$256$ & $4$ & $[194,39,155]_{256}$ & $0\le h\le 37$ & $[[194,39-h,155;155-h]]_{256}$ \\
			$256$ & $4$ & $[258,52,206]_{256}$ & $0\le h\le 51$ & $[[258,52-h,206;206-h]]_{256}$ \\
			$256$ & $16$ & $[258,16,242]_{256}$ & $0\le h\le 15$ & $[[258,16-h,242;242-h]]_{256}$ \\
			$625$ & $5$ & $[159,27,132]_{625}$ & $0\le h\le 26$ & $[[159,27-h,132;132-h]]_{625}$ \\
			$625$ & $5$ & $[471,79,392]_{625}$ & $0\le h\le 78$ & $[[471,79-h,392;392-h]]_{625}$ \\
			$625$ & $5$ & $[502,84,418]_{625}$ & $0\le h\le 82$ & $[[502,84-h,418;418-h]]_{625}$ \\
			$625$ & $25$ & $[627,25,602]_{625}$ & $0\le h\le 24$ & $[[627,25-h,602;602-h]]_{625}$ \\
			$729$ & $3$ & $[488,122,366]_{729}$ & $0\le h\le 120$ & $[[488,122-h,366;366-h]]_{729}$ \\
			$729$ & $3$ & $[731,183,548]_{729}$ & $0\le h\le 182$ & $[[731,183-h,548;548-h]]_{729}$ \\
			$1024$ & $2$ & $[514,171,343]_{1024}$ & $0\le h\le 169$ & $[[514,171-h,343;343-h]]_{1024}$ \\
			$1024$ & $2$ & $[1026,342,684]_{1024}$ & $0\le h\le 341$ & $[[1026,342-h,684;684-h]]_{1024}$ \\
			$1024$ & $32$ & $[1026,32,994]_{1024}$ & $0\le h\le 31$ & $[[1026,32-h,994;994-h]]_{1024}$ \\
			$2401$ & $7$ & $[1603,201,1402]_{2401}$ & $0\le h\le 200$ & $[[1603,201-h,1402;1402-h]]_{2401}$ \\
			$2401$ & $7$ & $[2060,258,1802]_{2401}$ & $0\le h\le 256$ & $[[2060,258-h,1802;1802-h]]_{2401}$ \\
			$2401$ & $7$ & $[2403,301,2102]_{2401}$ & $0\le h\le 300$ & $[[2403,301-h,2102;2102-h]]_{2401}$ \\
			$2401$ & $49$ & $[2403,49,2354]_{2401}$ & $0\le h\le 48$ & $[[2403,49-h,2354;2354-h]]_{2401}$ \\
			$4096$ & $2$ & $[2050,683,1367]_{4096}$ & $0\le h\le 681$ & $[[2050,683-h,1367;1367-h]]_{4096}$ \\
			$4096$ & $2$ & $[4098,1366,2732]_{4096}$ & $0\le h\le 1365$ & $[[4098,1366-h,2732;2732-h]]_{4096}$ \\
			$4096$ & $4$ & $[1368,274,1094]_{4096}$ & $0\le h\le 273$ & $[[1368,274-h,1094;1094-h]]_{4096}$ \\
			$4096$ & $4$ & $[3074,615,2459]_{4096}$ & $0\le h\le 613$ & $[[3074,615-h,2459;2459-h]]_{4096}$ \\
			$4096$ & $4$ & $[4098,820,3278]_{4096}$ & $0\le h\le 819$ & $[[4098,820-h,3278;3278-h]]_{4096}$ \\
			$4096$ & $8$ & $[2928,325,2603]_{4096}$ & $0\le h\le 323$ & $[[2928,325-h,2603;2603-h]]_{4096}$ \\
			$4096$ & $8$ & $[3586,399,3187]_{4096}$ & $0\le h\le 397$ & $[[3586,399-h,3187;3187-h]]_{4096}$ \\
			$4096$ & $8$ & $[4098,456,3642]_{4096}$ & $0\le h\le 455$ & $[[4098,456-h,3642;3642-h]]_{4096}$ \\
			$4096$ & $64$ & $[4098,64,4034]_{4096}$ & $0\le h\le 63$ & $[[4098,64-h,4034;4034-h]]_{4096}$ \\
			$6561$ & $3$ & $[4376,1094,3282]_{6561}$ & $0\le h\le 1092$ & $[[4376,1094-h,3282;3282-h]]_{6561}$ \\
			$6561$ & $3$ & $[6563,1641,4922]_{6561}$ & $0\le h\le 1640$ & $[[6563,1641-h,4922;4922-h]]_{6561}$ \\
			$6561$ & $9$ & $[823,83,740]_{6561}$ & $0\le h\le 82$ & $[[823,83-h,740;740-h]]_{6561}$ \\
			$6561$ & $9$ & $[4103,411,3692]_{6561}$ & $0\le h\le 410$ & $[[4103,411-h,3692;3692-h]]_{6561}$ \\
			$6561$ & $9$ & $[5834,584,5250]_{6561}$ & $0\le h\le 582$ & $[[5834,584-h,5250;5250-h]]_{6561}$ \\
			$6561$ & $9$ & $[6563,657,5906]_{6561}$ & $0\le h\le 656$ & $[[6563,657-h,5906;5906-h]]_{6561}$ \\
			$6561$ & $81$ & $[6563,81,6482]_{6561}$ & $0\le h\le 80$ & $[[6563,81-h,6482;6482-h]]_{6561}$ \\
			$15625$ & $5$ & $[11721,1954,9767]_{15625}$ & $0\le h\le 1953$ & $[[11721,1954-h,9767;9767-h]]_{15625}$ \\
			$15625$ & $5$ & $[12502,2084,10418]_{15625}$ & $0\le h\le 2082$ & $[[12502,2084-h,10418;10418-h]]_{15625}$ \\
			$15625$ & $5$ & $[15627,2605,13022]_{15625}$ & $0\le h\le 2604$ & $[[15627,2605-h,13022;13022-h]]_{15625}$ \\
			$15625$ & $125$ & $[15627,125,15502]_{15625}$ & $0\le h\le 124$ & $[[15627,125-h,15502;15502-h]]_{15625}$ \\
			\midrule
			\multicolumn{5}{@{}l}{\emph{Case $s=3$}}\\
			\midrule
			$64$ & $2$ & $[35,11,24]_{64}$ & $0\le h\le 8$ & $[[35,11-h,24;24-h]]_{64}$ \\
			$64$ & $8$ & $[35,4,31]_{64}$ & $0\le h\le 1$ & $[[35,4-h,31;31-h]]_{64}$ \\
			$64$ & $2$ & $[25,4,21]_{64}$ & $0\le h\le 1$ & $[[25,4-h,21;21-h]]_{64}$ \\
			$64$ & $2$ & $[24,5,19]_{64}$ & $0\le h\le 1$ & $[[24,5-h,19;19-h]]_{64}$ \\
			$81$ & $3$ & $[84,4,80]_{81}$ & $0\le h\le 1$ & $[[84,4-h,80;80-h]]_{81}$ \\
			$81$ & $9$ & $[84,4,80]_{81}$ & $0\le h\le 1$ & $[[84,4-h,80;80-h]]_{81}$ \\
			$81$ & $9$ & $[64,4,60]_{81}$ & $0\le h\le 1$ & $[[64,4-h,60;60-h]]_{81}$ \\
			$81$ & $3$ & $[57,4,53]_{81}$ & $0\le h\le 1$ & $[[57,4-h,53;53-h]]_{81}$ \\
			$81$ & $9$ & $[54,4,50]_{81}$ & $0\le h\le 1$ & $[[54,4-h,50;50-h]]_{81}$ \\
			$81$ & $3$ & $[30,7,23]_{81}$ & $0\le h\le 4$ & $[[30,7-h,23;23-h]]_{81}$ \\
			$81$ & $3$ & $[24,5,19]_{81}$ & $0\le h\le 2$ & $[[24,5-h,19;19-h]]_{81}$ \\
			$256$ & $4$ & $[174,4,170]_{256}$ & $0\le h\le 1$ & $[[174,4-h,170;170-h]]_{256}$ \\
			$256$ & $16$ & $[174,4,170]_{256}$ & $0\le h\le 1$ & $[[174,4-h,170;170-h]]_{256}$ \\
			$256$ & $2$ & $[131,42,89]_{256}$ & $0\le h\le 39$ & $[[131,42-h,89;89-h]]_{256}$ \\
			$256$ & $4$ & $[131,5,126]_{256}$ & $0\le h\le 2$ & $[[131,5-h,126;126-h]]_{256}$ \\
			$256$ & $16$ & $[131,5,126]_{256}$ & $0\le h\le 2$ & $[[131,5-h,126;126-h]]_{256}$ \\
			$256$ & $2$ & $[89,5,84]_{256}$ & $0\le h\le 2$ & $[[89,5-h,84;84-h]]_{256}$ \\
			$256$ & $4$ & $[67,12,55]_{256}$ & $0\le h\le 9$ & $[[67,12-h,55;55-h]]_{256}$ \\
			$625$ & $5$ & $[128,20,108]_{625}$ & $0\le h\le 17$ & $[[128,20-h,108;108-h]]_{625}$ \\
			$625$ & $25$ & $[128,4,124]_{625}$ & $0\le h\le 1$ & $[[128,4-h,124;124-h]]_{625}$ \\
			$729$ & $3$ & $[246,60,186]_{729}$ & $0\le h\le 57$ & $[[246,60-h,186;186-h]]_{729}$ \\
			$729$ & $27$ & $[246,4,242]_{729}$ & $0\le h\le 1$ & $[[246,4-h,242;242-h]]_{729}$ \\
			$2401$ & $7$ & $[346,42,304]_{2401}$ & $0\le h\le 39$ & $[[346,42-h,304;304-h]]_{2401}$ \\
			\midrule
			\multicolumn{5}{@{}l}{\emph{Additional examples}}\\
			\midrule
			$9$ & $3$ & $[11,3,8]_9$ & $0\le h\le 2$ & $[[11,3-h,8;8-h]]_9$ \\
			$25$ & $5$ & $[27,5,22]_{25}$ & $0\le h\le 4$ & $[[27,5-h,22;22-h]]_{25}$ \\
			$49$ & $7$ & $[51,7,44]_{49}$ & $0\le h\le 6$ & $[[51,7-h,44;44-h]]_{49}$ \\
			$81$ & $3$ & $[83,22,61]_{81}$ & $21$ & $[[83,1,61;40]]_{81}$ \\
			$121$ & $11$ & $[123,11,112]_{121}$ & $0\le h\le 10$ & $[[123,11-h,112;112-h]]_{121}$ \\
			$625$ & $5$ & $[627,105,522]_{625}$ & $0\le h\le 104$ & $[[627,105-h,522;522-h]]_{625}$ \\
			$625$ & $5$ & $[627,106,521]_{625}$ & $105$ & $[[627,1,521;416]]_{625}$ \\
			$169$ & $13$ & $[45,4,41]_{169}$ & $0\le h\le 3$ & $[[45,4-h,41;41-h]]_{169}$ \\
			$169$ & $13$ & $[46,4,42]_{169}$ & $0\le h\le 2$ & $[[46,4-h,42;42-h]]_{169}$ \\
		\end{longtable}
	}
	
	For the full evaluation set, Corollary~\ref{thm:EA-full-field}(1)--(3)
	gives $0\le h\le k-1$ when $k=1+(q-1)/(p^\ell+1)$, whereas
	part (4) gives only $h=k-1$ when $k=2+(q-1)/(p^\ell+1)$.
\end{example}
	\section{LCD GRL codes and related EAQECCs}\label{sec:LCD}
	
	Taking $h=0$ in the preceding constructions gives $\ell$-Galois LCD GRL codes and maximally entangled EAQECCs. We collect the consequences in one corollary and compare the resulting Hermitian LCD families with those of Liang et al.~\cite{Liang2026LCD}.
	
	\begin{corollary}\label{cor:LCD-EA}
		Let $q=p^e>4$, where $p$ is a prime, and assume that $2\ell\mid e$.
		Every family in Section~\ref{sec:eaqecc} for which $h=0$ is allowed yields an $[n+s,k,d]_q$ $\ell$-Galois LCD GRL code $C$ and a maximally entangled $[[n+s,k,d;n+s-k]]_q$ EAQECC. If $e=2\ell$, then $C$ is Hermitian LCD and also yields a maximally entangled $[[n+s,k,d;n+s-k]]_{p^\ell}$ EAQECC. In particular, the full evaluation set gives an AMDS Hermitian LCD GRL code and an EAQECC with parameters
		$$[q+2,k,q-k+2]_q,\qquad [[q+2,k,q-k+2;q+2-k]]_{p^\ell},\qquad 3\le k\le p^\ell,\quad q=p^{2\ell}.$$
	\end{corollary}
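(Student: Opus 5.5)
The plan has three parts. First, read off the classical LCD property. Second, apply Proposition~\ref{prop:EA-hull} with $h=0$. Third, handle the Hermitian case $e=2\ell$ separately, including the passage from a code over $\mathbb F_q$ to an EAQECC over $\mathbb F_{p^\ell}$, and then specialize to the full evaluation set.

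\emph{The general statement.} Let $C$ be any family in Section~\ref{sec:eaqecc} for which $h=0$ is admissible. These come from Theorem~\ref{thm:EA-general}, Theorems~\ref{thm:EA-s2-explicit} and~\ref{thm:EA-s3-explicit}, and Corollary~\ref{thm:EA-full-field}. In each case the underlying classical code $C=\operatorname{GRL}_k(\boldsymbol a,\boldsymbol v,A_s)$ is produced by Proposition~\ref{thm:common-nonzero}, Proposition~\ref{thm:common-direct}, or Corollary~\ref{cor:common-direct-extended}, and satisfies $\operatorname{Hull}_\ell(C)=C\cap C^{\perp_\ell}=\{0\}$. This is exactly the definition of an $\ell$-Galois LCD code. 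Its length, dimension and distance $d$ are the ones recorded in the corresponding theorem, so $C$ is an $[n+s,k,d]_q$ code. Proposition~\ref{prop:EA-hull} with $h=0$ then gives an $[[n+s,k,d;n+s-k]]_q$ EAQECC. Since $c=n+s-k$ equals the full redundancy, the code is maximally entangled.

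\emph{The Hermitian case and the full evaluation set.} When $e=2\ell$, the $\ell$-Galois inner product is the Hermitian one, so $C$ is Hermitian LCD. I will get the $p^\ell$-ary EAQECC from the standard Hermitian construction: a Hermitian $[N,K,d]_{p^{2\ell}}$ code with Hermitian hull dimension $h$ gives an $[[N,K-h,d;N-K-h]]_{p^\ell}$ EAQECC, as in \cite{GalindoHernandoMatsumotoRuano2019} and \cite{9559992}. This result is not stated in the paper, so I will cite it; it is the main point that needs outside justification. For the full evaluation set, take $q=p^{2\ell}$. Then $(q-1)/(p^\ell+1)=p^\ell-1$, so Theorem~\ref{thm:s2-full-field-NMDS} and Corollary~\ref{thm:EA-full-field}(1),(2) cover $3\le k\le 1+(p^\ell-1)=p^\ell$ with $h=0$ allowed. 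They give $[q+2,k,q-k+2]_q$ AMDS codes with trivial Hermitian hull. The Hermitian construction then yields a $[[q+2,k,q-k+2;q+2-k]]_{p^\ell}$ EAQECC.

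\emph{The main obstacle.} The delicate step is the boundary value $k=p^\ell$. There the hull construction needs $A_2=\mu I_2$, not $A_\zeta$, as in Corollary~\ref{cor:common-direct-extended}. The AMDS property must therefore be justified through $0\in\Delta_{k-1}(\mathbb F_q)$: the multiplicative subgroup of order $p^\ell-1>1$ has zero sum. This is the argument already used in Theorem~\ref{thm:s2-full-field-NMDS}, and I would invoke it directly.
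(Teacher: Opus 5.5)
Your proposal is correct and follows essentially the same route as the paper. You take $h=0$ in the underlying hull construction and apply Proposition~\ref{prop:EA-hull}, and you invoke an external Hermitian result for the $p^\ell$-ary EAQECC; the paper cites \cite[Corollary~3.2]{GuendaJitmanGulliver2018}, which states exactly the $[[N,K-h,d;N-K-h]]_{p^\ell}$ form you need, so it is a more precise reference than \cite{9559992}. For the full-field case you use the split $3\le k\le p^\ell-1$ versus $k=p^\ell$ from Corollary~\ref{thm:EA-full-field}(1)--(2), and your boundary discussion ($A_2=\mu I_2$, $0\in\Delta_{p^\ell-1}(\mathbb F_q)$) only re-derives what Theorem~\ref{thm:s2-full-field-NMDS} already provides.
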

	
	\begin{proof}
		Take $h=0$ in the corresponding hull construction and Proposition~\ref{prop:EA-hull}. When $e=2\ell$, the $\ell$-Galois inner product is Hermitian, and \cite[Corollary~3.2]{GuendaJitmanGulliver2018} gives the stated $p^\ell$-ary EAQECC. For the last assertion, take $n=q$, $\delta=\lambda=1$ and $h=0$ in Theorem~\ref{thm:EA-general}(2); the exact distance follows from Corollary~\ref{thm:EA-full-field}(1)--(2).
	\end{proof}
	
	For the comparison, fix $q=p^{2\ell}>4$. We compare the explicit Hermitian LCD constructions in \cite[Theorems~4.1, 4.4, 4.7, 4.8, 4.9, 4.10 and 4.12]{Liang2026LCD} with the family in Corollary~\ref{cor:LCD-EA}.
	
	Our construction has several advantages over the cited families. It applies to every prime $p$, including $p=2$, and provides length $q+s$, whereas the lengths in \cite{Liang2026LCD} are at most $q-1+s$. It also allows every $s<k\le p^\ell$, without requiring $k\mid(p^\ell-1)$, and permits all extension sizes $2\le s<k$, including $s>k/2$. For $s=2$, the codes with evaluation set $\mathbb{F}_{q}$ are AMDS with the exact distance $q-k+2$. Moreover, $A_s$ may be chosen arbitrarily when $k\le p^\ell-1$, while the boundary value $k=p^\ell$ is covered by the choice in Section~\ref{sec:eaqecc}. Consequently, the same advantages carry over to the resulting maximally entangled EAQECCs.
	
	\begin{example}\label{ex:LCD-comparison}
		Table~\ref{tab:LCD-examples} first compares our codes with the concrete Hermitian LCD codes listed in Appendix B of \cite{Liang2026LCD}, then with the largest lengths supplied by their cited family at the same $q,k,s=2$, and finally gives parameters not covered by their cited constructions.
	\end{example}
	
	\begin{table}[H]
		\centering
		\caption{Hermitian LCD GRL codes and their associated maximally entangled EAQECCs compared with Liang et al.~\cite{Liang2026LCD}.}
		\label{tab:LCD-examples}
		\small
		\setlength{\tabcolsep}{4pt}
		\renewcommand{\arraystretch}{1.16}
		\begin{tabular}{@{}ccccc@{}}
			\toprule
			$q$ & \multicolumn{2}{c}{This paper} & \multicolumn{2}{c}{Liang et al.~\cite{Liang2026LCD}} \\
			\cmidrule(lr){2-3}\cmidrule(lr){4-5}
			& Hermitian LCD GRL code & EAQECC & Hermitian LCD GRL code & EAQECC \\
			\midrule
			\multicolumn{5}{c}{\emph{Examples listed in Appendix B of \cite{Liang2026LCD}}} \\
			\midrule
			$81$ & $[83,8,75]_{81}$ & $[[83,8,75;75]]_9$ & $[10,8,3]_{81}$ & $[[10,8,3;2]]_9$ \\
			$625$ & $[627,8,619]_{625}$ & $[[627,8,619;619]]_{25}$ & $[13,8,5]_{625}$ & $[[13,8,5;5]]_{25}$ \\
			$169$ & $[171,6,165]_{169}$ & $[[171,6,165;165]]_{13}$ & $[14,6,8]_{169}$ & $[[14,6,8;8]]_{13}$ \\
			$121$ & $[123,5,118]_{121}$ & $[[123,5,118;118]]_{11}$ & $[37,5,32]_{121}$ & $[[37,5,32;32]]_{11}$ \\
			\midrule
			\multicolumn{5}{c}{\emph{Largest lengths from the cited family at the same $q,k,s=2$}} \\
			\midrule
			$49$ & $[51,6,45]_{49}$ & $[[51,6,45;45]]_7$ & $[32,6,26]_{49}$ & $[[32,6,26;26]]_7$ \\
			$81$ & $[83,8,75]_{81}$ & $[[83,8,75;75]]_9$ & $[58,8,50]_{81}$ & $[[58,8,50;50]]_9$ \\
			$121$ & $[123,10,113]_{121}$ & $[[123,10,113;113]]_{11}$ & $[72,10,62]_{121}$ & $[[72,10,62;62]]_{11}$ \\
			$169$ & $[171,12,159]_{169}$ & $[[171,12,159;159]]_{13}$ & $[134,12,122]_{169}$ & $[[134,12,122;122]]_{13}$ \\
			\midrule
			\multicolumn{5}{c}{\emph{Parameters not covered by the cited constructions}} \\
			\midrule
			$49$ & $[51,7,44]_{49}$ & $[[51,7,44;44]]_7$ & \multicolumn{2}{c}{Not covered ($k\nmid(p^\ell-1)$)} \\
			$64$ & $[66,8,58]_{64}$ & $[[66,8,58;58]]_8$ & \multicolumn{2}{c}{Not covered ($p=2$)} \\
			$81$ & $[84,4,80]_{81}$ & $[[84,4,80;80]]_9$ & \multicolumn{2}{c}{Not covered ($s>k/2$)} \\
			$625$ & $[128,4,124]_{625}$ & $[[128,4,124;124]]_{25}$ & \multicolumn{2}{c}{Not covered ($s>k/2$)} \\
			$64$ & $[35,4,31]_{64}$ & $[[35,4,31;31]]_8$ & \multicolumn{2}{c}{Not covered ($p=2$)} \\
			$256$ & $[131,5,126]_{256}$ & $[[131,5,126;126]]_{16}$ & \multicolumn{2}{c}{Not covered ($p=2$)} \\
			\bottomrule
		\end{tabular}
	\end{table}

	\section{Conclusion}\label{sec:conclusion}
	
	In this paper, we studied GRL codes with prescribed $\ell$-Galois hull dimensions and their applications to EAQECCs. After introducing the problem in Section~\ref{sec:introduction}, Section~\ref{sec:preliminaries} recalled the necessary Galois-duality tools and the polynomial description of GRL duals. Section~\ref{sec:construction} developed a common multiplier method and six explicit families based on multiplicative, additive, and mixed evaluation sets. Sections~\ref{sec:s2} and~\ref{sec:s3} specialized these constructions to $s=2$ and $s=3$, respectively, and combined the prescribed hull dimensions with MDS and AMDS criteria. Section~\ref{sec:eaqecc} converted the resulting classical codes into EAQECCs with explicit parameters and gave concrete AMDS examples. Finally, Section~\ref{sec:LCD} treated the zero-hull specialization, obtaining $\ell$-Galois and Hermitian LCD GRL codes together with maximally entangled EAQECCs.
	
	\printbibliography

@article{9559992,
	author  = {Cao, Meng},
	title   = {{MDS} codes with {Galois} hulls of arbitrary dimensions and the related entanglement-assisted quantum error correction},
	journal = {IEEE Transactions on Information Theory},
	volume  = {67},
	number  = {12},
	pages   = {7964--7984},
	year    = {2021}
}

@article{li2025new,
	author  = {Li, Fengwei and Jiang, Ruiyuan and Liu, Yuting},
	title   = {New {MDS} and self-dual generalized {Roth--Lempel} codes as well as their deep holes},
	journal = {Designs, Codes and Cryptography},
	volume  = {93},
	number  = {11},
	pages   = {5079--5096},
	year    = {2025}
}

@article{li2023several,
	author  = {Li, Yang and Su, Yunfei and Zhu, Shixin and Li, Shitao and Shi, Minjia},
	title   = {Several classes of {Galois} self-orthogonal {MDS} codes and related applications},
	journal = {Finite Fields and Their Applications},
	volume  = {91},
	pages   = {102267},
	year    = {2023}
}

@article{roth1989construction,
	author  = {Roth, Ron M. and Lempel, Abraham},
	title   = {A construction of non-{Reed--Solomon} type {MDS} codes},
	journal = {IEEE Transactions on Information Theory},
	volume  = {35},
	number  = {3},
	pages   = {655--657},
	year    = {1989}
}

@article{liu2026generalized,
	author  = {Liu, Qi and Wu, Xuefei and Cheng, Yingchun and Zhou, Haiyan},
	title   = {Generalized Roth--Lempel Codes: NMDS Characterization, Hermitian Self-Orthogonality, and Quantum Constructions},
	journal = {arXiv preprint arXiv:2604.11350},
	year    = {2026}
}

@book{Lidl_Niederreiter_1996,
	author    = {Lidl, Rudolf and Niederreiter, Harald},
	title     = {Finite Fields},
	edition   = {Second},
	series    = {Encyclopedia of Mathematics and its Applications},
	publisher = {Cambridge University Press},
	address   = {Cambridge},
	year      = {1996}
}

@misc{Liang2026LCD,
	author        = {Liang, Zhonghao and Huang, Dongmei and Liao, Qunying and Fan, Cuiling and Zhou, Zhengchun},
	title         = {Non-{GRS} type Euclidean and Hermitian {LCD} codes and Their Applications for {EAQECCs}},
	year          = {2026},
	eprint        = {2603.16187},
	archivePrefix = {arXiv},
	primaryClass  = {cs.IT}
}

@article{LiZhu2024,
	author  = {Li, Yang and Zhu, Shixin},
	title   = {Linear codes of larger lengths with {Galois} hulls of arbitrary dimensions and related entanglement-assisted quantum error-correcting codes},
	journal = {Discrete Mathematics},
	volume  = {347},
	number  = {2},
	pages   = {113760},
	year    = {2024}
}

@article{LiuChen2025,
	author  = {Liu, Jingge and Chen, Bocong},
	title   = {The intersection of two generalized {Reed--Solomon} codes},
	journal = {IEEE Transactions on Information Theory},
	volume  = {71},
	number  = {10},
	pages   = {7595--7608},
	year    = {2025}
}

@article{Shor1995,
	author  = {Shor, Peter W.},
	title   = {Scheme for reducing decoherence in quantum computer memory},
	journal = {Physical Review A},
	volume  = {52},
	number  = {4},
	pages   = {R2493--R2496},
	year    = {1995}
}

@article{Steane1996,
	author  = {Steane, Andrew M.},
	title   = {Error correcting codes in quantum theory},
	journal = {Physical Review Letters},
	volume  = {77},
	number  = {5},
	pages   = {793--797},
	year    = {1996}
}

@article{CalderbankShor1996,
	author  = {Calderbank, A. Robert and Shor, Peter W.},
	title   = {Good quantum error-correcting codes exist},
	journal = {Physical Review A},
	volume  = {54},
	number  = {2},
	pages   = {1098--1105},
	year    = {1996}
}

@article{Gottesman1996,
	author  = {Gottesman, Daniel},
	title   = {Class of quantum error-correcting codes saturating the quantum {Hamming} bound},
	journal = {Physical Review A},
	volume  = {54},
	number  = {3},
	pages   = {1862--1868},
	year    = {1996}
}

@article{KnillLaflamme1997,
	author  = {Knill, Emanuel and Laflamme, Raymond},
	title   = {Theory of quantum error-correcting codes},
	journal = {Physical Review A},
	volume  = {55},
	number  = {2},
	pages   = {900--911},
	year    = {1997}
}

@article{CRSS1998,
	author  = {Calderbank, A. Robert and Rains, Eric M. and Shor, Peter W. and Sloane, Neil J. A.},
	title   = {Quantum error correction via codes over {GF(4)}},
	journal = {IEEE Transactions on Information Theory},
	volume  = {44},
	number  = {4},
	pages   = {1369--1387},
	year    = {1998}
}

@article{AshikhminKnill2001,
	author  = {Ashikhmin, Alexei and Knill, Emanuel},
	title   = {Nonbinary quantum stabilizer codes},
	journal = {IEEE Transactions on Information Theory},
	volume  = {47},
	number  = {7},
	pages   = {3065--3072},
	year    = {2001}
}

@article{BrunDevetakHsieh2006,
	author  = {Brun, Todd and Devetak, Igor and Hsieh, Min-Hsiu},
	title   = {Correcting quantum errors with entanglement},
	journal = {Science},
	volume  = {314},
	number  = {5798},
	pages   = {436--439},
	year    = {2006}
}

@article{WildeBrun2008,
	author  = {Wilde, Mark M. and Brun, Todd A.},
	title   = {Optimal entanglement formulas for entanglement-assisted quantum coding},
	journal = {Physical Review A},
	volume  = {77},
	number  = {6},
	pages   = {064302},
	year    = {2008}
}

@article{GuendaJitmanGulliver2018,
	author  = {Guenda, Kenza and Jitman, Somphong and Gulliver, T. Aaron},
	title   = {Constructions of good entanglement-assisted quantum error correcting codes},
	journal = {Designs, Codes and Cryptography},
	volume  = {86},
	number  = {1},
	pages   = {121--136},
	year    = {2018}
}

@article{GalindoHernandoMatsumotoRuano2019,
	author  = {Galindo, Carlos and Hernando, Fernando and Matsumoto, Ryutaroh and Ruano, Diego},
	title   = {Entanglement-assisted quantum error-correcting codes over arbitrary finite fields},
	journal = {Quantum Information Processing},
	volume  = {18},
	number  = {4},
	pages   = {116},
	year    = {2019}
}

@article{FanZhang2017,
	author  = {Fan, Yun and Zhang, Liang},
	title   = {{Galois} self-dual constacyclic codes},
	journal = {Designs, Codes and Cryptography},
	volume  = {84},
	number  = {3},
	pages   = {473--492},
	year    = {2017}
}

@article{LiuPan2020,
	author  = {Liu, Hongwei and Pan, Xu},
	title   = {{Galois} hulls of linear codes over finite fields},
	journal = {Designs, Codes and Cryptography},
	volume  = {88},
	number  = {2},
	pages   = {241--255},
	year    = {2020}
}

@article{ReedSolomon1960,
	author  = {Reed, Irving S. and Solomon, Gustave},
	title   = {Polynomial codes over certain finite fields},
	journal = {Journal of the Society for Industrial and Applied Mathematics},
	volume  = {8},
	number  = {2},
	pages   = {300--304},
	year    = {1960}
}

@article{GrasslBethRoetteler2004,
	author  = {Grassl, Markus and Beth, Thomas and R{\"o}tteler, Martin},
	title   = {On optimal quantum codes},
	journal = {International Journal of Quantum Information},
	volume  = {2},
	number  = {1},
	pages   = {55--64},
	year    = {2004}
}

@article{LiXingWang2008,
	author  = {Li, Zhuo and Xing, Li-Juan and Wang, Xin-Mei},
	title   = {Quantum generalized {Reed--Solomon} codes: unified framework for quantum maximum-distance-separable codes},
	journal = {Physical Review A},
	volume  = {77},
	number  = {1},
	pages   = {012308},
	year    = {2008}
}

@article{JinLingLuoXing2010,
	author  = {Jin, Lingfei and Ling, San and Luo, Jinquan and Xing, Chaoping},
	title   = {Application of classical {Hermitian} self-orthogonal {MDS} codes to quantum {MDS} codes},
	journal = {IEEE Transactions on Information Theory},
	volume  = {56},
	number  = {9},
	pages   = {4735--4740},
	year    = {2010}
}

@article{JinXing2014,
	author  = {Jin, Lingfei and Xing, Chaoping},
	title   = {A construction of new quantum {MDS} codes},
	journal = {IEEE Transactions on Information Theory},
	volume  = {60},
	number  = {5},
	pages   = {2921--2925},
	year    = {2014}
}

@article{LuoCaoChen2019,
	author  = {Luo, Gaojun and Cao, Xiwang and Chen, Xiaojing},
	title   = {{MDS} codes with hulls of arbitrary dimensions and their quantum error correction},
	journal = {IEEE Transactions on Information Theory},
	volume  = {65},
	number  = {5},
	pages   = {2944--2952},
	year    = {2019}
}

@article{FangFuLiZhu2020,
	author  = {Fang, Weijun and Fu, Fang-Wei and Li, Lanqiang and Zhu, Shixin},
	title   = {{Euclidean} and {Hermitian} hulls of {MDS} codes and their applications to {EAQECCs}},
	journal = {IEEE Transactions on Information Theory},
	volume  = {66},
	number  = {6},
	pages   = {3527--3537},
	year    = {2020}
}

@article{FangJinLuoMa2022,
	author  = {Fang, Xiaolei and Jin, Renjie and Luo, Jinquan and Ma, Wen},
	title   = {New {Galois} hulls of {GRS} codes and application to {EAQECCs}},
	journal = {Cryptography and Communications},
	volume  = {14},
	number  = {1},
	pages   = {145--159},
	year    = {2022}
}

@article{WuLiYang2022,
	author  = {Wu, Yansheng and Li, Chengju and Yang, Shangdong},
	title   = {New {Galois} hulls of generalized {Reed--Solomon} codes},
	journal = {Finite Fields and Their Applications},
	volume  = {83},
	pages   = {102084},
	year    = {2022}
}

@article{LiZhuLi2023,
	author  = {Li, Yang and Zhu, Shixin and Li, Ping},
	title   = {On {MDS} codes with {Galois} hulls of arbitrary dimensions},
	journal = {Cryptography and Communications},
	volume  = {15},
	number  = {3},
	pages   = {565--587},
	year    = {2023}
}

@article{Cao2023,
	author  = {Cao, Meng},
	title   = {Several new families of {MDS} {EAQECCs} with much larger dimensions and related application to {EACQCs}},
	journal = {Quantum Information Processing},
	volume  = {22},
	pages   = {447},
	year    = {2023}
}

@article{WanZhu2025,
	author  = {Wan, Ruhao and Zhu, Shixin},
	title   = {Construction of {Galois} self-orthogonal {MDS} codes with larger dimensions},
	journal = {Finite Fields and Their Applications},
	volume  = {108},
	pages   = {102665},
	year    = {2025}
}

@article{WuHyunLee2021,
	author  = {Wu, Yansheng and Hyun, Jong Yoon and Lee, Yoonjin},
	title   = {New {LCD} {MDS} codes of non-{Reed--Solomon} type},
	journal = {IEEE Transactions on Information Theory},
	volume  = {67},
	number  = {8},
	pages   = {5069--5078},
	year    = {2021}
}

@article{HanFan2023,
	author  = {Han, Dongchun and Fan, Cuiling},
	title   = {{Roth--Lempel} {NMDS} codes of non-elliptic-curve type},
	journal = {IEEE Transactions on Information Theory},
	volume  = {69},
	number  = {9},
	pages   = {5670--5675},
	year    = {2023}
}

@misc{WuHengLiDing2024,
	author        = {Wu, Yansheng and Heng, Ziling and Li, Chengju and Ding, Cunsheng},
	title         = {More {MDS} Codes of Non-Reed--Solomon Type},
	year          = {2024},
	eprint        = {2401.03391},
	archivePrefix = {arXiv},
	primaryClass  = {cs.IT}
}

@misc{LiangWanLiao2026GRL,
	author        = {Liang, Zhonghao and Wan, Yongkang and Liao, Qunying},
	title         = {The equivalent condition for GRL codes to be MDS, AMDS or self-dual},
	year          = {2025},
	eprint        = {2506.03874},
	archivePrefix = {arXiv}
}

@article{LiangLiao2026NMDS,
	author  = {Liang, Zhonghao and Liao, Qunying},
	title   = {Two classes of {NMDS} codes from {Roth--Lempel} codes},
	journal = {Finite Fields and Their Applications},
	volume  = {111},
	pages   = {102779},
	year    = {2026}
}

@article{LiangLiao2026Extended,
	author  = {Liang, Zhonghao and Liao, Qunying},
	title   = {The extended code for a class of generalized {Roth--Lempel} codes and their properties},
	journal = {Discrete Mathematics},
	volume  = {349},
	number  = {8},
	pages   = {115084},
	year    = {2026}
}
\end{document}